\def\bR{\mathbb{R}}
\def\bN{\mathbb{N}}
\def\NN{\mathbb{N}}
\def\cU{\mathcal{U}}
\def\cW{\mathcal{W}}
\def\cV{\mathcal{V}}
\def\cF{\mathcal{F}}
\def\cG{\mathcal{G}}
\def\cQ{\mathcal{Q}}
\def\cD{\mathcal{D}}
\def\cJ{\mathcal{J}}
\def\cC{\mathcal{C}}
\def\cL{\mathcal{L}}
\def\cN{\mathcal{N}}
\def\cE{\mathcal{E}}
\def\cK{\mathcal{K}}
\def\cH{\mathcal{H}}
\newcommand{\unit}{1\!\!1}
\newcommand{\exps}{\tau}       
\newcommand{\exph}{\varepsilon}  
\def\eps{\varepsilon}
\def\ph{\varphi}
\def\wt{\widetilde}
\def\wh{\widehat}
\def \pn {\varphi_0 }
\def \whpn{\widehat{\varphi}_0}
\def \an {\mathfrak{a}_{0} }
\def \whpn { \widehat{ \varphi}_0 }
\def \cFpn {\mathcal{F}_{\bot \varphi_0}^{\leq N} }
\def \wt {\widetilde}
\def\tr{\operatorname{tr }}
\def\be{\begin{equation}}
\def\ee{\end{equation}}
\newtheorem{theorem}{Theorem}[section]  
\newtheorem{cor}[theorem]{Corollary}
\newtheorem{prop}[theorem]{Proposition}
\newtheorem{lemma}[theorem]{Lemma}
\numberwithin{equation}{section}
\begin{document}

\title{Bogoliubov Theory for Trapped Bosons \\ in the Gross-Pitaevskii Regime}

\author{Christian Brennecke$^1$, Benjamin Schlein$^2$, Severin Schraven$^3$\footnote{Corresponding author: sschraven@math.ubc.ca} \\
\\
Institute for Applied Mathematics, University of Bonn, \\
Endenicher Allee 60, 53115 Bonn, Germany$^{1}$ \\
\\
Institute of Mathematics, University of Zurich, \\
Winterthurerstrasse 190, 8057 Zurich, Switzerland$^{2,3}$}

\maketitle

\begin{abstract}
We consider systems of $N$ bosons in $\bR^3$, trapped by an external potential. The interaction is repulsive and has a scattering length of the order $N^{-1}$ (Gross-Pitaevskii regime). We determine the ground state energy and the low-energy excitation spectrum up to errors that vanish in the limit $N\to \infty$.  
\end{abstract}

\section{Introduction and Main Results}  

We consider trapped bosons described by the Hamilton operator 
\begin{equation}\label{eq:defHN} H_N = \sum_{j=1}^N \left[ -\Delta_{x_j} + V_\text{ext} (x_j) \right] + \sum_{1\leq i<j\leq N} N^2 V(N(x_i -x_j)) ,
\end{equation}
acting on a dense subspace of $L^2_s (\bR^{3N})$, the subspace of $L^2 (\bR^{3N})$ consisting of functions that are symmetric with respect to permutations of the $N$ particles. Here, $V_\text{ext}\in L^\infty_{loc}(\mathbb{R}^3)$ is a trapping potential with the property that $ V_\text{ext}(x)\to \infty $ for $|x|\to \infty$ (later, we will formulate more precise assumptions on $V_\text{ext}$). As for the interaction $V$, we assume that $V\in L^3(\bR^3)$ and that it is pointwise non-negative, spherically symmetric and compactly supported. 

The scattering length $\an$ of $V$ is defined through the zero-energy scattering equation 
\begin{equation}\label{eq:0en} \left[ - \Delta + \frac{1}{2} V (x)  \right] f (x) = 0 \end{equation}
with the boundary condition $f (x) \to 1$, as $|x| \to \infty$. For $|x|$ large enough, we have
\[ f(x) = 1- \frac{\frak{a}_0 }{|x|} \]
for a constant $\frak{a}_0$ which is determined by the potential $V$. This constant is called the scattering length of $V$. It is straightforward to show that  
		\begin{align*}  8 \pi \frak{a}_0  =  \int_{\mathbb{R}^3} V(x) f(x) dx  \end{align*}
and, by scaling, (\ref{eq:0en}) implies that 
		\[ \left[ - \Delta + \frac{1}{2} N^2 V (Nx) \right] f (Nx) = 0. \]
This means that the scattering length of the interaction $N^2 V (N.)$ appearing in (\ref{eq:defHN}) is given by $\frak{a}_0 /N$, which characterises the Gross-Pitaeavskii regime.  

From \cite{LSY, LS2,NRS} it is well-known that the ground state energy $E_N$ of the Hamilton operator \eqref{eq:defHN} satisfies		
		\begin{equation}\label{eq:LSY} \lim_{N\to \infty} \frac{E_N}N =  \inf_{\psi\in H^1(\bR^3): \|\psi\|_2=1}\cE_{GP}(\psi), 
		\end{equation}
where $\cE_{GP}$ denotes the Gross-Pitaevskii functional, defined by
		\begin{equation} \label{eq:defGPfunctional}
		\mathcal{E}_{GP}(\psi) = \int_{\bR^3} \left(  \vert \nabla \psi(x) \vert^2 + V_\text{ext}(x) \vert \psi(x)\vert^2 + 4\pi\an \vert \psi(x)\vert^4 \right) dx.
		\end{equation}
We remark that $\cE_{GP}$ admits a unique normalized, strictly positive minimizer which we denote for the rest of this paper by $\pn \in L^2(\bR^3)$. It satisfies the Euler-Lagrange equation
\begin{align*} -\Delta \pn + V_\text{ext} \pn + 8\pi \frak{a}_0 |\pn|^2 \pn = \eps_{GP} \pn \end{align*} 
with the Lagrange multiplier $\eps_{GP} = \cE_{GP} (\pn) + 4\pi \frak{a}_0 \| \pn \|_4^4$. In the Appendix, more precisely in Lemma \ref{thm:gpmin1} and Lemma \ref{thm:gpmin2}, we establish regularity and decay properties of the minimizer $\pn$, which will be useful for our analysis. 

From \cite{LS1, LS2, NRS}, it is also known that every approximate ground state of (\ref{eq:defHN}), ie. every sequence of normalized wave functions $\psi_N \in L^2_s (\bR^{3N})$ satisfying 
\[  \lim_{N \to \infty} \frac{1}{N} \langle \psi_N, H_N \psi_N \rangle = \cE_\text{GP} (\pn) \, ,  \]
exhibits complete Bose-Einstein condensation in $\pn$. In other words, denoting by $\gamma_N^{(1)} =  \operatorname{tr}_{2,\dots,N} |\psi_N\rangle\langle\psi_N|$ the one--particle reduced density associated with 
a sequence of approximate ground states $\psi_N$, we have 
	\begin{equation}\label{eq:BEC0}\lim_{N\to\infty} \langle\pn, \gamma_N^{(1)}\pn\rangle = 1. \end{equation}
		
For translation invariant systems (particles trapped in $\Lambda = [0 ;1]^3$ with periodic boundary conditions), \eqref{eq:LSY} and \eqref{eq:BEC0} have been proved in \cite{BBCS1, BBCS4, ABS, H} to hold with the optimal rate of convergence. This result was generalized in \cite{NNRT} (extending the approach of \cite{BFS,FS}) to trapped systems described by (\ref{eq:defHN}), under the assumption of a sufficiently small scattering length $\mathfrak{a}_0$. In \cite{BSS}, we recently established \eqref{eq:LSY} and \eqref{eq:BEC0} with optimal rate of convergence and with no assumption on the size of $\mathfrak{a}_0$. These bounds (obtained through a generalization of the methods of  \cite{BBCS1, BBCS4}) lead to the a-priori estimates stated below in Theorem \ref{prop:hpN} and Proposition \ref{prop:hpNcubic} and represent a crucial ingredient for our analysis, aimed at understanding the low-energy spectrum of (\ref{eq:defHN}), in the limit $N \to \infty$. 
		
To state our main theorem, we introduce the notation 
\be \label{eq:defHGPE} H_\text{GP} = -\Delta + V_\text{ext} +8\pi \mathfrak{a}_0\pn^2 -\eps_{GP}, \hspace{0.5cm} E = \big(H_\text{GP}^{1/2} (H_\text{GP}+ 16\pi \mathfrak{a}_0\pn^2)H_\text{GP}^{1/2} \big)^{1/2}. \ee 
The operators $H_\text{GP}$ and $E$ have discrete spectrum and in both cases, $\pn$ is the unique, positive and normalized ground state vector with $H_\text{GP}\pn = E \pn=0$.

Let us make our assumptions on the interaction potential $V\in L^3(\bR^3)$ and the trapping potential $V_\text{ext}:\bR^3\to \bR$ more precise. Our standing assumptions throughout the rest of this paper are as follows
		\begin{equation}\label{eq:asmptsVVext}
		\begin{split}
		(1) \;& V\in L^3(\bR^3) , V(x)\geq 0 \text{ for } a.e. \;x\in\bR^3, V(x)=V(y) \text{ if } |x|=|y|,  \\ &V  \text{ compactly supported}, \\
		(2)\; &   V_\text{ext} \in C^1\big(\bR^3; [0;\infty)\big), V_\text{ext}(x) \to \infty \text{ as } |x| \to \infty,\\ 
		& \exists\;  C > 0 \; \forall \;  x,y\in\bR^3: V_\text{ext}(x+y) \leq C (  V_\text{ext}(x)+C)(  V_\text{ext}(y)+C), \\
		& \nabla V_\text{ext} \text{ has at most exponential growth as }  |x|\to \infty, \\
		(3) \;&(H_\text{GP} + 1)^{-3/4-\eps} e^{-\alpha |x|} \text{ is a Hilbert-Schmidt operator, for any $\eps > 0$} \\ &\text{and some $\alpha > 0$}.
		\end{split}
		\end{equation}
It is worth mentioning that the assumption (2) implies that $V_\text{ext}$ grows at most exponentially fast (see \cite[Appendix A]{BSS} for a proof). Combining this with the fact that $\pn$ decays faster than any exponential (see \eqref{eq:expdecaypn}) implies that we can control any $L^p$-norm of $V_\text{ext}\pn$. 
 
Our assumptions are technical and we have not attempted to optimize them. The assumption that $V_\text{ext}\geq 0$ is for computational convenience only and can always be arranged by an energy shift (notice that we do not need to assume $V_\text{ext}\geq 0$ in \cite{BSS}). 


\begin{theorem}\label{thm:main} Let $V$ and $ V_\text{ext}$ be as in \eqref{eq:asmptsVVext}. Then, there exists $\rho>0$ such that, in the limit $N\to \infty$, the ground state energy $E_N$ of $H_N$, defined in \eqref{eq:defHN}, is given by
\be \begin{split}\label{eq:gsenergy}
E_N &= N  \cE_{GP}(\pn) - 4\pi \mathfrak{a}_0 \|\pn\|_4^4 +E_{\text{Bog}}   + \mathcal{O}\big( N^{-\rho}\big),   
\end{split}\ee
where $E_\text{Bog}$ a the finite constant (independent of $N$) given by 
\begin{equation}\label{eq:bog1} 	
\begin{split}
E_\text{Bog} = \frac{1}{2} \lim_{\delta \to 0} &\left\{ \tr_{\perp \pn} \Big[ \Big(H_\text{GP}^{1/2} (H_\text{GP} + 16 \pi \frak{a}_0 \unit_\delta \pn^2 ) H_\text{GP}^{1/2} \Big)^{1/2} - H_\text{GP} - 8\pi \frak{a}_0 \unit_\delta \pn^2 \Big] \right. \\ &\hspace{5.5cm}  \left. + \frac{(8\pi\frak{a}_0)^2}{2} \tr  \Big[ \pn^2 \unit_\delta \frac{1}{-\Delta} \unit_\delta \pn^2 \Big] \right\} \end{split}  
\end{equation}  
with $\unit_\delta$ denoting the approximation of the identity operator with integral kernel $\unit_\delta (x;y) = (2\pi \delta)^{-3/2} e^{-(x-y)^2/2\delta^2}$ (but any reasonable choice of the approximating sequence would do). Here, $\tr_{\bot \pn}$ denotes the trace over the orthogonal complement $\{\pn\}^{\bot}$. 
	
Moreover, the spectrum of $H_N-E_N$ below a threshold $\zeta>0$ (assuming $\zeta \leq CN^{-\rho/5-\eps}$ for some $\eps>0$) consists of eigenvalues of the form 
	\be \label{eq:excitHN-EN}
	\sum_{i = 1}^\infty n_i e_i + \mathcal{O}\big( N^{-\rho} ( 1+ \zeta^5 + N^{-5\rho} \zeta^{30} )\big),
	\ee
	where the $(e_j)_{j\in \mathbb{N}}$ denote the eigenvalues of the operator $E$, defined in \eqref{eq:defHGPE}, and where $n_i \in \mathbb{N}$ with $n_i \neq 0$ for finitely many $i\in \mathbb{N}$.
\end{theorem}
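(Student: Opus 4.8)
The plan is to follow the now-standard strategy for Gross–Pitaevskii excitation spectra (as in \cite{BBCS4}) in the inhomogeneous setting, adapting the key steps to the trap. The first step is to pass to the excitation Hilbert space $\cFpn$ over $\{\pn\}^\perp$ via a (generalized) Bogoliubov map that factors out the condensate: writing $\psi_N = \sum_{n} \pn^{\otimes(N-n)} \otimes_s \xi^{(n)}$ and collecting the $\xi^{(n)}$ into a vector in the truncated Fock space $\cFpn$, one obtains an excitation Hamiltonian $\cL_N$ unitarily equivalent to $H_N - N\cE_{GP}(\pn)$ on the relevant low-energy sector. Using the a-priori condensation estimates from \cite{BSS} (Theorem~\ref{prop:hpN} and Proposition~\ref{prop:hpNcubic}), one controls the number of excitations and the kinetic energy on approximate ground states, which is what makes the subsequent expansions meaningful.

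The second step is to renormalize the quadratic and cubic parts of $\cL_N$ by conjugating with two further unitaries: a quadratic (Bogoliubov-type) transformation built from the solution of the Neumann problem associated with $N^2V(N\cdot)$ on a ball, and a cubic transformation that absorbs the remaining singular terms of order $N^{-1/2}$. As in the translation-invariant case this replaces the bare potential $N^2V(N\cdot)$ by the scattering-length coupling $8\pi\frak a_0$, up to errors that vanish as $N\to\infty$; the inhomogeneity enters through the minimizer $\pn$ and the operator $H_\text{GP}$, and here the decay and regularity of $\pn$ from Lemma~\ref{thm:gpmin1}–\ref{thm:gpmin2} together with assumption (3) in \eqref{eq:asmptsVVext} are used to show that all error terms are trace-class with $N$-independent bounds. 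After these conjugations one is left with an effective Hamiltonian of the schematic form
\[
\cL_N^{\text{eff}} \;=\; C_N \;+\; \mathrm{d}\Gamma(H_\text{GP}) \;+\; \tfrac12\!\sum_{p,q}\Big(\langle p, 8\pi\frak a_0\pn^2\, q\rangle\, a_p^* a_q^* + \text{h.c.}\Big) \;+\; \mathcal{E}_N,
\]
with $\mathcal{E}_N$ an error that is small in the relevant norm, and $C_N = N\cE_{GP}(\pn) - 4\pi\frak a_0\|\pn\|_4^4 + (\text{divergent Born term})$; the $\delta\to0$ renormalization in \eqref{eq:bog1} is precisely the cancellation of this divergent Born term against the trace of the square root.

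The third step is to diagonalize the quadratic Hamiltonian $\mathrm{d}\Gamma(H_\text{GP}) + \tfrac12(\text{pairing term})$ by an (infinite-dimensional) Bogoliubov transformation on $\{\pn\}^\perp$; solvability and the identification of its ground state energy with $E_\text{Bog}$ and of its excitation energies with the eigenvalues $e_j$ of $E=(H_\text{GP}^{1/2}(H_\text{GP}+16\pi\frak a_0\pn^2)H_\text{GP}^{1/2})^{1/2}$ follows from the standard theory of quadratic bosonic Hamiltonians, once one checks the required Hilbert–Schmidt condition on the off-diagonal kernel — this is exactly where assumption (3) is needed. Finally one combines these computations: the lower bound on the ground state energy and on each eigenvalue comes from dropping $\mathcal{E}_N$ after it has been bounded below by $-\delta(\mathcal{N}+1) - CN^{-\rho}$ and absorbed into the (gapped) quadratic part via the a-priori bounds, while the upper bound and the location of the excited eigenvalues up to the threshold $\zeta$ follow from a min-max argument using trial states obtained by applying creation operators of the diagonalizing transformation to the quasi-free ground state, tracking the $\zeta$-dependence of the error through the number of excitations in the trial state.

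The main obstacle I expect is the uniform (in $N$ and in $\delta$) trace-class control of the error terms produced by the quadratic and cubic conjugations in the inhomogeneous setting: unlike the torus, there is no Fourier diagonalization, so every commutator estimate must be carried out with kernels involving $\pn$, $H_\text{GP}^{-1}$ and the Neumann scattering solution, and one must exploit the decay of $\pn$ (faster than any exponential) against the at-most-exponential growth of $V_\text{ext}$ and $\nabla V_\text{ext}$ to make the relevant operators Hilbert–Schmidt. Propagating these bounds through the cubic transformation while keeping the explicit $\zeta$-dependence in \eqref{eq:excitHN-EN} is the technically heaviest part of the argument.
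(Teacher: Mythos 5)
The overall road map you describe --- excitation Hamiltonian, a-priori bounds from \cite{BSS}, quadratic and cubic renormalizations built on the Neumann scattering solution, Bogoliubov diagonalization of the residual quadratic form, and a min-max comparison for upper and lower bounds --- is exactly the strategy of the paper. However, there are two consequential gaps in the way you set up the quadratic part.

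First, your schematic effective Hamiltonian has diagonal part $\mathrm{d}\Gamma(H_\text{GP})$ and pairing kernel $8\pi\frak a_0\pn^2$. With these data the standard diagonalization formula $E' = \bigl((\wt\Phi-\wt\Gamma)^{1/2}(\wt\Phi+\wt\Gamma)(\wt\Phi-\wt\Gamma)^{1/2}\bigr)^{1/2}$ yields $\bigl((H_\text{GP}-8\pi\frak a_0\pn^2)^{1/2}(H_\text{GP}+8\pi\frak a_0\pn^2)(H_\text{GP}-8\pi\frak a_0\pn^2)^{1/2}\bigr)^{1/2}$, not the operator $E = \bigl(H_\text{GP}^{1/2}(H_\text{GP}+16\pi\frak a_0\pn^2)H_\text{GP}^{1/2}\bigr)^{1/2}$ appearing in the theorem. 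To produce $E$ one must have diagonal operator $\wt\Phi \approx H_\text{GP}+8\pi\frak a_0\pn^2$ (the local interaction $N^3V(N\cdot)\ast\pn^2 \to 8\pi\frak a_0\pn^2$ and the exchange kernel $N^3V(N(x-y))\pn(x)\pn(y) \to 8\pi\frak a_0\pn^2$ both contribute to the diagonal) so that $\wt D := \wt\Phi-\wt\Gamma \approx H_\text{GP}$ and $\wt D + 2\wt\Gamma \approx H_\text{GP}+16\pi\frak a_0\pn^2$; see \eqref{eq:deftildePhi}, \eqref{eq:deftildeGamma}, \eqref{eq:deftildeDE}. As written your form gives the wrong excitation operator, which is the very content of the theorem.

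Second, you write that after the two conjugations ``the bare potential $N^2V(N\cdot)$ is replaced by $8\pi\frak a_0$''. This is not what happens at the operator level. After the cubic renormalization the off-diagonal kernel $\wt\Gamma$ (and part of $\wt\Phi$) still carries the singular factor $N^3(Vf_\ell)(N(x-y))\pn(x)\pn(y)$ and the conjugation factors $\gamma=\cosh_\eta$, $\sigma=\sinh_\eta$. The replacement by $8\pi\frak a_0\pn^2$ only occurs at the level of min-max values, through a spectral comparison (the paper's Lemma~\ref{lm:tildeEE}): one uses $\wt D = e^{-\eta}H_\text{GP}e^{-\eta}$ and $\wt D + 2\wt\Gamma = e^\eta(H_\text{GP}+2K_N)e^\eta$, a polar-decomposition argument to remove the $e^{\pm\eta}$ factors, and the elementary estimate \eqref{eq:aux1} which shows that $\pm(K_N - 8\pi\frak a_0\pn^2) \leq CN^{-1}H_\text{GP}$. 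Without this step the identification of the excitation spectrum with $\sigma(E)$ is not justified.

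Beyond that, the evaluation of $E_\text{Bog}$ is far more delicate than ``cancellation of the divergent Born term against the trace of the square root'': one must combine the Born-series constant $-\tfrac12\tr(\wt K_N k)$ with the trace $\tfrac12\tr_{\perp\pn}\bigl(\tfrac12\wt D^{1/2}\wt E\wt D^{-1/2} + \tfrac12\wt D^{-1/2}\wt E\wt D^{1/2} - \wt D - \wt\Gamma\bigr)$, decompose using a resolvent integral representation of the square root, and check trace-class and $N$-independence of each resulting piece; this is where the exponential decay of $\pn$ and Lemma~\ref{thm:gpmin2} are used, and it is here (not for the Hilbert--Schmidt condition on $\wt\Gamma$, which follows from the scattering equation via \eqref{eq:kern-3}) that assumption (3) and the exponential-growth bound on $\nabla V_\text{ext}$ enter. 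Your acknowledgement of the trace-class difficulties is correct, but the proposal substantially underestimates this part of the argument.
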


\vspace{0.2cm}

{\it Remark:} In (\ref{eq:bog1}), we cannot take directly the limit $\delta \to 0$, replacing $\unit_\delta$ with the identity, because the resulting operator is not of trace class and the trace would not be well-defined. Nevertheless, we will show in Section \ref{sec:gsenergy} that the limit defining $E_\text{Bog}$ exists and that it is explicitly given by
\be \label{eq:Ebog}
\begin{split}
E_{\emph{Bog}} & = \frac{\kappa^2}{4} (8\pi \mathfrak{a}_0)^2 \tr \left[    \pn^2  (-\Delta+\kappa^2)^{-1} (-\Delta)^{-1}  \pn^2 \right]\\
		&\hspace{0.5cm} + \frac14 (8\pi\mathfrak{a}_0)^2 \left\| \pn    (-\Delta+\kappa^2)^{-1}[\pn^2,-\Delta] (-\Delta+\kappa^2)^{-1/2}   \right\|_\text{HS}^2  \\
		&\hspace{0.5cm} + \frac14 (8\pi\mathfrak{a}_0)^2 \left\| \pn   (-\Delta+\kappa^2)^{-1} \nabla\pn   \right\|_\text{HS}^2 \\
		&\hspace{0.5cm} + \frac{(8\pi \mathfrak{a}_0)^2}{4} \tr \, \Big[ \pn^2 \Big( \frac{1}{-\Delta + \kappa^2} - \frac{1}{H_\text{GP} + \kappa^2} \Big) \pn^2 \Big] \\
		&\hspace{0.5cm} + \frac{(8\pi \mathfrak{a}_0)^2}{4\kappa^2} \| \pn^3 \|^2 + \frac{(8\pi \mathfrak{a}_0)^2}{4}  \Vert (H_\text{GP}+\kappa^2)^{-1/2} Q \pn^3\Vert^2 \\ 
		&\hspace{0.5cm}-\frac{(8\pi \mathfrak{a}_0)^2 \kappa^2}{4} \tr_{\perp \pn} \Big[ \pn^2 Q \frac{1}{H_\text{GP} (H_\text{GP} + \kappa^2)} Q \pn^2 \Big] \\
		&\hspace{0.5cm} - \frac{(8\pi\frak{a}_0)^2}{\pi} \int_0^\infty ds \, \sqrt{s} \, \tr_{\perp \pn}  \frac{H_\text{GP}^{1/2}}{s+ H_\text{GP}^2} \Big[ \Big[ \pn^2 , \frac{H_\text{GP}}{s+ H_\text{GP}^2} \Big] , \pn^2 \Big] \frac{H_\text{GP}^{1/2}}{s+ H_\text{GP}^2} \\
		&\hspace{0.5cm} - \frac{4(8\pi\frak{a}_0)^3}{\pi} \int_0^\infty ds \, \sqrt{s}  \tr_{\perp \pn} \Big[ \frac{1}{s+H_\text{GP}^2} H_\text{GP}^{1/2} \pn^2 H_\text{GP}^{1/2} \Big]^3 \\ &\hspace{6cm} \times \frac{1}{s+H_\text{GP}^{1/2} (H_\text{GP} + 16 \pi \frak{a}_0 \pn^2) H_\text{GP}^{1/2}} 
\end{split} \ee
for any $\kappa > 0$ large enough. In Sect. \ref{sec:gsenergy} we also prove that all contributions on the r.h.s. of (\ref{eq:Ebog}) are finite.

Theorem \ref{thm:main} extends to the Gross-Pitaevskii regime earlier results obtained for systems of bosons in the mean-field limit. The low-energy spectrum of Hamilton operators describing mean-field bosons has been established in the translation invariant case in \cite{Sei,DN} and for systems trapped by external fields in \cite{GS,LNSS}. More precise expansions in powers of $1/N$ for the ground state energy and low energy excitations of mean field Hamiltonians has been recently obtained in \cite{P3,BPS}. 

For translation invariant systems consisting of $N$ bosons moving in the box $\Lambda = [0;1]^{3}$ with periodic boundary conditions, Theorem \ref{thm:main} has been recently shown in \cite{BBCS3}, using a rigorous version of Bogoliubov theory, see \cite{bog}. While the strategy that we are going to use to show Theorem \ref{thm:main} is similar to the one developed in \cite{BBCS3}, the lack of translation invariance makes the proof more complicated. 

Let us briefly explain the main steps. First of all, in Section \ref{sec:fock}, we are going to factor out the Bose-Einstein condensate and restrict our attention on its orthogonal excitations. We do so through a unitary transformation $U_N$, first introduced in \cite{LNSS}, mapping the Hilbert space $L^2_s (\bR^{3N})$ onto the truncated Fock space $\cF^{\leq N}_{\perp\ph_0}$ constructed on the orthogonal complement of the condensate wave function $\ph_0$. This procedure, which can be seen as a rigorous version of Bogoliubov c-number substitution, leads to the excitation Hamiltonian $\cL_N = U_N  H_N U_N^*$. In second quantized form, the operator $\cL_N$ can be written as the sum of a constant and of terms that are linear, quadratic, cubic and quartic in (modified) creation and annihilation operators. In contrast with the translation invariant case, where $\cL_N$ could be easily expressed in momentum space, here we have to use operator valued distributions, localized in position space. 

It turns out that (similarly to the translation invariant case) cubic and quartic terms in the excitation Hamiltonian $\cL_N$ still contain important contributions to the ground state energy and to the energy of low-lying excited eigenvalues. To extract these contributions, we will conjugate $\cL_N$ with a (generalized) Bogoliubov transformation having the form $e^B$, with $B$ quadratic in (modified) creation and annihilation operators. 

Through the unitary operator $e^B$, we aim at removing short-scale correlations among particles; to reach this goal, we choose the integral kernel $\eta$ defining $B$ through the ground state solution of the Neumann problem associated with the interaction potential $N^2 V (N.)$, in the ball of radius $\ell$, for a fixed $0 < \ell < 1$. 

Conjugation with $e^B$ leads to a new excitation Hamiltonian $\cG_N = e^{-B} \cL_N e^B$, with renormalized quadratic off-diagonal terms. The properties of $\cG_N$ are stated in Prop. \ref{prop:GN}, which can be shown similarly as in the  translation invariant setting. For completeness, we include a sketch of the proof, but we defer it to Section \ref{sec:GN}. 

Looking at the form of $\cG_N$ in Prop. \ref{prop:GN}, we will observe that, while conjugation with $e^B$ renormalizes quadratic terms, it does not change cubic terms. Similarly as in the translation invariant case, these terms cannot be neglected; they still contribute to the low-energy spectrum of the Hamilton operator. To extract the important contributions from the cubic terms, we conjugate $\cG_N$ with another unitary operator, this time given by the exponential $e^A$ of a cubic expression in (modified) creation and annihilation operators.
This leads to a second renormalized excitation Hamiltonian $\cJ_N = e^{-A} \cG_N e^A$, whose properties are stated in Prop \ref{prop:JN}. As already observed in \cite{YY} (in a slightly different context) and then in \cite{BBCS3}, it is important that scattering events described by $A$ involve two particles with high momenta and one with small momentum (and one particle from the condensate). For technical reasons, we need to restrict to small momenta using a Gaussian, rather than a sharp cutoff (this guarantees that the cutoff decays sufficiently fast, also in position space). For this reason, the proof of Prop. \ref{prop:JN}, given in Section \ref{sec:JN}, is a bit more complicated than the proof of the corresponding result in the translation invariant setting (where sharp cutoffs could be used).  

To control error terms produced by conjugation with $e^B$ and $e^A$, we will make use of strong a-priori bounds stated in Theorem \ref{prop:hpN} and Proposition \ref{prop:hpNcubic}; these important estimates follow from the proof of complete Bose-Einstein condensation obtained in \cite{BSS}. 

It follows from Prop. \ref{prop:JN} that, up to negligible contributions, the renormalized excitation Hamiltonian $\cJ_N$ is quadratic in creation and annihiliation operators (in fact, $\cJ_N$ still contains a positive quartic term, which however can be neglected when proving lower bounds and which only produces small errors when evaluated on trial states that are appropriate to show upper bounds on 
the low-energy eigenvalues). In Section \ref{sec:diag}, we proceed therefore with the diagonalization of the quadratic part of $\cJ_N$. This is the main novelty of our analysis; the lack of translation invariant makes it much more involved than for particles on the torus. This is already clear by looking at the operators $H_\text{GP}$ and $E$, defined in \eqref{eq:defHGPE}, which determine the ground state energy and the excited eigenvalues of the Hamiltonian. In the translation invariant case, $\ph_0 \equiv 1$ and, in momentum space, $H_\text{GP} = p^2$ and $E = (|p|^4 + 16 \pi \frak{a}_0 p^2)^{1/2}$. For systems trapped by external potentials, on the other hand, $\ph_0$ is not a constant and it does not commute with $H_\text{GP}$. 

\medskip

Finally, let us remark that while writing up the last details of our manuscript, a result similar to (\ref{eq:excitHN-EN}) was posted in \cite{NT}. While the strategy used in \cite{NT} is similar to ours (and to the one previously developed in \cite{BBCS3}, in the translation invariant setting), the details of its implementation are different. Let us explain the main differences. 
\begin{itemize}
\item After removing the condensate through conjugation with the unitary map $U_N$, the authors of \cite{NT} extend the resulting excitation Hamiltonian, defined on the truncated Fock space with at most $N$ particles, to the full Fock space. This step, which can be justified with a localization argument in the number of particles operator, allows them to work with standard creation and annihilation operators $a^*, a$, rather than with the modified fields $b^*, b$ defined on $\cF_{\perp \ph_0}^{\leq N}$ (see (\ref{eq:bb-def}) below), and simplifies a bit the analysis (notice however that the action of generalized Bogoliubov transformation, defined in terms of modified creation and annihilation operators, $b^*, b$ can be controlled quite well with bounds like those collected in Lemma \ref{lm:d-bds}, which are, by now, standard tools; for this reason, working with $b^*, b$ and with generalized Bogoliubov transformations is, at least conceptually, not much more difficult than working with $a^* ,a$). 
\item Another important difference between our work and \cite{NT} is the choice of the integral kernel $\eta$ defining the (standard or generalized) Bogoliubov transformation and the cubic transformation renormalizing the excitation Hamiltonian. In our work, $\eta$ is chosen through the ground state solution of the Neumann problem, defined on a ball of radius $\ell$, for a fixed $0 < \ell  < 1$, independent of $N$. In \cite{NT}, on the other hand, $\eta$ is defined through the solution of the zero-energy scattering equation (\ref{eq:0en}), cutting off correlations at distance $1/N \ll \ell \ll 1$. While the choice of $\eta$ in terms of the solution of the Neumann problem has the advantage that the cutoff at $|x| = \ell$ does not contribute directly to the kinetic energy (in the sense that the radial derivative of $\eta$ vanishes at $|x| = \ell$), the definition in terms of the zero-energy scattering equation allows for slightly more general interaction potentials (the result of \cite{NT} only requires $V \in L^1 (\bR^3)$; in Theorem \ref{thm:main}, on the other hand, we assume $V \in L^3 (\bR^3)$, because this condition is needed to derive important properties of the solution of the Neumann problem). Also the different choice of $\ell$ produces important differences between the two approaches. We choose $\ell$ of order one, independent of $N$. This means that after quadratic and cubic renormalizations, the resulting excitation Hamiltonian only requires diagonalization on momenta of order one. In \cite{NT}, the authors choose $\ell = N^{-\alpha}$ for some $0 < \alpha < 1$. The advantage of this choice is that the integral kernel $\eta$ has small Hilbert-Schmidt norm; as a consequence, in a commutator expansion, terms of higher order in $\eta$ are negligible. The price to pay for this choice of $\ell$ is that, in the diagonalization, we need to act also on large momenta (the diagonalization, in this case, needs to renormalize also all momenta scales larger that one and smaller than $\ell^{-1}$). At the end, choosing $\ell$ small, as done in \cite{NT}, seems to make the analysis slightly simpler. 
\item While we and the authors of \cite{NT} both apply the same general strategy to diagonalize the renormalized excitation Hamiltonian, in our paper we make an additional effort to identify all non-vanishing contributions to the ground state energy $E_N$. In particular, our expression (\ref{eq:gsenergy}), with the Bogoliubov energy given by (\ref{eq:bog1}) or by (\ref{eq:Ebog}), shows  that the ground state energy, in the limit $N \to \infty$, only depends on the interaction potential through its scattering length $\frak{a}_0$ (notice that also the minimizer $\ph_0$ of the Gross-Pitaevskii energy functional only depends on $V$ through its scattering length). In \cite{NT}, on the other hand, the ground state energy is left as a complicated expression depending on $N$ and the limit is not further investigated (there is also no proof of the fact that $E_N - N \cE_\text{GP} (\ph_0)$ approaches a limit, as $N \to \infty$). In fact, it should be noted that our assumptions that $\nabla V_\text{ext}$ grows at most exponentially and that $(H_\text{GP} + 1)^{-3/4-\eps} e^{-\alpha |x|}$  is a Hilbert-Schmidt operator, for any $\eps > 0$ and some $\alpha > 0$, are only needed to study the behaviour of the ground state energy $E_N$, in the limit $N \to \infty$, and to establish the validity of (\ref{eq:gsenergy}), with $E_\text{Bog}$ as in (\ref{eq:Ebog}). 
\end{itemize}

\vspace{0.5cm}

{\it Acknowledgements.} BS acknowledges financial support from the NCCR SwissMAP and from the 
ERC through the ERC-AdG CLaQS (grant agreement No. 834782). BS and SS also acknowledge 
financial support from the Swiss National Science Foundation through the Grant “Dynamical and 
energetic properties of Bose-Einstein condensates”.

\section{Excitation Hamiltonians} \label{sec:fock}

To prove Theorem \ref{thm:main} it is convenient to factor out the Bose-Einstein condensate and to focus instead on its orthogonal excitations. Recalling that $\pn \in L^2 (\bR^3)$ denotes the unique normalized minimizer of the Gross-Pitaevskii functional (\ref{eq:defGPfunctional}), we consider the unitary map $U_N : L^2_s (\bR^{3N}) \to \cF_{\perp \pn}^{\leq N}$ mapping $L^2_s (\bR^{3N})$ into the truncated Fock space 
\[ \cF_{\perp \pn}^{\leq N} = \bigoplus_{n = 0}^N L^2_{\perp \pn} (\bR^3)^{\otimes_s n} \]
constructed on the orthogonal complement $L^2_{\perp \pn} (\bR^3)$ of $\pn$. The map $U_N$, first introduced in \cite{LNSS}, is defined by $U_N \psi_N = \{ \alpha_0, \alpha_1, \dots , \alpha_N \}$, if 
\[ \psi_N = \alpha_0 \pn^{\otimes N} + \alpha_1 \otimes_s \pn^{\otimes (N-1)} + \dots + \alpha_N \]
with $\alpha_j \in L^2_{\perp \pn} (\bR^3)^{\otimes j}$, for all $j=0,1,\dots , N$ (here $\otimes_s$ denotes the symmetric tensor product). The action of $U_N$ on creation and annihilation operators is given by 
	\begin{equation}\label{2.1.UNconjugation}
   		 \begin{split}
    		U_{N} a^* (\pn ) a (\pn) U_{N}^*  & = N-\cN, \\
    		U_{N} a^*(f)a (\pn) U_{N}^* &= a^*(f) \sqrt{N-\cN} , \\
    		U_{N} a^* (\pn) a(g) U_{N}^* &= \sqrt{N-\cN} a(g) , \\
    		U_{N} a^* (f) a(g)U_{N}^* &= a^*(f)a(g)
    		\end{split}
    		\end{equation}
for all $ f,g\in L^2_{\perp \pn} (\bR^3)$, where $ \cN$ denotes the number of particles operator in $ \cFpn$. 

With $U_N$, we define the excitation Hamiltonian $ \cL_N = U_N H_N U_N^*$, acting on a dense subspace of $\cFpn$. Using the relations \eqref{2.1.UNconjugation}, $ \cL_N$ can be written as
\begin{equation}\label{eq:cLN} \cL_N =  \cL^{(0)}_{N}+\cL^{(1)}_{N} + \cL^{(2)}_{N} + \cL^{(3)}_{N} + \cL^{(4)}_{N} , \end{equation} 
where, in the sense of forms in $ \cFpn$, we have that
	\begin{equation}\label{eq:cLNj}
	\begin{split}
	\mathcal{L}_N^{(0)} &= \big\langle \pn, \big[ -\Delta + V_\text{ext} + \frac{1}{2} \big(N^3 V(N\cdot) * \vert \pn \vert^2\big) \big] \pn \big\rangle (N-\cN) \\
	& \qquad -  \frac12\big\langle \pn,  \big(N^3 V(N\cdot) * \vert \pn \vert^2 \big)\pn \big\rangle (\cN+1)(1-\cN/N) ,  \\
	\mathcal{L}_N^{(1)} &=   \sqrt{N} b\left(  \left( N^3 V(N\cdot) * \vert \pn\vert^2   -8\pi \frak{a}_0|\pn|^2 \right)\pn\right)\\
	&\hspace{0.5cm} - \frac{\mathcal{N}+1}{\sqrt N}  b\left( \left( N^3 V(N\cdot) * \vert \pn\vert^2 \right) \pn \right) + \text{h.c.},\\
	\mathcal{L}_N^{(2)} &= \int dx\;     a_x^* (-\Delta_x) a_x +  \int dx\;  V_\text{ext}(x)a_x^{*} a_x  
		\\
		&\hspace{0.5cm} + \int dxdy\;  N^3 V(N(x-y)) \pn^2(y)  \Big(b_x^* b_x - \frac{1}{N} a_x^* a_x \Big)   \\
		&\hspace{0.5cm}+ \int   dxdy\; N^3 V(N(x-y)) \pn(x) \pn(y) \Big( b_x^* b_y - \frac{1}{N} a_x^* a_y \Big)  \\
		&\hspace{0.5cm}+ \frac{1}{2}  \int  dxdy\;  N^3 V(N(x-y)) \pn(y) \pn(x) \Big(b_x^* b_y^*  +  b_xb_y \Big), \\
	\mathcal{L}_N^{(3)} &=  \int   N^{ 5/2} V(N(x-y)) \pn(y) \big( b_x^* a^*_y a_x +   a^*_xa_yb_x\big) dx dy, \\
	\mathcal{L}_N^{(4)} &= \frac{1}{2} \int  dxdy\; N^2 V(N(x-y)) a_x^* a_y^* a_y a_x .
	\end{split}
	\end{equation}	
For $x \in \bR^3$, we introduced here operator-valued distributions  
\be \label{eq:bb-def} b_x = \sqrt{\frac{N-\cN}{N}} \, a_x , \qquad b^*_x = a^*_x \, \sqrt{\frac{N-\cN}{N}} \ee
with $a_x^*, a_x$ denoting standard creation and annihilation operators. From the canonical commutation relations $[a_x, a_y^*] = \delta(x-y)$, $[a_x, b_y] = [a_x*, a_y^*] = 0$, we obtain 
 \begin{equation} \label{eq:comm-b}
 [ b_x, b_y^* ] = \left( 1 - \frac{\cN}{N} \right) \delta (x-y) - \frac{1}{N} a_y^* a_x, \hspace{1cm}[ b_x, b_y ] = [b_x^* , b_y^*] = 0 
 \end{equation}
and 
\begin{align*}
[b_x, a_y^* a_z] &=\delta (x-y) b_z, \qquad 
[b_x^*, a_y^* a_z] = -\delta (x-z) b_y^*.
 \end{align*}
For $f \in L^2_{\perp \ph_0} (\bR^3)$, it is also useful to define the operators 
\[ b(f) = \int \overline{f (x)} \, b_x \, dx =  \sqrt{\frac{N-\cN}{N}} a (f), \qquad b(f) = \int f (x) \, b^*_x \, dx =  a^* (f) \sqrt{\frac{N-\cN}{N}}. \]
Modified creation and annihilation operators can be controlled as the standard creation and annihilation operators, ie.  
\[ \begin{split} 
\| b(f) \xi \| \leq  \| f \| \| \cN^{1/2} \xi \|, \qquad
\| b^* (f) \xi \| \leq  \| f \| \| (\cN + 1)^{1/2} \xi \|.
\end{split} \]
where we recall that $\cN$ denotes the number of particles operator on $\cF_{\perp \pn}^{\leq N}$. 

To determine the spectrum of $\cL_N$ (and thus the spectrum of (\ref{eq:defHN})), a naive application of  Bogoliubov method would suggest to ignore all contributions except the constant and quadratic terms and to diagonalize the resulting Fock space Hamiltonian. This, however, does not yield the correct spectrum. In fact, it  does not even produce the right leading order contribution to the ground state energy. The cubic and quartic terms in \eqref{eq:cLN} contain relevant contributions to the energy.  The point is that unitary conjugation with the map $U_N$ (which leads to the excitation Hamiltonian $\cL_N$) removes products of the condensate wave function, but it leaves correlations in the excitation vector $U_N \psi_N = \{ \alpha_0, \dots , \alpha_N \}$. To extract the important contributions to the energy from cubic and quartic terms in (\ref{eq:cLNj}), we need therefore to factor out some of the correlations among the particles. To this end, we follow the strategy introduced in \cite{BBCS4} and we conjugate $\cL_N$ with a generalized Bogoliubov transformation.

To define the kernel of the Bogoliubov transformation, we consider the Neumann ground state $f_\ell$ that solves  
		\begin{equation}\label{eq:scatl} \left[ -\Delta + \frac{1}{2} V \right] f_{\ell} = \lambda_{\ell} f_\ell \end{equation}
on the  ball $|x| \leq N\ell$, for some fixed $0 < \ell < 1$. As explained below, we will choose the parameter $\ell>0$ sufficiently small, but independent of $N$. To simplify notation, we do not indicate the $N$-dependence in $f_\ell$ and $\lambda_\ell$. By radial symmetry of $f_\ell$, we can normalize it so that $f_\ell (x) = 1$ if $|x| = N \ell$. By scaling, $f_\ell (N.)$ solves 
		\[ \left[ -\Delta + \frac{ N^2}{2} V (Nx) \right] f_\ell (Nx) = N^2 \lambda_\ell f_\ell (Nx) \]
on the ball where $|x| \leq \ell$. We then extend $f_\ell (N.)$ to $\bR^3$, by setting $f_{N,\ell} (x) = f_\ell (Nx)$ if $|x| \leq \ell$ and $f_{N,\ell} (x) = 1$ for $x \in \bR^3$ with $|x| > \ell$. Thus, $ f_{N,\ell}$ solves   
		\begin{equation}\label{eq:scatlN}
		 \left( -\Delta + \frac{N^2}{2} V (N.) \right) f_{N,\ell} = N^2 \lambda_\ell f_{N,\ell} \chi_\ell,
		\end{equation}
where $\chi_\ell$ denotes the characteristic function of the ball $ B_\ell(0) \subset \bR^3$. Finally, we let $w_\ell = 1-f_\ell$. Notice that $ w_\ell(N.)$ is support in $ B_\ell(0)$. Defining the Fourier transform of $w_\ell $ through
		\[ \widehat w_\ell (p) = \int_{}dx\; w_\ell(x)e^{-2\pi ipx}, \]
we obtain 
		\[ \int_{ }dx\; w_\ell(N x)e^{-2\pi ipx} = \frac1{N^3}\widehat w_\ell (p/N). \]
From \eqref{eq:scatlN}, we find  
		\begin{align*}
		\begin{split}  - 4\pi^2p^2 \widehat{w}_\ell (p/N) +  \frac{N^2}{2} ( \widehat{V} (./N) \ast \widehat{f}_{N,\ell}) (p) = N^5 \lambda_\ell ( \widehat{\chi}_\ell \ast \widehat{f}_{N,\ell}) (p). \end{split} \end{align*}
The next lemma collects important properties of $ f_\ell, w_\ell$ and the Neumann eigenvalue $ \lambda_\ell$. Its proof is based on \cite[Lemma A.1]{ESY0}, \cite[Lemma 4.1]{BBCS4} and can be found in \cite[Appendix B]{BBCS3} and \cite[Lemma 3.2]{BSS}.
\begin{lemma} \label{3.0.sceqlemma}
Let $V \in L^3 (\bR^3)$ be non-negative, compactly supported and spherically symmetric. Fix $\ell> 0$ and let $f_\ell$ denote the solution of \eqref{eq:scatl}. 
\begin{enumerate}
\item [i)] We have that
\begin{align*} 
 \lambda_\ell = \frac{3\frak{a}_0 }{(\ell N)^3} \left(1 + \frac{9}{5}\frac{\frak{a}_0}{\ell N}+\mathcal{O} \big(\frak{a}_0^2  / (\ell N)^2\big) \right).
\end{align*}
\item[ii)] We have $0\leq f_\ell, w_\ell \leq 1$ and there exists a constant $C > 0$ such that 
\begin{equation} \label{eq:Vfa0} 
\left|  \int_{\bR^3}  V(x) f_\ell (x) dx - 8\pi \frak{a}_0 \Big(1 + \frac 3 2 \frac{\frak{a}_0}{ \ell N}  \Big)  \right| \leq \frac{C \frak{a}_0^3}{(\ell N)^2} 
\end{equation}
for all $\ell \in (0;1)$, $N \in \bN$.
\item[iii)] There exists a constant $C>0$ such that 
			\begin{equation}\label{3.0.scbounds1} 
			w_\ell(x)\leq \frac{C}{|x|+1} \quad\text{ and }\quad |\nabla w_\ell(x)|\leq \frac{C }{|x|^2+1} 
			\end{equation}
for all $x\in \bR^3$, $\ell \in (0;1)$ and $N \in \bN$ large enough. Moreover,
		\begin{align*}
		\Big| \frac{1}{(N \ell)^2} \int_{\bR^3} w_{\ell}(x) dx - \frac 2 5 \pi \frak{a}_0   \Big| \leq   \frac{C \mathfrak{a}_0^2}{N \ell}
		\end{align*}
for all $\ell \in (0;1)$ and $N \in \bN$ large enough.
\item[iv)] There exists a constant $C > 0$ such that 
		\begin{equation}\label{eq:whwl} |\widehat{w}_\ell (p)| \leq \frac{C}{|p|^2} \end{equation}
for all $p \in \bR^3$, $\ell \in (0;1)$ and $N \in \bN$ large enough. 
\end{enumerate}        
\end{lemma}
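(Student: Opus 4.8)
The plan is to reduce \eqref{eq:scatl} to a radial ODE, solve it explicitly on the region where $V$ vanishes, and read off all the asymptotics from the resulting low-energy Helmholtz profile. Throughout set $R:=N\ell$ and let $\rho_V$ denote the radius of the support of $V$; recall that $R\to\infty$ while $\rho_V$ stays fixed.

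First I would collect the soft facts. The Neumann realization of $-\Delta+\tfrac12 V$ on $B_R$ has compact resolvent, so $\lambda_\ell>0$ is its lowest eigenvalue, with a positive radial ground state $f_\ell$ (radiality by uniqueness of the ground state and the rotational symmetry of $V$ and $B_R$). Writing $u(r):=rf_\ell(r)$, the equation becomes $u''=(\tfrac12 V-\lambda_\ell)u$ on $[0,R]$ with $u(0)=0$, while the normalization $f_\ell(R)=1$ and the Neumann condition $f_\ell'(R)=0$ read $u(R)=R$ and $u'(R)=1$. The bounds $0\le f_\ell,w_\ell\le1$ I would take as standard (they follow by a comparison argument for the radial equation; cf.\ \cite[Lemma A.1]{ESY0}, \cite[Lemma 4.1]{BBCS4}).

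The central step is the quantitative ODE analysis. Integrating \eqref{eq:scatl} over $B_R$ and using the divergence theorem with the Neumann condition to kill $\int_{B_R}\Delta f_\ell$ yields the identity
\[ \tfrac12\int V f_\ell\,dx=\lambda_\ell\int_{B_R} f_\ell\,dx=\lambda_\ell\Big(\tfrac43\pi R^3-\int_{B_R}w_\ell\,dx\Big). \]
On the annulus $\{\rho_V\le|x|\le R\}$, where $V\equiv0$, $u$ solves the Helmholtz equation $u''=-\lambda_\ell u$; the fixed-size inner ball only perturbs the matching data by an effective-range correction of relative size $O(R^{-3})$, which is below all orders appearing in the statement, so to the required accuracy $f_\ell$ is, up to normalization, the zero-energy Helmholtz scattering state with scattering length $\mathfrak{a}_0$, i.e.\ $u(r)=t^{-1}\sin(tr)-\mathfrak{a}_0\cos(tr)$ with $t=\sqrt{\lambda_\ell}$. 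Imposing the Neumann condition in the (scale-invariant) form $u'(R)=u(R)/R$ and setting $x:=tR=\sqrt{\lambda_\ell}\,R$, I would obtain
\[ \mathfrak{a}_0=R\,\frac{\tfrac{\sin x}{x}-\cos x}{\cos x+x\sin x}=\frac{\lambda_\ell R^3}{3}\Big(1-\tfrac35\lambda_\ell R^2+O\big((\lambda_\ell R^2)^2\big)\Big), \]
using $x^2=\lambda_\ell R^2\sim 3\mathfrak{a}_0/R\to0$; inverting this relation gives (i). Feeding (i) back into the displayed identity, together with the leading-order profile $w_\ell(r)=\tfrac{\lambda_\ell}{6}\big(r^2-3R^2+\tfrac{2R^3}{r}\big)+\mathrm{(h.o.t.)}$ read off from the same expansion (which integrates to $\int_{B_R}w_\ell=\tfrac{2\pi}{15}\lambda_\ell R^5+O(\mathfrak{a}_0^2 R)$), gives \eqref{eq:Vfa0} and the integral estimate on $\int w_\ell$ in (iii). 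I expect the delicate point here to be the remainder bookkeeping: one must check that the effective-range corrections from the finite extent of $\mathrm{supp}\,V$ and the higher Taylor terms of the trigonometric functions really only contribute at the stated orders. This is also where $V\in L^3$ enters, via elliptic regularity used to control $f_\ell$ and $\nabla f_\ell$ on $\mathrm{supp}\,V$.

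Finally, the pointwise and Fourier bounds follow quickly from the explicit profile. Since $r\,w_\ell(r)=\tfrac{\lambda_\ell}{6}(r^3-3R^2 r+2R^3)$ is decreasing on $[0,R]$ with value $\le\tfrac{\lambda_\ell}{3}R^3\le C\mathfrak{a}_0$, one gets $w_\ell(r)\le C\mathfrak{a}_0/r$ in the outer region; combining with $w_\ell\le1$ gives $w_\ell(x)\le C/(|x|+1)$, and differentiating the profile gives $|\nabla w_\ell(x)|\le C\mathfrak{a}_0/|x|^2$ outside $\mathrm{supp}\,V$, while elliptic regularity closes the bound near the origin — this is \eqref{3.0.scbounds1}. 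For (iv): since $w_\ell$ and $\nabla w_\ell$ vanish at $|x|=R$, the extension of $w_\ell$ by $0$ is $C^1$ on $\bR^3$ and satisfies, distributionally, $-\Delta w_\ell=(\tfrac12 V-\lambda_\ell)f_\ell\,\mathbf 1_{B_R}=:g$, with $\|g\|_{L^1}\le\tfrac12\int V f_\ell+\lambda_\ell\int_{B_R}f_\ell=\int V f_\ell\le C\mathfrak{a}_0$ by the identity above; taking Fourier transforms, $4\pi^2|p|^2\widehat w_\ell(p)=\widehat g(p)$, hence $|\widehat w_\ell(p)|\le\|g\|_{L^1}/(4\pi^2|p|^2)\le C/|p|^2$, which is \eqref{eq:whwl}.
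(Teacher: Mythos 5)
Your approach — reduce to the radial ODE $u''=(\tfrac12 V-\lambda_\ell)u$ for $u=rf_\ell$, integrate the Neumann eigenvalue equation over $B_R$ to get $\tfrac12\int Vf_\ell=\lambda_\ell\int_{B_R}f_\ell$, match the explicit Helmholtz profile on $\{|x|>\rho_V\}$ against the Neumann boundary condition, and then Taylor-expand in $x=\sqrt{\lambda_\ell}\,R$ — is correct and is essentially the same radial-matching argument as in the references the paper cites for this lemma (the paper itself does not reprove it, pointing instead to \cite{ESY0}, \cite{BBCS4}, \cite{BBCS3}, \cite{BSS}). I verified that the expansions $\mathfrak{a}_0=\tfrac{\lambda_\ell R^3}{3}(1-\tfrac35\lambda_\ell R^2+\dots)$, the leading profile $rw_\ell(r)=\tfrac{\lambda_\ell}{6}(R-r)^2(r+2R)$, and $\int_{B_R}w_\ell=\tfrac{2\pi}{15}\lambda_\ell R^5+O(\mathfrak{a}_0^2 R)$ all reproduce parts (i)--(iii), and that your $C^1$-extension/divergence argument for $|\widehat{w}_\ell(p)|\le C|p|^{-2}$ gives (iv).

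Two small blemishes, neither fatal. First, ``the effective-range correction is of relative size $O(R^{-3})$'' should read $O(\mathfrak{a}_0^2 R^{-3})$ (the correction to the effective scattering length at momentum $t=\sqrt{\lambda_\ell}$ is $\sim r_{\mathrm{eff}}\,\mathfrak{a}_0\,t^2\sim \mathfrak{a}_0^2/R^3$ in relative terms); this is still safely below the $O(\mathfrak{a}_0^2/R^2)$ accuracy demanded in (i)--(ii), so your conclusion stands, but a careful write-up would have to track it, and would have to verify uniformity of the inner matching data in $V\in L^3$. Second, in part (iv) the sign is flipped: the correct identity (with your extension by zero) is $-\Delta w_\ell=(\lambda_\ell-\tfrac12 V)f_\ell\,\mathbf 1_{B_R}$, not $(\tfrac12 V-\lambda_\ell)f_\ell\,\mathbf 1_{B_R}$; the $L^1$ bound and the final estimate are unaffected. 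Finally, for the gradient bound near the origin in (iii), it is worth making explicit that with $u(s)\le s$ and $V\in L^3$ one has $\int_0^r V(s)s^2\,ds\le C\|V\|_{L^3}r^2$ by H\"older, which yields $|f_\ell'(r)|=r^{-2}\bigl|\int_0^r s(\tfrac12 V-\lambda_\ell)u\,ds\bigr|\le C$ uniformly for small $r$; this is exactly where the $V\in L^3$ hypothesis is used.
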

We define the kernel $k\in L_s^2(\mathbb{R}^3\times\mathbb{R}^3)$ through 
		\be \label{eq:defk} k(x;y) = -N w_\ell(N(x-y)) \pn(x)\pn(y).  \ee
With $Q= 1- |\pn \rangle \langle \pn|$ denoting the orthogonal projection onto the orthogonal complement of $\pn$, we set 		
\be \label{eq:defeta} \eta(x;y) = \big[ \big(Q\otimes Q \big) k \big]  (x;y) .  \ee
The following lemma (whose proof can be found in \cite[Lemma 3.3]{BDS}, \cite[Lemma 4.3]{BS} and in \cite[Lemma 2.2]{BSS})  summarizes important properties of the kernels $k, \eta$ and $\mu = \eta - k$. 
\begin{lemma}\label{lm:bndseta}
Assume \eqref{eq:asmptsVVext} and let $\ell\in (0;1)$. Then, there exists $C>0$, uniform in $N$, $\ell\in (0;1)$ and $x\in\bR^3$, such that for $i=1,2$ it holds true that
	\be  \label{eq:bndseta}\begin{split}
	\Vert \eta \Vert &\leq C \ell^{1/2}, \hspace{0.5cm} \| k\| \leq C\ell^{1/2} , \hspace{0.5cm} \|\mu\| \leq C\ell^{1/2}, \\
	  \Vert \eta_{x} \Vert &\leq C \vert \pn(x) \vert, \hspace{0.5cm} \Vert k_{x} \Vert \leq C \vert \pn(x) \vert, \hspace{0.5cm}\Vert \mu_{x} \Vert \leq C \vert \pn(x) \vert, \\
	   \Vert \nabla_i \eta \Vert &\leq C\sqrt{N}, \hspace{0.5cm} \,\Vert \nabla_i k \Vert  \leq C \sqrt{N}, \hspace{0.5cm} \| \nabla_i\mu \|\leq C,\hspace{0.5cm} \| \Delta_i\mu\|\leq C.
	 \end{split}
	\end{equation} 
Moreover, for $x,y\in \bR^3$ and $\alpha_1, \alpha_2 \in \{ 0,1,2 \}$ we have the pointwise bounds
\begin{equation} \label{eq:bndsetaptw}
\begin{split}
&| \eta(x;y)|, | k(x;y)| \leq C \frac{\pn (x) \pn (y)}{|x-y| + N^{-1}} , \, |\mu(x;y)| \leq C \pn(x) \pn(y) , \\
&| \nabla_1 \eta(x;y)|, | \nabla_1 k(x;y)| \leq C \frac {\pn(x)\pn(y)}{|x-y|^2 + N^{-2}} +C \frac{|\nabla \pn (x)| \pn (y)}{|x-y| + N^{-1}}, \\
&|\nabla_1^{\alpha_1} \nabla_2^{\alpha_2}  \mu(x;y)| \leq C \sum_{j_1 =0}^{\alpha_1} | \nabla^{j_1}  \pn (x) |   \sum_{j_2=0}^{\alpha_2} |\nabla^{j_2} \pn (y)| 
\end{split}
\end{equation} 
(and similarly for $\nabla_2 \eta$, $\nabla_2 k$). Finally, identifying the kernel $\eta$ with the corresponding Hilbert-Schmidt operator and denoting by $\eta^{(n)}$ the kernel of its $n$-th power, we find, for every $n \geq 2$ and $x,y\in\bR^3$ that $\|\nabla_i \eta^{(n)}\|\leq C^n $, $\| \Delta_i \eta^{(n)}\|\leq C^n$ and that 	\begin{equation}\label{eq:bndsetan}
	\begin{split}
	&\vert \eta^{(n)} (x;y) \vert \leq  C \Vert \eta \Vert^{n-2} \vert \pn(x) \vert   \vert \pn(y) \vert, \\
	&\Vert \nabla_x \eta^{(n)}_x \Vert  \leq C (\vert \nabla \pn (x)\vert + \pn(x)), \\
	&\Vert \Delta_x \eta^{(n)}_x \Vert  \leq C (\vert \Delta \pn(x)\vert + \vert \nabla \pn (x)\vert + \pn(x)). 
	\end{split}
	\end{equation}	
\end{lemma}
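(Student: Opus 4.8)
This lemma collects estimates that appear, in essentially the same form, in \cite[Lemma 3.3]{BDS}, \cite[Lemma 4.3]{BS} and \cite[Lemma 2.2]{BSS}; I only sketch the strategy. Everything is reduced to the explicit formula \eqref{eq:defk} for $k$, the bounds for $w_\ell$ and $\lambda_\ell$ from Lemma \ref{3.0.sceqlemma}, and the regularity and decay of the minimizer $\pn$ from Lemma \ref{thm:gpmin1} and Lemma \ref{thm:gpmin2} (in particular $\pn\in L^\infty\cap H^2(\bR^3)$ with faster-than-exponential decay, which makes all the integrals below converge uniformly in $N$). The bounds for $\eta=(Q\otimes Q)k$ and $\mu=\eta-k$ are then deduced from those for $k$, using the structural fact that contracting one variable of $k$ against $\pn$ turns the $N$-singular factor $Nw_\ell(N\cdot)$ into a bounded function of size $\mathcal{O}(\ell^2)$.

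First I would treat $k$. From $0\le w_\ell(u)\le C/(1+|u|)$ one gets $Nw_\ell(N(x-y))\le C/(|x-y|+N^{-1})$, which is the pointwise bound on $k$ in \eqref{eq:bndsetaptw}; squaring, integrating in $y$ and using $\int_{|u|\le N\ell}(1+|u|)^{-2}\,du\le CN\ell$, gives $\|k_x\|^2\le C\ell\,\pn(x)^2$, hence $\|k_x\|\le C\pn(x)$, and one more integration gives $\|k\|\le C\ell^{1/2}$. A derivative $\nabla_x$ on $k$ produces either a harmless factor $\nabla\pn(x)$ or the factor $N^2(\nabla w_\ell)(N(x-y))$; since $|\nabla w_\ell(u)|\le C/(1+|u|^2)$ and $\int_{\bR^3}(1+|u|^2)^{-2}\,du<\infty$, one finds $N^4\int|(\nabla w_\ell)(N(x-y))|^2\,dy\le CN$, so $\|\nabla_i k\|\le C\sqrt N$ (with the corresponding pointwise bound). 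Since $Q$ is a contraction, the $L^2$-bounds for $\eta$ follow from those for $k$, and the pointwise bounds for $\eta$ follow once $\mu$ is controlled.

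Next, for $\mu$, write $P_1=|\pn\rangle\langle\pn|\otimes 1$ and $P_2=1\otimes|\pn\rangle\langle\pn|$, so $\mu=-(P_1+P_2-P_1P_2)k$; each term contains a factor $N\int w_\ell(N(x-y))\,\pn(y)^2\,dy$, which by $\int_{\bR^3}w_\ell\le C(N\ell)^2$ is $\le C\ell^2$. This already gives $|\mu(x;y)|\le C\pn(x)\pn(y)$ and, integrating as for $k$, $\|\mu_x\|\le C\pn(x)$ and $\|\mu\|\le C\ell^{1/2}$. A first (resp.\ second) derivative landing on the $w_\ell$-integral produces $N^2\int(\nabla w_\ell)(N(x-y))\pn(y)^2\,dy=\mathcal{O}(\ell)$ (resp.\ $N^3\int(\Delta w_\ell)(N(x-y))\pn(y)^2\,dy=\mathcal{O}(1)$, using \eqref{eq:scatl} to write $\Delta w_\ell=\lambda_\ell f_\ell-\tfrac12 Vf_\ell$ together with $\lambda_\ell(N\ell)^3\le C$ and $V\in L^1(\bR^3)$), while derivatives landing on $\pn(x)$ or $\pn(y)$ are controlled by $\pn\in H^2$. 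This yields the pointwise bounds on $\nabla_1^{\alpha_1}\nabla_2^{\alpha_2}\mu$ in \eqref{eq:bndsetaptw} and the $L^2$-bounds $\|\nabla_i\mu\|,\|\Delta_i\mu\|\le C$, and, together with the previous paragraph, the bounds for $\eta$ and $\nabla_i\eta$.

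Finally, for the powers $\eta^{(n)}$, $n\ge 2$: as an operator $\|\eta\|_{\mathrm{op}}\le\|\eta\|_{\mathrm{HS}}=\|\eta\|\le C\ell^{1/2}$, so $\|\eta^{(m)}\|_{\mathrm{op}}\le(C\ell^{1/2})^m$; writing $\eta^{(n)}(x;y)=\langle\eta_x,\eta^{(n-2)}\eta_y\rangle$ and using $\|\eta_x\|\le C\pn(x)$ gives $|\eta^{(n)}(x;y)|\le\|\eta_x\|\,\|\eta\|^{n-2}_{\mathrm{op}}\|\eta_y\|\le C\|\eta\|^{n-2}\pn(x)\pn(y)$. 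For the derivative bounds one peels off a single factor, $\nabla_1\eta^{(n)}(x;y)=\int(\nabla_1\eta)(x;z)\,\eta^{(n-1)}(z;y)\,dz$, and here is the one genuinely delicate point: since $\nabla_1 k$ (hence $\nabla_1\eta$) has Hilbert--Schmidt norm of order $\sqrt N$, one must \emph{not} use $\|\nabla_1\eta(x;\cdot)\|_{L^2}$, but rather the $\pn$-weighted integral $\int|(\nabla_1\eta)(x;z)|\,\pn(z)\,dz\le C(\pn(x)+|\nabla\pn(x)|)$, which is uniform in $N$ because the three-dimensional singularities $|x-z|^{-2}$, $|x-z|^{-1}$ in \eqref{eq:bndsetaptw} are locally integrable; combined with $|\eta^{(n-1)}(z;y)|\le C\|\eta\|^{n-3}\pn(z)\pn(y)$ for $n\ge 3$ (and a standard convolution estimate for the singular weights against $\pn^2$ when $n=2$), this gives $\|\nabla_x\eta^{(n)}_x\|\le C(\pn(x)+|\nabla\pn(x)|)$ and $\|\nabla_i\eta^{(n)}\|\le C^n$. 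The Laplacian bounds follow the same pattern via the factorization $\eta^{(n)}=\eta^{(2)}\cdot\eta^{(n-2)}$: unlike $\Delta_1 k$ itself, the kernel $\Delta_1\eta^{(2)}$ stays pointwise controlled because the singular contribution $N^3V(N\cdot)f_\ell(N\cdot)$ produced by $\Delta_1 k$ through \eqref{eq:scatl} is always integrated against $\pn$, where it only contributes a factor $\|V\|_{L^1}$; with $\pn\in H^2$ this yields $\|\Delta_x\eta^{(n)}_x\|\le C(|\Delta\pn(x)|+|\nabla\pn(x)|+\pn(x))$ and $\|\Delta_i\eta^{(n)}\|\le C^n$. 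Throughout, the work is bookkeeping; the only conceptual hurdle is the systematic use of the fact that every $N$-singular factor sits next to the bounded, rapidly decaying function $\pn$, where in three dimensions it is integrable with $N$-uniform constants.
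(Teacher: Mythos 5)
The paper does not give its own proof of this lemma; it simply cites \cite[Lemma 3.3]{BDS}, \cite[Lemma 4.3]{BS} and \cite[Lemma 2.2]{BSS}. Your sketch correctly reconstructs the standard argument from those references — bounds on $k$ from the explicit formula and Lemma \ref{3.0.sceqlemma}, bounds on $\mu$ from the projection structure (every term carrying a factor $N\!\int w_\ell(N\,\cdot)\pn^2$, which gains $\ell^2$), and bounds on $\eta^{(n)}$ by peeling off a factor and using $\pn$-weighted $L^1$ estimates on $\nabla_1\eta$ together with $\|\eta\|_{\mathrm{op}}\le\|\eta\|_{\mathrm{HS}}$ — so this is essentially the same approach as the sources the paper relies on, and the places you flag as requiring care (the $n=2$ convolution of $|x-z|^{-2}$ against $|z-y|^{-1}$, and the control of $\Delta_1 k$ via the scattering equation) are exactly the delicate points.
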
 

We introduce now the antisymmetric operator 
\begin{equation}\label{eq:Beta}
B = \frac{1}{2} \int dx dy \, \big[ \eta (x;y) b_x^* b_y^* - \text{h.c.} \big] 
\end{equation} 
and we consider the generalized Bogoliubov transformation $e^B$. Since $\eta \in (Q \otimes Q) L^2 (\bR^3 \times \bR^3)$, the unitary operator $e^B$ maps the Hilbert space $\cF^{\leq N}_{\perp \pn}$ back into itself.  Another important property of $e^B$ is the fact that it preserves, approximately, the number of particles (ie. the number of excitations of the Bose-Einstein condensate). The following lemma was proved in \cite[Lemma 3.1]{BS}.
\begin{lemma} \label{lm:Npow} Let $\eta \in L_s^2 (\bR^3 \times \bR^3)$. Let $B$ be the antisymmetric operator defined in (\ref{eq:Beta}). For every $n \in \bN$ there exists a constant $C > 0$ (depending only on $n$ and on $\| \eta \|$) such that 
\[ e^{-B} (\cN +1)^{n} e^{B} \leq C (\cN+1)^{n} \]
as an operator inequality on $\cF_{\perp \ph_0}^{\leq N}$.
\end{lemma}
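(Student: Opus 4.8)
The plan is to run a Grönwall argument in the parameter of the Bogoliubov transformation, which reduces everything to a single form bound on the commutator $[(\cN+1)^n,B]$. Fix $n\in\bN$. Since $\cN\le N$ on $\cFpn$, the operator $(\cN+1)^n$ is bounded, and by the bounds for $b,b^*$ recalled above so is $B$; hence $e^{tB}$ is a norm--analytic family of unitaries and, for $\xi\in\cFpn$, the function
\[ \varphi_\xi(t):=\langle\xi,e^{-tB}(\cN+1)^ne^{tB}\xi\rangle=\langle e^{tB}\xi,(\cN+1)^ne^{tB}\xi\rangle\ge0 \]
is differentiable with $\varphi_\xi'(t)=\langle e^{tB}\xi,[(\cN+1)^n,B]\,e^{tB}\xi\rangle$, with no domain issues. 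So the task reduces to establishing the operator inequality
\[ \pm\,[(\cN+1)^n,B]\ \le\ C_n\,\|\eta\|\,(\cN+1)^n\qquad\text{on }\cFpn, \]
with $C_n$ depending only on $n$: applying it with $\zeta=e^{tB}\xi$ gives $|\varphi_\xi'(t)|\le C_n\|\eta\|\,\varphi_\xi(t)$, and Grönwall's lemma yields $\varphi_\xi(1)\le e^{C_n\|\eta\|}\varphi_\xi(0)$, i.e. $e^{-B}(\cN+1)^ne^{B}\le e^{C_n\|\eta\|}(\cN+1)^n$, which is the claim with $C=e^{C_n\|\eta\|}$.

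To obtain the commutator bound I would first compute $[(\cN+1)^n,B]$ explicitly. Writing $B_\eta:=\int dx\,dy\;\eta(x;y)\,b_x^*b_y^*$, so that $B=\tfrac12(B_\eta-B_\eta^*)$, and using that $b_x^*b_y^*$ raises the excitation number by two (so $\cN\,b_x^*b_y^*=b_x^*b_y^*(\cN+2)$ and hence $(\cN+1)^nB_\eta=B_\eta(\cN+3)^n$), one finds the self-adjoint expression
\[ [(\cN+1)^n,B]=\tfrac12\big(B_\eta\,\Theta_n(\cN)+\Theta_n(\cN)\,B_\eta^*\big),\qquad \Theta_n(\cN):=(\cN+3)^n-(\cN+1)^n, \]
where, by the binomial theorem, $0\le\Theta_n(\cN)\le C_n(\cN+1)^{n-1}$. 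It then remains to bound $\langle\zeta,[(\cN+1)^n,B]\zeta\rangle=\re\,\langle B_\eta^*\zeta,\Theta_n(\cN)\zeta\rangle$ for $\zeta\in\cFpn$.

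The step I expect to be the main obstacle is exactly this form bound, because one must \emph{not} estimate $B_\eta^*$ via the crude operator bound $\|B_\eta^*\zeta\|\le C\|\eta\|\,\|(\cN+1)\zeta\|$: that would only control the left side by $\|(\cN+1)^{(n+1)/2}\zeta\|\,\|(\cN+1)^{(n-1)/2}\zeta\|$, which by Cauchy--Schwarz dominates $\langle\zeta,(\cN+1)^n\zeta\rangle$ instead of being dominated by it. Instead I would decompose $\zeta=\sum_{k=0}^N\zeta_k$ into sectors of definite excitation number and use that $B_\eta^*$ lowers that number by exactly two, so that $\langle B_\eta^*\zeta,\Theta_n(\cN)\zeta\rangle=\sum_{k\ge2}\Theta_n(k-2)\,\langle B_\eta^*\zeta_k,\zeta_{k-2}\rangle$. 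On the $k$-particle sector, $B_\eta^*\zeta_k$ is $\sqrt{k(k-1)}$ times a symmetrisation of $(z_1,\dots,z_{k-2})\mapsto\int\eta(x;y)\,\zeta_k(x,y,z_1,\dots,z_{k-2})\,dx\,dy$ (the factors $\sqrt{(N-\cN)/N}$ hidden in the $b$'s only helping), and a direct Cauchy--Schwarz estimate gives $\|B_\eta^*\zeta_k\|\le\sqrt{k(k-1)}\,\|\eta\|\,\|\zeta_k\|$; together with $\Theta_n(k-2)=(k+1)^n-(k-1)^n\le C_n(k+1)^{n-1}$ this yields
\[ \big|\langle B_\eta^*\zeta,\Theta_n(\cN)\zeta\rangle\big|\ \le\ C_n\|\eta\|\sum_{k\ge2}(k+1)^n\,\|\zeta_k\|\,\|\zeta_{k-2}\|\ \le\ C_n\|\eta\|\sum_{k\ge2}(k+1)^n\big(\|\zeta_k\|^2+\|\zeta_{k-2}\|^2\big). \]
The first sum is at most $\langle\zeta,(\cN+1)^n\zeta\rangle$, and the second, after reindexing $j=k-2$ and using $(j+3)^n\le 3^n(j+1)^n$, is at most $3^n\langle\zeta,(\cN+1)^n\zeta\rangle$; since the estimate is on the absolute value this gives the two-sided commutator bound, hence the lemma. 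The remaining ingredients — the binomial estimate on $\Theta_n$, the Cauchy--Schwarz in the sector decomposition, and the Grönwall step — are all routine.
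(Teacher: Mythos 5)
Your proof is correct and follows the same Gr\"onwall-plus-commutator strategy used in the cited reference \cite[Lemma 3.1]{BS}: reduce to $\pm[(\cN+1)^n,B]\le C_n\|\eta\|(\cN+1)^n$ and then integrate. Your sector decomposition is a perfectly valid and elementary way to organize the commutator estimate — it is equivalent to the more common device of inserting factors $(\cN+1)^{\pm(n-1)/2}$ on either side of $b_x^*b_y^*$ and commuting them through using $\cN\,b_x^*b_y^*=b_x^*b_y^*(\cN+2)$, and you correctly identify (and avoid) the circularity that a direct use of $\|B_\eta^*\zeta\|\le\|\eta\|\,\|\cN\zeta\|$ would entail.
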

We will also need more precise information on the action of the $e^B$. Keeping in mind the fact that, if we replaced in (\ref{eq:Beta}) the $b^* ,b$-operators with standard creation and annihilation operators, $e^B$ 
would be a Bogoliubov transformation with explicit action, we decompose, for an arbitrary $f \in L^2_{\perp \pn}  (\bR^3)$,   
\begin{equation}\label{eq:defd} e^{-B} b(f) e^{B} = b( \cosh_\eta (f)) + b^*( \sinh_\eta (\overline{f})) + d_\eta (f), \end{equation}
where 		
\begin{equation}\label{eq:defcoshsinheta}
\begin{split}
\cosh_\eta = \sum_{n=0}^\infty\frac{ \eta^{(2n)}}{(2n)!}, \hspace{0.5cm} \sinh_\eta = \sum_{n=0}^\infty\frac{ \eta^{(2n+1)} }{(2n+1)!}.
\end{split}
\end{equation}
We also introduce the operator-valued distributions $d_{\eta,x}, d^*_{\eta,x}$ defined through 
\begin{equation} \label{eq:defdx}
e^{-B} b_x e^{B} = b( \cosh_{\eta,x}) + b^*( \sinh_{\eta,x}) + d_{\eta,x}.
\end{equation}
We will use the short-hand notations 
\begin{equation} \label{eq:defgammasigma} 
\gamma = \cosh_\eta,  \qquad \sigma = \sinh_\eta, \qquad \text{p}_\eta = \gamma - 1, \qquad 
\text{r}_\eta = \sigma - \eta. 
\end{equation} 
It is a simple consequence of Lemma \ref{lm:bndseta} that, under the same assumptions as in Lemma \ref{lm:bndseta}, there is a constant $C > 0$ such that, for all $\alpha_1, \alpha_2 \in \{0,1,2 \}$ with $\alpha_1 + \alpha_2 \leq 2$, we have \[ \Vert \nabla_1^{\alpha_1} \nabla_2^{\alpha_2} \emph{p}_\eta \Vert, \Vert \nabla_1^{\alpha_1} \nabla_2^{\alpha_2} \emph{r}_\eta \Vert  \leq C.\] Moreover, we have the bounds  $\vert\emph{p} (x;y) \vert, \vert  \emph{r} (x;y) \vert  \leq  C  \pn(x) \pn(y)$ for all $x,y \in \bR^3$.

The idea that on states with few excitations $e^B$ acts almost as a Bogoliubov transformation is confirmed by the next lemma, which bounds the norm of the operators $d_\eta (f)$, in terms of (appropriate powers of) the number of particles operator. This lemma, whose proof is an adaptation of the translation-invariant case \cite[Lemma 3.4]{BBCS4}, requires $\| \eta \|$ to be sufficiently small. From Lemma \ref{lm:bndseta}, this can be achieved requiring that $\ell > 0$ is small enough. 
\begin{lemma}\label{lm:d-bds}
Let $n\in  \mathbb{Z}$, $f\in L^2(\bR^3)$. Let $\eta\in L^2(\bR^3\times \bR^3)$ be as defined in (\ref{eq:Beta}), with $\ell > 0$ small enough. Let $d_{\eta}(f)$ be as in (\ref{eq:defd}) (and $d_{\eta ,x}$ as defined in \eqref{eq:defdx}). Then there exists $C > 0$ such that  
\begin{align*} 
\begin{split}
\Vert (\cN+1)^{n/2}d_{\eta}(f) \xi \Vert & \leq \frac{C }{N} \|f\|  \Vert (\mathcal{N}+1)^{(n+3)/2} \xi \Vert, \\
\Vert (\cN+1)^{n/2}d_{\eta}(f)^* \xi \Vert & \leq \frac{C}{N} \|f\|  \Vert (\mathcal{N}+1)^{(n+3)/2} \xi \Vert
		\end{split}
		\end{align*}	
and such that, for all $x\in\bR^3$, we have 
		\begin{equation} \label{eq:bnddx}
		\begin{split}
		\Vert (\cN+1)^{n/2}d_{\eta,x} \xi \Vert & \leq \frac{C}{N} \Big[  \Vert a_x (\mathcal{N}+1)^{(n+2)/2} \xi \Vert +  \pn (x) \Vert (\mathcal{N}+1)^{(n+3)/2} \xi \Vert\Big].
		\end{split}
		\end{equation}
Furthermore, if we set $\overline{d}_{\eta,x} = d_{\eta,x} +  (\cN/N) b^*( \eta_{x}) $, we find 
\begin{equation} \label{eq:bndaydxbar}
		\begin{split}
		&\Vert (\cN+1)^{n/2}a_y \overline{d}_{\eta,x} \xi \Vert\\
		 &\leq \frac{C }{N}  \Big[  \pn (x) \pn (y)  \Vert (\mathcal{N}+1)^{(n+2)/2} \xi \Vert  + |\eta(x;y)|   \Vert (\mathcal{N}+1)^{(n+2)/2} \xi \Vert \\
  		  &\hspace{1.5cm}+     \pn (y)    \Vert a_x (\mathcal{N}+1)^{(n+1)/2}\xi \Vert  +  \pn (x) \Vert a_y (\mathcal{N}+1)^{(n+3)/2} \xi \Vert \\
		  &\hspace{1.5cm}+  \Vert a_x a_y (\mathcal{N}+1)^{(n+2)/2} \xi \Vert \Big]
		\end{split}
		\end{equation}
and, finally, we have that
	\begin{equation} \label{eq:bnddxdy}
	\begin{split}
	&\Vert (\cN+1)^{n/2}d_{\eta,x} d_{\eta,y} \xi \Vert \\
	&\leq \frac{C}{N^2} \Big[  \pn (x) \pn (y) \Vert (\mathcal{N}+1)^{(n+6)/2} \xi \Vert +   |\eta(x;y)|   \Vert (\mathcal{N}+1)^{(n+4)/2}\xi \Vert  \\
	&\hspace{1.2cm} +  \pn (y)  \Vert a_x (\mathcal{N}+1)^{(n+5)/2)} \xi \Vert+    \pn (x)  \Vert a_y(\mathcal{N}+1)^{ (n+5)/2} \xi \Vert  \\
	&\hspace{1.2cm} +  \Vert a_x a_y (\mathcal{N}+1)^{(n+4)/2} \xi \Vert \Big].
	\end{split}
	\end{equation}
\end{lemma}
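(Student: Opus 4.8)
The plan is to reduce everything to an explicit computation on the generalized Bogoliubov transformation $e^B$, mirroring the translation-invariant argument in \cite[Lemma 3.4]{BBCS4}. First I would expand $e^{-B} b(f) e^B$ in a Duhamel (nested commutator) series: writing $\partial_t (e^{-tB} b(f) e^{tB}) = e^{-tB} [b(f),B] e^{tB}$ and iterating, one gets an absolutely convergent series whose leading terms, upon using the exact commutator $[b_x,b_y^*] = (1-\cN/N)\delta(x-y) - N^{-1} a_y^* a_x$ from \eqref{eq:comm-b}, reproduce the ``Bogoliubov'' part $b(\cosh_\eta f) + b^*(\sinh_\eta \bar f)$ coming from the $\delta$-function contractions, with $\cosh_\eta,\sinh_\eta$ as in \eqref{eq:defcoshsinheta}; the operator $d_\eta(f)$ is then \emph{defined} as the remainder, and it collects precisely the terms in which at least one contraction produced the ``error'' piece $-N^{-1} a_y^* a_x$ or the $-\cN/N$ correction to the $\delta$. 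Each such term therefore carries at least one explicit factor $N^{-1}$. The claimed bounds are then obtained by estimating this remainder series term by term.

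For the concrete estimates, the key mechanism is that every summand of the $d_\eta$-series is a product of finitely many $b^\#$, $a^\#$ operators contracted against kernels $\eta^{(n)}$ (or $\gamma^{(n)},\sigma^{(n)}$), times at least one $N^{-1}$; I would bound each factor $b(g)$ or $b^*(g)$ by $\|g\|\,\|(\cN+1)^{1/2}\cdot\|$ (as recalled after \eqref{eq:comm-b}), use Lemma \ref{lm:Npow} to move powers of $\cN$ past the intermediate $e^{tB}$'s, and use the Hilbert--Schmidt/pointwise bounds on $\eta^{(n)}$, $\cosh_{\eta,x}$, $\sinh_{\eta,x}$ from \eqref{eq:bndsetan} together with $\|\eta\|\le C\ell^{1/2}$ small to sum the geometric-type series in $n$. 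For the first two inequalities (the bounds on $\|(\cN+1)^{n/2} d_\eta(f)\xi\|$ and its adjoint) this directly gives the factor $N^{-1}\|f\|$ and three extra powers of $(\cN+1)^{1/2}$, the ``$+3$'' being the worst case of how many creation/annihilation operators survive uncontracted in a generic remainder term. For the distributional bound \eqref{eq:bnddx} on $d_{\eta,x}$, I would keep the free variable $x$ explicit: a contraction hitting the distinguished leg either leaves a bare $a_x$ (producing the $\|a_x(\cN+1)^{(n+2)/2}\xi\|$ term) or pairs it against a kernel, producing a factor $\|\eta_x\|$ or $\|\cosh_{\eta,x}\|\le C|\pn(x)|$ or $\|\sinh_{\eta,x}\|\le C|\pn(x)|$ (producing the $\pn(x)\|(\cN+1)^{(n+3)/2}\xi\|$ term). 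The bounds \eqref{eq:bndaydxbar} and \eqref{eq:bnddxdy} for $a_y \bar d_{\eta,x}$ and $d_{\eta,x} d_{\eta,y}$ are proved the same way, now tracking two distinguished legs $x,y$: each may be left bare (giving $a_x$, $a_y$, or $a_x a_y$ factors), or contracted (giving $\pn(x),\pn(y),|\eta(x;y)|$ prefactors), and one enumerates the finitely many resulting combinations; the subtraction of $(\cN/N)b^*(\eta_x)$ in the definition of $\bar d_{\eta,x}$ is there precisely to cancel the one remainder term that would otherwise contribute a dangerous $\pn(x)\|(\cN+1)^{(n+1)/2}\xi\|$ without a compensating $a_y$, i.e. it improves the power of $\cN$ by one in the relevant term.

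The main obstacle is the bookkeeping: controlling the full nested-commutator expansion of $e^{-B} b^\#_x e^B$ uniformly in $N$ and $\ell$, keeping careful track of (a) how many factors $N^{-1}$ each term carries, (b) how many uncontracted $a^\#$ survive (this fixes the power of $\cN$), and (c) which of the distinguished legs $x$ (and $y$) are bare versus contracted against a kernel — and then checking that the resulting geometric series in $n$ actually converges, which is where the smallness $\|\eta\| \le C\ell^{1/2}$ with $\ell$ small is used. None of the individual estimates is deep; the difficulty is purely combinatorial, and since it is an adaptation of \cite[Lemma 3.4]{BBCS4} to position-space operator-valued distributions, I would present the structure of the expansion and the bound on a representative term in detail and indicate that the remaining terms are handled identically.
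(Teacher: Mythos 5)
Your proposal is correct and essentially matches the route the paper itself takes: the paper gives no detailed proof of this lemma, pointing instead to \cite[Lemma~3.4]{BBCS4}, and your description of the mechanism is the right one --- expand $e^{-B} b(f) e^{B}$ in nested commutators, observe that each use of the non-canonical part $-\frac{\cN}{N}\delta(x-y) - \frac1N a_y^* a_x$ of $[b_x,b_y^*]$ introduces an explicit $N^{-1}$, so that $d_\eta$ collects exactly these remainder terms, and then bound term-by-term using the kernel estimates of Lemma~\ref{lm:bndseta} together with the smallness $\|\eta\|\le C\ell^{1/2}$ to sum the series, tracking the distinguished leg(s) $x$ (and $y$) either as bare $a_x,a_y$ factors or as contracted-against-kernel factors $\pn(x),\pn(y),|\eta(x;y)|$. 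One small slip worth flagging: you invoke $\|\cosh_{\eta,x}\| \le C|\pn(x)|$, but $\cosh_\eta = 1 + \text{p}_\eta$ contains the identity, so $\|\cosh_{\eta,x}\|$ is not finite; the estimates you want are $\|\text{p}_{\eta,x}\| = \|(\cosh_\eta - 1)_x\| \le C|\pn(x)|$ and $\|\sigma_x\| = \|\sinh_{\eta,x}\| \le C|\pn(x)|$, as follows from \eqref{eq:bndsetan}.
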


Using the generalized Bogoliubov transformation $e^B$, we define the renormalized excitation Hamiltonian 
\be  \label{eq:defGN}\cG_{N} = e^{-B} \cL_N e^{B}. \ee
In the next proposition, we collect some important properties of $\cG_N$. Here and in the rest of the paper, 
we are going to use the notation 
\[ \begin{split}
		\cK & = \int dx\; a^*_x( -\Delta_x) a_x, \hspace{3.5cm} \cV_\text{ext} = \int dx\, V_\text{ext}(x)a^*_x a_x, \\
		\cV_N &= \frac12 \int dxdy\; N^2V(N(x-y))a^*_x a^*_y a_y a_x,\hspace{0.5cm}  \cH_N = \cK + \cV_\text{ext} + \cV_N.  \end{split} \]
\begin{prop}\label{prop:GN}
Assume \eqref{eq:asmptsVVext}, let $\eta$ be defined as in \eqref{eq:defeta} and let $E_N$ denote the ground state energy of $\cG_{N}$. Then the following holds true:
\begin{itemize}
\item[a)] We have that $ |E_N - N\cE_{GP}(\pn)| \leq C$ and 
		\begin{equation}\label{eq:GDelta} \cG_N - E_N = \cH_N + \Delta_N, \end{equation}
where the error term $\Delta_N$ is such that for every $\delta >0$, there exists $C > 0$ with
		\begin{equation}\label{eq:Delta-bd} \pm \Delta_N \leq \delta \cH_N + C (\cN + 1). \end{equation}
Furthermore, for every $k \in \bN$ there exists a $C > 0$ such that
		\begin{equation}\label{eq:adkN}
		\pm \text{ad}^{\, (k)}_{\, i\cN} (\cG_N) = \pm \text{ad}^{\, (k)}_{\, i\cN} (\Delta_N) = \pm \big[ i \cN , \dots \big[ i \cN , \Delta_N \big] \dots \big] \leq C (\cH_N + 1). \end{equation}

\item[b)] Let $\sigma$ and $ \gamma$ be defined as in \eqref{eq:defcoshsinheta} and let $\kappa_{\cG_N}$ denote the constant
		\begin{equation}\label{eq:defcN}
		\begin{split}
		\kappa_{\cG_N} & = N\big\langle \pn, (-\Delta +V_\text{ext} + \wh V(0)\pn^2/2)\pn\big\rangle- 4\pi \mathfrak{a}_0\|\pn\|_4^4  \\
		&\hspace{0.5cm}  + \tr\big(\sigma (-\Delta+V_\text{ext} -\eps_{GP})\sigma\big) \\
		&\hspace{0.5cm}+ \tr\big( \gamma   \big[N^3 V(N(x-y)) \pn(x)\pn(y)\big]  \sigma ) \\
		&\hspace{0.5cm}  +\tr\big( \sigma \big[   N^3 V(N.) \ast \pn^2  + N^3 V(N(x-y)) \pn(x)\pn(y)\big] \sigma\big)\\
		&\hspace{0.5cm} + \frac12  \int dxdy\, N^2V(N(x-y))| \langle\sigma_x, \gamma_y\rangle|^2. 
		\end{split}
		\end{equation}
Here,  $(N^3V(N.)\ast \pn^2)$ acts as multiplication operator and we identify kernels like $ N^3V(N(x-y))\pn(x)\pn(y)$ with their associated Hilbert Schmidt operators. Let
		\begin{equation}\label{eq:Phi-def} \begin{split}
		\Phi &=  \gamma \big(-\Delta+V_\text{ext}-\eps_{GP}\big)\gamma + \sigma \big(-\Delta+V_\text{ext}-\eps_{GP}\big)\sigma\\
		&\hspace{0.4cm}+ \gamma \big( N^3V(N.)\ast\pn^2 + N^3V(N(x-y))\pn(x)\pn(y)\big)\gamma \\
		&\hspace{0.4cm}+ \sigma\big(N^3V(N.)\ast\pn^2 + N^3V(N(x-y))\pn(x)\pn(y)\big)\sigma\\
		&\hspace{0.4cm}+\Big( \gamma \big(  N^3(Vf_\ell)(N(x-y))\pn(x)\pn(y)\big)\sigma+\emph{h.c.}\Big),
		\end{split}\end{equation} 
and 
		\begin{equation} \label{eq:Gamma-def} \begin{split}
		\Gamma &=  \gamma \big(  N^3(Vf_\ell)(N(x-y))\pn(x)\pn(y)\big)\gamma\\
		&\hspace{0.5cm}+\sigma \!\big( N^3(Vf_\ell)(N(x-y))\pn(x)\pn(y)\big)\!\sigma\\
		&\hspace{0.5cm} +\big[ \sigma \big(-\Delta+V_\text{ext}-\eps_{GP}\big)\gamma +\emph{h.c.}\big]\\
		&\hspace{0.5cm}+ \big[ \sigma \big(N^3V(N.)\ast\pn^2 + N^3V(N(x-y))\pn(x)\pn(y)\big)\gamma + \emph{h.c.}\big].\\
		\end{split}\end{equation} 
With $\Phi$ and $\Gamma$, we then define the quadratic Fock space Hamiltonian
		\begin{equation}\label{eq:cQcGN}
		\begin{split}
		\cQ_{\cG_N} = \int dxdy\, \Phi(x;y) b^*_x b_y + \frac12 \int dxdy\,  \Gamma(x;y) \big( b^*_x b^*_y + b_xb_y\big),
		\end{split}
		\end{equation}
and we denote by $\cC_{\cG_N}$ the cubic operator
		\begin{equation}\label{eq:cCcGN}
		\cC_{\cG_N}=  \int dxdy \, N^{5/2}V(N(x-y)) \pn(y)  b^*_x b^*_y \big(b (\gamma_x) + b^*(\sigma_x)\big) +\emph{h.c.} 
		\end{equation}
Then, we have that 
		\begin{equation}\label{eq:propGN}
		\begin{split}
		\mathcal{G}_{N } 
		&=  \kappa_{\cG_N} + \cQ_{\cG_N} + \cC_{\cG_N} +\cV_N + \cE_{\cG_N}
		\end{split}
		\end{equation}
for a self-adjoint operator $ \cE_{\cG_N}$ that satisfies in $\cFpn$ the operator bound
		 \begin{equation} \label{eq:cEcGNbnd}
		 \pm \cE_{\cG_N} \leq  CN^{-1/2} (\cH_N +\cN^2+ 1)(\cN+1).
		 \end{equation}
\end{itemize}		
\end{prop}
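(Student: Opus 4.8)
The plan is to conjugate $\cL_N = \sum_{j=0}^4 \cL_N^{(j)}$ by $e^B$ one summand at a time, expanding via the Duhamel formula $e^{-B}\mathcal{O}e^B = \mathcal{O} + \int_0^1 e^{-sB}[\mathcal{O},B]e^{sB}\,ds$ (iterated to a finite order, with the tail of the expansion kept as an error) and using the explicit action $e^{-B}b_x e^B = b(\gamma_x) + b^*(\sigma_x) + d_{\eta,x}$ from \eqref{eq:defdx}. Substituting the latter into each $\cL_N^{(j)}$, I would single out the contributions in which every modified field is replaced by its $\gamma$- or $\sigma$-dressed version: these generate the main terms, to be sorted into the constant $\kappa_{\cG_N}$, the quadratic $\cQ_{\cG_N}$, the cubic $\cC_{\cG_N}$ and the unchanged quartic $\cV_N$. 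Everything else — terms with at least one factor $d_\eta$, higher commutators, and the Duhamel remainder — becomes $\cE_{\cG_N}$, and the content of part b) is that, via Lemma \ref{lm:d-bds} and the kernel estimates of Lemma \ref{lm:bndseta}, all of it is $\pm$-bounded by $CN^{-1/2}(\cH_N + \cN^2 + 1)(\cN+1)$.

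Before estimating errors I would first establish that $e^{\pm B}$ propagates the energy, $e^{-B}(\cH_N + \cN + 1)e^B \leq C(\cH_N + \cN + 1)$, together with $e^{-B}(\cN+1)^k e^B \leq C(\cN+1)^k$ (the latter being Lemma \ref{lm:Npow}). This is a Gronwall argument: differentiate $s\mapsto\langle\xi, e^{-sB}(\cH_N+\cN+1)e^{sB}\xi\rangle$, compute $[\cH_N, B]$ using \eqref{eq:comm-b} and the bounds on $\nabla\eta, \Delta\eta$ in \eqref{eq:bndseta}, and close the estimate. These inequalities let me drop the factors $e^{\pm sB}$ inside all error terms at the price of $\mathcal{O}(N^{-1/2})$.

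Next comes the computation itself. Conjugating $\cL_N^{(0)}, \cL_N^{(1)}$ yields the $N$-scale constants in $\kappa_{\cG_N}$ plus negligible genuinely linear terms — here the subtraction of $8\pi\frak{a}_0|\pn|^2$ in $\cL_N^{(1)}$, which encodes the Euler--Lagrange equation for $\pn$, is exactly what kills the linear contributions. Conjugating the diagonal part $\cK + \cV_\text{ext}$ of $\cL_N^{(2)}$ produces the $\gamma(-\Delta + V_\text{ext})\gamma + \sigma(-\Delta + V_\text{ext})\sigma$ and $\sigma(-\Delta + V_\text{ext})\gamma$ pieces of $\Phi$ and $\Gamma$ (the $\eps_{GP}$-shifts come from $\cN$-dependent prefactors and the commutator structure). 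The decisive point is the interplay between the $N^3 V(N\cdot)$-terms of $\cL_N^{(2)}$ and the quartic $\cL_N^{(4)} = \cV_N$: the first-order commutator $[\cV_N, B]$ contributes a quadratic term whose off-diagonal kernel, combined with $\tfrac12 N^3 V(N(x-y))\pn(x)\pn(y)$ already present in $\cL_N^{(2)}$ and with $\eta \approx -N w_\ell(N\cdot)\pn\otimes\pn$, gets resummed by the Neumann scattering equation \eqref{eq:scatlN}, $(-\Delta + \tfrac12 N^2 V(N\cdot))f_{N,\ell} = N^2\lambda_\ell f_{N,\ell}\chi_\ell$. The outcome is that $N^3 V(N\cdot)$ is replaced by $N^3(Vf_\ell)(N\cdot)$ in the relevant kernels, yielding $\Phi, \Gamma$ as stated, up to contributions of size $\lambda_\ell = \mathcal{O}((\ell N)^{-3})$ and commutator terms built from $[\pn^2,-\Delta]$, which go into $\cE_{\cG_N}$ (their scalar parts into $\kappa_{\cG_N}$, using Lemma \ref{3.0.sceqlemma}). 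Finally, conjugating $\cL_N^{(3)}$ dresses its single free field by $\gamma, \sigma$ to produce $\cC_{\cG_N}$, and $\cV_N$ survives up to an $\mathcal{O}(N^{-1/2})$ error.

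Part a) would then follow from \eqref{eq:propGN}: $\cQ_{\cG_N} - (\cK + \cV_\text{ext})$ has Hilbert--Schmidt off-diagonal kernel $\Gamma$ and a diagonal perturbation built from the Hilbert--Schmidt operators $p_\eta = \gamma - 1$, $r_\eta = \sigma - \eta$, $\eta$ (and from $V_\text{ext}p_\eta$, which is Hilbert--Schmidt since $V_\text{ext}\pn\in L^2$), hence $\pm(\cQ_{\cG_N} - \cK - \cV_\text{ext}) \leq \delta\cK + C(\cN+1)$; the cubic $\cC_{\cG_N}$ is $\pm$-bounded by $\delta(\cK + \cV_N) + C(\cN+1)$ by Cauchy--Schwarz in the integration variables; and $\pm\cE_{\cG_N}\leq\delta\cH_N + C(\cN+1)$ by \eqref{eq:cEcGNbnd} together with $\cN\leq C(\cH_N+1)$ (valid since $-\Delta+V_\text{ext}$ has strictly positive bottom of the spectrum). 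This gives $\cG_N - \kappa_{\cG_N} = \cH_N + \Delta_N$ with $\pm\Delta_N\leq\delta\cH_N + C(\cN+1)$, hence $E_N = \kappa_{\cG_N} + \mathcal{O}(1) = N\cE_{GP}(\pn) + \mathcal{O}(1)$ from the explicit form of $\kappa_{\cG_N}$; and \eqref{eq:adkN} holds because $\cN$ commutes with $\cK, \cV_\text{ext}, \cV_N$, so $\text{ad}^{(k)}_{i\cN}(\cG_N) = \text{ad}^{(k)}_{i\cN}(\Delta_N)$, and each further $\cN$-commutator preserves the off-diagonal/cubic/error structure and is again $\pm$-bounded by $C(\cH_N+1)$. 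I expect the main obstacle to be the renormalization step: unlike the translation-invariant case, where it is transparent in momentum space, here one must keep track of a large number of position-space remainders carrying factors $\pn(x)$, $\eta(x;y)$, $|\nabla\pn|$ and the non-vanishing commutators $[\pn^2,-\Delta]$, and show that, after invoking \eqref{eq:scatlN}, each fits into the budget $N^{-1/2}(\cH_N + \cN^2 + 1)(\cN+1)$; the position-space bounds of Lemma \ref{lm:d-bds} are tailored for this, but deploying them uniformly over all the commutator terms is where the real work lies.
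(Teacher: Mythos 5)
Your high-level plan matches the paper's: conjugate $\cL_N^{(j)}$ term by term using $e^{-B} b_x e^B = b(\gamma_x) + b^*(\sigma_x) + d_{\eta,x}$, keep the fully $\gamma/\sigma$-dressed contributions as main terms, bound the rest with Lemma~\ref{lm:d-bds} and Lemma~\ref{lm:bndseta} after propagating $\cH_N + \cN$ through $e^{\pm B}$ by Gronwall. Part a) is handled in essentially the same way as the paper (Hilbert--Schmidt off-diagonal kernel, Cauchy--Schwarz for the cubic, $\cN\lesssim\cH_N$, $\cN$-commutators preserving the structure). However, there is a genuine gap in the renormalization step.

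You propose to put \emph{every} term containing at least one $d_\eta$ factor directly into $\cE_{\cG_N}$, bounded by $CN^{-1/2}(\cH_N+\cN^2+1)(\cN+1)$. This cannot work as stated. The term $\int dx\,[b(\nabla_x\eta_x)\nabla_x d_{\eta,x}+\text{h.c.}]$ arising from $e^{-B}\cK e^B$ (and likewise the single-$d_\eta$ terms from $\cG_N^{(2,V)}$ and $\cG_N^{(4)}$, weighted by $N^3 V(N(x-y))\ph_0\ph_0$ and $N^3(Vw_\ell)(N(x-y))\ph_0\ph_0$) are $\mathcal{O}(1)$ in the topology of \eqref{eq:cEcGNbnd}, not $\mathcal{O}(N^{-1/2})$: using $\|\nabla_1\eta\|\sim\sqrt{N}$ together with \eqref{eq:nabladx1} one obtains a bound by $C\|(\cN+1)^{1/2}\xi\|\,\|(\cK+\cN+1)^{1/2}(\cN+1)^{1/2}\xi\|$ with no factor of $N^{-1/2}$. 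These single-$d_\eta$ pieces must therefore be kept explicitly and combined, and it is precisely here that the scattering equation \eqref{eq:scatlN} is indispensable: it replaces $-\Delta\eta$ by $-\tfrac12 N^3(Vf_\ell)(N\cdot)\ph_0\ph_0$ (up to the $\lambda_\ell$-term), which cancels against the $d_\eta$-weighted potential contributions from $\cG_N^{(2,V)}$ and $\cG_N^{(4)}$ to leave a finite constant (a $Vf_\ell w_\ell$-trace) plus errors of the stated size. This is the $\cD_{\cG_N}$ analysis in Section~\ref{sec:proofpropGN}.

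Related to this, you also mislocate the role of the scattering equation. The replacement $V\to Vf_\ell$ that appears in the $\gamma\Gamma\gamma$, $\sigma\Gamma\sigma$ and $\gamma\Phi\sigma$ kernels is \emph{not} a consequence of \eqref{eq:scatlN}: it is simple algebra, $V - V w_\ell = V f_\ell$, coming from adding the pairing contribution of $\cG_N^{(2,V)}$ (kernel $\sim V$) to that of $\cG_N^{(4)}$ (kernel $\sim -V w_\ell$, since $\eta\approx -Nw_\ell(N\cdot)\ph_0\otimes\ph_0$). The kinetic energy plays no direct part in producing the $Vf_\ell$ factors in $\Phi,\Gamma$. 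Where $\cK$ is decisive is in the $d_\eta$-cancellation just described, and omitting this would leave an $\mathcal{O}(1)$ remainder incompatible with \eqref{eq:cEcGNbnd}. A minor further inaccuracy is the claim that the subtraction of $8\pi\frak{a}_0|\ph_0|^2$ in $\cL_N^{(1)}$ ``exactly kills the linear contributions'': after conjugation, $\cG_N^{(1)}$ still contains a linear term of order $\sqrt{N}$ with kernel $h_N=(N^3(Vw_\ell)(N\cdot)\ast\ph_0^2)\ph_0$; it is $\cG_N^{(3)}$ which produces $-h_N$ and completes the cancellation.
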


The proof of Prop. \ref{prop:GN} is similar to the proof of \cite[Prop. 3.2]{BBCS3}. For completeness, we sketch it in Sect. \ref{sec:GN}. 

Combining Prop. \ref{prop:GN} with the lower bound 
\begin{equation}\label{eq:BSS} \cG_N - E_N \geq c \cN - C \end{equation} 
from \cite[Eq. (3.1)]{BSS} (and from Lemma \ref{lm:Npow} in the present paper), we obtain strong a-priori estimates on the number and on the energy of excitations of the Bose-Einstein condensate in low-energy states of (\ref{eq:defHN}). These bounds are crucial for the rest of out analysis. 
\begin{theorem}\label{prop:hpN}
Assume \eqref{eq:asmptsVVext}, let $\eta$ be defined as in \eqref{eq:defeta} and let $E_N$ denote the ground state energy of $H_N$, defined in \eqref{eq:defHN}. Let $\psi_N \in L^2_s (\mathbb{R}^{3N})$ with $\| \psi_N \| = 1$ belong to the spectral subspace of $H_N$ with energies below $E_N + \zeta$, for some $\zeta > 0$, i.e. 
		\[ \psi_N = {\bf 1}_{(-\infty ; E_N + \zeta]} (H_N) \psi_N.\]

Let $\xi_N = e^{-B} U_N \psi_N\in \cFpn$ be the renormalized excitation vector associated with $\psi_N$. Then, for every $j\in\mathbb N$ there exists a constant $C > 0$ such that 
\begin{align*}
\langle \xi_N, (\cH_N+1) (\cN +1)^j  \xi_N \rangle \leq C (1 + \zeta^{j+1}). \end{align*}
\end{theorem}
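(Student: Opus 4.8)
The plan is to bootstrap from the energy bound $\cG_N - E_N \geq c\cN - C$ of \eqref{eq:BSS} and the structural decomposition \eqref{eq:propGN} to an $(\cH_N+1)(\cN+1)^j$-estimate, by induction on $j$. Since $\psi_N$ lies in the spectral subspace below $E_N + \zeta$, the vector $\xi_N = e^{-B} U_N \psi_N$ satisfies $\langle \xi_N, (\cG_N - E_N) \xi_N \rangle \leq \zeta$ (using that $E_N$ here denotes also the ground state energy of $\cG_N$, since $U_N$ is unitary and $e^B$ preserves $\cF^{\le N}_{\perp\pn}$). From \eqref{eq:BSS} this already gives $\langle \xi_N, \cN \xi_N\rangle \leq C(1+\zeta)$. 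The base case $j=0$ then amounts to showing $\langle \xi_N, (\cH_N+1)\xi_N\rangle \leq C(1+\zeta)$: write $\cH_N = \cG_N - E_N - \Delta_N$ using \eqref{eq:GDelta}, bound $\pm\Delta_N \leq \tfrac12\cH_N + C(\cN+1)$ by \eqref{eq:Delta-bd} with $\delta = 1/2$, and absorb the $\cH_N$ term on the left, leaving $\tfrac12\langle\cH_N\rangle \leq \zeta + C(1+\langle\cN\rangle) \leq C(1+\zeta)$.

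For the inductive step, assume the bound for all exponents $\leq j-1$ and aim for exponent $j$. The standard device is to test the energy inequality against the state dressed by a power of the number operator: consider $\langle \xi_N, (\cN+1)^{j/2}(\cG_N - E_N)(\cN+1)^{j/2}\xi_N\rangle$. On one hand this is bounded by $\zeta\langle\xi_N,(\cN+1)^j\xi_N\rangle$ plus commutator terms of the form $\langle \xi_N, (\cN+1)^{j/2}[\,(\cN+1)^{j/2},\cG_N-E_N\,]\,\xi_N\rangle$; iterating the commutator identity and using \eqref{eq:adkN}, which controls all iterated commutators $\mathrm{ad}^{(k)}_{i\cN}(\cG_N)$ by $C(\cH_N+1)$, these error terms are bounded by $C\langle\xi_N, (\cH_N+1)(\cN+1)^{j-1}\xi_N\rangle \leq C(1+\zeta^j)$ by the induction hypothesis (after a Cauchy–Schwarz split distributing the powers of $\cN$). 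On the other hand, using $\cG_N - E_N = \cH_N + \Delta_N \geq \tfrac12\cH_N - C(\cN+1)$ from \eqref{eq:Delta-bd}, the left side is bounded below by $\tfrac12\langle\xi_N,(\cN+1)^{j/2}\cH_N(\cN+1)^{j/2}\xi_N\rangle - C\langle\xi_N,(\cN+1)^{j+1}\xi_N\rangle$. Since $\cN$ is bounded by $N$ on $\cF^{\le N}_{\perp\pn}$, the leftover $(\cN+1)^{j+1}$ term, though naively of too high an order, is actually controlled once we also establish the pure number bound $\langle\xi_N,(\cN+1)^{j+1}\xi_N\rangle \leq C(1+\zeta^{j+1})$; this in turn follows from \eqref{eq:BSS} tested against $(\cN+1)^{j/2}\xi_N$ by the same commutator scheme, now using only that $\mathrm{ad}^{(k)}_{i\cN}(\cG_N) \leq C(\cH_N+1)$ together with the already-proven bound on $(\cH_N+1)(\cN+1)^{j-1}$. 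One thus runs the two inductions (on $\langle(\cN+1)^{j+1}\rangle$ and on $\langle(\cH_N+1)(\cN+1)^j\rangle$) in tandem.

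The main obstacle is the bookkeeping of the commutator expansion: expressing $(\cN+1)^{j/2}(\cG_N-E_N)(\cN+1)^{j/2}$ as $(\cG_N-E_N)(\cN+1)^j$ plus lower-order corrections requires expanding $[\,(\cN+1)^{j/2},\,\cG_N\,]$, which — because $\cG_N$ is not number-conserving (it has the quadratic off-diagonal terms in $\cQ_{\cG_N}$, the cubic $\cC_{\cG_N}$, and $\cV_N$) — produces a chain of iterated commutators; the key input making this tractable is exactly estimate \eqref{eq:adkN}, which asserts that every such iterated commutator is relatively $\cH_N$-bounded uniformly in $N$. Care is needed to always pair a factor $(\cN+1)^{1/2}$ (or $a_x$) with each creation/annihilation operator produced by a commutator so that the resulting expressions close on $(\cH_N+1)(\cN+1)^{j-1}$ rather than on an uncontrolled quantity; since $\cH_N \geq \cK \geq 0$ and $\cH_N$ controls $\cV_N$, this pairing goes through. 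The fact that we work on the truncated Fock space (so $\cN \leq N$) is used only to make sense of intermediate quantities, not to obtain $N$-independent constants.
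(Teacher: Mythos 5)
Your overall strategy---induction on $j$ driven by a coercivity bound for $\cG_N - E_N$ together with the iterated-commutator estimates \eqref{eq:adkN} to handle the non-number-conserving parts of $\cG_N$---is essentially the paper's, which delegates the commutator bookkeeping to \cite[Prop.~4.1]{BBCS3} once the right coercivity is established. You diverge in one spot: at exponent $j\geq 1$ you keep the weaker lower bound $\cG_N - E_N \geq \tfrac12\cH_N - C(\cN+1)$ coming directly from \eqref{eq:GDelta}--\eqref{eq:Delta-bd}, which leaves a $(\cN+1)^{j+1}$ remainder that you propose to control by a parallel (tandem) induction on the pure number moments. That scheme can be made to work, and your dependency ordering $B_1 \to A_0 \to B_2 \to A_1 \to \cdots$ is well-founded. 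The paper avoids the tandem entirely by first \emph{interpolating}: taking a convex combination of \eqref{eq:BSS} (i.e.\ $\cG_N - E_N \geq c\cN - C$) with the bound $\cG_N - E_N \geq \tfrac12\cH_N - C(\cN+1)$, with a sufficiently small weight on the latter, so that the $c\cN$ from \eqref{eq:BSS} absorbs the $-C(\cN+1)$ from the other. This produces \eqref{eq:cGNgeqcHN}, namely $\cG_N - E_N \geq c\,\cH_N - C$ with an $\cN$-\emph{independent} constant. With this strengthened coercivity the remainder at exponent $j$ is only $(\cN+1)^j$, a single induction closes, and the auxiliary number-moment induction you set up is unnecessary. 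So: your argument is correct and follows the same route, but misses the interpolation shortcut that is the one genuine content of the paper's proof beyond the citation of \cite{BBCS3}.
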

\begin{proof}
The proof carries over from \cite[Prop. 4.1]{BBCS3}, once we show that
		\begin{equation}\label{eq:cGNgeqcHN}
		\cG_N -E_N \geq   c\, \cH_N - C
		\end{equation}
for two positive constants $c, C>0$, independent of $N$. Indeed, this enables us to use the induction argument from \cite[Prop. 4.1]{BBCS3} combined with the bounds \eqref{eq:adkN}. The bound (\ref{eq:cGNgeqcHN}) can be shown interpolating (\ref{eq:BSS}) with the estimate 
\[ \cG_N - E_N \geq \frac{1}{2} \cH_N - C (\cN+ 1) \]
which follows from (\ref{eq:GDelta}) and (\ref{eq:Delta-bd}). 
\end{proof}

Although Prop. \ref{prop:GN} and Theorem \ref{prop:hpN} provide strong control on low-energy states of $\cG_N$, this is not enough to deduce Theorem \ref{thm:main}. The reason is that the cubic operator $\cC_{\cG_N}$ and the quartic operator $\cV_N$ appearing on the r.h.s. of \eqref{eq:cCcGN} still contain energy contributions of order one. To extract them, we follow again the strategy introduced in \cite{BBCS4} and we conjugate $\cG_N$ with an additional unitary transformation, this time generated by an operator cubic in the modified creation and annihilation fields. 

In order to define our cubic transformation, we need to introduce some notation. For an exponent $\exph > 0$ to be specified later on, we use $\chi_H (p) = \chi (|p| > N^\exph)$ to restrict to high momenta $|p| > N^\exph$. We consider the kernel 
\begin{equation}\label{eq:def-wtk} \wt{k}_H (x;y) = \left[ - N w_\ell (N.) * \check{\chi}_H \right] (x-y) \ph_0 (y). \end{equation} 
On the other hand, to restrict to small momenta, we use a Gaussian cutoff. For $0< \exps < \exph$, we set $g_L (p) = e^{-p^2/ N^{2\tau}}$. The advantage of $g_L$, with respect to a sharp cutoff, is that it decays fast also in position space. In particular, from the assumptions \eqref{eq:asmptsVVext} on $V_\text{ext}$, we find 
\begin{equation} \label{eq:Vextconvg} 
\vert V_\text{ext} * \check{g}_L \vert \leq C (V_\text{ext} + 1).
\end{equation} 
Recalling the definition of $\sigma, \gamma$ in (\ref{eq:defgammasigma}), we introduce the notation  
\begin{equation}\label{eq:def-gsL}  \sigma_{L} = \sigma *_2 \check{g}_L , \qquad \gamma_{L} = \gamma *_2 \check{g}_L \end{equation} 
where $*_2$ is the convolution in the second variable. In particular, for every $x \in \bR^3$, we have $\sigma_{L,x} = \sigma_x * \check{g}_L$ and $\gamma_{L,x} = \gamma_x * \check{g}_L$. 

In the next lemma, we collect useful bounds for various kernels.
\begin{lemma}\label{lm:bds-cubic}
We assume that $0<3\exps\leq \exph$. Let $\wt{k}_H, \sigma_L$ and $\gamma_L$ as in \eqref{eq:def-wtk} and \eqref{eq:def-gsL}. Then we have 
\begin{equation} \label{eq:bds-cubic1}
\begin{split}
&\| \wt{k}_H \|  \leq C N^{-\exph / 2} ,  \quad \sup_x \| \wt{k}_{H} (x,.)\|  \leq C N^{-\exph/2}, \quad \Vert \wt{k}_{H} (.,y) \Vert \leq C N^{-\exph/2} \pn(y) 
\end{split} \end{equation} 
for all $y \in \bR^3$. Setting $\wt{k} (x;y) = - N w_\ell (N(x-y)) \ph_0 (y)$ (no cutoff), we have 
\begin{equation}\label{eq:diffk} \| \nabla_x (\wt{k} - \wt{k}_H) (x,.) \| \leq C N^{\eps/2} , \qquad | (\wt{k} - \wt{k}_H) (x;y)| \leq C N^\eps \ph_0 (y). \end{equation} 
Moreover, we find 
\begin{equation}\label{eq:bds-cubic2} 
\begin{split} 
\| \sigma_{L,x} \| , \| \big[ (\gamma  - 1) *_2 \check{g}_L \big]_x \| \leq C \pn (x),   \end{split} \end{equation} 
for all $x \in \bR^3$, which implies $\| \sigma_L \| , \| (\gamma  - 1) *_2 \check{g}_L  \| , \| \gamma_L \|_\text{op} \leq C$, and also 
\begin{align*} 
\begin{split} 
\Vert V_\text{ext} \big[(\gamma - 1)*_2 \check{g}_L \big] \Vert, \Vert V_\text{ext} \sigma_{L} \Vert \leq C.
\end{split} \end{align*} 
Finally, we have 
\begin{equation} \label{eq:bds-cubic4} 
 \Vert \nabla_x \sigma_{L,x} \Vert \leq C N^{\exps/2} \big[\pn(x)  + |\nabla \pn (x)| \big] , \quad \| \nabla_x \big[ (\gamma - 1)_x *_2 \check{g}_L \big] \| \leq C  \big[\pn(x)  + |\nabla \pn (x)| \big].
\end{equation} 
\end{lemma}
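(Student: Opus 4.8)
The plan is to prove each estimate in Lemma \ref{lm:bds-cubic} by direct computation from the definitions \eqref{eq:def-wtk}, \eqref{eq:def-gsL}, combined with the pointwise and $L^2$ bounds on $w_\ell$, $\eta$, $\sigma$, $\gamma$ collected in Lemma \ref{3.0.sceqlemma} and Lemma \ref{lm:bndseta}, together with the properties \eqref{eq:asmptsVVext} of $V_\text{ext}$ and the fast decay of $\pn$.

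\textbf{Bounds on $\wt k_H$.} First I would write, in Fourier space, $\widehat{\wt k}_H(p;\cdot)$ as $-N^{-2}\,\wh w_\ell(p/N)\,\chi_H(p)$ convolved (in the remaining variable) against $\pn$; using \eqref{eq:whwl}, $|\wh w_\ell(p/N)|\leq C N^2/|p|^2$, so that $\|\wt k_H\|^2 = \int_{|p|>N^\exph} N^{-4}|\wh w_\ell(p/N)|^2\|\pn\|^2\,dp \leq C\int_{|p|>N^\exph}|p|^{-4}\,dp \leq CN^{-\exph}$, which gives the first bound in \eqref{eq:bds-cubic1}; the second and third bounds follow similarly, pulling out $\|\pn\|_\infty$ or $\pn(y)$ as appropriate. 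For \eqref{eq:diffk}, the difference $\wt k-\wt k_H$ corresponds to the low-momentum part $\chi(|p|\leq N^\exph)$; here I would use that $\wt k(x;y) = -Nw_\ell(N(x-y))\pn(y)$ is already integrable against $\check\chi_L$ because $w_\ell(N\cdot)$ is supported in a ball of radius $\ell$ and $w_\ell(x)\leq C/(|x|+1)$ by \eqref{3.0.scbounds1}. The gradient bound uses $|\nabla w_\ell(x)|\leq C/(|x|^2+1)$, so $\nabla_x\wt k$ has an $N^{2}\cdot$(integrable singularity of order $2$) structure; convolving with the low cutoff $\check\chi_L$, which on the support of $w_\ell(N\cdot)$ contributes a factor comparable to the volume $\sim N^{-3}$ times $N^{3\exph}$-type growth, after the dust settles yields the $N^{\eps/2}$, respectively $N^\eps$, powers stated. (The precise bookkeeping of powers is the routine part; I would track it carefully but it is not conceptually hard.)

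\textbf{Bounds on $\sigma_L,\gamma_L$.} For \eqref{eq:bds-cubic2} I would use $\sigma_{L,x} = \sigma_x *\check g_L$ and $\|\sigma_x *\check g_L\| \leq \|\sigma_x\|\,\|\check g_L\|_1 \leq C\|\sigma_x\|$, since $g_L(p)=e^{-p^2/N^{2\tau}}\geq 0$ has $\|\check g_L\|_1 = \check g_L$ evaluated appropriately $=g_L(0)=1$ (a Gaussian convolution is an averaging, hence $L^\infty\to L^\infty$ and $L^2\to L^2$ contraction); then $\|\sigma_x\|\leq C\pn(x)$ by Lemma \ref{lm:bndseta}. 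The same works for $(\gamma-1)_x = \text{p}_\eta(x;\cdot)$, using $|\text{p}_\eta(x;y)|\leq C\pn(x)\pn(y)$. Integrating over $x$ gives $\|\sigma_L\|,\|(\gamma-1)*_2\check g_L\|\leq C$, and $\|\gamma_L\|_\text{op}\leq 1+\|(\gamma-1)*_2\check g_L\|\leq C$. For the $V_\text{ext}$-weighted bounds, I would use $\|V_\text{ext}\sigma_{L,x}\| = \|V_\text{ext}(\sigma_x*\check g_L)\|$ and the pointwise estimate $|V_\text{ext}(x)\,(\sigma_x*\check g_L)(y)| \leq (V_\text{ext}(x)+1)(V_\text{ext}(y)+1)\cdot C\pn(x)\pn(y)\cdot(\text{decay of }\check g_L)$, invoking the submultiplicativity $V_\text{ext}(x+y)\leq C(V_\text{ext}(x)+C)(V_\text{ext}(y)+C)$ from \eqref{eq:asmptsVVext}(2) to move the weight through the convolution, and then that $V_\text{ext}\pn$ has controlled $L^p$-norms (as remarked after \eqref{eq:asmptsVVext}) and $\check g_L$ decays fast. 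Finally, \eqref{eq:bds-cubic4}: here the derivative lands either on $w_\ell(N\cdot)$-type factors (giving a factor $N$) or on $\check g_L$ (giving a factor $N^\tau$ from $\nabla\check g_L$), and the worst term is $\nabla_x\sigma_{L,x}$, where $\nabla_x$ acting on $\sigma_x = \eta_x + \text{r}_{\eta,x}$ produces $\nabla_x\eta_x$ which by \eqref{eq:bndseta} has norm $\leq C\sqrt N$ — but this is then convolved against $\check g_L$ in the \emph{other} variable, not helping with the $\sqrt N$; the point is that we instead move the derivative off $\eta$ and onto $\check g_L$ via integration by parts in the convolution, converting $\sqrt N$ into $N^{\tau/2}$ at the cost of using $\|\nabla\check g_L\|_1\leq CN^\tau$ and $\|\eta_x\|\leq C\pn(x)$; combined with the contribution where $\nabla$ hits $\pn$ (giving $|\nabla\pn(x)|$), this yields the claimed $N^{\exps/2}[\pn(x)+|\nabla\pn(x)|]$. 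The estimate for $\nabla_x[(\gamma-1)_x*_2\check g_L]$ is analogous but with no growing power, since $\|\nabla_1\text{p}_\eta\|\leq C$.

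\textbf{Main obstacle.} The delicate point is \eqref{eq:diffk} together with \eqref{eq:bds-cubic4}: one must exploit the Gaussian (respectively low-momentum) cutoff to trade the $\sqrt N$-sized kinetic factors hidden in $\nabla\eta$, $\nabla k$ for the much smaller powers $N^{\eps/2}$, $N^{\tau/2}$. The mechanism is always the same — integrate by parts to move derivatives onto $\check g_L$ or $\check\chi_L$, whose $L^1$-norms grow only like $N^\tau$ or $N^\exph$ — but one has to be careful that the remaining factor is bounded in $L^2$ uniformly in $N$, which is exactly where the pointwise bounds $\|\eta_x\|,\|k_x\|\leq C\pn(x)$ and the compact support of $w_\ell(N\cdot)$ in a ball of fixed radius $\ell$ are used. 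Everything else is a routine exercise in convolution inequalities and the a priori bounds already available from Lemmas \ref{3.0.sceqlemma} and \ref{lm:bndseta}.
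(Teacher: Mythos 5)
Your handling of \eqref{eq:bds-cubic1} and \eqref{eq:bds-cubic2} and the $V_\text{ext}$-weighted bounds is correct and essentially follows the paper's route; the Young inequality $\|\sigma_x*\check g_L\|\le\|\sigma_x\|\,\|\check g_L\|_1$ is equivalent to the paper's Parseval step $\|\widehat{\sigma_x}\,g_L\|\le\|\sigma_x\|$, and once $\|\sigma_{L,x}\|\le C\pn(x)$ is in hand the $V_\text{ext}$-bounds follow simply from $V_\text{ext}\pn\in L^2$ (the submultiplicativity of $V_\text{ext}$ is not needed at this point). However, both ``delicate'' steps you flag contain genuine gaps.

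For \eqref{eq:diffk}, your position-space plan does not produce the stated powers. The naive estimate $\|(-Nw_\ell(N.))*\check\chi_{H^c}\|_\infty\le\|Nw_\ell(N.)\|_1\,\|\check\chi_{H^c}\|_\infty$ gives $CN^{3\exph}$, not $CN^{\exph}$; the missing gain of $N^{2\exph}$ comes entirely from the Fourier decay $|\widehat{w_\ell}(p)|\le C/|p|^2$ of Lemma \ref{3.0.sceqlemma}(iv), which you invoke for $\wt k_H$ but abandon here in favor of a ``volume'' heuristic that you did not carry through. The correct argument is to rerun your Fourier computation from \eqref{eq:bds-cubic1} with $\chi_{H^c}$ in place of $\chi_H$:
\[
\big\|-Nw_\ell(N.)*\check\chi_{H^c}\big\|_\infty \le \int_{|p|\le N^\exph}\frac{|\widehat{w_\ell}(p/N)|}{N^2}\,dp \le C\int_{|p|\le N^\exph}\frac{dp}{|p|^2}\le CN^\exph,
\]
and similarly with an extra factor $|p|$ in the integrand for the gradient, giving $\|\nabla_x(\wt k-\wt k_H)(x,.)\|^2\le CN^\exph$. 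Your claim that the bookkeeping is ``routine'' is not borne out, since the route you name does not lead to the claimed exponent.

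For \eqref{eq:bds-cubic4}, your integration-by-parts mechanism produces $N^\exps$, not $N^{\exps/2}$. Replacing $\nabla_x$ by $-\nabla_z$ inside $w_\ell(N(x-z))$ and integrating by parts in the convolution variable, Young's inequality gives $\|Nw_\ell(N(x-\cdot))\pn(\cdot)\|_2\,\|\nabla\check g_L\|_1\le C\cdot CN^\exps$; there is no square root to be extracted from $\|\nabla\check g_L\|_1\le CN^\exps$ along this route. The paper instead estimates the $L^2_y$-norm \emph{squared} of $\nabla_x(k*_2\check g_L)(x;\cdot)$ directly using $|\nabla w_\ell(u)|\le C/(|u|^2+1)$, which after integrating out $y$ reduces to the two-body Gaussian--Coulomb integral
\[
\int dz_1\,dz_2\,\frac{\check g_L(z_1)\,\check g_L(z_2)}{|z_1-z_2|}\le CN^\exps.
\]
This bounds the \emph{square} of the norm, hence the norm itself by $CN^{\exps/2}\pn(x)$. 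The factor one-half in the exponent is a real off-diagonal averaging effect in the double convolution, lost when the two $\check g_L$-legs are estimated separately as Young's inequality does, so your plan, as stated, proves a strictly weaker statement.
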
 
\begin{proof}
Using \eqref{eq:whwl} we compute
	\begin{align*}
	\Vert N w_\ell (N.) * \check{\chi}_H \Vert^2 \leq C \int_{\vert p \vert \geq N^\exph} \vert p \vert^{-4} dp = C N^{-\exph}.
	\end{align*}
	This implies all the bounds in (\ref{eq:bds-cubic1}). To show (\ref{eq:diffk}), we observe that $(\wt{k} - \wt{k}_H) (x;y) = \big[-N w_\ell (N.) * \check{\chi}_{H^c} \big] (x-y) \pn (y)$, with $\chi_{H_c}$ the indicator function of the set $\{ p \in \bR^3 : |p| \leq N^\eps \}$ and that, from (\ref{eq:whwl}), 
	\[ \big\| \nabla \big[ -N w_\ell (N.) * \check{\chi}_{H^c} \big] \big\|^2_2 = \int_{|p| \leq N^\eps}  dp \left[ \frac{|p|}{N^2} |\hat{w}_\ell (p/N)| \right]^2 \leq C N^{\eps} \]
	and 
	\[ \|  -N w_\ell (N.) * \check{\chi}_{H^c} \|_\infty \leq C \int_{|p| \leq N^\eps} \frac{1}{N^2} \hat{w}_\ell (p/N) dp \leq C N^\eps. \]
To prove (\ref{eq:bds-cubic2}), we estimate, with \eqref{eq:bndseta}, (\ref{eq:bndsetan}), Parseval's identity and $\vert g_L \vert \leq 1$, 
\[  \| \sigma_{L,x} \|= \Vert \widehat{\sigma_x} g_L \Vert \leq \Vert \sigma_x \Vert \leq \sum_{n\geq 0} \frac{1}{(2n+1)!} \Vert \eta^{(2n+1)}_x \Vert \leq C \pn(x).  \] 
The bound for $(\gamma -1) *_2 \check{g}_L$ can be proven similarly. Finally, we show (\ref{eq:bds-cubic4}). With \eqref{eq:bndsetan}, it is easy to check that $\| \nabla_x (\sigma_{L} -   k *_2 \check{g}_L)_x \| \leq C (\pn (x) + |\nabla \pn (x)|)$, with $k$ as defined in (\ref{eq:defk}). We have 
\begin{equation}\label{eq:nablasigma} \begin{split}  \nabla_x (k *_2 \check{g}_L) (x;y) = &- \int N^2 \nabla w_\ell (N(x-z-y)) \pn (x) \pn (y+z) \check{g}_L (z) dz \\ &- \int N w_\ell (N(x-z-y)) \nabla \pn (x) \pn (y+z) \check{g}_L (z) dz. \end{split} \end{equation} 
The $L^2$-norm squared (in the variable $y$, for fixed $x$) of the first term can be bounded, with \eqref{3.0.scbounds1}, by 
\[ \begin{split} \int dy &dz_1 dz_2 \frac{1}{|x-z_1-y|^2 |x-z_2-y|^2}  \pn (x)^2 \pn (y+z_1) \pn(y+z_2)  \check{g}_L (z_1) \check{g}_L (z_2) \\ \leq \; & C \int dz_1 dz_2 \, \frac{1}{|z_1 - z_2|} \check{g}_L (z_1) \check{g}_L (z_2) \leq C N^\tau \int dw_1 dw_2 \frac{1}{|w_1 - w_2|} e^{-w_1^2} e^{-w_2^2} \leq C N^\tau \end{split}  \]
where we switched to new variables $w_i = z_i  N^\tau$, for $i=1,2$. The second term on the r.h.s. of (\ref{eq:nablasigma}) can be controlled analogously. 
\end{proof}

We introduce the antisymmetric operator 
\begin{equation}\label{eq:A-def}  A =  \frac{1}{\sqrt{N}} \int dx dy \, \wt{k}_H (x;y) \wt{b}_x^* \wt{b}_y^* \left[ \wt{b} (\gamma_{L,x})  + \wt{b}^* (\sigma_{L,x})  \right] - \text{h.c.} \end{equation} 
In order to preserve the space $\cF_{\perp \ph_0}^{\leq N}$ and stay orthogonal to $\ph_0$, we used here 
the operators  
\[ \wt{b}_x = b (Q_x) = \int dz \, Q (x;z) b_z \]
with $Q = 1 - |\ph_0 \rangle \langle \ph_0|$. Using $A$, we define the cubically renormalized excitation Hamiltonian $\cJ_N = e^{-A} \cG_N e^{A}$. 

The next proposition summarises important properties of the operator $\cJ_N$. 
\begin{prop} \label{prop:JN}
	Assume \eqref{eq:asmptsVVext} and let $0<6 \exps\leq \exph\leq \frac{1}{2}$. Then we have
	\begin{equation}\label{eq:JNdec}
	\mathcal{J}_N = \kappa_{\mathcal{J}_N} + \mathcal{Q}_{\mathcal{J}_N} + \mathcal{V}_N + \mathcal{E}_{\mathcal{J}_N},
	\end{equation}
	where 
\be \label{eq:kJN}\begin{split}
	\kappa_{\mathcal{J}_N} 
	&= \kappa_{\cG_N}  -\emph{tr}(\sigma N^3 (Vw_\ell)(N.) * \pn^2 \sigma) - \emph{tr}(\sigma N^3 (Vw_\ell)(N(x-y)) \pn(x)\pn(y) \sigma)
	\end{split}\ee
	with $\kappa_{\cG_N}$ defined in \eqref{eq:defcN} and where the quadratic operator $ \cQ_{\mathcal{J}_N}$ is given by
	\begin{equation}\label{eq:cQcJN}
	\begin{split}
	\cQ_{\mathcal{J}_N} = \int dxdy\, \widetilde{\Phi}(x;y) b^*_x b_y + \frac12 \int dxdy\, \widetilde{ \Gamma}(x;y) \big( b^*_x b^*_y + b_xb_y\big)
	\end{split}
	\end{equation}
	for
	\begin{equation}\label{eq:deftildePhi}
	\begin{split}
	\widetilde{\Phi} &=  \gamma \big(-\Delta+V_\text{ext}-\eps_{GP}\big)\gamma + \sigma \big(-\Delta+V_\text{ext}-\eps_{GP}\big)\sigma\\
	&\hspace{0.4cm}+ \gamma \big( 8\pi\frak{a}_0\pn^2 + N^3(Vf_\ell)(N(x-y))\pn(x)\pn(y)\big)\gamma \\
	&\hspace{0.4cm}+ \sigma\big(8\pi\frak{a}_0\pn^2 + N^3(Vf_\ell)(N(x-y))\pn(x)\pn(y)\big)\sigma\\
	&\hspace{0.4cm}+\Big( \gamma \big(  N^3(Vf_\ell)(N(x-y))\pn(x)\pn(y)\big)\sigma+\emph{h.c.}\Big),
	\end{split}
	\end{equation}
	and 
	\begin{equation}\label{eq:deftildeGamma}
	\begin{split}
	\widetilde{\Gamma} &=  \gamma \big(  N^3(Vf_\ell)(N(x-y))\pn(x)\pn(y)\big)\gamma\\
	&\hspace{0.5cm}+\sigma \!\big( N^3(Vf_\ell)(N(x-y))\pn(x)\pn(y)\big)\!\sigma\\
	&\hspace{0.5cm} +\big[ \sigma \big(-\Delta+V_\text{ext}-\eps_{GP}\big)\gamma +\emph{h.c.}\big]\\
	&\hspace{0.5cm}+ \big[ \sigma \big(8\pi\frak{a}_0\pn^2 + N^3(Vf_\ell)(N(x-y))\pn(x)\pn(y)\big)\gamma + \emph{h.c.}\big].\\
	\end{split}
	\end{equation}
	Moreover, the self-adjoint operator $\mathcal{E}_{\mathcal{J}_N}$ is bounded by 
		\begin{equation} \label{eq:errorJN}
	\pm \mathcal{E}_{\mathcal{J}_N} \leq \frac{C}{N^{\min \{\frac{\exps}{2}, \frac{1}{2}-\exph\}}}  (\mathcal{H}_N+1)(\mathcal{N}+1)^3.
	\end{equation}
\end{prop}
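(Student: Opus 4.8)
\textbf{Proof proposal for Proposition \ref{prop:JN}.}

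The plan is to compute $\cJ_N = e^{-A} \cG_N e^A$ by expanding in nested commutators with $A$, using the decomposition $\cG_N = \kappa_{\cG_N} + \cQ_{\cG_N} + \cC_{\cG_N} + \cV_N + \cE_{\cG_N}$ from Prop. \ref{prop:GN}. The guiding principle, as in \cite{BBCS3}, is that $A$ is designed so that the commutator $[\cC_{\cG_N}, A]$ cancels the cubic term $\cC_{\cG_N}$ (up to controllable errors) and that $[\cV_N, A]$ produces a further cubic contribution also cancelling against part of $\cC_{\cG_N}$; simultaneously, the term $[\cQ_{\cG_N}^{(\text{kin})}, A]$, coming from the kinetic energy inside $\cQ_{\cG_N}$, generates a new quadratic contribution that, together with the leftover pieces, converts the kernels $N^3 V(N\cdot)$ appearing in $\Phi,\Gamma$ into $N^3 (Vf_\ell)(N\cdot)$ and the convolution $N^3 V(N\cdot)*\pn^2$ into $8\pi\frak a_0\pn^2$. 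Concretely, I would write $e^{-A}\cG_N e^A = \cG_N + \int_0^1 ds\, e^{-sA}[\cG_N,A]e^{sA}$ and iterate once or twice more, controlling the remainder by Gronwall-type estimates once the relevant commutators have been bounded.

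The key computational steps, in order: (i) Establish growth bounds for $e^{-sA}$, namely that $e^{-sA}(\cN+1)^k e^{sA} \lesssim (\cN+1)^k$ and $e^{-sA}\cH_N e^{sA} \lesssim \cH_N + (\cN+1)^{?}$ uniformly in $s\in[0,1]$; this uses the a-priori bounds of Lemma \ref{lm:d-bds} on the $d_\eta$-operators, the bounds of Lemma \ref{lm:bds-cubic} on $\wt k_H,\sigma_L,\gamma_L$, and crucially the smallness $\|\wt k_H\|\leq CN^{-\exph/2}$ together with the high-momentum structure (two particles with momenta $|p|>N^\exph$, one with momentum $|p|\lesssim N^\exps$). (ii) Compute $[\cC_{\cG_N}, A]$: the leading term produces $-\cC_{\cG_N}$ plus a quadratic term built from contracting $\wt k_H$ against the $V$-kernel, which after using $\int V(x)f_\ell(x)\,dx \approx 8\pi\frak a_0$ (Lemma \ref{3.0.sceqlemma} ii) and the definition of $\wt k_H$ reconstructs the $(Vf_\ell)$-kernels. (iii) Compute $[\cV_N, A]$: this yields another cubic term which combines with the one from (ii); the exponent condition $6\exps \leq \exph$ enters here to ensure the error terms from momentum mismatches are small. (iv) Compute $[\cK, A]$ (and $[\cV_\text{ext},A]$, $[\cL_N^{(2)}$-quadratic part$,A]$): the kinetic commutator hitting $\wt b_x^*\wt b_y^*$ against $\wt k_H$ produces, via the scattering equation \eqref{eq:scatlN} for $w_\ell$, the renormalization of the quadratic kernels and the conversion of convolution terms to $8\pi\frak a_0\pn^2$, together with the constant shift $-\tr(\sigma N^3(Vw_\ell)(N\cdot)*\pn^2\sigma) - \tr(\sigma N^3(Vw_\ell)(N(x-y))\pn(x)\pn(y)\sigma)$ appearing in $\kappa_{\cJ_N}$. (v) Collect all remainders — higher commutators, the action of $e^{-sA}$ on error terms, the $\cE_{\cG_N}$ term conjugated by $e^A$, and the mismatches between sharp and Gaussian cutoffs — into $\cE_{\cJ_N}$ and verify the bound \eqref{eq:errorJN} with the stated power $N^{-\min\{\exps/2,\,1/2-\exph\}}$.

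The main obstacle, I expect, is step (iv) combined with the bookkeeping in (v): tracking precisely how the kinetic energy commutator, acting through the scattering equation, regularizes the singular kernels $N^3 V(N\cdot)$ into $N^3(Vf_\ell)(N\cdot)$ while correctly identifying the constant $\kappa_{\cJ_N}$, and simultaneously showing that every discarded term is bounded by $N^{-\min\{\exps/2,\,1/2-\exph\}}(\cH_N+1)(\cN+1)^3$. The Gaussian cutoff $g_L$ rather than a sharp one is what makes this delicate: one must repeatedly use \eqref{eq:Vextconvg}, \eqref{eq:bds-cubic4}, and the fast position-space decay of $\check g_L$ to absorb the external potential and to control $\nabla_x\sigma_{L,x}$, which costs a factor $N^{\exps/2}$ — this is exactly the source of the $N^{-\exps/2}$ in the error, and balancing it against the $N^{-(1/2-\exph)}$ from the high-momentum truncation (where one pays $N^{1/2}$ from a factor $\sqrt N$ in $A$ against the gain $N^{-\exph/2}$ twice, roughly) dictates the constraint $6\exps\leq\exph\leq\tfrac12$. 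I would follow the template of the corresponding proposition in \cite{BBCS3}, modifying each commutator estimate to account for the non-constant $\pn$ via the pointwise bounds in Lemmas \ref{lm:bndseta} and \ref{lm:bds-cubic}, and defer the lengthy but routine term-by-term estimates to Section \ref{sec:JN}.
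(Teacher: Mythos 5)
Your high-level roadmap (Duhamel/commutator expansion, growth bounds via Lemma \ref{lm:growA}, scattering equation to reconstruct $Vf_\ell$, Gaussian vs.\ sharp cutoff balancing the exponents) matches the paper, but your description of the cancellation mechanism is off by one level of nesting, and this is not cosmetic --- it is the organizing principle of Section \ref{sec:JN}.

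You claim that $[\cC_{\cG_N},A]$ produces $-\cC_{\cG_N}$ to leading order. It does not. Since $\cC_{\cG_N}$ is cubic and $A$ is cubic (with a $1/\sqrt N$), the commutator $[\cC_{\cG_N},A]$ is \emph{quadratic} to leading order (Lemma \ref{lm:Xi}: $[\cC_{\cG_N},A]=\Xi_0+\Xi_0^*+\cE$). The object that cancels $\cC_{\cG_N}$ is the \emph{cubic} operator $\Theta_0+\Theta_0^*$ generated by $[\cK+\cV_N,A]$ (Lemma \ref{lm:KV-A}); here the scattering equation \eqref{eq:scatl} recombines $-\Delta(Nw_\ell)+\tfrac12 N^2 V(N\cdot)Nw_\ell$ so that, up to the cutoff discrepancy $\delta-\check g_L$, one finds $\Theta_0\approx-\cC_{\cG_N}$ (see \eqref{eq:CTheta}--\eqref{eq:CXi}). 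So your step (ii) is wrong, and your step (iv) conflates two distinct roles of $\cK$: it contributes to the \emph{cubic} cancellation at first order, and the \emph{quadratic} renormalization of the kernels comes out only at second order, from $\tfrac12[[\cK+\cV_N,A],A]\approx\tfrac12(\Pi_0+\Pi_0^*)$ (Lemma \ref{lm:theta0}) together with $\Xi_0+\Xi_0^*$ from $[\cC_{\cG_N},A]$, which the paper then fuses into a single quadratic operator $Z_0$ (with all cutoffs removed) before identifying $\cQ_{\cG_N}+\tfrac12(Z_0+Z_0^*)=\kappa_{\cJ_N}-\kappa_{\cG_N}+\cQ_{\cJ_N}+\text{error}$. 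You also missed that the paper proves $[\cQ_{\cG_N}-\cK-\cV_\text{ext},A]$ and $[\cV_\text{ext},A]$ are \emph{both negligible} (Lemmas \ref{lm:QA-conj}, \ref{lm:Vext-conj}), i.e.\ $A$ essentially commutes with the diagonal quadratic part, so none of the $Vf_\ell$-kernels in $\wt\Phi,\wt\Gamma$ arise from those commutators.

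Concretely, a proof built on your step (ii) would fail at the start: one would look for a piece of $[\cC_{\cG_N},A]$ proportional to $b^*b^*b$, find none (only $b^*b$, $b^*b^*$, $bb$, plus higher-order in $1/N$), and be stuck with the uncancelled $\cC_{\cG_N}$. The fix is to compute $[\cK+\cV_N,A]$ first, use the scattering equation there, match $\Theta_0+\Theta_0^*$ against $-\cC_{\cG_N}$, and only then harvest the quadratic terms from $[\cC_{\cG_N},A]$ and $[\Theta_0+\Theta_0^*,A]$.
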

		
The proof of Prop. \ref{prop:JN}, which is similar to the proof of \cite[Prop. 3.3]{BBCS3}, is sketched it in Sect. \ref{sec:GN}. 

In order to compute the spectrum of $\cJ_N$, we need to have good control on the number of excitations of wave functions in low-energy spectral subspaces of $\mathcal{J}_N$. This is summarized in the following proposition whose proof is deferred to Section \ref{sec:JN}.
	\begin{prop}\label{prop:hpNcubic}
	Assume \eqref{eq:asmptsVVext}, let $\eta$ be defined as in \eqref{eq:defeta}, $A$ as defined in \eqref{eq:A-def} and let $E_N$ denote the ground state energy of $H_N$. Let $\psi_N \in L^2_s 
(\mathbb{R}^{3N})$ with $\| \psi_N \| = 1$ belong to the spectral subspace of $H_N$ with energies below $E_N + \zeta$, for some $\zeta > 0$, i.e. 
	 \[ \psi_N = {\bf 1}_{(-\infty ; E_N + \zeta]} (H_N) \psi_N. \]
	Let $\xi_N =e^{-A} e^{-B} U_N \psi_N\in \cFpn$ be the renormalized excitation vector associated with $
\psi_N$. Then, for any $j \in\mathbb N$ there exists a constant $C > 0$ such that 
	\[ \langle \xi_N, (\cN +1)^j (\cH_N+1) \xi_N \rangle \leq C (1 + 
\zeta^{j+3}). \]
\end{prop}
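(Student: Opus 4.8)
The plan is to propagate the a-priori bound of Theorem \ref{prop:hpN} through the cubic conjugation $e^A$. Set $\widetilde\xi_N = e^{-B} U_N \psi_N$, so that $\xi_N = e^{-A}\widetilde\xi_N$. By Theorem \ref{prop:hpN} we already know $\langle \widetilde\xi_N, (\cH_N+1)(\cN+1)^j \widetilde\xi_N\rangle \leq C(1+\zeta^{j+1})$ for every $j$; the task is to control $\langle \xi_N, (\cN+1)^j(\cH_N+1)\xi_N\rangle$ by a comparable quantity (with a worse power of $\zeta$ — the statement allows $\zeta^{j+3}$, which signals that the cubic conjugation costs at most three extra powers of $\cN$, consistent with $A$ being cubic in the fields). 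The first step is therefore to establish the analogue of Lemma \ref{lm:Npow} for $e^A$: for every $n\in\bN$ there is $C>0$ with $e^{-A}(\cN+1)^n e^A \leq C(\cN+1)^n$ on $\cF_{\perp\ph_0}^{\leq N}$. This follows by a Grönwall argument on $t\mapsto \langle \xi, e^{-tA}(\cN+1)^n e^{tA}\xi\rangle$: one differentiates, writes the derivative as the expectation of $e^{-tA}[\,(\cN+1)^n, A\,]e^{tA}$, and uses the commutator bounds from Lemma \ref{lm:d-bds} together with the kernel estimates of Lemma \ref{lm:bds-cubic} — crucially $\|\wt k_H\|\leq CN^{-\exph/2}$ and the pointwise bound $\|\wt k_H(\cdot,y)\|\leq CN^{-\exph/2}\ph_0(y)$ — to see that the commutator is bounded by $C(\cN+1)^n$ (the factor $N^{-1/2}$ in front of $A$ compensates the extra creation/annihilation operator, and the smallness of $\|\wt k_H\|$ gives room to spare).

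The second, and main, step is the corresponding bound with one factor of the energy:
\begin{equation}\label{eq:HNcubicprop}
e^{-A}(\cH_N+1)(\cN+1)^j e^A \leq C(\cN+1)^{j}(\cH_N+1) + C(\cN+1)^{j+?}
\end{equation}
as a quadratic-form inequality, or rather the inequality $\langle \xi_N,(\cN+1)^j(\cH_N+1)\xi_N\rangle \leq C\langle \widetilde\xi_N, (\cN+1)^{j+3}(\cH_N+1)\widetilde\xi_N\rangle + \text{(lower order)}$. Again this is a Grönwall/interpolation argument: one controls $\frac{d}{dt}\langle e^{tA}\xi, (\cH_N+1)(\cN+1)^j e^{tA}\xi\rangle = \langle e^{tA}\xi, [\cH_N,A](\cN+1)^j e^{tA}\xi\rangle + (\text{commutator with }(\cN+1)^j)$. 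The commutator $[\cH_N, A]$ must be expanded: since $A$ is built from $\wt b^*\wt b^*(\wt b + \wt b^*)$ with kernels $\wt k_H, \gamma_L, \sigma_L$, the commutator with $\cK = \int a_x^*(-\Delta_x)a_x$ produces terms where a Laplacian hits one of the three kernels; the relevant gradient bounds are precisely $\|\nabla_x(\wt k - \wt k_H)(x,\cdot)\| \leq CN^{\eps/2}$, $\|\nabla_x\sigma_{L,x}\| \leq CN^{\exps/2}(\ph_0(x)+|\nabla\ph_0(x)|)$ from Lemma \ref{lm:bds-cubic}, while $\nabla_x \wt k_H$ itself is controlled by the scattering equation (the Laplacian on $\wt k_H$ turns into the potential $N^2 V(N\cdot)$, which pairs with $\cV_N$). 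The commutator with $\cV_\text{ext}$ uses $|V_\text{ext}*\check g_L| \leq C(V_\text{ext}+1)$, equation \eqref{eq:Vextconvg}, and the bounds $\|V_\text{ext}\sigma_L\|,\|V_\text{ext}[(\gamma-1)*_2\check g_L]\| \leq C$. The commutator with $\cV_N$ is handled via Cauchy–Schwarz in the $x,y$ integration, absorbing half of $\cV_N$ and leaving a factor $(\cN+1)^k$. Throughout, the operators $d_{\eta,x}$, $\overline d_{\eta,x}$, $d_{\eta,x}d_{\eta,y}$ hidden inside the $\wt b$'s are estimated by Lemma \ref{lm:d-bds}, contributing only harmless $N^{-1}$ factors. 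Each such term is bounded by $\delta\,\langle e^{tA}\xi, (\cH_N+1)(\cN+1)^j e^{tA}\xi\rangle + C_\delta \langle e^{tA}\xi,(\cN+1)^{j+3}e^{tA}\xi\rangle$ — the number-of-particles-only terms are then controlled using Step 1. Grönwall in $t\in[0,1]$, followed by one more application of Step 1 to convert $(\cN+1)^{j+3}$ back to the $\widetilde\xi_N$ picture, yields $\langle\xi_N,(\cN+1)^j(\cH_N+1)\xi_N\rangle \leq C(1+\zeta^{j+3})$ via Theorem \ref{prop:hpN}.

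The third step is purely bookkeeping: one checks that $\zeta^{j+1}$ from Theorem \ref{prop:hpN} with $j$ replaced by $j+3$ is exactly $\zeta^{j+4}$, and reconciles this with the claimed $\zeta^{j+3}$ — more carefully, one notes that the energy factor $\cH_N+1$ absorbs one power of $\zeta$ on the spectral subspace, or one traces the powers slightly more efficiently (the $(\cN+1)^{j+3}$ in Step 2 can be replaced by $(\cN+1)^{j+2}(\cH_N+1)$ in the dangerous terms, since those arise from $[\cV_N,A]$ and $[\cK,A]$ which genuinely carry an energy factor, so only two extra number powers survive next to a bare $\cH_N$), which is what produces the stated $\zeta^{j+3}$. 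I expect the main obstacle to be the careful organisation of $[\cH_N,A]$: because $A$ mixes high momenta (in $\wt k_H$) with low momenta (the Gaussian-cutoff kernels $\sigma_L,\gamma_L$), the various terms of the commutator must be paired correctly — high-momentum legs with $\cK$ or $\cV_N$, low-momentum legs with $\cV_\text{ext}$ — in order for all the $N$-powers to come out non-negative, and this is exactly where the assumption $0 < 6\exps \leq \exph \leq \tfrac12$ is consumed. Once the commutator is correctly estimated, the Grönwall machinery is routine.
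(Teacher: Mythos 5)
Your proposal is correct and matches the paper's (implicit) argument: Section \ref{sec:JN} supplies exactly the propagation lemmas you describe --- Lemma \ref{lm:growA} for $e^{-A}(\cN+1)^n e^A \leq C(\cN+1)^n$, Lemma \ref{lm:gronA-KV} for $e^{-sA}(\cK+\cV_N)(\cN+1)^k e^{sA} \leq C(\cK+\cV_N)(\cN+1)^k + C(\cN+1)^{k+2}$, and Lemma \ref{lm:groext-A} for $e^{-A}\cV_\text{ext}(\cN+1)^2 e^A \leq C[\cV_\text{ext}(\cN+1)^2+(\cN+1)^3]$ --- and these lose two (not three) extra powers of $\cN$, so that Theorem \ref{prop:hpN} at index $j+2$ gives $\zeta^{j+3}$, as your self-correction in Step 3 already notices. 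The only small caveat is that the paper states Lemma \ref{lm:gronA-KV} only for $k\in\{0,2\}$; for the proposition at arbitrary $j$ one needs the same Gr\"onwall argument at general $k$, which is exactly what you set up in Step 2, so your write-up is in that respect slightly more complete than what is printed.
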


\section{Bounds on $\sigma(H_N)$ and Proof of Theorem \ref{thm:main}}  \label{sec:diag}

In this section, we conclude the proof of Theorem \ref{thm:main} and determine the ground state energy $E_N$ of $H_N$ as well as the spectrum of $H_N-E_N$ below a threshold $\zeta>0$, up to errors that vanish in the limit $N\to\infty$. Our proof is based on Proposition \ref{prop:JN}, recalling that $\cJ_N = e^{-A}e^{-B}U_NH_NU_N^*e^Be^A$ is unitarily equivalent to $H_N$. For the rest of this section, we assume that the parameter $\ell \in (0;1)$ introduced in (\ref{eq:scatl}) is sufficiently small (by Lemma \ref{lm:bndseta}, this guarantees that $\| \eta \|$ is small enough). We also assume that the parameters introduced around 
(\ref{eq:def-wtk}) are fixed as $\eps = 6/13$ and $\tau =\eps/6$, so that we can apply the results of Proposition \ref{prop:GN} and Proposition \ref{prop:JN}. From \eqref{eq:errorJN}, we have \begin{equation}\label{eq:errorrho}\pm \cE_{\cJ_N} \leq C N^{-\rho} (\cH_N + 1) (\cN+1)^3,\end{equation}  with $\rho = 1/26$. 
		
To compute the spectrum of $H_N$, we proceed in two main steps, proving first lower and then upper bounds.
To show the lower bound, we start from (\ref{eq:JNdec}), we drop the non-negative potential energy $\cV_N$ in \eqref{eq:JNdec} and then we diagonalize the remaining quadratic Fock space Hamiltonian. This is discussed in the Section \ref{sec:lwrbnd}. For the upper bound, we need additionally to control the expectation of the potential energy on low energy states; this is discussed in Section \ref{sec:uprbnd}. Combining the two results, we obtain asymptotically matching lower and upper bounds on the min-max values of $\cJ_N$. To conclude Theorem \ref{thm:main}, it then only remains to verify Eq. \eqref{eq:gsenergy} which determines the ground state energy $E_N$ up to errors that vanish as $N\to\infty$. This is the content of Section \ref{sec:gsenergy}. 

Before we start with the lower bound, it is convenient to switch to the full excitation Fock space $\cF_{\bot\pn}$, replacing the modified creation and annihilation operators in \eqref{eq:cQcJN} by the standard ones. This will enable us to diagonalize the resulting quadratic Fock space Hamiltonian exactly. To this end, let us denote by 
		\begin{equation} \label{eq:deftildeQ}
		\widetilde{\mathcal{Q}}_{\mathcal{J}_N} = \int dx dy \, \widetilde{\Phi}(x,y) a_x^* a_y + \frac{1}{2}\int dx dy \ \widetilde{ \Gamma}(x,y) (a_x^* a_y^* + \text{h.c.})
		\end{equation}
the quadratic Fock space Hamiltonian that is obtained from $\cQ_{\cJ_N}$ replacing $b, b^*$ operators by the $a, a^*$ fields. We claim that
		\be \label{eq:QJNdec} \cQ_{\cJ_N} = \wt {\mathcal{Q}}_{\mathcal{J}_N} +  \wt{\mathcal{E}}_N    \ee
for an error $\wt{\cE}_N $ that is bounded, in the sense of forms in $\cFpn$, by
		\be\label{eq:QJNdecerror} \pm \wt{\cE}_N \leq  CN^{-1/2} ( \cK + \cV_\text{ext}  +\cN+1)(\cN+1). \ee
To prove the bound \eqref{eq:QJNdecerror}, consider first the non-diagonal contribution to $\cQ_{\cJ_N}$. From the definition (\ref{eq:bb-def}) and using that $|\sqrt{1-x/N}\sqrt{1-x/N-1/N}-1| \leq C x/N$ for $0\leq x\leq N$, we bound \begin{equation}\label{eq:Gammaa} \pm \bigg(\int dx dy \ \widetilde{ \Gamma}(x,y) ( b_x^* b_y^*-a_x^* a_y^* +\text{h.c.})\bigg) \leq CN^{-1}\| \wt \Gamma\|_\text{HS} (\cN+1)^2. \end{equation} 
In particular, this error is small if we show that 
		\be \label{eq:HStGamma} \| \wt \Gamma\|_\text{HS} \leq C \ee
for some constant $C>0$, independent of $N\in\mathbb{N}$. Here, we identify the kernel $\wt \Gamma$ with the corresponding operator on $L^2_{\bot \pn}(\bR^3)$ and $\|\cdot\|_\text{HS}$ denotes the Hilbert-Schmidt norm. To control $ \| \wt \Gamma\|_\text{HS}$, we consider the different contributions to $\wt \Gamma$ in \eqref{eq:deftildeGamma}. We first note that the operator with kernel $N^{3}(Vf_\ell)(N(x-y))\pn(x)\pn(y)$ is bounded, uniformly in $N$. Analogously, $\|  8\pi \mathfrak{a}_0\pn^2 \|_{op} \leq C$. From Lemma  \ref{lm:bndseta}, we have $\| \sigma \|_\text{HS} < \infty$, and also $\| \sigma V_\text{ext} \|_\text{HS} , \| \sigma \Delta \gamma - k \Delta \|_\text{HS} < \infty$.  Thus, to show (\ref{eq:HStGamma}), we only need to control the norm of the operator on $L^2_{\perp \ph_0} (\bR^3)$ with kernel 
\begin{equation}\label{eq:kern-3} \begin{split} 
 N^3(Vf_\ell)(N(x-y)) &\pn(x)\pn(y) + (\Delta_x + \Delta_y) \big( N w_\ell(N(x-y))\pn(x)\pn(y)\big)  \\
= \; &N^3 \left[ 2\Delta w_\ell (N(x-y)) + (Vf_\ell) (N(x-y)) \right] \pn (x) \pn (y) \\ &+ N w_\ell (N(x-y)) \big[ \Delta \pn (x) \pn (y) + \pn (x) \Delta\pn (y) \big] \\ &+ 2N^2 \nabla w_\ell (N(x-y)) \cdot \big[ \nabla \pn (x) \pn (y) - \pn (x) \nabla \pn (y) \big]. \end{split}\end{equation} 
With the scattering equation \eqref{eq:scatl} (and using the bounds in Lemma \ref{3.0.sceqlemma}), we can show that the Hilbert-Schmidt norm of the operator associated with the kernel on the first line is bounded, uniformly in $N$. The operator associated with the kernel on the second line is also Hilbert-Schmidt, with norm bounded uniformly in $N$. As for the kernel on the last line, we can write 
\[ \begin{split} 
\nabla_i \ph_0 (x) &\ph_0 (y) - \ph_0 (x) \nabla_i \ph_0 (y)  \\ &= \int_0^1 ds \, \frac{d}{ds} \, \nabla_i \ph_0 (sx + (1-s)y) \ph_0 ((1-s)x + s y)  \\ &=  \sum_{j=1}^3  (x_j - y_j) \int_0^1 ds \, \Big[ \big(\nabla_i \nabla_j \ph_0 \big) (sx + (1-s)y) \ph_0 ((1-s)x + s y) \\ &\hspace{5cm} + \big( \nabla_i \ph_0 \big) (sx + (1-s)y) \big(\nabla_j \ph_0\big) ((1-s)x + s y) \Big]. \end{split} \]
With the factor $(x-y)$ multiplying $N^2 (\nabla w_\ell) (N(x-y))$, and with the estimates \eqref{3.0.scbounds1}, \eqref{exponential decay}, we conclude that also the last term on the right hand side of (\ref{eq:kern-3}) has a bounded Hilbert-Schmidt norm. This concludes the proof of (\ref{eq:HStGamma}) and shows that the error (\ref{eq:Gammaa}) is small. Similarly, we can also bound the diagonal part of $\cQ_{\cJ_N}$. We find  
		\[ \pm \bigg(\int dx dy \ \widetilde{ \Phi}(x,y) ( b_x^* b_y   -   a_x^* a_y)  \bigg)  \leq CN^{-1}(\cK + \cV_\text{ext} + \cN+1)(\cN+1). \]
which completes the proof of \eqref{eq:QJNdecerror}.


\subsection{Lower Bound on $\sigma(H_N)$}\label{sec:lwrbnd}
In this section, we compare the min-max values of $\cJ_N$ with those of the quadratic operator $ \wt{\cQ}_{\cJ_N}$, defined through \eqref{eq:deftildeQ} as a quadratic form in the full excitation Fock space $\cF_{\bot \pn}$. To obtain a lower bound on the spectrum of $\cJ_N$, suppose that $ \lambda_n(\cJ_N) - E_N \leq  \zeta$. Then the min-max principle implies with \eqref{eq:JNdec}, \eqref{eq:errorrho}, \eqref{eq:QJNdec}, \eqref{eq:QJNdecerror} and Proposition \ref{prop:hpNcubic} 
		\be \label{eq:lowbndHN}\begin{split}
		\lambda_n(\cJ_N) 
		& = \kappa_{\cJ_N}+ \,\min_{\substack{Y \subset P_\zeta( \cFpn), \\ \dim(Y)=n}} \max_{\substack{ \xi \in Y: \\ \Vert \xi \Vert =1}} \langle \xi, (\wt{\cQ}_{\cJ_N} + \cV_N+ \cE_{\cJ_N} + \wt{\cE}_N) \xi\rangle \\
		& \geq \kappa_{\cJ_N} + \min_{\substack{Y \subset P_\zeta( \cFpn), \\ \dim(Y)=n}} \max_{\substack{ \xi \in Y: \\ \Vert \xi \Vert =1}} \langle \xi, \wt{\cQ}_{\cJ_N}   \xi\rangle - CN^{-\rho}(1+\zeta^{6}) \\
		&\geq \kappa_{\cJ_N}+ \min_{\substack{X \subset   \cF_{\bot \pn} , \\ \dim(Y)=n}} \max_{\substack{ \xi \in Y: \\ \Vert \xi \Vert =1}} \langle \xi,  \wt{\cQ}_{\cJ_N} \xi\rangle - CN^{-\rho}(1+\zeta^{6}) \\
		&= \kappa_{\cJ_N} + \lambda_n(\wt{\cQ}_{\cJ_N})  - CN^{-\rho}(1+\zeta^{6}). 
		\end{split}\ee
Here, we have abbreviated by $P_\zeta$ the spectral projection of $\cJ_N - E_N$ associated with the interval $(-\infty; \zeta]$. Up to the constant $\kappa_{\cJ_N}$ (and an error that vanishes as $N\to\infty$), we thus obtain a lower bound on the min-max values of $\cJ_N$ through those of $\wt{\cQ}_{\cJ_N}$. 

Let us now diagonalize the quadratic operator $\wt{\cQ}_{\cJ_N}$. We adapt here the arguments used in \cite{GS} for mean-field bosons. We recall the definitions of $H_\text{GP}$ and $E$ in \eqref{eq:defHGPE} and of $\wt{\Phi}$, $ \wt{\Gamma}$ in \eqref{eq:deftildePhi}, \eqref{eq:deftildeGamma}. We identify these operators with operators mapping $L^2_{\bot \pn}(\bR^3)$ back into itself and, by slight abuse of notation, we still write $H_\text{GP}, E,\wt \Phi, \wt \Gamma $ instead of $H_\text{GP}Q$, $EQ, Q\wt \Phi Q, Q\wt \Gamma Q$ (recall that $Q= 1-|\pn\rangle\langle\pn|$). Notice in particular that both $H_\text{GP}$ and $E$ are invertible in $L^2_{\bot \pn}(\bR^3)$. In addition to these operators, we also define, on $L_{\perp \pn}^2(\mathbb{R}^3)$,  
		\be \label{eq:deftildeDE} \begin{split}
		\wt{D} &= \widetilde{\Phi} - \widetilde{ \Gamma}  = e^{-\eta} H_\text{GP}\, e^{-\eta},\,\, \,\,\wt{E}  = \big( \wt{D}^{1/2} ( \wt{D} + 2\wt\Gamma) \wt{D}^{1/2}\big)^{1/2} \,\,\,\,\text{ with}\\
		    \wt D+2\wt \Gamma &= \widetilde{\Phi} + \widetilde{ \Gamma}  =  e^\eta \big( H_\text{GP} + 2 N^3(Vf_\ell)(N(x-y))\pn(x)\pn(y)\big) e^\eta 
		\end{split} \ee
as well as
		\be\label{eq:ABalpha} \begin{split}
		A = {\wt{D}}^{1/2} \wt{E}^{-1/2}, \hspace{0.5cm}  B= \big(A^{-1}\big)^* = \wt{D}^{-1/2} \wt{E}^{1/2} , \hspace{0.5cm} \alpha = \log\big( \vert A^* \vert\big).
		\end{split} \ee
Using the polar decomposition, we also denote by $W$ the partial isometry on $L_{\perp \pn}^2(\mathbb{R}^3)$ that is defined through
 		\be \label{eq:defW}
		A = W \vert A \vert = \vert A^* \vert W, \hspace{0.5cm} B= \vert B^* \vert W.
		\ee
The next lemma collects basic properties of the operators introduced above.  
\begin{lemma} \label{lm:alpha} Let $ \wt{E}, \wt{D}$ be defined in $L_{\perp \pn}^2(\mathbb{R}^3)$ by \eqref{eq:deftildeDE}. Let $\ell \in (0;1)$ be small enough and let $N$ be large enough. Then:
	\begin{enumerate}[a)]
		\item There exists $c>0$ such that $ \wt{E},  \wt{D}\geq c>0$. In particular, the operators $A, B, \alpha$ and $W$, defined in \eqref{eq:ABalpha}, \eqref{eq:defW} are well-defined in $L^2_{\bot \pn}(\bR^3)$. Moreover, $W$ is unitary. 
		\item There exist $c, C>0$ such that $ c \wt D^2 \leq \wt E^2 \leq C\wt D^2 $. In particular, $c^{1/2} \wt{D} \leq \wt{E} \leq C^{1/2} \wt{D}$. 
		\item There exists $C > 0$ such that $H_\text{GP} - C \leq \wt{D} \leq H_\text{GP} + C$.   
		\item We have $\|A-1\|_\text{HS}, \|B-1\|_\text{HS} \leq C$ for some $C>0$. 
		\item For every $\beta < 1$, there exists $C>0$ such that $\| \wt D^{\beta/2} \alpha \wt{D}^{\beta/2} \|_\text{HS} \leq C$. Moreover, there exists $C >0$ such that $\| \wt{D}^{1/2} \alpha \|_\text{HS} \leq C$ (from parts a), b),c), these estimates remain true if we replace factors of $\wt{D}$ by the identity, by factors of $\wt{E}$ or by factors of $H_\text{GP}$).  
	\end{enumerate}
\end{lemma}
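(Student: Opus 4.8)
\emph{Proof idea.} The identities recorded in \eqref{eq:deftildeDE}, namely $\wt D = e^{-\eta}H_\text{GP}e^{-\eta}$ and $\wt D+2\wt\Gamma = e^{\eta}(H_\text{GP}+2G)e^{\eta}$ with $G$ the Hilbert--Schmidt operator of kernel $N^3(Vf_\ell)(N(x-y))\pn(x)\pn(y)$, follow from \eqref{eq:deftildePhi}--\eqref{eq:deftildeGamma} and the operator identities $\gamma-\sigma = e^{-\eta}$, $\gamma+\sigma = e^{\eta}$. Since $e^{\pm\eta}$ is bounded with bounded inverse on $L^2_{\bot\pn}$, and $H_\text{GP}\ge c_0>0$ on $L^2_{\bot\pn}$ and is form--comparable there to $-\Delta+V_\text{ext}+1$, everything reduces to a comparison of $\wt D$ and $\wt D+2\wt\Gamma$ with $H_\text{GP}$. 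The structural point I would isolate first is the cancellation forced by the Neumann equation \eqref{eq:scatlN}: differentiating $k(x;y)=-Nw_\ell(N(x-y))\pn(x)\pn(y)$ twice and inserting \eqref{eq:scatlN}, the singular $N^3$ contributions cancel and one gets $(-\Delta)k+k(-\Delta) = -G+(\text{bounded operator, uniformly in }N)$, the leftover kernels being controlled by the pointwise bounds \eqref{eq:bndsetaptw} and the decay of $\pn,\nabla\pn,\Delta\pn$; combined with $\{-\Delta,\mu\}$ bounded (Lemma \ref{lm:bndseta}) this yields $\{-\Delta,\eta\} = -G+(\text{bounded})$. One also records that $e^{\mp\eta}-1\pm\eta$ has $\Delta_i$ of finite Hilbert--Schmidt norm (from $\|\Delta_i\eta^{(n)}\|\le C^n$, $n\ge 2$) and that $V_\text{ext}\eta$, $V_\text{ext}(e^{\mp\eta}-1)$ are Hilbert--Schmidt (using $V_\text{ext}\pn\in L^2$).

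Given these inputs, c) follows by expanding $\wt D-H_\text{GP} = \{e^{-\eta}-1,H_\text{GP}\} + (e^{-\eta}-1)H_\text{GP}(e^{-\eta}-1)$ in powers of $\eta$: the anticommutator is $-\{\eta,H_\text{GP}\}$ plus contributions built from $e^{-\eta}-1+\eta$, all bounded by the preceding remarks, and the last term is bounded because $e^{-\eta}-1$ has bounded $\nabla_i$; hence $\wt D-H_\text{GP}$ is a bounded operator, and with $H_\text{GP}\asymp -\Delta+V_\text{ext}+1$ on $L^2_{\bot\pn}$ one gets $\wt D\asymp H_\text{GP}$. The same expansion gives $(\wt D+2\wt\Gamma)-(H_\text{GP}+2G)$ bounded, so it remains to see $H_\text{GP}+2G\ge c>0$ on $L^2_{\bot\pn}$. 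I would write $2G = 16\pi\mathfrak{a}_0\pn^2 + (2G-16\pi\mathfrak{a}_0\pn^2)$, use $H_\text{GP}+16\pi\mathfrak{a}_0\pn^2\ge c_0>0$ on $L^2_{\bot\pn}$, and bound the remainder by splitting $\langle\psi,(2G-16\pi\mathfrak{a}_0\pn^2)\psi\rangle = 2\int(\widehat{Vf_\ell}(p/N)-8\pi\mathfrak{a}_0)|\widehat{\pn\psi}(p)|^2\,dp$ at a large momentum scale and using $|\widehat{Vf_\ell}(q)-8\pi\mathfrak{a}_0|\le C|q|+C/(\ell N)$ for bounded $|q|$ (from \eqref{eq:Vfa0}); this yields $\pm(2G-16\pi\mathfrak{a}_0\pn^2)\le\delta(-\Delta)+C_\delta$ with $\delta$ as small as we wish once $N$ is large, hence $H_\text{GP}+2G\ge\tfrac12 c_0>0$ on $L^2_{\bot\pn}$. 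Then $\wt D+2\wt\Gamma\ge c>0$, so $\wt E^2 = \wt D^{1/2}(\wt D+2\wt\Gamma)\wt D^{1/2}\ge c\,\wt D\ge c'>0$, which gives a); and $\wt D+2\wt\Gamma\asymp\wt D$ (both $\asymp -\Delta+V_\text{ext}+1$) gives $c\,\wt D^2\le\wt E^2\le C\,\wt D^2$, hence b) after taking square roots, and $W$ is unitary because $A$ is now injective with dense range.

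For d) and the bulk of e), the game is Hilbert--Schmidt estimates via resolvent integrals. Using $X^{-1/2}-Y^{-1/2}=\tfrac1\pi\int_0^\infty t^{-1/2}(t+X)^{-1}(Y-X)(t+Y)^{-1}dt$ and $X^{1/2}-Y^{1/2}=\tfrac1\pi\int_0^\infty t^{1/2}(t+Y)^{-1}(X-Y)(t+X)^{-1}dt$ (valid for $X,Y\ge c>0$), one writes $A-1 = \wt D^{1/2}(\wt E^{-1/2}-\wt D^{-1/2})$ and $B-1=\wt D^{-1/2}(\wt E^{1/2}-\wt D^{1/2})$ as absolutely convergent integrals, bounds the resolvent factors using a), b), and reduces $\|A-1\|_\text{HS},\|B-1\|_\text{HS}\le C$ to $\|\wt E-\wt D\|_\text{HS}\le C$. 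Applying the square-root formula once more with $X=\wt D^2$, $Y=\wt E^2$, so that $\wt D^2-\wt E^2 = -2\,\wt D^{1/2}\wt\Gamma\wt D^{1/2}$, and distributing the $\wt D^{1/2}$ factors into the resolvents, this reduces in turn to a weighted bound
\be\label{eq:wGam} \big\|\wt D^{\beta/2}\,\wt\Gamma\,\wt D^{\beta/2}\big\|_\text{HS}\le C\qquad\text{for some }\beta\in(0,1), \ee
the positive power $\beta$ being exactly what makes the remaining $t$-integral converge at infinity (at $\beta=0$ the divergence is only logarithmic, so the plain bound $\|\wt\Gamma\|_\text{HS}\le C$ of \eqref{eq:HStGamma} does not suffice). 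Estimate \eqref{eq:wGam} is a refinement of \eqref{eq:HStGamma} by the same mechanism: expanding $2\wt\Gamma = e^\eta(H_\text{GP}+2G)e^\eta-e^{-\eta}H_\text{GP}e^{-\eta}$ in $\eta$, the leading $G$ cancels against $\{-\Delta,\eta\}$, and each surviving piece (kernel essentially of the type $N^3\lambda_\ell f_\ell(N(x-y))\chi_\ell(x-y)\pn(x)\pn(y)$ together with smoother terms) is regular enough that the factors $\wt D^{\beta/2}\asymp(-\Delta+V_\text{ext}+1)^{\beta/2}$ can be absorbed on both sides for $\beta<1$. Part e) then follows by writing $\alpha=\tfrac12\log(AA^*)$ with $AA^*=\wt D^{1/2}\wt E^{-1}\wt D^{1/2}$, using $\log Y=\int_0^\infty[(1+s)^{-1}-(Y+s)^{-1}]ds$ and $AA^*-1=\wt D^{1/2}\wt E^{-1}(\wt D-\wt E)\wt D^{-1/2}$, conjugating by $\wt D^{\beta/2}$ and commuting it through the resolvents (the commutators are controlled since $\wt E-\wt D$ is Hilbert--Schmidt by d) and $\wt D\asymp\wt E$ by b)); again the constraint $\beta<1$ is inherited, and the final remark of e) is immediate from a)--c).

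The step I expect to be hardest is \eqref{eq:wGam}: it forces the scattering-equation cancellation to be used quantitatively at the level of kernels, followed by a careful accounting of the regularity of every surviving piece of $\wt\Gamma$ — precisely where the absence of translation invariance hurts, since one cannot pass to momentum space. A secondary nuisance is the non-commutative functional calculus in d) and e): $\wt E$ is not a function of $\wt D$, so the square-root, inverse and logarithm manipulations must all be realized through resolvent integrals whose convergence hinges on the fractional gain $\beta<1$.
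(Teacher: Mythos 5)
Your treatment of a), b), c) is essentially the paper's argument, and your structural observations (writing $\wt D = e^{-\eta}H_\text{GP}e^{-\eta}$, using the Neumann equation to cancel the singular parts of $\{-\Delta,\eta\}$, comparing $N^3(Vf_\ell)(N(x-y))\pn(x)\pn(y)$ with $8\pi\mathfrak{a}_0\pn^2$) are sound and match the paper's use of \eqref{eq:aux1} and \eqref{eq:kern-3}.

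For d) and e), however, you take a detour that introduces an unnecessary and genuinely harder intermediate step. You reduce $\|A-1\|_\text{HS}\le C$ to $\|\wt E-\wt D\|_\text{HS}\le C$, observe (correctly) that the resolvent integral for $\wt E-\wt D$ converges at $\infty$ only logarithmically when estimated with the plain bound $\|\wt\Gamma\|_\text{HS}\le C$, and conclude that one must upgrade to a weighted Hilbert--Schmidt bound $\|\wt D^{\beta/2}\wt\Gamma\wt D^{\beta/2}\|_\text{HS}\le C$ for some $\beta>0$ --- which you flag as the hardest step and do not carry out. This is a real gap in your proposal: the weighted bound is not established by any of the stated inputs, and proving it would require re-running the argument for \eqref{eq:HStGamma} after commuting fractional powers of $H_\text{GP}$ through each kernel, which is not ``the same mechanism'' but substantially more work (e.g.\ $H_\text{GP}^{\beta/2}\sigma (-\Delta)\gamma H_\text{GP}^{\beta/2}$ and $H_\text{GP}^{\beta/2}\sigma V_\text{ext}\gamma H_\text{GP}^{\beta/2}$ need separate commutator analyses).

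The paper avoids this entirely. It never bounds $\|\wt E-\wt D\|_\text{HS}$; instead it represents
\[
A-1=(\wt D^{1/2}-\wt E^{1/2})\wt E^{-1/2}= C\int_0^\infty dt\,t^{1/4}\frac{\wt D^{1/2}}{t+\wt D^2}\,\wt\Gamma\,\wt D^{1/2}\frac{\wt E^{-1/2}}{t+\wt E^2},
\]
keeping the factor $\wt E^{-1/2}$ inside the integral. Now $\|(t+\wt E^2)^{-1}\|_\text{op}\le C(1+t)^{-1}$ rather than $\|\wt D^{1/2}(t+\wt E^2)^{-1}\|_\text{op}\le C(1+t)^{-3/4}$, and the integrand is $O(t^{1/4}(1+t)^{-7/4})$, absolutely integrable; the plain estimate $\|\wt\Gamma\|_\text{HS}\le C$ from \eqref{eq:HStGamma} then suffices. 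The same device --- inserting the spare factor $\wt E^{-1/2}$ (or the power $\wt D^{\beta/2}$, $\beta<1$, for part e)) directly into the resolvent integral instead of pulling it out in front --- is exactly what saves the $t$-integral everywhere in the paper's proof of d) and e). So your claim that ``the plain bound $\|\wt\Gamma\|_\text{HS}\le C$ does not suffice'' applies only to the route you chose, not to the statement itself, and the weighted bound \eqref{eq:wGam} that you would need to prove is avoidable. I would recommend restructuring d) and e) around the factorization $A-1=(\wt D^{1/2}-\wt E^{1/2})\wt E^{-1/2}$ and $\wt D^{\beta/2}(1-AA^*)\wt D^{\beta/2}=\wt D^{(\beta+1)/2}(\wt D^{-1}-\wt E^{-1})\wt D^{(\beta+1)/2}$ so that the ordinary $\|\wt\Gamma\|_\text{HS}$ bound closes the argument.
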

\begin{proof}
We start with the proof of a). On $L^2_{\perp \pn} (\bR^3)$, we have $H_\text{GP} \geq c > 0$.Thus
\[ \langle \psi, \wt D\psi \rangle = \langle e^{-\eta }\psi, H_\text{GP} e^{-\eta}\psi\rangle \geq c  \| e^{-\eta}\psi\|^2 \geq c (1-C\|\eta\|) \|\psi\|^2 \]
for all $\psi \in L^2_{\bot \pn}(\bR^3)$. By Lemma \ref{lm:bndseta}, $\| \eta \| \leq C \ell^{1/2}$. Thus $D \geq c > 0$ follows choosing $\ell > 0$ small enough. To prove the claim for $\wt{E}$, we compare first the operator with kernel $N^3 V(N(x-y)) \pn (x) \pn (y)$ with $8\pi \frak{a}_0 \pn^2$. To this end, we observe that, for an arbitrary $\phi \in L^2_{\perp \pn} (\bR^3)$,  
\begin{equation} \label{eq:aux1}\begin{split}
&\left\vert \int dx dy \ \overline{\phi}(x) \phi (y) \pn(x) \pn(y) N^3 (Vf_\ell)(N(x-y)) - 8\pi\frak{a}_0 \int dx \ |\phi (x)|^2 \pn(x)^2 \right\vert \\
&\leq CN^{-1} \| \phi \|^2 + \left\vert  \int dx dy \ (Vf_\ell)(y) \overline{\phi (x)} \pn(x) \big (\phi (x+y/N)\pn(x+y/N) - \phi(x) \pn(x)\big)\right\vert \\
		&\leq C N^{-1} \| \phi \|^2 + C N^{-1} \| (-\Delta + 1)^{1/2} \phi \|^2  \leq C N^{-1} \| H_\text{GP}^{1/2} \phi \|^2 	\end{split}
\end{equation}
and thus
		\[\begin{split} \langle\psi, (\wt D+ 2\wt \Gamma)\psi\rangle&\geq  \langle e^{\eta} \psi,  \big( H_\text{GP}(1-C/N) + 16\pi \mathfrak{a}_0\pn^2\big) e^{\eta} \psi\rangle \\
		&\geq   \langle e^{\eta} \psi,  H_\text{GP}(1-C/N) e^{\eta} \psi\rangle \geq c(1-C\|\eta\|)\|\psi\|^2 \end{split} \]
for $N$ large enough. This shows that $ \wt E \geq c \wt D \geq c >0$. An immediate consequence is that $A, B, \alpha$ and $W$ as in \eqref{eq:ABalpha}, \eqref{eq:defW} are well-defined. Moreover, $W$ is a unitary map, because $\text{ker} (W) = \text{ker}(A) = \{0\}$ (recall that $\wt E, \wt D \geq c >0$ in $L^2_{\bot\pn}(\bR^3)$). 

Let us switch to part b). We have 
\[ \wt{E}^2 = \wt{D}^{1/2} (\wt{D} + 2 \wt{\Gamma}) \wt{D}^{1/2} = \wt{D}^2 + 2 \wt{D}^{1/2} \wt{\Gamma} \wt{D}^{1/2}. \]
With (\ref{eq:HStGamma}) (and with part a)), we conclude that 
\[ \wt{E}^2 \leq \wt{D}^2 + C \wt{D} \leq C \wt{D}^2 \]
and also that $\wt{E}^2 \geq \wt{D}^2 - C \wt{D}$, which implies (from the proof of  a), we have $\wt{E} \geq c \wt{D}$) 
\[ \wt{D}^2 \leq \wt{E}^2 + C \wt{D} \leq \wt{E}^2 + C \wt{E} \leq C \wt{E}^2. \]

As for part c), we observe that 
\[ \wt{D} = e^\eta H_\text{GP} e^\eta = H_\text{GP} + (e^\eta -1) H_\text{GP} + H_\text{GP} (e^\eta- 1) + (e^\eta -1) H_\text{GP} (e^\eta -1) \]
and that, from Lemma \ref{lm:bndseta}, $(e^\eta - 1) H_\text{GP} + H_\text{GP} (e^\eta- 1)$ and $(e^\eta -1) H_\text{GP} (e^\eta -1)$ are bounded operators (the kernel of $k (-\Delta) + (-\Delta) k$ has the form $(\Delta_x + \Delta_y) (Nw_\ell (N(x-y)) \pn (x) \pn (y)$ and can be handled as in \eqref{eq:kern-3}, since the term $N^3 (Vf_\ell) (N(x-y)) \pn (x) \pn (y)$ corresponds to a bounded operator).

Let us now prove part d). We focus on the operator $A-1$. Using functional calculus (as in \cite[Lemma 3]{GS}) we write
$$
A- 1 = (\wt D^{1/2}-\wt E^{1/2}) \wt E^{-1/2} = 2\sqrt 2\pi\int_0^\infty dt\, t^{1/4}  \frac{\wt D^{1/2}}{t+\wt D^2}   \wt \Gamma   \wt D^{1/2} \frac{ \wt E^{-1/2}}{t+\wt E^2}.$$
With (\ref{eq:HStGamma}) and with the bounds $\| \wt{D}^{1/2} / (t + \wt{D}^2) \|_\text{op} \leq C (t + 1)^{-3/4}$, $\| (t + \wt{E}^2)^{-1}  \|_\text{op}  \leq C (t+1)^{-1}$, $\| \wt{D}^{1/2} \wt{E}^{-1/2} \|_\text{op} \leq C$ (by part b), we obtain 
\[ \| A-1 \|_\text{HS} \leq C  \int_0^\infty dt\, t^{1/4} \Big\| \frac{\wt D^{1/2}}{t+\wt D^2} \Big\|_\text{op} \big\| \wt \Gamma  \big\|_\text{HS} \| \wt D^{1/2} \wt{E}^{-1/2} \|_\text{op} \Big\| \frac{1}{t+\wt E^2} \Big\| \leq C \,. \]

Finally, let us prove part e). From part b), we have $0 < c \leq A A^* \leq C$. For $x \in [c;C]$, we can  bound \[ x-1 - C (x-1)^2 \leq \log x \leq x-1 + C (1-x)^2.\]  Thus 
\begin{equation} \label{eq:DaD} \| \wt{D}^{\beta/2} \alpha \wt{D}^{\beta/2} \|_\text{HS} \leq C \| \wt{D}^{\beta/2} (1- AA^*) \wt{D}^{\beta/2} \|_\text{HS} + C  \| \wt{D}^{\beta/2} (1- AA^*)^2 \wt{D}^{\beta/2} \|_\text{HS}. \end{equation} 
With functional calculus, we find
\[ \begin{split}  \wt{D}^{\beta/2} (1 - A A^*) \wt{D}^{\beta/2} &= \wt{D}^{(\beta+1)/2} (\wt{D}^{-1} - \wt{E}^{-1} ) \wt{D}^{(\beta +1)/2} \\ &= C \int_0^\infty \frac{dt}{t^{1/2}} \frac{\wt{D}^{1+\beta/2}}{t + \wt{D}^2} \wt{\Gamma} D^{1/2} \frac{1}{t + \wt{E}^2} \wt{D}^{(\beta+1)/2}. \end{split} \]
Thus, with (\ref{eq:HStGamma}), $\| \wt{D}^{1+\beta/2} / (t + \wt{D}^2) \|_\text{op}, \| \wt{E}^{1+\beta/2} / (t + \wt{E}^2) \|_\text{op} \leq C (t+1)^{-1/2+\beta/4}$ and $\| \wt{D}^{1/2} \wt{E}^{-1/2} \|_\text{op} ,  \| \wt{E}^{-(\beta+1)/2} \wt{D}^{(\beta+1)/2} \|_\text{op} \leq C$ (by part b)), we find 
\[ \|  \wt{D}^{\beta/2} (1 - A A^*) \wt{D}^{\beta/2} \|_\text{HS} \leq C \int_0^\infty dt \, t^{-1/2} (1+t)^{-1+\beta/2} \leq C \]
for any $\beta < 1$. Since $\| \wt{D}^{\beta/2} (1- AA^*)^2 \wt{D}^{\beta/2} \|_\text{HS} \leq C \| \wt{D}^{\beta/2} (1- AA^*) \wt{D}^{\beta/2} \|^2_\text{HS} \leq C$, it follows from (\ref{eq:DaD}) that $\| \wt{D}^{\beta/2} \alpha \wt{D}^{\beta/2} \|_\text{HS} \leq C$, for all $\beta <1$. The bound for $\| \wt{D}^{1/2} \alpha \|_\text{HS}$ can be proven similarly.  
\end{proof}
Using the operators defined above, we can now construct a suitable Bogoliubov transformation that diagonalizes $\wt{\cQ}_{\cJ_N}$. To this end, we denote by $(\varphi_j)_{j\in \mathbb{N}}$ the eigenbasis of $\wt E$, we set $a^\sharp_j  = a^\sharp (\varphi_j)$ for $\sharp\in \{\cdot, *\}$, $\alpha_{ij} = \langle \varphi_i, \alpha\, \varphi_j\rangle$ and we define
		\be\label{eq:defWXU} \cW = \Gamma(W), \hspace{0.5cm} X  = \frac12 \sum_{i, j =1}^\infty \alpha_{ij} a^*_i a^*_j -\text{h.c.}, \hspace{0.5cm} \cU = e^{X} \cW.   \ee
Note in particular that $\cU : \cF_{\bot \pn}\to \cF_{\bot \pn}$ and recall that $\cU$ acts as
		\[\begin{split} 
		&\cU^* a^*(f) \cU\\ 
		&= \frac12 a^*( W^*|A^*| f) + \frac12 a^*(W^*|A^*|^{-1}f) +\frac12 a( W^*|A^*| \bar f) - \frac12 a^*(W^*|A^*|^{-1}\bar f)\\
		&= \frac12 a^* ( \wt E^{-1/2}\wt D^{1/2} f) + \frac12 a^* ( \wt E^{1/2}\wt D^{-1/2}f )  + \frac12 a ( \wt E^{-1/2}\wt D^{1/2}\bar f ) - \frac12 a ( \wt E^{1/2}\wt D^{-1/2} \bar f) 
		\end{split}\]
for $f\in L^2_{\bot \pn}(\bR^3)$, by the definitions in \eqref{eq:ABalpha}. From \cite{GS}, we recall that $\cU$ is constructed so that $\cU^* \wt{\cQ}_{\cJ_N} \cU $ is diagonal. Indeed, representing $\wt{\cQ}_{\cJ_N}$ in $\cF_{\bot \pn }$ as
		\[\wt{\cQ}_{\cJ_N} = \sum_{i,j=1}^\infty \langle \varphi_i, \wt \Phi  \, \varphi_j\rangle \,a^*_i a_j +\frac12 \sum_{i,j=1}^\infty \langle \varphi_i, \wt \Gamma\,  \varphi_j\rangle \big(a^*_i a^*_j+\text{h.c.}\big),  \] 
a lengthy, but straightforward calculation verifies that 
		\be\label{eq:diag}\begin{split}
		\cU^* \wt{\cQ}_{\cJ_N} \cU & =  \frac12 \tr_{\bot \pn} \left( \frac{1}{2}(\wt{D}^{1/2} \wt{E} \wt{D}^{-1/2} + \wt{D}^{-1/2} \wt{E} \wt{D}^{1/2}) - \wt D - \wt \Gamma\right) \\&\hspace{0.5cm}+\sum_{i,j=1}^\infty \frac12 \big(\wt \Phi_{ij}+ \wt \Gamma_{ij}\big) a^*(\wt E^{-1/2}\wt D^{1/2}\varphi_i) a(\wt E^{-1/2}\wt D^{1/2}\varphi_j)\\
		&\hspace{0.5cm} + \sum_{i,j=1}^\infty \frac12 \big(\wt \Phi_{ij}- \wt \Gamma_{ij}\big) a^*(\wt E^{1/2}\wt D^{-1/2}\varphi_i) a(\wt E^{1/2}\wt D^{-1/2}\varphi_j)  \\
		&  =   \frac12 \tr_{\bot \pn} \big( \frac{1}{2}(\wt{D}^{1/2} \wt{E} \wt{D}^{-1/2} + \wt{D}^{-1/2} \wt{E} \wt{D}^{1/2}) - \wt D - \wt \Gamma\big) + d\Gamma(\wt E),
		\end{split}\ee
where $ \tr_{\bot \pn} $ denotes the trace in $L^2_{\bot \pn}(\bR^3)$. As a consequence of \eqref{eq:lowbndHN}, we obtain that if $\lambda_n(\cJ_N)-E_N\leq \zeta$, then 
		\be\label{eq:lwrbndeq2}\begin{split}
		\lambda_n(\cJ_N) \geq  \kappa_{\cJ_N} &+ \frac12 \tr_{\bot \pn} \left( \frac{1}{2}\left(\wt{D}^{1/2} \wt{E} \wt{D}^{-1/2} + \wt{D}^{-1/2} \wt{E} \wt{D}^{1/2}\right) - \wt D - \wt \Gamma\right) \\&+ \lambda_n\big( d\Gamma(\wt E)\big) - CN^{-\rho}(1+\zeta^6) . 
		\end{split}\ee
To obtain a lower bound for the eigenvalues of $\cJ_N$ (and hence of $H_N$) matching the statement of Theorem 
\ref{thm:main}, we still have to compare the operator $\wt E$, defined in \eqref{eq:deftildeDE}, with the operator $E$, defined in \eqref{eq:defHGPE}. This is done in the next lemma.
\begin{lemma}\label{lm:tildeEE} 
On $L^2_{\bot \pn}(\bR^3)$ we have that 
\[\begin{split} &(1-C/N) \big( \wt D^{1/2} e^{\eta}( H_\text{GP}+ 16\pi \mathfrak{a}_0 \pn^2 )e^\eta \wt D^{1/2}\big)^{1/2} \leq  \wt E \\ &\hspace{6cm} \leq (1+C/N) \big( \wt D^{1/2} e^{\eta}( H_\text{GP}+ 16\pi \mathfrak{a}_0 \pn^2 )e^\eta\wt D^{1/2}\big)^{1/2}   \end{split}\]
for a constant $C>0$, independent of $N$. Moreover, the operator $E$ defined in \eqref{eq:defHGPE} is unitarily equivalent to $\big( \wt D^{1/2} e^{\eta}( H_\text{GP}+ 16\pi \mathfrak{a}_0 \pn^2 )e^\eta\wt D^{1/2}\big)^{1/2}  $. We conclude that   
		\[ (1-C/N) \lambda_n ( E )\leq \lambda_n (\wt E) \leq (1+C/N) \lambda_n (E). \]
\end{lemma}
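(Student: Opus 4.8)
The plan is to proceed in three steps: first derive a two-sided operator inequality relating $\wt E^2$ to $\wt D^{1/2}e^\eta(H_\text{GP}+16\pi\mathfrak{a}_0\pn^2)e^\eta\wt D^{1/2}$ on $L^2_{\bot\pn}(\bR^3)$; then exhibit an explicit unitary conjugating the operator inside the square root to $E^2$; and finally read off the eigenvalue bounds from the min-max principle.

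For the first step I would use \eqref{eq:aux1}, which says precisely that, as quadratic forms on $L^2_{\bot\pn}(\bR^3)$,
\[ \pm\Big( N^3(Vf_\ell)(N(x-y))\pn(x)\pn(y) - 8\pi\mathfrak{a}_0\pn^2 \Big) \leq \frac{C}{N}\, H_\text{GP}. \]
Since $\pn^2\geq 0$ and $16\pi\mathfrak{a}_0 = 2\cdot 8\pi\mathfrak{a}_0$, this gives $(1-C/N)(H_\text{GP}+16\pi\mathfrak{a}_0\pn^2)\leq H_\text{GP}+2N^3(Vf_\ell)(N(x-y))\pn(x)\pn(y)\leq (1+C/N)(H_\text{GP}+16\pi\mathfrak{a}_0\pn^2)$. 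Conjugating with the bounded self-adjoint operator $e^\eta$, using the identity $\wt D+2\wt\Gamma = e^\eta\big(H_\text{GP}+2N^3(Vf_\ell)(N(x-y))\pn(x)\pn(y)\big)e^\eta$ from \eqref{eq:deftildeDE}, and then sandwiching with $\wt D^{1/2}$ (recall $\wt E^2 = \wt D^{1/2}(\wt D+2\wt\Gamma)\wt D^{1/2}$) would give
\[ \Big(1-\tfrac{C}{N}\Big)\wt D^{1/2}e^\eta\big(H_\text{GP}+16\pi\mathfrak{a}_0\pn^2\big)e^\eta\wt D^{1/2} \leq \wt E^2 \leq \Big(1+\tfrac{C}{N}\Big)\wt D^{1/2}e^\eta\big(H_\text{GP}+16\pi\mathfrak{a}_0\pn^2\big)e^\eta\wt D^{1/2}. \]
Taking square roots — by operator monotonicity of $t\mapsto t^{1/2}$ together with $(cX)^{1/2}=c^{1/2}X^{1/2}$ — and absorbing $(1\pm C/N)^{1/2}$ into $(1\pm C/N)$ at the cost of enlarging $C$ (for $N$ large) yields the first displayed inequality of the lemma.

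Next, for the unitary equivalence, I would set $S := e^\eta\wt D^{1/2}$ on $L^2_{\bot\pn}(\bR^3)$. Since $e^\eta$ is bounded with bounded inverse and $\wt D\geq c>0$ by Lemma \ref{lm:alpha} a), the operator $S$ is closed, densely defined, injective with dense range, and using $\wt D = e^{-\eta}H_\text{GP}e^{-\eta}$ from \eqref{eq:deftildeDE} one has $SS^* = e^\eta\wt D e^\eta = H_\text{GP}$. From the polar decomposition $S=V|S|$ (with $V$ unitary, as $\ker S = \ker S^* = \{0\}$) it follows that $H_\text{GP}=SS^*=V(S^*S)V^*$, hence $|S| = (S^*S)^{1/2} = V^*H_\text{GP}^{1/2}V$ and therefore $S = V|S| = H_\text{GP}^{1/2}V$. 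Consequently
\[ \wt D^{1/2}e^\eta\big(H_\text{GP}+16\pi\mathfrak{a}_0\pn^2\big)e^\eta\wt D^{1/2} = S^*\big(H_\text{GP}+16\pi\mathfrak{a}_0\pn^2\big)S = V^*H_\text{GP}^{1/2}\big(H_\text{GP}+16\pi\mathfrak{a}_0\pn^2\big)H_\text{GP}^{1/2}V = V^*E^2V, \]
so $\big(\wt D^{1/2}e^\eta(H_\text{GP}+16\pi\mathfrak{a}_0\pn^2)e^\eta\wt D^{1/2}\big)^{1/2} = V^*EV$ is unitarily equivalent to $E$.

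Finally, all operators involved have compact resolvent (being comparable to $H_\text{GP}$, resp. $E$, which have discrete spectrum), so the min-max principle applied to the operator inequality of the first step gives $(1-C/N)\lambda_n\big(\wt D^{1/2}e^\eta(\cdots)e^\eta\wt D^{1/2}\big)\leq \lambda_n(\wt E^2)\leq (1+C/N)\lambda_n\big(\wt D^{1/2}e^\eta(\cdots)e^\eta\wt D^{1/2}\big)$; by the second step the argument of $\lambda_n$ in the outer terms equals $\lambda_n(E^2)=\lambda_n(E)^2$, while $\lambda_n(\wt E^2)=\lambda_n(\wt E)^2$, and taking square roots and absorbing constants yields $(1-C/N)\lambda_n(E)\leq \lambda_n(\wt E)\leq (1+C/N)\lambda_n(E)$. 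I expect the only genuinely delicate point to be the operator-domain bookkeeping in the unitary-equivalence step — verifying that $S$ is closed with trivial kernel and cokernel so that $V$ is truly unitary, and that $SS^*=H_\text{GP}$ holds with matching domains — but this is routine given $\wt D, H_\text{GP}\geq c>0$ and that $e^{\pm\eta}$ is bounded with bounded inverse; everything else follows from \eqref{eq:aux1}, \eqref{eq:deftildeDE}, Lemma \ref{lm:alpha}, operator monotonicity of the square root, and the min-max principle.
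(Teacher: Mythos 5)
Your proposal is correct and follows essentially the same route as the paper: use \eqref{eq:aux1} to sandwich $\wt D+2\wt\Gamma$ between multiples of $e^\eta(H_\text{GP}+16\pi\mathfrak{a}_0\pn^2)e^\eta$, then sandwich with $\wt D^{1/2}$ and use operator monotonicity of the square root; and obtain the unitary equivalence via polar decomposition (the paper applies it to $H_\text{GP}^{1/2}e^{-\eta}$ rather than to $S=e^\eta\wt D^{1/2}$, but your $V$ is exactly the paper's $U_0$, and the identity $SS^*=H_\text{GP}$ is the same fact used from the other side). The extra detail you supply — the explicit Löwner–Heinz step and the domain/kernel checks that make $V$ unitary — is exactly what the paper compresses into ``which immediately implies the first claim'' and ``hence $U_0$ is a unitary map.''
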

\begin{proof}
Using the bound \eqref{eq:aux1}, we see that 
		\[ (1-C/N) e^\eta(H_\text{GP} + 16\pi \mathfrak{a}_0\pn^2)e^\eta\leq ( \wt D + 2\wt \Gamma) \leq  (1+C/N) e^\eta(H_\text{GP} + 16\pi \mathfrak{a}_0\pn^2)e^\eta,\]
which immediately implies the first claim. The statement about the unitary equivalence of $\big( \wt D^{1/2} e^{\eta}( H_\text{GP}+ 16\pi \mathfrak{a}_0 \pn^2 )e^\eta\wt D^{1/2}\big)^{1/2} $ and $E$ follows if we write 
		\[ \wt D^{1/2}  = | H_\text{GP}^{1/2}e^{-\eta}| = U_0^* H_\text{GP}^{1/2}e^{-\eta} = e^{-\eta}H_\text{GP}^{1/2} U_0 \] 
for some partial isometry $U_0$, by polar decomposition. Since $e^{-\eta}$ is bijective and $H_\text{GP}^{1/2}$ is injective, we see that $\text{ker} \big(H_\text{GP}^{1/2}e^{-\eta}\big)= \{0\}$ and hence $U_0$ is a unitary map. This implies that
		$\big( \wt D^{1/2} e^{\eta}( H_\text{GP}+ 16\pi \mathfrak{a}_0 \pn^2 )e^\eta\wt D^{1/2}\big)^{1/2} =U_0^* E U_0$.
\end{proof}
With Lemma \ref{lm:tildeEE} and Eq. \eqref{eq:lwrbndeq2}, we conclude for $\lambda_n(\cJ_N)-E_N \leq \zeta$ that 
		\be\label{eq:lwrbndsHN}\begin{split}
		\lambda_n(\cJ_N) &\geq \kappa_{\cJ_N} + \frac12 \tr_{\bot \pn} \left( \frac{1}{2}\left(\wt{D}^{1/2} \wt{E} \wt{D}^{-1/2} + \wt{D}^{-1/2} \wt{E} \wt{D}^{1/2} \right) - \wt D - \wt \Gamma\right)\\&\hspace{0.5cm} + \lambda_n\big( d\Gamma(E)\big) - CN^{-\rho}(1+\zeta^6)\\
		&\hspace{0.5cm} - CN^{-1} \Big( - \frac12 \tr_{\bot \pn} \left[ \frac{1}{2}\left(\wt{D}^{1/2} \wt{E} \wt{D}^{-1/2} + \wt{D}^{-1/2} \wt{E} \wt{D}^{1/2}\right) - \wt D - \wt \Gamma \right]  \\
		&\hspace{2.5cm}+\lambda_n(\cJ_N) - \kappa_{\cJ_N} \Big).
		\end{split}\ee		
In the next two sections, we will prove that the ground state energy of $\cJ_N$ is equal to $E_N =  \kappa_{\cJ_N} + \frac12 \tr_{\bot \pn} \left( \frac{1}{2}(\wt{D}^{1/2} \wt{E} \wt{D}^{-1/2} + \wt{D}^{-1/2} \wt{E} \wt{D}^{1/2}) - \wt D - \wt \Gamma\right)+\mathcal{ O}\big(N^{-\rho}(1+\zeta^6)\big)$. This will imply that the last term on the right hand side in \eqref{eq:lwrbndsHN} is a negligible error of order $\mathcal{O}(  N^{-1}\zeta)$. The bound \eqref{eq:lwrbndsHN} then also shows that the spectrum of $\cJ_N$ above $E_N$ is bounded from below by that of $d\Gamma(E)$, up to errors that vanish in the limit $N\to \infty$. Note that the eigenvalues of $d\Gamma(E)$ are explicitly given by finite sums of the form
		\[ \sum_{j=1}^\infty n_j e_j, \]
where the $(e_j)_{j\in \mathbb{N}} $ denote the eigenvalues of the operator $E$ in $L^2_{\bot \pn}(\bR^3)$, defined in \eqref{eq:defHGPE}, and $n_j\in \mathbb{N}$ are non-zero for finitely many $j\in \mathbb{N}$.

\subsection{Upper Bound on $\sigma(H_N)$}  \label{sec:uprbnd}
In this section, we show that the spectrum of $d\Gamma(E)$ provides an upper bound on the spectrum of $\cJ_N$ above its ground state energy. To show this, we construct suitable trial states that approximately diagonalize $\wt \cQ_{\cJ_N}$, defined in \eqref{eq:deftildeQ}, and we apply Lemma \ref{lm:tildeEE}. Since $\wt \cQ_{\cJ_N}$ is defined in the full excitation space $\cF_{\bot \pn}$, we can not directly choose its eigenvectors as trial states, but have to truncate them first to $\cFpn$. Furthermore, in contrast to the lower bound, we also have to evaluate the potential energy $\cV_N$ of the trial states and show that it is negligible in the limit $N\to \infty$. 

We start the analysis with some preliminary results on the conjugation of $\cN$, $\cK+\cV_\text{ext} $ and $\cV_N$ by the unitary map $\cU$, defined in \eqref{eq:defWXU}. 
\begin{lemma}\label{lm:U*KNU}
	For every $j\in \mathbb{N}$, there exists a constant $C>0$ such that in $\mathcal{F}_{\perp \pn}$ we have
		\be \label{eq:U*NU} \mathcal{U}^* (\mathcal{N}+1)^j \mathcal{U} \leq C (\mathcal{N}+1)^j. \ee
Moreover, with $\wt E$ as defined in \eqref{eq:deftildeDE}, we have that  
		\be \label{eq:U*dGEU}  \mathcal{U}^* d\Gamma(\wt E) (\mathcal{N}+1)^j \mathcal{U} \leq C (d\Gamma(\wt E)+\cN +1)(\mathcal{N}+1)^j \ee
and consequently that 
		\be \label{eq:U*KU}\mathcal{U}^* (\cK+\cV_\text{ext}) (\mathcal{N}+1)^j \mathcal{U} \leq C (d\Gamma(\wt E)+\cN +1)(\mathcal{N}+1)^j . \ee
\end{lemma}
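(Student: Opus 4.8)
**Proof proposal for Lemma~\ref{lm:U*KNU}.**

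The plan is to exploit the explicit action of $\cU = e^X \cW$ on creation and annihilation operators, as recorded just before the statement, together with the Hilbert–Schmidt bounds on $A-1$, $B-1$ and the operators $\alpha$ collected in Lemma~\ref{lm:alpha}. For \eqref{eq:U*NU}, I would first treat $\cW = \Gamma(W)$: since $W$ is unitary (Lemma~\ref{lm:alpha}~a)), the second quantization $\Gamma(W)$ commutes with $\cN$ exactly, so it suffices to handle $e^X$. For the $X$-conjugation, the standard strategy is a Grönwall argument: differentiate $s \mapsto \langle \xi, e^{-sX}(\cN+1)^j e^{sX}\xi\rangle$, compute the commutator $[(\cN+1)^j, X]$, and bound it by $(\cN+1)^j$ using that $X$ is quadratic with Hilbert–Schmidt kernel $\alpha_{ij} = \langle \varphi_i, \alpha\varphi_j\rangle$ and $\|\alpha\|_\text{HS} < \infty$ (which follows from part~e) of Lemma~\ref{lm:alpha}, taking $\beta=0$). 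This is the familiar mechanism by which a Bogoliubov transformation with Hilbert–Schmidt kernel preserves all powers of the number operator; the only input needed beyond $\|\alpha\|_\text{HS}<\infty$ is a routine commutator computation, and the induction on $j$ is standard.

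For \eqref{eq:U*dGEU}, the key observation is that $d\Gamma(\wt E)$ is the diagonalized quadratic Hamiltonian, i.e. essentially $\cU^* \wt\cQ_{\cJ_N}\cU$ up to a constant, by \eqref{eq:diag}. Thus $\cU\, d\Gamma(\wt E)\,\cU^*$ equals $\wt\cQ_{\cJ_N}$ minus the diagonalization constant, and the claim reduces to showing that $\wt\cQ_{\cJ_N}(\cN+1)^j$, conjugated back by $\cU^*$ and controlled, is bounded by $(d\Gamma(\wt E)+\cN+1)(\cN+1)^j$. Equivalently, I would express $\cU^* d\Gamma(\wt E)(\cN+1)^j\cU$ directly: using the explicit formulas for $\cU^* a^*(f)\cU$, we have $\cU^* d\Gamma(\wt E)\cU = d\Gamma(\wt E) + (\text{quadratic terms with Hilbert–Schmidt kernels built from } A-1, B-1, \alpha) + (\text{a constant from normal ordering})$. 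Each such quadratic correction term is of the form $\int (\text{HS kernel})\, a^\sharp a^\sharp$, and can be bounded by $\cN+1$; combined with the commutators against $(\cN+1)^j$ (handled as in the first part) and the already-established \eqref{eq:U*NU}, this yields \eqref{eq:U*dGEU}. The main point is that all correction kernels are Hilbert–Schmidt uniformly in $N$, which is exactly the content of Lemma~\ref{lm:alpha}~d),e).

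Finally, \eqref{eq:U*KU} follows from \eqref{eq:U*dGEU} combined with the operator inequality $\cK + \cV_\text{ext} \leq C(d\Gamma(\wt E) + 1)$ on $\cF_{\perp\pn}$. This last inequality is a consequence of Lemma~\ref{lm:alpha}~b),c): on $L^2_{\perp\pn}(\bR^3)$ one has $\wt E \geq c^{1/2}\wt D \geq c'(H_\text{GP}+C)^{-1}\cdot H_\text{GP} \cdot(\ldots)$, or more directly $\wt E \geq c\, \wt D \geq c(H_\text{GP} - C) \geq c'(-\Delta + V_\text{ext}) - C'$, using that $H_\text{GP} = -\Delta + V_\text{ext} + 8\pi\frak a_0\pn^2 - \eps_{GP}$ and that the remaining terms are bounded multiplication operators (recall $V_\text{ext}\pn$ and $\pn^2$ are controlled by the decay of $\pn$); lifting this one-particle bound to Fock space gives $d\Gamma(\wt E) \geq c(\cK + \cV_\text{ext}) - C\cN$, hence $\cK + \cV_\text{ext} \leq C(d\Gamma(\wt E) + \cN)$. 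Inserting into \eqref{eq:U*dGEU} and using \eqref{eq:U*NU} to absorb the extra $\cN$ factors completes the proof.

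The step I expect to be the main obstacle is not conceptual but bookkeeping: carefully enumerating all the quadratic correction terms generated when conjugating $d\Gamma(\wt E)$ by $\cU$ and verifying that each carries a uniformly-Hilbert–Schmidt kernel — in particular the cross terms mixing $|A^*|$-type and $|A^*|^{-1}$-type factors, where one must use $\|A-1\|_\text{HS}, \|B-1\|_\text{HS}\leq C$ rather than mere boundedness. The Grönwall/commutator estimates themselves are routine once the kernels are known to be Hilbert–Schmidt.
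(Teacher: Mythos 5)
Your outline of the overall structure matches the paper's: Grönwall for the $e^X$ part of \eqref{eq:U*NU} (with $\|\alpha\|_\text{HS}<\infty$ from Lemma~\ref{lm:alpha}), exact invariance of $\cN$ under $\cW=\Gamma(W)$, and deriving \eqref{eq:U*KU} from \eqref{eq:U*dGEU} via the one-particle comparison $\wt E\gtrsim H_\text{GP}-C\gtrsim(-\Delta+V_\text{ext})-C'$. For \eqref{eq:U*dGEU} the paper does \emph{not} go through an explicit Bogoliubov conjugation of $d\Gamma(\wt E)$; it runs another Grönwall argument on $s\mapsto e^{-sX}d\Gamma(\wt E)(\cN+1)^j e^{sX}$, bounding $[d\Gamma(\wt E)(\cN+1)^j,X]$ via $\|\wt E^{1/2}\alpha\|_\text{HS}\le C$, and afterwards handles $\cW$ through $\cW^*d\Gamma(\wt E)(\cN+1)^j\cW=d\Gamma(W^*\wt E W)(\cN+1)^j$ together with the operator inequality $W^*\wt E W\le C\wt E$.

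Two points in your version of \eqref{eq:U*dGEU} need repair. First, the route through \eqref{eq:diag} is not actually available: that identity gives $\cU\,d\Gamma(\wt E)\,\cU^*=\wt\cQ_{\cJ_N}-\mathrm{const}$, i.e.\ conjugation by $\cU$ in the \emph{opposite} direction from what the lemma asks; substituting it into $\cU^*d\Gamma(\wt E)\cU$ produces $(\cU^2)^*\wt\cQ_{\cJ_N}\cU^2$, which does not simplify. Second, it is not true that every quadratic term generated by conjugating $d\Gamma(\wt E)$ has a Hilbert--Schmidt kernel and is therefore $\lesssim(\cN+1)$. The off-diagonal $a^*a^*$ and $aa$ parts do (their kernels involve factors like $\wt E^{1/2}\alpha$, which Lemma~\ref{lm:alpha}~e) controls), but the diagonal part is of the form $d\Gamma(\wt\Phi)$-type or $d\Gamma(\wt D)$-type, i.e.\ comparable to $d\Gamma(\wt E)$ and genuinely unbounded. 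Indeed $\wt E-\wt D$ is not bounded (the paper only shows $\|\wt E^{-1/2}(\wt E-\wt D)\|_\text{op}\le C$), so there is no way to turn that diagonal piece into something $\cN$-bounded. The lemma's conclusion allows exactly this: the bound is $C(d\Gamma(\wt E)+\cN+1)(\cN+1)^j$, not $C(\cN+1)^{j+1}$, so your plan survives once you allow the unbounded diagonal correction to be absorbed into $d\Gamma(\wt E)$ rather than $\cN$. Finally, you should not gloss over the $\cW$-factor: after peeling off $e^X$ one is left to show $W^*\wt E W\le C\wt E$, and this is precisely the nontrivial operator inequality \eqref{eq:WEW-bd} that the paper establishes by a resolvent/integral-representation argument; it does not follow for free from $W$ being unitary, since $W$ need not commute with $\wt E$.
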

\begin{proof}
Proceeding as in \cite[Lemma 3.1]{BS} and using that $\|\alpha\|_\text{HS}\leq C$, we readily find (recalling the notation introduced in (\ref{eq:defWXU})), that
		\[ \sup_{s\in [-1;1] } \,e^{-sX} (\cN+1)^j e^{sX} \leq C (\cN+1)^j.  \]
The bound \eqref{eq:U*NU} then follows from $\cN = d\Gamma(1)$ so that $ \cW^*( \cN+1)^j \cW  = (\cN+1)^j$.

Next, we note that, by \eqref{eq:U*dGEU} and Lemma \ref{lm:alpha} (part c)), we have
		\[ \mathcal{U}^* (\cK+\cV_\text{ext}) (\mathcal{N}+1)^j \mathcal{U} \leq C \, (d\Gamma(\wt E)+1)(\mathcal{N}+1)^j.\]
Hence, \eqref{eq:U*KU} is a consequence of \eqref{eq:U*dGEU}. To prove the latter, we expand for $t\in [-1;1]$
		\[
		e^{-tX} d\Gamma(\wt E) (\cN+1)^j e^{tX}  =  d\Gamma(\wt E)(\cN +1)^j +  \int_0^t ds \, e^{-sX}[ d\Gamma(\wt E) (\cN+1)^j, X ] e^{sX}.
		\]
Denoting (as we did around (\ref{eq:defWXU})) by $(\varphi_k)_{k\in\mathbb{N}}$ the eigenbasis of $\wt E$ and setting $a^{\sharp}_k = a^\sharp(\varphi_k)$, we find 
		\[ [d\Gamma(\wt E)(\cN+1)^j, X ]  = \bigg( \sum_{k,l=1}^\infty \lambda_k(\wt E) \alpha_{kl} a^*_k a^*_l  + \text{h.c.}\bigg)(\cN+1)^j + d\Gamma(\wt E) [ (\cN+1)^j, X]. \]
An application of Lemma \ref{lm:alpha} (part e)) and Cauchy-Schwarz shows that 
		\[\begin{split}
		   \sum_{k,l=1}^\infty \lambda_k(\wt E) \big| \alpha_{kl} \langle \xi, a^*_k a^*_l  (\cN+1)^j \big)\xi \rangle \big|
		& \leq \langle\xi, \big( d\Gamma(\wt E)    + \| \wt E^{1/2}\alpha\|_\text{HS}^2  \, (\cN+1)\big)(\cN+1)^j\xi\rangle \\
		& \leq \langle\xi, d\Gamma(\wt E) (\cN+1)^j\xi\rangle  + C \langle\xi, (\cN+1)^{j+1}\xi\rangle
		\end{split}\]		
for all $\xi\in \cF_{\bot \pn}$. Moreover, observing that
		\[ [ (\cN+1)^j, X] =  \sum_{k,l=1}^\infty   \alpha_{kl} a^*_k a^*_l  \,j\, (\cN+\Theta(\cN)+1)^{j-1} +\text{h.c.}\]	
for some $\Theta: \mathbb{N}\to (0;1)$ by the mean value theorem, we conclude similarly that
		\[ | \langle\xi,  d\Gamma(\wt E) [ (\cN+1)^j, X]\xi \rangle| \leq  C\langle\xi, d\Gamma(\wt E) (\cN+1)^j\xi\rangle  + C \langle\xi, (\cN+1)^{j+1}\xi\rangle.  \]	
Applying Gronwall to the map $t\mapsto e^{-tX}d\Gamma(\wt E) (\cN+1)^je^{tX} $ and using \eqref{eq:U*NU}, we find
\begin{equation}\label{eq:gro-EN} \begin{split}
		\sup_{s\in [-1,1] } e^{-sX}  d\Gamma(\wt E)   (\cN+1)^j e^{sX} &\leq C (d\Gamma(\wt E)+ \cN +1)  (\cN+1)^j. \end{split} \end{equation}
To conclude the bound \eqref{eq:U*dGEU}, we then notice that 
		\begin{equation}\label{eq:WENW} \cW^*d\Gamma(\wt E) (\cN+1)^j \cW = d\Gamma( W^* \wt E \,W) (\cN+1)^j \end{equation} 
and that $W^* \wt{E} W \leq C \wt{E}$. From (\ref{eq:defW}) and using Lemma \ref{lm:alpha}, part b), this last bound follows if we can show that 
\begin{equation}\label{eq:WEW-bd} \Big\| \wt{E}^{-1/2} \frac{1}{\sqrt{\wt{E}^{-1/2} \wt{D} \wt{E}^{-1/2}}} \wt{E}^{1/2} \Big\|_\text{op} \leq C \end{equation}
uniformly in $N$. To prove (\ref{eq:WEW-bd}), we observe first that 
\[  \wt{E}^{-1/2} (\wt{E}-\wt{D}) = C \int_0^\infty dt \, \sqrt{t}  \,\wt{E}^{-1/2} \wt{D}^{1/2} \frac{1}{t + \wt{D}^2}  \wt{\Gamma}\wt{D}^{1/2} \frac{1 }{t+\wt{E}^2}  \]
which implies that $\| \wt{E}^{-1/2} (\wt{E}-\wt{D}) \|_\text{op} \leq C$. Then we write 
\[ \wt{E}^{-1/2} \frac{1}{\sqrt{\wt{E}^{-1/2} \wt{D} \wt{E}^{-1/2}}} \wt{E}^{1/2} -1 = C \int_0^\infty \frac{dt}{t^{1/2}} \frac{1}{t+1} \, \wt{E}^{-1/2} \frac{1}{t+\wt{E}^{-1/2} \wt{D} \wt{E}^{-1/2}} \wt{E}^{-1/2} ( \wt{E} -\wt{D} ) \]
to conclude that 
\[ \Big\| \wt{E}^{-1/2} \frac{1}{\sqrt{\wt{E}^{-1/2} \wt{D} \wt{E}^{-1/2}}} \wt{E}^{1/2} -1 \Big\|_\text{op} \leq C \]
and therefore to obtain (\ref{eq:WEW-bd}). 
\end{proof}	

As mentioned at the beginning of this section, below we have to evaluate the potential energy of suitable trial states. To this end, we make use of the following lemma that controls the conjugation of $\cV_N$ by the unitary map $\cU$, defined in \eqref{eq:defWXU}. 
\begin{lemma}\label{lm:U*VNU}
There exists $C>0$ such that in $\cF_{\bot \pn}$ we have 
		$$ 
		\cU^* \,\cV_N\,  \cU \leq CN^{-1/2}  (d\Gamma (\wt E)+1)^2. $$
\end{lemma}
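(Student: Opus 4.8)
**Proof plan for Lemma \ref{lm:U*VNU}.**

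The plan is to bound $\cU^* \cV_N \cU$ by writing $\cV_N$ in terms of the (position-space) operator-valued distributions $a_x, a_x^*$, and then tracking how these fields transform under conjugation by $\cU = e^X \cW$. Recall $\cV_N = \tfrac12 \int dxdy\, N^2 V(N(x-y)) a_x^* a_y^* a_y a_x \geq 0$. Since $\cV_N$ is quartic and positive, the natural strategy is to first estimate $\cU^* \cV_N \cU$ by a quartic expression in the number operator times a factor coming from the size of the potential, and then upgrade the number-operator bound to a $d\Gamma(\wt E)$-bound using Lemma \ref{lm:alpha} (parts a,c), which gives $\cN \leq C \wt D^{-1} \cdot (\text{stuff})$ — more precisely, since $\wt E \geq c > 0$ by Lemma \ref{lm:alpha}(a), we have $\cN \leq c^{-1} d\Gamma(\wt E)$ directly. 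The gain of $N^{-1/2}$ must come from the $L^1$-normalization $\int N^2 V(N(x-y))\, dy = \int V(z)\, dz \cdot N^{-1}$, combined with the fact that conjugation spreads the sharply-peaked potential over a scale where one can extract a further power.

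The key steps, in order: (i) Use Lemma \ref{lm:U*KNU} to reduce matters to bounding $\langle \xi, \cU^* \cV_N \cU \xi\rangle$ by $C N^{-1/2}\langle \xi, (d\Gamma(\wt E)+1)^2 \xi\rangle$ for $\xi$ in a suitable core; by Lemma \ref{lm:U*KNU} it suffices to get $\cU^* \cV_N \cU \leq C N^{-1/2} (\cN+1)^2 + (\text{error controlled by } d\Gamma(\wt E))$. (ii) Expand $\cU^* a_x \cU = \cW^* e^{-X} a_x e^{X} \cW$. The transformation $e^{-X} a_x e^X$ produces, via $\alpha$, a linear combination of $a(\cdot)$ and $a^*(\cdot)$ with kernels controlled by $\cosh_\alpha, \sinh_\alpha$; and $\cW = \Gamma(W)$ replaces $a_x$ by $a(W^* \delta_x)$ (as a distribution), i.e. smears $x$ against the kernel of the unitary $W$. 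The net effect is that $\cU^* a_x \cU$ is a combination of $a(u_x)$ and $a^*(v_x)$ with $\| u_x \|, \| v_x \| \leq C$ uniformly in $x$, plus the key observation that the smearing kernels decay well enough in the relevant variable. (iii) Insert these expansions into $\cV_N$: the product $\cU^* a_x^* a_y^* a_y a_x \cU$ becomes a sum of (normal-ordered) quartic monomials in $a^*(u_x), a^*(v_x), a(u_x), a(v_x)$, times the kernel $N^2 V(N(x-y))$. Integrating in $x,y$ and using $\int N^2 V(N(x-y))\, dy \sim N^{-1}$ together with $\sup_x \|u_x\|,\|v_x\| \le C$ and the Hilbert–Schmidt bounds $\|\alpha\|_\text{HS} \le C$, $\|A - 1\|_\text{HS} \le C$ from Lemma \ref{lm:alpha}, one obtains $\pm \cU^* \cV_N \cU \leq C N^{-1} (\cN + 1)^2$ for the "diagonal" pieces. (iv) Bound the remaining cross terms (involving $a^*a^*$ or $aa$ pairs) by Cauchy–Schwarz, again extracting $N^{-1}$ from the potential and constants from the kernels, and absorb factors of $\cN$ using $\cN \leq c^{-1} d\Gamma(\wt E)$. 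This yields the stated bound with $N^{-1}$; the slightly weaker $N^{-1/2}$ in the statement leaves room for the (harmless) loss coming from a Cauchy–Schwarz split of one genuinely off-diagonal term where only $N^{-1/2}$ survives after commutator corrections.

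The main obstacle I anticipate is step (ii)–(iii): controlling the commutator expansion of $e^{-X} a_x e^X$ uniformly in the position variable $x$, and in particular ensuring that when the two arguments $x$ and $y$ are forced within distance $N^{-1}$ by the potential $N^2 V(N(x-y))$, the resulting kernels still produce a net gain rather than a loss. One must be careful that the operators $A, B$ (equivalently $\cosh_\alpha, \sinh_\alpha$), although only bounded in operator norm and Hilbert–Schmidt (not trace class), are applied against the sharply localized potential in such a way that the $L^1$-smallness of $N^2 V(N\cdot)$ is not destroyed; this is exactly where one uses that $\sup_x \|\alpha_x\|, \sup_x \|(A - 1)_x\| \le C$ — i.e. the \emph{pointwise} (in $x$) bounds on the kernels, analogous to those in Lemma \ref{lm:bndseta} and Lemma \ref{lm:alpha}(e) — rather than only global Hilbert–Schmidt norms. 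Once the fields are controlled pointwise, the remaining bookkeeping (Wick ordering the quartic monomials, applying Cauchy–Schwarz, and invoking $\cN \le c^{-1} d\Gamma(\wt E)$ together with Lemma \ref{lm:U*KNU}) is routine.
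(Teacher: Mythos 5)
There is a genuine gap in step (ii)--(iii). You claim that ``$\cU^* a_x \cU$ is a combination of $a(u_x)$ and $a^*(v_x)$ with $\|u_x\|,\|v_x\| \le C$ uniformly in $x$.'' This is false: $e^{-X} a_x e^{X}$ has the form $a(\cosh_\alpha(\delta_x)) + a^*(\sinh_\alpha(\overline{\delta_x}))$, and $\cosh_\alpha = 1 + (\cosh_\alpha - 1)$, where the ``$1$'' carries an unregularized $\delta_x$. Likewise $\cW^* a_x \cW = a(W^*\delta_x)$ retains a genuinely distributional argument (take $W = 1$ to see this). So the transformed field is still an operator-valued \emph{distribution}, not a sum of $a,a^*$ smeared against uniformly $L^2$-bounded kernels. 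If your claim were true, your step (iii) would yield $\cU^*\cV_N\cU \le C N^{-1}(\cN+1)^2$ from the $L^1$ bound on the potential, i.e. the short-range interaction would become mean-field after conjugation by a \emph{bounded} Bogoliubov transform --- which cannot happen; the paper only proves (and only needs) the weaker $N^{-1/2}$, and the $N^{-1/2}$ is not a Cauchy--Schwarz loss on an off-diagonal term as you suggest in (iv).

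The paper's route avoids ever expanding the singular part of $\cU^* a_x \cU$. For the $e^X$-conjugation it uses Gronwall: $\partial_t\, e^{-tX}\cV_N e^{tX} = e^{-tX}[\cV_N, X]e^{tX}$, and the crucial point is that \emph{every} monomial in $[\cV_N, X]$ carries at least one copy of the genuinely Hilbert--Schmidt kernel $\alpha$ contracted against the sharp potential; the singular ``$1$'' parts never meet the potential directly because they cancel in the commutator. The smallness is then extracted via the Sobolev-type operator inequality $v(x-y) \le C\|v\|_1 (-\Delta_x+1)^\beta(-\Delta_y+1)^\beta$ (for $\beta > 3/4$) applied to $N^2 V(N\cdot)$, combined with $\|\wt D^{\beta/2}\alpha\wt D^{\beta/2}\|_\text{HS} \le C$ (Lemma~\ref{lm:alpha}(e), which holds precisely for $\beta < 1$), giving $\int N^2V(N(x-y))|\alpha(x;y)|^2\,dx\,dy \le CN^{-1}$ and hence, after the $\cV_N^{1/2}$--Cauchy--Schwarz, a gain of $N^{-1/2}$. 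For the $\cW$-conjugation, one works sector-by-sector using that $\cW = \Gamma(W)$ commutes with the decomposition into $n$-particle sectors, applies the same pointwise operator inequality, and then uses $W^*\wt E W \le C \wt E$ (from \eqref{eq:WENW}--\eqref{eq:WEW-bd}) to control the result by $d\Gamma(\wt E)^2$. Without the commutator cancellation and the operator inequality \eqref{eq:ESY}, your plan has no mechanism for converting the sharply localized potential into something controlled by $d\Gamma(\wt E)$, and the heuristic in step (iii) is not correct as stated.
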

\begin{proof} Recall from (\ref{eq:defWXU}) that $\cU = e^X \cW$. For $t \in [-1;1]$, we have 
\begin{equation}\label{eq:VNX0} \begin{split}
		e^{-tX} \cV_N e^{tX} =&\, \cV_N + \int_0^t ds\, \int dxdy\, N^2V(N(x-y)) \alpha(x;y) \big( e^{-sX} a_xa_y e^{sX}+ \text{h.c.}\big) \\
		& +  \int_0^t ds\, \int dxdy\, N^2V(N(x-y))   \big( e^{-sX}a^*_y a (\alpha_x) a_xa_ye^{sX} + \text{h.c.}\big). 
		\end{split}\end{equation} 
With the operator inequality  
\begin{equation}\label{eq:ESY} v (x-y) \leq C \|v\|_1 (-\Delta_x+1)^\beta (-\Delta_y+1)^\beta \end{equation} 
for any $\beta > 3/4$, we find 
\[  \begin{split} 
\Big| \int dx dy \, N^2 &V (N(x-y)) \alpha (x;y) \langle \psi, a_x a_y \psi \rangle \Big| \\ \leq \; &C  \| \psi \| \| \cV^{1/2}_N \psi \|  \left[ \int dx dy N^2 V (N(x-y)) |\alpha (x;y)|^2 \right]^{1/2} \\  \leq \; &C N^{-1/2} \| (1-\Delta)^{\beta /2} \alpha (1-\Delta)^{\beta /2} \|_\text{HS}   \| \psi \| \| \cV^{1/2}_N \psi \|.  \end{split} \]
 Using Lemma \ref{lm:alpha}, part c) and then part e), to estimate $\| (1-\Delta)^{\beta/2} \alpha (1-\Delta)^{\beta/2} \|_\text{HS} \leq \| \wt{D}^{\beta/2} \alpha \wt{D}^{\beta/2} \|_\text{HS} \leq C$, we obtain 
 \begin{equation}\label{eq:VNX1} \Big| \int dx dy \, N^2 V (N(x-y)) \alpha (x;y) \langle \psi, a_x a_y \psi \rangle \Big| \leq C \langle \psi, \cV_N \psi \rangle + C N^{-1} \| \psi \|^2. \end{equation}
 Similarly, we can estimate 
 \begin{equation}\label{eq:VNX2} \begin{split} 
 \Big| \int dx dy \, N^2 &V (N(x-y)) \langle \psi, a^*_y a (\alpha_x) a_xa_y \psi \rangle \Big| \\ \leq \; &C  \| \cV^{1/2}_N \psi \|  \left[ \int dx dy N^2 V (N(x-y)) \| a^* (\alpha_x) a_y  \psi \|^2 \right]^{1/2} \\  \leq \; &C N^{-1/2}   \| (1-\Delta)^{1/2} \alpha \|_\text{HS} \| \cV^{1/2}_N \psi \| \| (\cK + \cN)^{1/2} (\cN+1)^{1/2} \psi \|  \\ \leq \; &C \langle \psi, \cV_N \psi \rangle + C N^{-1} \langle \psi, (\cK+ \cN) (\cN+1) \psi \rangle. \end{split} \end{equation} 
Inserting (\ref{eq:VNX1}), (\ref{eq:VNX2}) in (\ref{eq:VNX0}), using Lemma \ref{lm:alpha} to replace $\cK$ with $d\Gamma (\wt{E})$ and then applying (\ref{eq:gro-EN}), we arrive at
\[ \begin{split} \langle \psi, e^{-tX} \cV_N e^{tX} \psi \rangle \leq \; &C \int_0^t ds  \langle e^{-sX} \psi, \cV_N e^{sX} \psi \rangle +  \langle \psi, \cV_N \psi \rangle \\ &+ C N^{-1/2}  \langle \psi, (d\Gamma (\wt{E}) +1) (\cN+1) \psi \rangle. \end{split} \]
With Gronwall, we obtain 
\[ \langle \psi, e^{-X} \cV_N e^{X} \psi \rangle \leq C \langle \psi, \cV_N \psi \rangle + C N^{-1/2}  \langle \psi, (d\Gamma (\wt{E}) +1) (\cN+1) \psi \rangle. \]
Since $\cU = e^X \cW$, we find, using (\ref{eq:WENW}), 
\begin{equation}\label{eq:UVU-1} \begin{split}  \langle \psi, \cU^* \cV_N \cU \psi \rangle &\leq C \langle \psi, \cW^* \cV_N \cW \psi \rangle + C N^{-1/2}  \langle \psi, \cW^*  (d\Gamma (\wt{E}) +1) (\cN+1) \cW \psi \rangle \\ &\leq C \langle \psi, \cW^* \cV_N \cW \psi \rangle + C N^{-1/2}  \langle \psi,  (d\Gamma (\wt{E}) +1) (\cN+1) \psi \rangle. \end{split}  \end{equation} 
 To estimate the expectation of $\cW^* \cV_N \cW$, we observe that, on the $n$-particle sector, using (\ref{eq:ESY}) and Lemma \ref{lm:alpha}, $\cW = \Gamma (W)$ and the bound $W^* \wt{E} W \leq C \wt{E}$ established after (\ref{eq:WENW}), 
 \[ \begin{split} \langle \psi^{(n)} , \cW^* \cV_N \cW \psi^{(n)} \rangle &= \sum_{i<j}^n \langle \cW \psi^{(n)} , N^2 V (N(x_i -x_j)) \cW \psi^{(n)} \rangle \\  &\leq C N^{-1} \sum_{i<j}^n \langle \cW \psi^{(n)} , (1-\Delta_{x_i}) (1-\Delta_{x_j}) \cW \psi^{(n)} \rangle 
 \\ &\leq C N^{-1} \sum_{i<j}^n \langle \psi^{(n)} , (W^* \wt{E} W)_i  (W^* \wt{E} W)_j  \psi^{(n)} \rangle 
 \\ &\leq C N^{-1} \sum_{i<j}^n \langle \psi^{(n)} , \wt{E}_i  \wt{E}_j  \psi^{(n)} \rangle 
 \leq C N^{-1} \langle \psi^{(n)}  d\Gamma (\wt{E})^2 \psi^{(n)} \rangle. \end{split} \]
 From (\ref{eq:UVU-1}), we obtain 
\[ \langle \psi, \cU^* \cV_N \cU \psi \rangle \leq C N^{-1/2} \langle \psi, (d\Gamma (\wt{E}) +1)^2 \psi \rangle. \]
\end{proof} 

Equipped with Lemmas \ref{lm:U*KNU} and \ref{lm:U*VNU}, let us now bound the spectrum $\sigma(H_N)$ from above. Like for the lower bound, we make use of the min-max principle and we derive upper bounds on the eigenvalues of $\cJ_N$. We first control the ground state energy $E_N$ and afterwards the spectrum of $\cJ_N$ above $E_N$. 

Recalling the definition of the unitary map $\cU: \cF_{\bot \pn} \to \cF_{\bot \pn}$ in \eqref{eq:defWXU}, the ground state energy of $\cJ_N$ is bounded from above by
		\[ E_N= \lambda_0(\cJ_N) \leq \frac{ \langle \chi_{ \leq N} \cU \Omega, \cJ_N \chi_{ \leq N} \cU \Omega \rangle }{\langle \chi_{ \leq N}\cU \Omega, \chi_{ \leq N}\cU \Omega\rangle }.\]
Here, $\Omega = (1,0,0,\dots) \in \cF_{\bot\pn}$ denotes the vacuum and we abbreviate $\chi_{\leq N} = 1 (\cN\leq N)$. Similarly, we write in the following $\chi_{> N} = 1(\cN>N)$ so that $ 1 = \chi_{\leq N} + \chi_{> N}$ in $\cF_{\bot \pn}$. By Markov's inequality and \eqref{eq:U*NU}, we note that
		\[  \| \chi_{ \leq N}\cU \Omega \|^2= 1 -  \| \chi_{ > N}\cU \Omega \|^2 \geq 1- N^{-2} \| \cN  \cU \Omega\|^2\geq 1- CN^{-2}.  \]
Using \eqref{eq:JNdec}, \eqref{eq:errorrho}, \eqref{eq:QJNdec}, \eqref{eq:QJNdecerror}, the fact that $\cJ_N\geq 0$ and that $\cK, \cV_\text{ext},\cV_N\geq 0$ commute with $\cN$, we obtain 
 		\[\begin{split}
		E_N &\leq   (1-C/N^2)^{-1} \big ( \kappa_{\cJ_N} + \langle \chi_{ \leq N} \cU \Omega,\wt \cQ_{ \cJ_N} \chi_{ \leq N} \cU \Omega \rangle\big) + C\, \langle  \cU \Omega, \cV_N \cU \Omega \rangle\\
		&\hspace{0.5cm}+ CN^{-\rho}  \langle \cU \Omega, (\cH_N +1)(\cN+1)^3\cU \Omega\rangle. 
		\end{split}\]
Applying Lemmas \ref{lm:U*KNU} and \ref{lm:U*VNU}, this implies
		\[\begin{split}
		E_N &\leq (1-C/N^2)^{-1} \big ( \kappa_{\cJ_N} + \langle \chi_{ \leq N} \cU \Omega, \wt{\cQ}_{\cJ_N} \chi_{ \leq N} \cU \Omega \rangle\big)  + C N^{-\rho}. 
		\end{split}\]
To evaluate this further, we use that in the sense of forms in $\cF_{\bot \pn}$, we have 
		\be \label{eq:chiQN-QN} \pm\big( \chi_{ \leq N} \wt \cQ_{\cJ_N} \chi_{ \leq N} - \wt \cQ_{\cJ_N} \big) \leq CN^{-1} (\cK+\cV_\text{ext}+\cN+1)(\cN+1) \ee
which is a straightforward consequence of the bound \eqref{eq:HStGamma} and Markov's inequality. Combining this with the identity \eqref{eq:diag}, we find that 
		\[\begin{split} E_N &\leq (1-C/N^2)^{-1}\left( \kappa_{\cJ_N} + \frac12  \tr_{\bot \pn} \left( \frac{1}{2}(\wt{D}^{1/2} \wt{E} \wt{D}^{-1/2} + \wt{D}^{-1/2} \wt{E} \wt{D}^{1/2}) - \wt D - \wt \Gamma\right) \right) + C N^{-\rho}\\
		& \leq \kappa_{\cJ_N} + \frac12  \tr_{\bot \pn} \left( \frac{1}{2}(\wt{D}^{1/2} \wt{E} \wt{D}^{-1/2} + \wt{D}^{-1/2} \wt{E} \wt{D}^{1/2}) - \wt D - \wt \Gamma\right) + C N^{-\rho}.
		\end{split}\]
Note that in the last step we used  $\kappa_{\cJ_N} +\frac12 \tr_{\bot \pn} \left( \frac{1}{2}\wt{D}^{1/2} \wt{E} \wt{D}^{-1/2} + \frac{1}{2}\wt{D}^{-1/2} \wt{E} \wt{D}^{1/2} - \wt D - \wt \Gamma\right) = \mathcal{O}(N)$, which is a consequence of the first inequality in the previous bound, the lower bound \eqref{eq:lwrbndeq2} and the fact that $E_N = \mathcal{O}(N)$. Thus, in summary, we obtain together with \eqref{eq:lwrbndeq2} that 
		\be \label{eq:uprbndEN2} E_N = \kappa_{\cJ_N} + \frac12  \tr_{\bot \pn} \left( \frac{1}{2}\left(\wt{D}^{1/2} \wt{E} \wt{D}^{-1/2} + \wt{D}^{-1/2} \wt{E} \wt{D}^{1/2}\right) - \wt D - \wt \Gamma\right) + \mathcal{O}(N^{-\rho}).	\ee
		
To continue with the higher min-max values of $\cJ_N$, we assume that $\lambda_n(\cJ_N)$ is such that $\lambda_n(\cJ_N)- E_N \leq \zeta$. By \eqref{eq:lwrbndeq2} and the identity \eqref{eq:uprbndEN2}, we also have 
		\be\label{eq:apbnddGE} \lambda_n\big( d\Gamma (\wt E)\big) \leq \zeta + C N^{-\rho}(1+\zeta^6). \ee
Now, let us recall that 
		\[ \lambda_n\big(d\Gamma(\wt E)\big) = \sum_{j=1}^\infty l_j^{(n)} \wt e_j,  \]
where the $(\wt e_j)_{j\in \mathbb{N}}$ denote the eigenvalues of $\wt E$ and the $l^{(n)}_j \in \mathbb{N}$ are non-zero for finitely many $j\in\mathbb{N}$. As we did in \eqref{eq:defWXU}, let us also denote by $(\varphi_j)_{j\in\mathbb{N}}$ an orthonormal eigenbasis of $\wt E$ such that $\wt E \varphi_j = \wt e_j \varphi_j$, for $j\in\mathbb{N}$. Then, we choose $n$ orthonormal  eigenvectors $ (\xi_{k})_{k=1}^n $ of the form
		\[ \xi_k = C_k  \prod_{j=1}^\infty \big( a^*(\varphi_j)\big)^{l^{(k)}_j } \Omega \in \cF_{\bot \pn}, \]
with $C_k $ ensuring that $\|\xi_k\|=1$ for each $k=1,\dots,n$ and such that 
		\[ d\Gamma ( \wt E) \xi_k = \lambda_k \big( d\Gamma(\wt E)\big) \xi_k \hspace{0.5cm} \text{for } k=1,\dots,n. \]
Given the vectors $ (\xi_{k})_{k=1}^n $, we define the linear space $Y^{(n)}\subset \cFpn$ by 
		\[Y^{(n)}  =  \chi_{\leq N}\, \cU\, \text{span} ( \xi_1,\dots, \xi_n), \] 
which we use below in the min-max principle to get an upper bound on $\lambda_n(\cJ_N)$. Before evaluating the upper bound, note that $ \chi_{\leq N}\cU\,:  \,\text{span} \big( \xi_1,\dots, \xi_n)\to Y^{(n)} $ is invertible, because we have for a normalized $\xi \in  \cU \, \text{span} ( \xi_1,\dots, \xi_n)$ and $m\in\mathbb{N}$ large enough (but fixed) that
		\be \label{eq:Ynproj} \| \chi_{\leq N}\cU\, \xi\| ^2 \geq \| \xi \|^2 -   \| \chi_{>N}\cU\,\xi\|^2 \geq 1-\big( \zeta + C N^{-\rho}(1+\zeta^6)\big)^m  N^{-m}> \frac12    \ee	
for $N$ large enough. Indeed, the second bound follows from Markov's inequality, the assumption $\zeta \leq CN^{-\rho/5-\eps}$ for some $\eps>0$, the bound \eqref{eq:U*NU}, the fact that $\cN$ commutes with $d\Gamma(\wt E)$, the operator bound $\cN \leq C  d\Gamma(\wt E) $ in $\cF_{\bot \pn}$ and because $\lambda_k \big(d\Gamma(\wt E)\big) \leq \zeta + C N^{-\rho}(1+\zeta^6)$ for each $k=1,\dots, n$ by \eqref{eq:apbnddGE}. Thus, by \eqref{eq:Ynproj}, $Y^{(n)}$ is in particular $n$-dimensional. 
		
Now, let us control $\lambda_n(\cJ_N)$ through the min-max principle. We obtain with \eqref{eq:JNdec}, \eqref{eq:errorrho}, \eqref{eq:QJNdec}, \eqref{eq:QJNdecerror}, \eqref{eq:chiQN-QN}, \eqref{eq:Ynproj} as well as Lemmas \ref{lm:U*KNU} and \ref{lm:U*VNU} 
that for $m\in \mathbb{N}$ large (but fixed) and $N$ large enough 
		\be \label{eq:uprbndln1} \begin{split}
		\lambda_n(\cJ_N) &\leq \sup_{  \xi\in Y^{(n)}, \|\xi\| =1 } \langle \xi, \cJ_N \xi \rangle = \sup_{  \xi\in \, \text{span}(\xi_1,\dots,\xi_n),\|\xi\|=1 }  \frac{ \langle \chi_{\leq N}\cU \xi, \cJ_N \chi_{\leq N}\cU \xi \rangle }{\langle \chi_{\leq N} \cU\xi,\chi_{\leq N} \cU\xi\rangle}  \\
		& \leq \big(1- ( N^{-1}\zeta  + C N^{-1-\rho}(1+\zeta^6))^m  \big)^{-1}  \sup_{ \xi\in \, \text{span}(\xi_1,\dots,\xi_n),\|\xi\|=1}\big( \kappa_{\cJ_N} + \langle \cU \xi, \wt \cQ_{\cJ_N} \cU \xi \rangle \big)\\
		& \hspace{0.5cm} + C N^{-\rho}  \sup_{ \xi\in \, \text{span}(\xi_1,\dots,\xi_n),\|\xi\|=1} \langle \xi, ( d\Gamma(\wt E) +\cN+ 1)(\cN+1)^4 \xi\rangle \\
		&\hspace{0.5cm} + C N^{-1/2}  \sup_{ \xi\in \, \text{span}(\xi_1,\dots,\xi_n),\|\xi\|=1} \langle \xi, ( d\Gamma(\wt E) + 1)^2 \xi\rangle.
		\end{split}\ee
Note that we used here the bounds $ d\Gamma(-\Delta + V_\text{ext}) \leq C d\Gamma(\wt E)$ and $\cV_N\leq \cK \cN$. To bound the right hand side in \eqref{eq:uprbndln1} further, we use that $\cN$ commutes with $d\Gamma(\wt E)$, $ \cN \leq C d\Gamma(\wt E)$ in $\cF_{\bot \pn}$
, the a priori estimate \eqref{eq:apbnddGE} as well as the identities \eqref{eq:diag} and \eqref{eq:uprbndEN2} so that
		\[ \begin{split}
		\lambda_n(\cJ_N) 
		& \leq   E_N + \lambda_n\big(d\Gamma(\wt E)\big)+ CN^{-\rho} \big( 1+ \zeta^5 + N^{-5\rho} \zeta^{30}\big).  \\
		\end{split}\]
Finally, applying Lemma \ref{lm:tildeEE} and once again \eqref{eq:apbnddGE}, we arrive at 
		\be \label{eq:uprbndln2} \begin{split}
		\lambda_n(\cJ_N) & \leq   E_N + \lambda_n\big(d\Gamma(E)\big)+ CN^{-\rho} \big( 1+ \zeta^5 + N^{-5\rho} \zeta^{30}\big).  
		\end{split}\ee

\subsection{Proof of Theorem \ref{thm:main}}  \label{sec:gsenergy}

In this section, we conclude the proof of Theorem \ref{thm:main}, based on the upper and lower bounds on the spectrum $\sigma(H_N)$ from Sections \ref{sec:lwrbnd} and \ref{sec:uprbnd}. 

\begin{proof}[Proof of Theorem \ref{thm:main}.]
Recalling \eqref{eq:uprbndEN2}, we have shown that 
		\be \label{eq:EN0}  E_N = \kappa_{\cJ_N} + \frac12 \tr_{\bot \pn} \left( \frac{1}{2}\left(\wt{D}^{1/2} \wt{E} \wt{D}^{-1/2} + \wt{D}^{-1/2} \wt{E} \wt{D}^{1/2}\right) - \wt D - \wt \Gamma\right) + \mathcal{O} (N^{-\rho})  \ee
with $\kappa_{\cJ_N}, \,\wt \Gamma$ from Proposition \ref{prop:JN} and with $\wt D, \wt E$ defined in \eqref{eq:deftildeDE}. Moreover, the bounds \eqref{eq:lwrbndsHN}, \eqref{eq:uprbndln2} show that if $\lambda_n(H_N) -E_N\leq \zeta $, then we have for sufficiently large $N$ that  
		\[ \lambda_n(H_N)- E_N =  \lambda_n\big(d\Gamma(E)\big)+ \mathcal{O}\big( N^{-\rho} ( 1+ \zeta^5 + N^{-5\rho} \zeta^{30} ) \big), \]
with $E$ defined in \eqref{eq:defHGPE}. The last identity proves the validity of \eqref{eq:excitHN-EN} in Theorem \ref{thm:main}. 

To finish the proof of the theorem, it now remains to verify the identity \eqref{eq:gsenergy}. This follows from evaluating the constant $\kappa_{\cJ_N} +  \tr_{\bot \pn} ( \frac{1}{2}\wt{D}^{1/2} \wt{E} \wt{D}^{-1/2} + \frac{1}{2}\wt{D}^{-1/2} \wt{E} \wt{D}^{1/2} - \wt D - \wt \Gamma)/2$ on the r.h.s. of \eqref{eq:EN0}, up to errors that vanish in the limit $N\to\infty$. To this end, it will be convenient to abbreviate by $\wt{K}_N:L^2 (\bR^3)\to L^2 (\bR^3)$ the operator with kernel
		\[ \wt{K}_N(x;y)= N^3(Vf_\ell)(N(x-y))\pn(x)\pn(y)\]
and to denote by $K_N:L^2_{\bot \pn}(\bR^3)\to L^2_{\bot \pn}(\bR^3)$ the operator
		\[ K_N=   Q \,\wt K_N\, Q, \]
where we recall that $Q = 1- |\pn\rangle\langle\pn|$. Now, inserting the definitions of $\kappa_{\cJ_N}$, $\wt D$ and $\wt \Gamma$ into \eqref{eq:EN0}, we use cyclicity of the trace and $1= f_\ell +w_\ell$, $\cosh^2 = 1 + \sinh^2$ to write 
		\[ 
		\begin{split}
		E_N &= N \int dx\, \Big( |\nabla\pn(x)|^2 + V_\text{ext}(x)\pn^2(x) + \frac12 \big( N^3(Vf_\ell)(N.)\ast \pn^2\big)(x)\pn^2(x)\Big)\\
		&\hspace{0.5cm} +\frac N2 \int dxdy\,   N^3(Vw_\ell)(N(x-y))  \pn^2(y)\pn^2(x)  - 4\pi \mathfrak{a}_0 \int dx\, \pn^4(x)\\
		&\hspace{0.5cm} + \frac12  \int dxdy\, N^2V(N(x-y))| \langle\sigma_x, \gamma_y\rangle|^2\\
		&\hspace{0.5cm} + \tr \big(\gamma \big[ N^3(Vw_\ell)(N(x-y))\pn(x)\pn(y) \big] \sigma \big)  \\
		&\hspace{0.5cm}+ \frac12 \tr_{\bot \pn} \left( \frac{1}{2}\left(\wt{D}^{1/2} \wt{E} \wt{D}^{-1/2} + \wt{D}^{-1/2} \wt{E} \wt{D}^{1/2}\right) - \wt D - \wt \Gamma\right)   + \mathcal{O} (N^{-\rho}).	  
		\end{split} \]
Applying Lemma \ref{lm:bndseta}, we also find that
		\[\begin{split}
		& \frac12  \int dxdy\, N^2V(N(x-y))| \langle\sigma_x, \gamma_y\rangle|^2 \\
		 &\hspace{1.5cm} = -\frac N2 \int dxdy\,   N^3(Vw^2_\ell)(N(x-y))  \pn^2(y)\pn^2(x)\\
		 &\hspace{2cm}  -\tr \big(\gamma \big[ N^3(Vw_\ell)(N(x-y))\pn(x)\pn(y) \big] \sigma \big) +\mathcal{O} (N^{-1}),
		\end{split}\]
which implies (recalling the definition (\ref{eq:defk}) of the kernel $k$) that 
		\be \label{eq:EN2} 
		\begin{split}
		E_N &= N \int dx\, \Big( |\nabla\pn(x)|^2 + V_\text{ext}(x)\pn^2(x) + \frac12 \big( N^3(Vf_\ell)(N.)\ast \pn^2\big)(x)\pn^2(x) \Big)\\ 
		&\hspace{0.5cm}- 4\pi \mathfrak{a}_0 \|\pn\|_4^4+ \frac12 \tr_{\bot \pn} \left(  \frac{1}{2}\left(\wt{D}^{1/2} \wt{E} \wt{D}^{-1/2} + \wt{D}^{-1/2} \wt{E} \wt{D}^{1/2}\right) - H_\text{GP}- K_N   \right) \\&\hspace{0.5cm}-\frac12 \tr \big(\wt  K_N k\big)+\mathcal{O} (N^{-\rho}).
		\end{split} \ee

To evaluate the right hand side in \eqref{eq:EN2} we use Lemma \ref{3.0.sceqlemma} \emph{ii)}, switch to Fourier space and apply Plancherel's theorem to find
		\[\begin{split}
		&\int dx\, \big( N^3(Vf_\ell)(N.)\ast \pn^2\big)(x)\pn^2(x)= \int dp\, \wh{(Vf_\ell)}(p/N) \big| \wh{(\pn^2)}(p)\big|^2 \\
		& = \wh{(Vf_\ell)}(0) \int dx\, \pn^4(x) -  N^{-1}  \int dx\,  V(x)f_\ell(x)\, x   \cdot  \int dy\, \pn^2(y) \nabla (\pn^2)(y)  + \mathcal{O}(N^{-2})\\
		& = \big(8\pi\mathfrak{a}_0  + 12\pi  \mathfrak{a}^2_0 \,\ell^{-1} N^{-1 }\big) \|\pn\|_4^4 + \mathcal{O}(N^{-2}).
		\end{split}\]
Note here that $\int dx\, V(x)f_\ell(x) x =0$ by radial symmetry of $V$ and $f_\ell$. 

Next, using the scattering equation \eqref{eq:scatl}, we write for $x\in \bR^3$
		\[ - Nw_\ell(Nx) = \int \frac{dy}{4\pi }\, \frac{1}{|x-y| } \Big( -\frac12 N^3 (Vf_\ell)(Ny) + N^3 \lambda_\ell f_\ell(Ny)\chi_\ell (y)\Big)\]
and obtain with similar arguments as above that 
\[ \begin{split}
-\frac12 \tr \big(\wt  K_N k\big) =&\, \frac1{16\pi^2} \int dpdq\, \big| \wh{(\pn^2)}(p)\big|^2   \wh{(V f_\ell)}((p-q)/N) \wh {(Vf_\ell)}(q/N) | q|^{-2}\\
& - \frac{ N^3\lambda_\ell}{8\pi^2} \int dpdq\, \big| \wh{(\pn^2)}(p)\big|^2   \wh{(V f_\ell)}((p-q)/N) \wh \chi_\ell(q ) | q|^{-2}\\
&+ \frac{ \lambda_\ell}{8\pi^2} \int dpdq\, \big| \wh{(\pn^2)}(p)\big|^2   \wh{(V f_\ell)}((p-q)/N) \wh{w_\ell}(q/N) | q|^{-2}\\
=&\, \frac1{16\pi^2} \int dpdq\, \big| \wh{(\pn^2)}(p)\big|^2   \wh{(V f_\ell)}((p-q)/N) \wh {(Vf_\ell)}(q/N) | q|^{-2}\\
& - \frac{ N^3\lambda_\ell}{8\pi^2} \int dpdq\, \big| \wh{(\pn^2)}(p)\big|^2   \wh{(V f_\ell)}((p-q)/N) \wh \chi_\ell(q ) | q|^{-2} + \mathcal{O}(N^{-1}),
\end{split} \]
where, in the last step, we used Lemma \ref{3.0.sceqlemma} \emph{i)}, the bound \eqref{eq:whwl} and the inequality $\Vert \wh{w_\ell}\Vert_\infty \leq \Vert w_\ell \Vert_1 \leq  CN^2 \ell^2$. Since $|\wh \chi_\ell(q)|\leq C|q|^{-2}$, Lemma \ref{3.0.sceqlemma} \emph{i)}, \emph{ii)} and a first order Taylor expansion also imply
 		\[\begin{split}
		&\frac{ N^3\lambda_\ell}{8\pi^2} \int dpdq\, \big| \wh{(\pn^2)}(p)\big|^2   \wh{(V f_\ell)}((p-q)/N) \wh \chi_\ell(q ) |q|^{-2}\\
		& = \frac{ N^3\lambda_\ell}{8\pi^2} \int_{|q|\leq N} dpdq\, \big| \wh{(\pn^2)}(p)\big|^2   \wh{(V f_\ell)}((p-q)/N) \wh \chi_\ell(q ) | q|^{-2}\\
		& \hspace{0.5cm} +\frac{ N^3\lambda_\ell}{8\pi^2} \int_{|q|>  N} dpdq\, \big| \wh{(\pn^2)}(p)\big|^2   \wh{(V f_\ell)}((p-q)/N) \wh \chi_\ell(q ) |  q|^{-2}\\
		& = 3\mathfrak{a}_0^2\,\pi^{-2} \ell^{-3}\!\!  \int dpdq\, \big| \wh{(\pn^2)}(p)\big|^2 \wh \chi_\ell(q ) | q|^{-2} + \mathcal{O}( \log N /N)= 6 \pi   \mathfrak{a}_0^2\, \ell^{-1} \|\pn\|_4^4+ \mathcal{O}( \log N /N )
		\end{split} \]
so that 
		\be \label{eq:EN3} 
		\begin{split}
		E_N &= N \cE_{GP}(\pn)- 4\pi \mathfrak{a}_0 \|\pn\|_4^4  \\ 
		&\hspace{0.5cm} +\frac14 \int dxdy\, N^3(Vf_\ell)(N(x-y)) \bigg[ \frac{1}{4\pi |.|}\ast  N^3(Vf_\ell)(N.)\bigg](x-y) \pn^2(x)\pn^2(y)\\
		&\hspace{0.5cm}+ \frac12 \tr_{\bot \pn} \left(  \frac{1}{2}\left(\wt{D}^{1/2} \wt{E} \wt{D}^{-1/2} + \wt{D}^{-1/2} \wt{E} \wt{D}^{1/2}\right) - H_\text{GP}- K_N   \right)+  \mathcal{O} (N^{-\rho}).
		\end{split} \ee
It remains to combine the last two lines in \eqref{eq:EN3}. 
We can rewrite
		\begin{align*}
		\begin{split}
		& \frac14 \int dxdy\, N^3(Vf_\ell)(N(x-y)) \bigg[ \frac{1}{4\pi |.|}\ast  N^3(Vf_\ell)(N.)\bigg](x-y) \pn^2(x)\pn^2(y) \\
		&= \frac14 \tr \left[ \wt K_N  \pn \big[ (-\Delta)^{-1}\wh {(Vf_\ell)}(./N)\big] \pn  \right],
		\end{split}
		\end{align*}
where $\pn$ acts as multiplication operator in position space and where $ \wh {(Vf_\ell)}(./N) $ acts as Fourier multiplier. Proceeding similarly as in \cite[Eq. (64)]{NNRT}, we first rewrite
\[\begin{split}
&\tr \left[ \wt K_N  \pn \big[ (-\Delta)^{-1}\wh {(Vf_\ell)}(./N)\big] \pn  \right]\\
&=\tr \left[ \frac12 \wt K_N  \pn \big[ (-\Delta)^{-1}\wh {(Vf_\ell)}(./N)\big] \pn  +  \frac12 \pn \big[\wh {(Vf_\ell)}(./N) (-\Delta)^{-1}\big] \pn  \wt K_N \right]\\
		& =  \tr \left[ \frac12 \wt K_N  \pn \big[ (-\Delta+\kappa^2)^{-1}\wh {(Vf_\ell)}(./N)\big] \pn  +  \frac12 \pn \big[\wh {(Vf_\ell)}(./N) (-\Delta+\kappa^2)^{-1}\big] \pn  \wt K_N \right]\\
		&\hspace{0.4cm} +\frac{\kappa^2}{2} \tr \left[ \wt K_N  \pn \big[ (-\Delta+\kappa^2)^{-1}(-\Delta)^{-1}\wh {(Vf_\ell)}(./N)\big] \pn +\text{h.c.}   \right]\\
		\end{split}\]
for a parameter $\kappa > 0$ that later will be chosen large enough. Plugging in $\wt K_N = \pn \wh {(Vf_\ell)}(./N)\pn $, we have
		\[\begin{split}
		&\frac12 \pn \wh {(Vf_\ell)}(./N)   \left( \pn^2 \frac1{-\Delta+\kappa^2}+ \frac1{-\Delta+\kappa^2} \pn^2 \right)  \wh {(Vf_\ell)}(./N) \pn  \\ 
		&= \pn \wh {(Vf_\ell)}(./N) \pn   \left( \frac1{-\Delta+\kappa^2} \right)\pn  \wh {(Vf_\ell)}(./N) \pn\\ 
		&\hspace{0.5cm}+ \pn \wh {(Vf_\ell)}(./N)   \left(  \frac1{-\Delta+\kappa^2}[\pn,-\Delta]\frac1{-\Delta+\kappa^2}[\pn,-\Delta]\frac1{-\Delta+\kappa^2} \right)  \wh {(Vf_\ell)}(./N) \pn\\ 
		&\hspace{0.5cm}+ \pn \wh {(Vf_\ell)}(./N)   \left(  \frac1{-\Delta+\kappa^2}|\nabla\pn|^2\frac1{-\Delta+\kappa^2}\right)  \wh {(Vf_\ell)}(./N) \pn .
		\end{split}\]
Inserting an additional projection $Q=1-|\pn\rangle\langle\pn|$, we arrive at
		\be \label{eq:maintr2}\begin{split}
		&\frac14 \tr \left[  \wt K_N  \pn \big[ (-\Delta)^{-1}\wh {(Vf_\ell)}(./N)\big] \pn   \right] = \frac14 \tr_{\bot \pn} \left[ K_N \frac1{H_\text{GP}} K_N\right] +\sum_{i=1}^6 X_{i}, 
		\end{split}  \ee
where we defined 
		\be \label{eq:defXj}\begin{split}
		\text{X}_1 &= \frac{\kappa^2}{8} \tr \left[  \pn \wh{(Vf_\ell)}(./N)  \pn^2  (-\Delta+\kappa^2)^{-1}(-\Delta)^{-1}\wh {(Vf_\ell)}(./N) \pn +   \text{h.c.}  \right] ,\\
		\text{X}_2 &= \frac14 \tr \left[\pn \wh {(Vf_\ell)}(./N)   \left(  \frac1{-\Delta+\kappa^2}[\pn,-\Delta]\right)^2\frac1{-\Delta+\kappa^2}   \wh {(Vf_\ell)}(./N) \pn\right]  ,\\
		\text{X}_3 &= \frac14 \tr \left[\pn \wh {(Vf_\ell)}(./N)   \left(  \frac1{-\Delta+\kappa^2}|\nabla\pn|^2\frac1{-\Delta+\kappa^2}\right)  \wh {(Vf_\ell)}(./N) \pn\right]  ,\\
		\text{X}_4 &= \frac14  \tr \left[ \pn \wh {(Vf_\ell)}(./N) \pn   \left( \frac1{-\Delta+\kappa^2} - \frac{1}{H_\text{GP} + \kappa^2} \right)\pn  \wh {(Vf_\ell)}(./N) \pn \right], \\
		\text{X}_5 &= \frac{1}{4\kappa^2}  \| \pn \wh {(Vf_\ell)}(./N) \pn^2 \|^2  + \frac14 \| (H_\text{GP}+\kappa^2)^{-1/2} Q \pn \wh{(Vf_\ell)}(./N) \pn^2\|^2,\\ 
\text{X}_6 &= -\frac{\kappa^2}{4} \tr \left[ K_N \frac{1}{ H_\text{GP} (H_\text{GP} + \kappa^2)} K_N \right].
		\end{split}\ee
Observe here, under the assumption $V \in L^2 (\bR^3)$, we have $0\leq  \tr_{\bot \pn}( K_N H_\text{GP}^{-1} K_N) =\| K_N H_\text{GP}^{-1/2}Q\|_\text{HS}^2 \leq CN<\infty$. Thus, the operator $K_N H_\text{GP}^{-1} K_N$ is of trace class and the first term on the r.h.s. of (\ref{eq:maintr2}) is well-defined.

Next, we show that the terms $\text{X}_1$ to $\text{X}_6$, defined in \eqref{eq:defXj}, are $\mathcal{O}(1)$ and that, in the limit $N \to \infty$, we can replace in each of them the Fourier multiplier $\wh{(Vf_\ell)}(./N)$ by $8\pi \mathfrak{a}_0$. We start with $\text{X}_1$. Since $\wh{(Vf_\ell)}(./N)$ and $\pn$ have bounded operator norm, we easily find (with Lemma \ref{thm:gpmin1})  
\[  |\text{X}_1| \leq C \big\| \pn (-\Delta + \kappa^2)^{-1/2} (-\Delta)^{-1/2} \big\|_\text{HS}^2 \leq C. \]
Similarly, we can bound, for any $\alpha < 1$, 
		\begin{equation} \label{eq:X1-com} \begin{split}
		\Big|\text{X}_1 - \frac{\kappa^2}{4}& (8\pi \mathfrak{a}_0)^2 \tr \left[    \pn^2  (-\Delta+\kappa^2)^{-1}(-\Delta)^{-1}  \pn^2    \right] \Big| \\ 
		\leq \; &C \| [ \wh{(Vf_\ell)}(./N) - 8\pi\mathfrak{a}_0]  \pn^2 (-\Delta)^{-1/2} (-\Delta + \kappa^2)^{-3/4+\eps} \|_\text{HS} \\ &+ C \| (-\Delta)^{-1/2} (-\Delta + \kappa^2)^{-3/4+\eps}  [ \wh{(Vf_\ell)}(./N) - 8\pi\mathfrak{a}_0]  \pn^2\|_\text{HS} \leq C N^{-\alpha} 
		\end{split} \end{equation} 
using that $| \wh{(Vf_\ell)}(p/N) - 8\pi\mathfrak{a}_0| \leq C (|p|/N)^\alpha$ (from the compact support of $V$), $\Vert \Delta \pn^2 \Vert \leq C$ (from Lemma \ref{thm:gpmin2}) and where we have chosen $\eps>0$ small enough depending on $\alpha$. 

For the contributions $\text{X}_2, \text{X}_3$, we use that $\wh{(Vf_\ell)}(./N)$ is real (by radial symmetry). We find  
		\[\begin{split} 
		\text{X}_2 & = \frac14  \left\| \pn \wh {(Vf_\ell)}(./N)   (-\Delta+\kappa^2)^{-1}[\pn,-\Delta] (-\Delta+\kappa^2)^{-1/2}  \right\|_\text{HS}^2, \\
		\text{X}_3 &= \frac14 \tr \left\|\pn \wh {(Vf_\ell)}(./N)  (-\Delta+\kappa^2)^{-1} \nabla\pn  \right\|_\text{HS}^2 . 
		 \end{split}\]
Proceeding as above, it is then straightforward to see that $ | \text{X}_2|, |\text{X}_3|\leq C$ and, using the bounds $\| \nabla\pn \|_{op}\leq C$, $\| [\pn,-\Delta] (-\Delta+\kappa^2)^{-1/2} \|_{op}\leq C $, moreover, that for any $\alpha<1/2$,
		\[\begin{split}
		&\Big| \text{X}_2- \frac14 (8\pi\mathfrak{a}_0)^2 \left\| \pn    (-\Delta+\kappa^2)^{-1}[\pn,-\Delta] (-\Delta+\kappa^2)^{-1/2}   \right\|_\text{HS}^2 \Big| \leq   CN^{-\alpha}, \\
		&\Big| \text{X}_3- \frac14 (8\pi\mathfrak{a}_0)^2 \left\| \pn   (-\Delta+\kappa^2)^{-1} \nabla\pn   \right\|_\text{HS}^2 \Big| \leq  CN^{-\alpha}. 
		\end{split}\]
To bound $\text{X}_4$, we write 
\[ \begin{split}
		\text{X}_4 
		& = \frac14 \tr \left[ \pn \wh {(Vf_\ell)}(./N) \pn  \frac1{-\Delta+\kappa^2} \big( V_\text{ext} +  8\pi \frak{a}_0 \pn^2 -\eps_\text{GP} \big) \frac1{H_\text{GP} + \kappa^2} \pn \wh {(Vf_\ell)}(./N) \pn \right].\\
		\end{split} \]
Thus 
\[ |\text{X}_4| \leq C \big\| \pn (-\Delta+\kappa^2)^{-1} (V_\text{ext} + 8\pi \frak{a} \pn^2 -\eps_\text{GP}) \big\|_\text{HS} \big\| (H_\text{GP} + \kappa^2)^{-1} \pn \big\|_\text{HS}.  \]
Since $V_\text{ext}$ grows at most exponentially by \eqref{eq:asmptsVVext} and, by (\ref{eq:expdecaypn}), $\pn$ decays faster than any exponential, choosing $\kappa > 0$ large enough we obtain that    
\[ \begin{split} \big\|  \pn (-\Delta+\kappa^2)^{-1} &(V_\text{ext} + 8\pi \frak{a} \pn^2 -\eps_\text{GP}) \big\|_\text{HS}^2 \\ &\leq C \int \pn^2 (x) \frac{e^{- 2 \kappa |x-y|}}{|x-y|^2} \big[ V_\text{ext} (y) + 8\pi \frak{a}_0 \pn^2 (y) - \eps_\text{GP} \big]^2  dx dy \leq C < \infty.  \end{split} \] 
From \eqref{eq:asmptsVVext}, we also have $\| (H_\text{GP} + \kappa^2)^{-1} \pn \|_\text{HS} \leq C$ and thus $|\text{X}_4| \leq C$. Proceeding similarly as in (\ref{eq:X1-com}) (and writing $\pn (-\Delta+\kappa^2)^{-1} = (-\Delta+\kappa^2)^{-1} \pn + (-\Delta + \kappa^2)^{-1} [-\Delta , \pn ] (-\Delta+\kappa^2)^{-1}$), we conclude that 
\[ \Big| \text{X}_4 -  \frac{(8\pi \frak{a}_0)^2}{4} \tr \left[ \pn^2  \Big( \frac1{-\Delta+\kappa^2} - \frac1{H_\text{GP} + \kappa^2} \Big) \pn^2 \right] \Big| \leq C N^{-\alpha} \]
for any $\alpha < 1$. With \eqref{eq:asmptsVVext}, we can also show that $\text{X}_5, \text{X}_6$ are bounded and that 
\[ \begin{split} 
\Big| \text{X}_5 - \frac{(8\pi\frak{a}_0)^2}{4\kappa^2} \| \pn^3  \| -  \frac{(8\pi\frak{a}_0)^2}{4} \| (H_\text{GP}+\kappa^2)^{-1/2} Q  \pn^3\|^2 \Big|  &\leq CN^{-1}.  \\
 \Big| \text{X}_6 + \frac{\kappa^2 (8\pi \frak{a}_0)^2}{4} \tr_{\perp \pn} \big[ \pn^2 Q \frac{1}{H_\text{GP} (H_\text{GP} + \kappa^2)} Q \pn^2 \Big] \Big| &\leq C N^{-\alpha}.
 \end{split} \] 
for every $0< \alpha < 1$. In fact, proceeding similarly as in (\ref{eq:X1-com}) and using that $\| \wh{V f_\ell} (p/N) - \wh{Vf_\ell} (0)| \leq C (|p|/N)^\alpha$ and that $\| |\nabla|^{\alpha} H^{-\alpha/2}_\text{GP} \|_\text{op} \leq C$, we can estimate
\[ \begin{split}  \Big| \text{X}_6 + &\frac{\kappa^2 (8\pi \frak{a}_0)^2}{4} \tr_{\perp \pn} \big[ \pn^2 Q \frac{1}{H_\text{GP} (H_\text{GP} + \kappa^2)} Q \pn^2 \Big] \Big| \\ &\leq  \frac{C}{N^\alpha} \big\| H_\text{GP}^{\alpha/2} \pn H_\text{GP}^{-1}(H_\text{GP}+\kappa^2)^{-3/4+\eps} \big\|_\text{HS} \big\|  H_\text{GP}^{-1/2} (H_\text{GP}+\kappa^2)^{-1/4-\eps} K_N \big\|_\text{HS}.
\end{split} \] 
The second norm on the r.h.s. is bounded, uniformly in $N$, by \eqref{eq:asmptsVVext} (and by the exponential decay of $\pn$, shown in  \eqref{eq:expdecaypn}).To bound the first Hilbert-Schmidt norm, we write 
\[ \begin{split} 
H_\text{GP}^{\alpha/2} \pn & \frac{1}{H_\text{GP}^{1/2} (H_\text{GP}+\kappa^2)^{3/4-\eps}} \\
&= \pn  \frac{H_\text{GP}^{\alpha/2}}{H_\text{GP}^{1/2} (H_\text{GP}+\kappa^2)^{3/4-\eps}} 
\\ &+ C \int_0^\infty ds \ s^{\alpha/2} \frac{1}{s+H_\text{GP}} [H_\text{GP}, \pn] \frac{1}{s+ H_\text{GP}} \frac{1}{H_\text{GP}^{1/2} (H_\text{GP}+\kappa^2)^{3/4-\eps}} \end{split} \]
for an appropriate constant $C > 0$ and with $[H_\text{GP}, \pn]=- 2 \nabla \cdot \nabla \pn - \Delta \pn$. 
Estimating $\Vert (s+H_\text{GP})^{-1} \nabla \Vert_\text{op} \leq C (1+s)^{1/2}$, $\Vert (s+H_\text{GP})^{-1} \Vert_\text{op} \leq C (1+s)^{-1}$ and again $\| e^{-\alpha |x|} H_\text{GP}^{-1/2} (H_\text{GP}+\kappa^2)^{-1/4-\eps} \|_\text{HS} \leq C$, we obtain the desired bound. 

Let us now consider the trace on the last line of (\ref{eq:EN3}). From the proof of Lemma~\ref{lm:tildeEE}, recall that $\wt D^{1/2} = U_0^*H_\text{GP}^{1/2}e^{-\eta}$ for some unitary $U_0:L^2_{\bot \pn}(\bR^3) \rightarrow L^2_{\bot \pn}(\bR^3)$. Thus (recall that by the proof of Lemma \ref{lm:alpha} \emph{a)} we have $H_\text{GP}^{1/2} (H_\text{GP}+2K_N)H_\text{GP}^{1/2} \geq 0$)  
		\begin{align*}
		&\frac12\tr_{\bot \pn} \left(  \frac{1}{2}\left(\wt{D}^{1/2} \wt{E} \wt{D}^{-1/2} + \wt{D}^{-1/2} \wt{E} \wt{D}^{1/2}\right) - H_\text{GP}- K_N   \right) \\&=  \frac12 \tr_{\bot \pn} \Big[ \frac{1}{2}\left( e^{-\eta} H_\text{GP}^{1/2}\big(  H_\text{GP}^{1/2}( H_\text{GP} + 2K_N) H_\text{GP}^{1/2}\big)^{1/2} H_\text{GP}^{-1/2} e^\eta + \text{h.c.} \right)  - H_\text{GP}- K_N    \Big].
		\end{align*}

Next we want to remove the factors of $e^{\pm \eta}$. If all operators were trace-class, we could use cyclicity of the trace. To overcome this issue, we rewrite it as
\begin{equation} \label{eq:cyclicexp}
	\begin{split}
	&\tr_{\bot \pn} \left( \left[ e^{-\eta} H_\text{GP}^{1/2}  \big(H_\text{GP}^{1/2}( H_\text{GP} + 2K_N) H_\text{GP}^{1/2}\big)^{1/2}   H_\text{GP}^{-1/2} e^\eta - H_\text{GP} - K_N \right] + \text{h.c.} \right) \\
	&= \tr_{\bot \pn} \Big( \Big[  H_\text{GP}^{1/2}\left(  \big(H_\text{GP}^{1/2}( H_\text{GP} + 2K_N) H_\text{GP}^{1/2}\big)^{1/2} - H_\text{GP} \right) H_\text{GP}^{-1/2} - K_N  \\
	&\hspace{1.7cm}+ \left(e^{-\eta}-1\right)H_\text{GP}^{1/2}\left(  \big(H_\text{GP}^{1/2}( H_\text{GP} + 2K_N) H_\text{GP}^{1/2}\big)^{1/2} - H_\text{GP} \right) H_\text{GP}^{-1/2} \left(e^\eta -1\right)  \\
	&\hspace{1.7cm}+ \left(e^{-\eta}-1\right)H_\text{GP}^{1/2}\left[  \big(H_\text{GP}^{1/2}( H_\text{GP} + 2K_N) H_\text{GP}^{1/2}\big)^{1/2} - H_\text{GP} \right] H_\text{GP}^{-1/2}  \\
	&\hspace{1.7cm}+ H_\text{GP}^{1/2}\left[  \big(H_\text{GP}^{1/2}( H_\text{GP} + 2K_N) H_\text{GP}^{1/2}\big)^{1/2} - H_\text{GP} \right] H_\text{GP}^{-1/2} \left(e^{-\eta}-1\right)  \\
	&\hspace{1.7cm}+ \left(e^{-\eta}-1+\eta\right)H_\text{GP} \left(e^\eta -1-\eta\right) + \left(1-\eta\right)H_\text{GP} \left(e^{-\eta}-1-\eta\right)\\
	&\hspace{1.7cm} + \left( e^{-\eta}-1+\eta \right) H_\text{GP} (1+\eta) - \eta H_\text{GP} \eta \Big] + \text{h.c.} \Big). 
	\end{split}
\end{equation}	
Now we want to move all factors involving $\eta$ to the right. For this we will use that for all $\varepsilon>0$ we have 
\begin{equation} \label{eq:HEHHS}
\begin{split}
&\Vert Q H_\text{GP}^{1/2-\varepsilon} \left[ \big(  H_\text{GP}^{1/2}( H_\text{GP} + 2K_N) H_\text{GP}^{1/2}\big)^{1/2} - H_\text{GP} \right] H_\text{GP}^{-1/2} Q \Vert_\text{HS} <\infty, \\ &\Vert  Q H_\text{GP}^{1/2} \left[ \big(  H_\text{GP}^{1/2}( H_\text{GP} + 2K_N) H_\text{GP}^{1/2}\big)^{1/2} - H_\text{GP} \right] H_\text{GP}^{-1/2-\varepsilon} Q \Vert_\text{HS} < \infty.
\end{split}
\end{equation}
Before proving \eqref{eq:HEHHS}, we show how we apply it to \eqref{eq:cyclicexp}. For the term on the second line of the r.h.s of \eqref{eq:cyclicexp} we can use \eqref{eq:HEHHS}, the boundedness of $e^{-\eta}-1$ and $\Vert Q H_\text{GP}^{\varepsilon} (e^{\eta}-1) \Vert_\text{HS}\leq e^{\Vert \eta \Vert_\text{op}} \Vert H_\text{GP}^{\varepsilon-1} Q \Vert_\text{op} \Vert H_\text{GP} \eta \Vert_\text{HS} < \infty$ for $0\leq \varepsilon \leq 1$ to get
\begin{align*}
&\tr_{\bot \pn}\left( \left(e^{-\eta}-1\right)H_\text{GP}^{1/2}\left(  \big(H_\text{GP}^{1/2}( H_\text{GP} + 2K_N) H_\text{GP}^{1/2}\big)^{1/2} - H_\text{GP} \right) H_\text{GP}^{-1/2-\varepsilon} H_\text{GP}^\varepsilon\left(e^\eta -1\right) \right) \\
&= \tr_{\bot \pn}\left( H_\text{GP}^{1/2}\left(  \big(H_\text{GP}^{1/2}( H_\text{GP} + 2K_N) H_\text{GP}^{1/2}\big)^{1/2} - H_\text{GP} \right) H_\text{GP}^{-1/2} \left(e^\eta -1\right) \left(e^{-\eta}-1\right) \right).
\end{align*} 
Using additionally $\Vert H_\text{GP}^\varepsilon (e^{-\eta}-1) H_\text{GP}^\varepsilon \Vert_\text{HS}\leq \Vert H_\text{GP}^\varepsilon \eta H_\text{GP}^\varepsilon \Vert_\text{HS} + e^{\Vert \eta \Vert_\text{op}} \Vert H_\text{GP} \eta \Vert_\text{HS}^2 <\infty$ for $0\leq \varepsilon <\frac{1}{2}$ we obtain
\begin{align*}
&\tr_{\bot \pn} \left( \left(e^{-\eta}-1\right)H_\text{GP}^{1/2}\left[  \big(H_\text{GP}^{1/2}( H_\text{GP} + 2K_N) H_\text{GP}^{1/2}\big)^{1/2} - H_\text{GP} \right] H_\text{GP}^{-1/2} \right) \\
&= \tr_{\bot \pn} \left( H_\text{GP}^{-\varepsilon} \left( H_\text{GP}^{1/2}\left[  \big(H_\text{GP}^{1/2}( H_\text{GP} + 2K_N) H_\text{GP}^{1/2}\big)^{1/2} - H_\text{GP} \right] H_\text{GP}^{-1/2-\varepsilon} \right) H_\text{GP}^\varepsilon (e^{-\eta} -1) H_\text{GP}^\varepsilon \right) \\
&= \tr_{\bot \pn} \left( H_\text{GP}^{1/2}\left[  \big(H_\text{GP}^{1/2}( H_\text{GP} + 2K_N) H_\text{GP}^{1/2}\big)^{1/2} - H_\text{GP} \right] H_\text{GP}^{-1/2} \left(e^{-\eta}-1\right) \right).
\end{align*}
Now we write $e^{-\eta}-1+\eta=g_\eta \eta^2$, where $g_\eta$ is a bounded operator that commutes with $\eta$. Using that $\eta$ and $\eta H_\text{GP}$ are both Hilbert-Schmidt, we obtain
\begin{align*}
	\tr_{\bot \pn}\left( \left( e^{-\eta}-1+\eta\right) H_\text{GP} (1+\eta)  \right)
	&=\tr_{\bot \pn}\left( \eta H_\text{GP} \eta g_\eta (1+\eta)  \right) \\
	&= \tr_{\bot \pn}\left(  H_\text{GP} (1+\eta) \left( e^{-\eta}-1+\eta\right) \right).
\end{align*}
The remaining terms can be dealt with in a similar fashion. Thus, once we have shown \eqref{eq:HEHHS} we obtain
\begin{align*}
\tr_{\bot \pn}& \left( \left[ e^{-\eta} H_\text{GP}^{1/2}\left(  \big(H_\text{GP}^{1/2}( H_\text{GP} + 2K_N) H_\text{GP}^{1/2}\big)^{1/2} - H_\text{GP} \right) H_\text{GP}^{-1/2} e^\eta - K_N \right] + \text{h.c.} \right) \\
&= \tr_{\bot \pn} \left(  \left[ H_\text{GP}^{1/2}\left(  \big(H_\text{GP}^{1/2}( H_\text{GP} + 2K_N) H_\text{GP}^{1/2}\big)^{1/2} - H_\text{GP} \right) H_\text{GP}^{-1/2} - K_N \right] + \text{h.c.} \right).
\end{align*}
Let us now prove \eqref{eq:HEHHS}.
With the integral representation 
\[ x^{1/2} = x \cdot x^{-1/2} = \frac{1}{\pi} \int_0^\infty \frac{ds}{\sqrt{s}} \frac{x}{s+x} =\lim_{\Lambda \to \infty} \frac{1}{\pi} \int_0^\Lambda \frac{ds}{\sqrt{s}} - \frac{1}{\pi} \int_0^\Lambda ds \, \sqrt{s} \frac{1}{s+x} \]
we can write 
\begin{equation} \label{eq:intE-H}
	\begin{split} \big( H_\text{GP}^{1/2}( H_\text{GP} &+ 2K_N) H_\text{GP}^{1/2}\big)^{1/2} - H_\text{GP} \\ &= -\frac{1}{\pi} \int_0^\infty ds \, \sqrt{s} \, \left[ \frac{1}{s+H^2_\text{GP} + 2 H^{1/2}_\text{GP} K_N H^{1/2}_\text{GP}} - \frac{1}{s+H^2_\text{GP}} \right] \\ &=  \frac{2}{\pi} \int_0^\infty ds \, \sqrt{s} \, \frac{1}{s+H^2_\text{GP}} H_\text{GP}^{1/2} K_N H_\text{GP}^{1/2} \frac{1}{s+H^2_\text{GP} + 2 H^{1/2}_\text{GP} K_N H^{1/2}_\text{GP}}. \end{split}
\end{equation}
Using the resolvent identiy, we obtain
\begin{align*}
	&QH_\text{GP}^{1/2} \left[ \big(  H_\text{GP}^{1/2}( H_\text{GP} + 2K_N) H_\text{GP}^{1/2}\big)^{1/2} - H_\text{GP} \right] H_\text{GP}^{-1/2-\varepsilon}Q \\
	&= \frac{2}{\pi} \int_0^\infty ds \, \sqrt{s} \, Q\frac{1}{s+H^2_\text{GP}} H_\text{GP} K_N  \frac{1}{s+H^2_\text{GP}} H_\text{GP}^{-\varepsilon}Q \\
	&\hspace{0.5cm} + \frac{4}{\pi} \int_0^\infty ds \, \sqrt{s} \, Q\frac{1}{s+H^2_\text{GP}} H_\text{GP} K_N  H_\text{GP}^{1/2} \frac{1}{s+H_\text{GP}^2+ 2H_\text{GP}^{1/2} K_N H_\text{GP}^{1/2}} H_\text{GP}^{1/2} K_N \frac{H_\text{GP}^{-\varepsilon}}{s+H_\text{GP}^2}Q.
\end{align*}
Using that $\Vert Q[K_N, H_N] H_\text{GP}^{-1/2} Q\Vert_\text{HS}<\infty$ we obtain the second inequality in \eqref{eq:HEHHS}. Similarly one obtains the other inequality in \eqref{eq:HEHHS}.
Therefore, we can use cyclicity and get
\begin{align*}
&\frac12\tr_{\bot \pn} \left(  \frac{1}{2}\left(\wt{D}^{1/2} \wt{E} \wt{D}^{-1/2} + \wt{D}^{-1/2} \wt{E} \wt{D}^{1/2}\right) - H_\text{GP}- K_N   \right) \\&=  \frac12 \tr_{\bot \pn} \Big[ \frac{1}{2}\left( H_\text{GP}^{1/2}\big(  H_\text{GP}^{1/2}( H_\text{GP} + 2K_N) H_\text{GP}^{1/2}\big)^{1/2} H_\text{GP}^{-1/2} + \text{h.c.} \right)  - H_\text{GP}- K_N    \Big]\\
&=  \frac14 \tr_{\bot \pn} \Big[ \left( H_\text{GP}^{1/2}\left[\big(  H_\text{GP}^{1/2}( H_\text{GP} + 2K_N) H_\text{GP}^{1/2}\big)^{1/2} - H_\text{GP} \right] H_\text{GP}^{-1/2} -K_N \right) + \text{h.c.}  \Big].
\end{align*}

Applying the resolvent identity twice to \eqref{eq:HEHHS}, we obtain
\begin{equation} \label{eq:resol} \begin{split} H_\text{GP}^{1/2} &[\big( H_\text{GP}^{1/2}( H_\text{GP} + 2K_N) H_\text{GP}^{1/2}\big)^{1/2} - H_\text{GP}] H_\text{GP}^{-1/2} - K_N \\  = \; &\frac{2}{\pi} \int_0^\infty ds \, \sqrt{s} \, H_\text{GP}\frac{1}{s+H^2_\text{GP}}  K_N  \frac{1}{s+H^2_\text{GP}}   - K_N \\ &-\frac{4}{\pi} \int_0^\infty ds \, \sqrt{s} \, \frac{1}{s+H^2_\text{GP}} H_\text{GP} K_N H_\text{GP}^{1/2} \frac{1}{s+H^2_\text{GP}} H_\text{GP}^{1/2} K_N  \frac{1}{s+H^2_\text{GP}}  \\ &+\frac{8}{\pi} \int_0^\infty ds \, \sqrt{s} \, H_\text{GP}^{1/2}\Big[ \frac{1}{s+H^2_\text{GP}} H_\text{GP}^{1/2} K_N H_\text{GP}^{1/2} \Big]^3 \frac{1}{s+H^2_\text{GP}+ 2 H^{1/2}_\text{GP} K_N H^{1/2}_\text{GP} } H_\text{GP}^{-1/2}.   \end{split}  \end{equation} 
Since 
\[ \frac{2}{\pi} \int_0^\infty ds \, \sqrt{s} \, \frac{H_\text{GP}}{(s + H_\text{GP}^2)^2} = 1 \]
the trace of the first line on the r.h.s. of (\ref{eq:resol}) vanishes, by cyclicity. To apply cyclicity of the trace, we need here to be a bit careful, since $K_N$ is only guaranteed to be trace class, if $\wh {(Vf_\ell)} (./N)$ is integrable, a property which does not follow from our assumptions \eqref{eq:asmptsVVext} on the interaction $V$. 
In general, we can justify cyclicity remarking that, for self-adjoint operators $A,B$, we have the identity
\[ 2ABA - A^2 B - B A^2 = - \big[ A, \big[ A, B \big] \big]. \] 
Thus 
\[ \begin{split} &\left( \frac{2}{\pi} \int_0^\infty ds \, \sqrt{s}  \, H_\text{GP}\frac{1}{s+H_\text{GP}^2}  K_N  \frac{1}{s + H_\text{GP}^2} - K_N \right) + \text{h.c.} \\ &\hspace{0.5cm}= \left( \frac{1}{2} \left(H_\text{GP} K_N H_\text{GP}^{-1} - K_N \right) - \frac{1}{\pi} \int_0^\infty ds \, \sqrt{s} \, H_\text{GP}\Big[ \frac{1}{s+ H_\text{GP}^2} , \Big[   \frac{1}{s+ H_\text{GP}^2} , K_N \Big] \Big]\right) + \text{h.c.} \end{split} \]
A tedious but straightforward analysis shows that the r.h.s. of the last equation is trace class and that its trace norm depends continuously on $Vf_\ell$ in the $L^2(\bR^3)$ topology. For this argument we use that $(\partial_i \partial_j \pn)$ has exponential decay which can be proved similarly like \eqref{eq:expdecaypn} and \eqref{exponential decay}. With a simple approximation argument, and cyclicity of the trace for the second and third term in \eqref{eq:resol}, we can therefore conclude that 
\begin{equation}\label{eq:resol2} \begin{split} 
\frac14 \tr_{\bot \pn} &\Big[ \left( H_\text{GP}^{1/2}\left[\big(  H_\text{GP}^{1/2}( H_\text{GP} + 2K_N) H_\text{GP}^{1/2}\big)^{1/2} - H_\text{GP} \right] H_\text{GP}^{-1/2} -K_N \right) + \text{h.c.}  \Big] \\  =  &-\frac{2}{\pi} \int_0^\infty ds \, \sqrt{s} \, \tr_{\perp \pn} \frac{1}{s+H^2_\text{GP}} H_\text{GP}^{1/2} K_N H_\text{GP}^{1/2} \frac{1}{s+H^2_\text{GP}} H_\text{GP}^{1/2} K_N H_\text{GP}^{1/2} \frac{1}{s+H^2_\text{GP}}  \\ &+\frac{4}{\pi} \int_0^\infty ds \, \sqrt{s} \,  \tr_{\perp \pn}\Big[ \frac{1}{s+H^2_\text{GP}} H_\text{GP}^{1/2} K_N H_\text{GP}^{1/2} \Big]^3 \frac{1}{s+H^2_\text{GP}+ 2 H^{1/2}_\text{GP} K_N H^{1/2}_\text{GP} }.   \end{split}  \end{equation} 
To handle the first term on the r.h.s. we use cyclicity of the trace (from \eqref{eq:asmptsVVext}, it follows that $\| K_N H_\text{GP}^{-3/4-\eps} Q \|_\text{HS} \leq C$ for any $\eps > 0$, and therefore the operator under consideration is of trace class, justifying cyclicity) and the fact that 
\[ \frac{2}{\pi} \int_0^\infty ds \sqrt{s} \, \frac{1}{(s+H_\text{GP}^2)^3}  = \frac{1}{4 H_\text{GP}^3} \]
to write
\begin{equation}\label{eq:resol3} \begin{split} -\frac{2}{\pi} &\int_0^\infty ds \sqrt{s}  \tr_{\perp \pn} \frac{1}{s+H^2_\text{GP}} H_\text{GP}^{1/2} K_N H_\text{GP}^{1/2} \frac{1}{s+H^2_\text{GP}} H_\text{GP}^{1/2} K_N H_\text{GP}^{1/2} \frac{1}{s+H^2_\text{GP}} \\ = \; &-\frac{1}{4} \tr_{\perp \pn} K_N \frac{1}{H_\text{GP}} K_N - \frac{1}{\pi} \int_0^\infty ds \sqrt{s} \tr_{\perp \pn} \frac{H_\text{GP}^{1/2}}{s+H_\text{GP}^2} \Big[ \Big[ K_N , \frac{H_\text{GP}}{s+H_\text{GP}^2} \Big] , K_N \Big] \frac{H_\text{GP}^{1/2}}{s+H_\text{GP}^2}. \end{split} \end{equation} 
The first term on the r.h.s. cancels precisely with the main contribution on the r.h.s. of (\ref{eq:maintr2}). Expanding the commutators in the second term, we generate several contributions. All these contributions are bounded, uniformly in $N$. As an example, we estimate the term
\begin{equation} \label{eq:example} \begin{split} \Big|  \frac{1}{\pi} \int_0^\infty &ds \sqrt{s} \tr_{\perp \pn}  \frac{H_\text{GP}^{1/2}}{s+H_\text{GP}^2} \Big[ K_N, H_\text{GP} \Big] \frac{1}{s+H_\text{GP}^2} \Big[ K_N , H_\text{GP} \Big] \frac{H_\text{GP}^{3/2}}{(s+H_\text{GP}^2)^2}  \Big| \\ & \leq C \int_0^\infty ds \, \sqrt{s} \, \Big\| \frac{H_\text{GP}^{1/2}}{s + H_\text{GP}^2} e^{-\alpha |x|} \Big\|_\text{HS} \Big\|  e^{\alpha |x|} [ K_N , H_\text{GP} ] \frac{1}{(H_\text{GP}+1)^{1/2}}  \Big\|_\text{op} \\ & \hspace{1cm} \times  \Big\| \frac{H_\text{GP}+1}{s+ H_\text{GP}^2} \Big\|_\text{op} \Big\| \frac{1}{(H_\text{GP} + 1)^{1/2}} [ K_N , H_\text{GP} ] e^{\alpha |x|}   \Big\|_\text{op}  \Big\| e^{-\alpha |x|} \frac{H^{3/2}_\text{GP}}{(s+ H_\text{GP}^2)^2} \Big\|_\text{HS}  \\ &\leq C \int_0^{\infty} ds \, \sqrt{s} \, \frac{1}{(1+s)^{7/4-2\eps}} 
\leq C 
\end{split} \end{equation} 
choosing $\eps > 0$ small enough. Here we used that, for any $\alpha > 0$, $\| e^{\alpha |x|} [K_N , H_\text{GP}] (H_\text{GP}+1)^{-1/2} \|_\text{op} \leq C$ (notice that the Laplacian commutes with the multiplier $\wh{Vf_\ell} (./N)$ in $K_N$; when commuted through the multiplication operator $\pn$, it produces terms that can be controlled by $(H_\text{GP} + 1)^{-1/2}$ and decay faster than $e^{\alpha |x|}$ in space, by Lemma \ref{thm:gpmin2}) and that, from \eqref{eq:asmptsVVext}, $\| H_\text{GP}^{1/2} (s + H_\text{GP}^2)^{-1} e^{-\alpha |x|} \|_\text{HS} \leq C (1+s)^{-3/8+\eps}$, $\| (1+H_\text{GP}) / (s + H_\text{GP}^2) \|_\text{op} \leq C (1+s)^{-1/2}$, $\| e^{-\alpha |x|} H_\text{GP}^{3/2} / (s+H_\text{GP}^2)^2 \|_\text{HS} \leq C (1+s)^{-7/8+\eps}$. Proceeding similarly, we can control all other contributions arising from the second term on the r.h.s. of (\ref{eq:resol3}). Arguing analogously to (\ref{eq:X1-com}), we can also show that, as $N \to \infty$, these contributions approach the expressions obtained replacing $\wh{Vf_\ell}(./N)$ by $8\pi \frak{a}_0$. Here the convergence has a rate proportional to $N^{-\alpha}$, for any $\alpha < 1$. This is related to the fact that the integral on the last line of (\ref{eq:example}) remains finite if we insert a factor $s^{\alpha/4}$, for any $\alpha < 1$, and on the observation that, since $| \wh{(Vf_\ell)}(p/N) - 8\pi\mathfrak{a}_0| \leq C (|p|/N)^\alpha$, a rate $N^{-\alpha}$ corresponds to $|p|^\alpha \simeq H_\text{GP}^{\alpha/2}$ and thus exactly to a factor $s^{\alpha/4}$. 

Finally, let us consider the second term on the r.h.s. of (\ref{eq:resol2}). We can bound it by 
\[ \begin{split} 
\Big| \int_0^\infty ds \, \sqrt{s} \, \tr_{\perp \pn} &\Big[  \frac{1}{s+H^2_\text{GP}} H_\text{GP}^{1/2} K_N H_\text{GP}^{1/2} \Big]^3 \frac{1}{s+H^2_\text{GP}+ 2 H^{1/2}_\text{GP} K_N H^{1/2}_\text{GP}} \Big| \\ &\hspace{4cm} \leq C \int_0^\infty ds \, \sqrt{s} \, \frac{1}{(1+s)^{7/4-2\eps}} \leq C \end{split}  \]
because $\| e^{\alpha |x|} K_N  \|_\text{op} \leq C$, $\| H_\text{GP}^{1/2} K_N H_\text{GP}^{-1/2} \|_\text{op} \leq C$ (controlling the commutator between $H_\text{GP}$ and $K_N$ as explained in the previous paragraph) and $\| H^{1/2}_\text{GP} (s+H_\text{GP}^2)^{-1} e^{-\alpha |x|} \|_\text{HS} \leq C (1+s)^{-3/8+\eps}$, $\| H_\text{GP} (s+H_\text{GP}^2)^{-1} e^{-\alpha |x|} \|_\text{HS} \leq C (1+s)^{-1/8+\eps}$, $\| H_\text{GP}^{3/2} (s+H_\text{GP}^2)^{-1} \|_\text{op} \leq C (1+s)^{-1/4}$. Also in this case, we can show that as $N \to \infty$, this term approaches 
the expression obtained replacing $\wh{(Vf_\ell)}(./N)$ by $8\pi \frak{a}_0$, with a rate of order $N^{-\alpha}$, for $\alpha < 1$. 

Collecting all bounds for the terms $\text{X}_1, \dots , \text{X}_6$ and for the contributions on the r.h.s. of (\ref{eq:resol2}) (after removing the first term on the r.h.s. of (\ref{eq:resol3})), we obtain the formula (\ref{eq:gsenergy}) for the ground state energy, with $E_\text{Bog}$ as on the r.h.s. of (\ref{eq:Ebog}). It is easy to check that the arguments of this section remain true if we replace the potential $\wh{Vf_\ell} (./N)$ by the approximate identity $\unit_\delta$ (corresponding, in Fourier space, to the multiplier with the Gaussian $e^{-\delta p^2 /2}$) and we let $\delta \to 0$. This shows the validity of (\ref{eq:bog1}) and concludes the proof of Theorem \ref{thm:main}.   
\end{proof}

\section{Analysis of $\cG_{N}$}  \label{sec:GN}
In this section we analyze $\cG_{N}$, defined in \eqref{eq:defGN}. Motivated by \eqref{eq:cLNj}, we write
		\[\cG_{N} =  \cG_{N}^{(0)} +\cG_{N}^{(1)}+\cG_{N}^{(2)}+\cG_{N}^{(3)}+\cG_{N}^{(4)}, \]
where, for $j\in \{0,1,2,3,4\}$, we set
		\begin{align*}
		\cG_{N}^{(j)} = e^{-B}\cL_{N}^{(j)}e^{B}.
		\end{align*}
In the following, we will analyze the operators $\cG_{N}^{(j)} $ separately; the results will be combined in Sec. \ref{sec:proofpropGN} to conclude the proof of Proposition \ref{prop:GN}. The strategy used to show Prop.  \ref{prop:GN} is very similar to the one used in \cite[Sec. 6]{BS}, \cite[Sec. 7]{BBCS4} and \cite[Sec. 6]{BSS}, so we outline the key steps only. We focus in particular on the operator bounds that lead to \eqref{eq:Delta-bd} and \eqref{eq:cEcGNbnd}. The commutator bounds \eqref{eq:adkN} can be proved in the same way, following the remarks at the beginning of \cite[Sec. 7]{BBCS4}. We omit the details, because this would only lead to additional notation.

Before we start with the analysis of the operators $\cG_{N}^{(j)} $, let us record the following lemma that gives a rough bound on the conjugation of the kinetic and potential energies.
\begin{lemma}\label{lm:roughVNK}
	Assume \eqref{eq:asmptsVVext}. For every $j\in \NN$, there exists a constant $C>0$ such that 
	\begin{equation} \label{bound conjugation K N}
	e^{-B} \mathcal{K}(\mathcal{N}+1)^j e^{B} \leq C \mathcal{K}(\mathcal{N}+1)^j + CN (\mathcal{N}+1)^{j+1}
	\end{equation}
and 
	\begin{equation} \label{bound conjugation V_N}
		e^{-B} \mathcal{V}_N(\mathcal{N}+1)^j e^{B} \leq C \mathcal{V}_N(\mathcal{N}+1)^j + CN ( \cN+1)^{j}.
	\end{equation}
Similarly, for every $j\in \NN$, there exists $C>0$ such that
	\begin{equation}\label{eq:Vextgrow}
	e^{-B} \cV_\text{ext}(\cN+1)^j e^{B} \leq  C (\cV_\text{ext} + \cN+1)(\cN+1)^j.
	\end{equation}
\end{lemma}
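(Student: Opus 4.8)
The plan is to prove Lemma~\ref{lm:roughVNK} by the standard Gronwall method for controlling the conjugation $e^{-B} \cdot e^{B}$: differentiate in the parameter of the Bogoliubov transformation, bound the derivative by the quantity being estimated plus lower-order terms, and integrate. Concretely, for \eqref{bound conjugation K N} I would set $f(s) = \langle \xi, e^{-sB} \mathcal{K} (\mathcal{N}+1)^j e^{sB} \xi \rangle$ for $s \in [0;1]$ and a fixed $\xi \in \cFpn$, and compute $f'(s) = \langle \xi, e^{-sB} [\mathcal{K}(\mathcal{N}+1)^j, B] e^{sB} \xi \rangle$. Since $\mathcal{N}$ commutes with $\mathcal{K}$ and only \emph{approximately} commutes with $B$ (by Lemma~\ref{lm:Npow}, $e^{-B}(\mathcal{N}+1)^k e^{B} \leq C(\mathcal{N}+1)^k$), the main work is the commutator $[\mathcal{K}, B]$. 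Writing $B = \tfrac12 \int dx dy \, (\eta(x;y) b_x^* b_y^* - \text{h.c.})$, one has $[\mathcal{K}, b_x^* b_y^*] = ((-\Delta_x + -\Delta_y)\eta)(x;y)$-type kernels paired against $b^* b^*$, so $[\mathcal{K}, B]$ is a quadratic expression with kernel $\nabla_1 \eta \cdot \nabla_1$ (plus the symmetric term), together with commutator remainders coming from $[b_x, b_y^*]$ in \eqref{eq:comm-b}, which are lower order in $\mathcal{N}$ and carry a $1/N$. Using the bounds $\|\nabla_i \eta\| \leq C\sqrt{N}$ from Lemma~\ref{lm:bndseta}, Cauchy-Schwarz in the form $\int dx dy |\nabla_1\eta(x;y)| \|a_x (\mathcal{N}+1)^{\cdot/2}\xi\| \|(\mathcal{N}+1)^{\cdot/2}\xi\| \lesssim \|\nabla_1 \eta\| \|\mathcal{K}^{1/2}(\mathcal{N}+1)^{\cdot/2}\xi\| \|(\mathcal{N}+1)^{(\cdot+1)/2}\xi\|$, and absorbing via Young's inequality, one gets $|f'(s)| \leq C f(s) + CN \langle \xi, e^{-sB}(\mathcal{N}+1)^{j+1} e^{sB}\xi\rangle$; the last factor is $\leq CN (\mathcal{N}+1)^{j+1}$ by Lemma~\ref{lm:Npow}, and Gronwall closes the estimate.

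For \eqref{bound conjugation V_N}, the same scheme applies but the relevant commutator is $[\mathcal{V}_N, B]$. Since $\mathcal{V}_N = \tfrac12 \int dxdy\, N^2 V(N(x-y)) a_x^* a_y^* a_y a_x$, commuting with the $b^* b^*$ in $B$ produces, after using $[a_x, b_z^*]$-type relations, terms like $\int dxdy\, N^2 V(N(x-y)) \eta(x;y) b_x^* b_y^* + \text{h.c.}$ and cubic/quartic remainders. The key input is that the interaction kernel tested against $\eta$ is integrable: using the pointwise bound $|\eta(x;y)| \leq C \ph_0(x)\ph_0(y)/(|x-y| + N^{-1})$ from \eqref{eq:bndsetaptw} together with $N^2 V(N(x-y))$, one finds $\int dxdy\, N^2 V(N(x-y)) |\eta(x;y)| \leq C \int dz\, N^2 V(Nz) \tfrac{1}{|z|+N^{-1}} \leq C N$ (after rescaling, using $V \in L^3 \subset L^1$ with compact support). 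One also uses $\mathcal{V}_N^{1/2}$-bounds — i.e. the Cauchy-Schwarz pairing $\int N^2 V(N(x-y)) |\eta(x;y)| \|a_x a_y \psi\| \lesssim N^{1/2} \|\eta\| \|\mathcal{V}_N^{1/2}\psi\|$-style estimates analogous to those in the proof of Lemma~\ref{lm:U*VNU} — to control the quartic remainder by $\delta \mathcal{V}_N + C N (\mathcal{N}+1)^j$. Gronwall again gives the claim.

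For \eqref{eq:Vextgrow}, the point is that $\cV_\text{ext} = \int dx\, V_\text{ext}(x) a_x^* a_x$, and $[\cV_\text{ext}, b_x^* b_y^*]$ produces a kernel $(V_\text{ext}(x) + V_\text{ext}(y))\eta(x;y)$ against $b^* b^*$. The crucial estimate is that $\|V_\text{ext}\eta\| < \infty$ uniformly in $N$: from \eqref{eq:bndsetaptw} we have $|V_\text{ext}(x)\eta(x;y)| \leq C V_\text{ext}(x)\ph_0(x)\ph_0(y)/(|x-y|+N^{-1})$, and $\|V_\text{ext}\ph_0\|_p < \infty$ for all $p$ by the remark after \eqref{eq:asmptsVVext} (the product-submultiplicativity of $V_\text{ext}$ combined with faster-than-exponential decay of $\ph_0$, cf. \eqref{eq:expdecaypn}). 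One also needs that $V_\text{ext}(y)\eta(x;y)$ pairs against $\mathcal{V}_\text{ext}^{1/2}$ in the $y$-variable, i.e. $\int dxdy\, |V_\text{ext}(y)\eta(x;y)| \|a_x\xi\|\|a_y\xi\| \lesssim \|V_\text{ext}^{1/2}\eta V_\text{ext}^{1/2}\| + \ldots$; actually it is cleanest to write $V_\text{ext}(y) = V_\text{ext}^{1/2}(y)\cdot V_\text{ext}^{1/2}(y)$ and push one half-power onto $a_y$ to form $\mathcal{V}_\text{ext}^{1/2}\xi$, the other onto the (bounded) kernel $V_\text{ext}^{1/2}\eta$. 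This produces $|f'(s)| \leq C f(s) + C \langle \xi, e^{-sB}(\mathcal{N}+1)^{j+1} e^{sB}\xi\rangle$ (no factor of $N$, since the $V_\text{ext}$-weighted kernels are $O(1)$ rather than $O(\sqrt N)$), and Gronwall plus Lemma~\ref{lm:Npow} yields \eqref{eq:Vextgrow}.

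The main obstacle I expect is bookkeeping the commutator remainders cleanly: conjugating by $e^{B}$ generates infinitely many terms in the Baker--Campbell--Hausdorff / commutator expansion, and one must organize them so that each is bounded by $\delta (\mathcal{K} + \mathcal{V}_N)(\mathcal{N}+1)^j + C N (\mathcal{N}+1)^{j+1}$ with a uniformly small $\delta$. This is precisely the kind of analysis carried out in \cite[Sec.~6]{BS}, \cite[Sec.~7]{BBCS4} and \cite[Sec.~6]{BSS}; the arguments carry over with essentially no change apart from the presence of $V_\text{ext}$, whose growth is tamed exactly by the submultiplicativity assumption in \eqref{eq:asmptsVVext}. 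For this reason I would present only the structure — the differentiation, the identification of the leading commutator, the Cauchy--Schwarz estimate with the kernel bounds of Lemmas~\ref{lm:bndseta} and \ref{lm:d-bds}, and the Gronwall step — and refer to the cited works for the routine (if lengthy) remainder estimates.
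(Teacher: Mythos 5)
Your proposal takes essentially the same approach as the paper: a Gronwall argument in the conjugation parameter $s$, with the core estimates being the commutators $[\cK,B]$, $[\cV_N,B]$, $[\cV_\text{ext},B]$ together with the kernel bounds $\|\nabla_1\eta\|\leq C\sqrt N$, $\|\eta\|_\infty\leq CN$, $\|\eta_x\|\leq C\pn(x)$ from Lemma~\ref{lm:bndseta} and the $\cN$-growth bound of Lemma~\ref{lm:Npow}. The only genuine (and harmless) divergence is in the $\cV_\text{ext}$ case: you propose splitting $V_\text{ext}=V_\text{ext}^{1/2}\cdot V_\text{ext}^{1/2}$ and pushing one half-power onto the field, arriving at $\pm[\cV_\text{ext},B]\leq \delta\cV_\text{ext}+C(\cN+1)$, whereas the paper avoids any splitting by directly bounding $\pm[\cV_\text{ext},B]\leq C(\cN+1)$ using $\|\eta_x\|\leq C\pn(x)$ and the finiteness of $\|V_\text{ext}\pn\|$ (from the at-most-exponential growth of $V_\text{ext}$ and super-exponential decay of $\pn$). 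Both close the Gronwall loop; the paper's route is slightly cleaner and in fact gives a sharper commutator bound, but your version is correct.
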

\begin{proof}
For simplicity, let us focus on the case $j=0$. The general case is then a straightforward generalization of the argument (see also \cite[Lemma 7.1]{BBCS4} for the analogous statements in the translation invariant setting). 

Let $\xi\in \cFpn $ be normalized and define $s\mapsto\xi_s$ by $\xi_s =\langle\xi, e^{-sB} \cK e^{sB} \xi\rangle$. The key estimate to obtain the bound \eqref{bound conjugation K N} is the operator bound 
		\[\begin{split}
		\pm [\cK, B] = \pm \int dx\, \big( \nabla_x b^*_x b^*(\nabla_x\eta_x) + \text{h.c.}\big) \leq \cK + CN(\cN+1),
		\end{split}\]
where we used that $\|\nabla_1\eta\|^2 \leq CN$, by \eqref{eq:bndseta}. This implies with Lemma \ref{lm:Npow} that
		\[ |\partial_s \xi_s | \leq \xi_s + CN \langle\xi, (\cN+1)\xi\rangle  \]
and \eqref{bound conjugation K N} follows from Gronwall's lemma. Similarly, \eqref{bound conjugation V_N} is based on $\|\eta\|_\infty\leq CN$ and 
		\[\begin{split}
		\pm [\mathcal{V}_N, B] 
	=&\, \pm \frac12 \int dx dy  N^2 V(N(x-y))  \big( \eta(x;y)b_x^* b_y^* + 2 b_x^* b_y^*   a^*(\eta_x) a_y     +\text{h.c.}\big)   \\
	\leq&\,   C ( \cV_N  +  N)
		\end{split}\] 
		and the last bound \eqref{eq:Vextgrow} follows from
		\[\pm [\cV_\text{ext},B] =\pm \bigg( \int dx \ V_\text{ext}(x) b^*_x b^*(\eta_x)+\text{h.c.}\bigg)\leq C (\cN+1). \]
Here, we used that $\Vert V_\text{ext} \pn \Vert \leq C$, which follows form the assumption in \eqref{eq:asmptsVVext} that $V_\text{ext}$ has at most exponential growth, the estimate $\|\eta_x\|\leq C\pn(x)$ and the fact that $\pn$ has exponential decay with arbitrary rate, by Eq. \eqref{eq:expdecaypn}.  
\end{proof}

Through the rest of this section, we assume that $V$ and $V_\text{ext}$ satisfy the assumptions  \eqref{eq:asmptsVVext}.

\subsection{Analysis of $\cG_{N}^{(0)}$ and $\cG_{N}^{(1)}$}\label{sec:GN0}
\begin{lemma}\label{lm:emBcNeB}
We have that 
		\[\begin{split} 
		e^{-B} \cN e^{B} &= \int dx\, \big[ b^*(\gamma_x)b(\gamma_x) + b^*(\sigma_x)b(\sigma_x) + b^*(\gamma_x)b^*(\sigma_x) + b(\gamma_x)b(\sigma_x)\big] +  \|\sigma\|^2 + \cE_{\cN},
		 \end{split}\]
where the self-adjoint operator $\cE_{\cN}$ satisfies $ \pm \cE_{\cN}\leq C N^{-1} (\cN+1)^2$.
\end{lemma}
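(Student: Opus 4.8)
The plan is to compute $e^{-B}\cN e^B$ by the standard method of differentiating the conjugation with respect to the parameter of the Bogoliubov transformation and controlling the error terms using the a priori bounds collected above. Concretely, writing $\cN = \int dx\, a_x^* a_x = \int dx\, b_x^* b_x + (\text{a correction of order }\cN^2/N)$ is not quite the right starting point; instead I would use the representation $\cN = \int dx\, a^*(\gamma_x) a(\gamma_x) - \|\sigma\|_{\mathrm{HS}}^2$-type identities only at the very end, and start from the action of $e^{-B}$ on the fields. Recall from \eqref{eq:defd} and \eqref{eq:defdx} that $e^{-B} b_x e^B = b(\gamma_x) + b^*(\sigma_x) + d_{\eta,x}$, with $\gamma = \cosh_\eta$, $\sigma = \sinh_\eta$ and $d_{\eta,x}$ controlled by Lemma \ref{lm:d-bds}. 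The natural approach is therefore to write $e^{-B}\cN e^B = \int dx\, e^{-B} a_x^* a_x e^B$ and relate $a_x$ to $b_x$; since $a_x = (1-\cN/N)^{-1/2} b_x$ on $\cFpn$ (away from the top sector), one gets $e^{-B}\cN e^B = \int dx\, e^{-B} b_x^* (1-\cN/N)^{-1} b_x e^B$ up to harmless truncation effects, and then $e^{-B}(1-\cN/N)^{-1}e^B = 1 + \cO(\cN/N)$ by Lemma \ref{lm:Npow}, reducing matters to analyzing $\int dx\, e^{-B} b_x^* b_x e^B$.

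The core computation is then to substitute $e^{-B} b_x e^B = b(\gamma_x) + b^*(\sigma_x) + d_{\eta,x}$ and its adjoint into $\int dx\, (e^{-B}b_x^* e^B)(e^{-B} b_x e^B)$ and expand. The ``main term'' is
\[
\int dx\, \big[ b^*(\gamma_x) + b(\sigma_x) \big]\big[ b(\gamma_x) + b^*(\sigma_x) \big],
\]
which, after normal-ordering $b(\sigma_x) b^*(\sigma_x)$ using the commutation relations \eqref{eq:comm-b}, produces exactly the four explicit terms in the statement plus the constant $\int dx\, \|\sigma_x\|^2 = \|\sigma\|_{\mathrm{HS}}^2$ together with further small corrections coming from the $-\cN/N$ and $-a^*a/N$ pieces of the $b$-commutator (these are $\cO(\cN^2/N)$). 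The cross terms involving one factor of $d_{\eta,x}$, e.g. $\int dx\, b^*(\gamma_x) d_{\eta,x}$ and $\int dx\, d_{\eta,x}^* b^*(\sigma_x)$, and the quadratic-in-$d$ term $\int dx\, d_{\eta,x}^* d_{\eta,x}$, are all absorbed into $\cE_\cN$. Each such term is estimated by Cauchy–Schwarz in the $x$-integral together with the pointwise bounds in Lemma \ref{lm:d-bds}: for instance $\int dx\, \|d_{\eta,x}\xi\|^2 \leq (C/N^2)\int dx\,[\|a_x(\cN+1)\xi\|^2 + \pn(x)^2\|(\cN+1)^{3/2}\xi\|^2] \leq (C/N^2)\|(\cN+1)^{3/2}\xi\|^2$ using $\int dx\, \|a_x \eta\|^2 = \langle \eta,\cN\eta\rangle$ and $\|\pn\|_2 = 1$; the mixed term is handled by pairing $\|b^*(\gamma_x)\xi\| \leq \|\gamma_x\|\,\|(\cN+1)^{1/2}\xi\| \leq C\pn(x)\|(\cN+1)^{1/2}\xi\|$ (from Lemma \ref{lm:bndseta}) against $\|d_{\eta,x}\xi\|$ and integrating. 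Altogether these give $\pm\cE_\cN \leq CN^{-1}(\cN+1)^2$, as claimed.

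The step I expect to be the genuine bookkeeping obstacle — not conceptually hard but where care is needed — is the careful normal-ordering: one must track every contribution from the non-canonical part of $[b_x, b_y^*]$, namely the $-(\cN/N)\delta(x-y)$ and $-(1/N)a_y^* a_x$ terms, through the expansion, and confirm that each lands inside $\cE_\cN$ with the right power of $\cN$ and a gain of $1/N$, rather than contaminating the four named quadratic terms or the constant $\|\sigma\|^2$. A second subtlety is justifying the passage from $\cN$ (built from $a^\sharp$) to an expression built from $b^\sharp$ and back: one has to be honest that $b_x^* b_x = a_x^* (1-\cN/N) a_x$ differs from $a_x^* a_x$ by an operator that, after integration in $x$, is exactly $-\cN^2/N$ up to lower order, and show this difference is of the stated size after conjugation — again a direct consequence of Lemma \ref{lm:Npow}. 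Everything else is routine: the structure is identical to the analogous computations in \cite[Sec. 6]{BS}, \cite[Sec. 7]{BBCS4} and \cite[Sec. 6]{BSS}, so I would present only the main term explicitly and indicate that the remainder estimates follow verbatim from Lemmas \ref{lm:bndseta}, \ref{lm:Npow} and \ref{lm:d-bds}.
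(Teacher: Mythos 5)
Your core computation is correct and matches the paper's: conjugate $\int dx\, b_x^* b_x$ using $e^{-B} b_x e^B = b(\gamma_x) + b^*(\sigma_x) + d_{\eta,x}$, normal-order the main block $\int dx\,[b^*(\gamma_x)+b(\sigma_x)][b(\gamma_x)+b^*(\sigma_x)]$ to produce the four quadratic terms plus $\|\sigma\|_{\mathrm{HS}}^2$, absorb every term containing a $d_{\eta,x}$ factor into $\cE_\cN$ via the bounds of Lemma \ref{lm:d-bds}, and control the $\cN$--growth of all pieces with Lemma \ref{lm:Npow}. That is exactly the paper's argument.

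The one genuine misstep is at the very beginning. You dismiss $\cN = \int dx\, b_x^* b_x + (\text{correction of order }\cN^2/N)$ as ``not quite the right starting point,'' and instead propose to write $a_x = (1-\cN/N)^{-1/2} b_x$ and invert $(1-\cN/N)$ after conjugation. This is both unnecessary and not well-defined: on $\cF_{\perp\pn}^{\leq N}$ the operator $1-\cN/N$ annihilates the sector $\cN=N$, so $(1-\cN/N)^{-1}$ does not exist as an operator, and ``harmless truncation effects'' is not a justification. In fact the simple identity you rejected is \emph{exact}: $\int dx\, b_x^* b_x = \cN - \cN(\cN-1)/N$, hence $\cN = \int dx\, b_x^* b_x + \cN(\cN-1)/N$, and the second summand is a bounded operator on $\cF^{\leq N}_{\perp\pn}$ whose conjugation by $e^B$ is immediately $\leq CN^{-1}(\cN+1)^2$ by Lemma \ref{lm:Npow} --- no inverse ever appears. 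You essentially rediscover this at the end of your write-up (``$b_x^* b_x = a_x^*(1-\cN/N)a_x$ differs from $a_x^* a_x$ by \dots exactly $-\cN^2/N$ up to lower order''), so the fix is just to lead with that identity instead of the $(1-\cN/N)^{-1/2}$ detour. With that, the argument is the paper's.
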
		
\begin{proof}
Observing that 
		\[\cN = \int dx\, a^*_x a_x  =\int dx\, b^*_x b_x +  \cN(\cN-1)/N, \]
the claim follows from the decomposition \eqref{eq:defdx}, the bound \eqref{eq:bnddx} and Lemma \ref{lm:Npow}.
\end{proof}	
\begin{cor}\label{cor:GN0} We have that
		\[\begin{split}
		 \cG_{N}^{(0)} &=  N  \big\langle \pn, \big( -\Delta + V_\text{ext}+(N^3V(N.)\ast \pn) \pn\big)\pn\big\rangle -\frac12 \wh V(0) \| \pn\|_4^4   \\
	& \hspace{0.5cm} - \Big( \big\langle \pn, \big( -\Delta + V_\text{ext}\big)\pn\rangle +  \wh V(0) \| \pn\|_4^4 \Big)\\
	&\hspace{1cm} \times \bigg(  \int dx\, \big[ b^*(\gamma_x)b(\gamma_x) + b^*(\sigma_x)b(\sigma_x)  + b^*(\gamma_x)b^*(\sigma_x) + b(\gamma_x)b(\sigma_x)\big] +  \|\sigma\|^2\bigg) \\ &\hspace{0.5cm}+ \cE_{N}^{(0)},
		 \end{split}\]
where the self-adjoint operator $\cE_{N}^{(0)}$ satisfies $ \pm \cE_{N}^{(0)} \leq C N^{-1} (\cN+1)^2$.
\end{cor}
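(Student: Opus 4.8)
The plan is to treat $\cL_{N}^{(0)}$ as a polynomial of degree two in $\cN$ with $N$-dependent but uniformly bounded scalar coefficients, conjugate it term by term with $e^{B}$, and then match constants. First I would record, from \eqref{eq:cLNj}, the representation
\[ \cL_{N}^{(0)} = \mathfrak{d}_1 (N-\cN) - \mathfrak{d}_2 (\cN+1)(1-\cN/N), \]
where $\mathfrak{d}_1 = \langle \pn,(-\Delta + V_\text{ext} + \tfrac12 N^3 V(N\cdot)*\pn^2)\pn\rangle$ and $\mathfrak{d}_2 = \tfrac12\langle \pn,(N^3 V(N\cdot)*\pn^2)\pn\rangle$. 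A quick preliminary step is to check $|\mathfrak{d}_1|,|\mathfrak{d}_2|\le C$ uniformly in $N$: we have $\|\nabla\pn\|,\|V_\text{ext}^{1/2}\pn\|<\infty$ by the regularity and fast decay of $\pn$ (Lemmas \ref{thm:gpmin1}, \ref{thm:gpmin2} and \eqref{eq:expdecaypn}), while Young's inequality gives $\|N^3 V(N\cdot)*\pn^2\|_\infty\le\|V\|_1\|\pn^2\|_\infty$. Moreover a first order Taylor expansion of $\pn^2(x-u/N)$ around $u=0$ (the leading correction vanishing by radial symmetry of $V$) yields $\langle\pn,(N^3 V(N\cdot)*\pn^2)\pn\rangle = \wh V(0)\|\pn\|_4^4 + \mathcal{O}(N^{-1})$, hence $\mathfrak{d}_2 = \tfrac12\wh V(0)\|\pn\|_4^4 + \mathcal{O}(N^{-1})$ and $\mathfrak{d}_1+\mathfrak{d}_2 = \langle\pn,(-\Delta+V_\text{ext})\pn\rangle + \wh V(0)\|\pn\|_4^4 + \mathcal{O}(N^{-1})$.

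Next I would conjugate. Since $\mathfrak{d}_1 N$ is a scalar,
\[ \cG_{N}^{(0)} = \mathfrak{d}_1 N - \mathfrak{d}_1\, e^{-B}\cN e^{B} - \mathfrak{d}_2\, e^{-B}(\cN+1) e^{B} + \tfrac{\mathfrak{d}_2}{N}\, e^{-B}\cN(\cN+1) e^{B}. \]
For the two middle terms I invoke Lemma \ref{lm:emBcNeB}: writing $\mathcal{M}$ for the integral $\int dx\,[b^*(\gamma_x)b(\gamma_x)+b^*(\sigma_x)b(\sigma_x)+b^*(\gamma_x)b^*(\sigma_x)+b(\gamma_x)b(\sigma_x)]$, that lemma gives $e^{-B}\cN e^{B} = \mathcal{M}+\|\sigma\|^2+\cE_{\cN}$ and $e^{-B}(\cN+1) e^{B} = \mathcal{M}+\|\sigma\|^2+1+\cE_{\cN}$ with $\pm\cE_{\cN}\le CN^{-1}(\cN+1)^2$. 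For the last term, $0\le\cN(\cN+1)\le(\cN+1)^2$ together with Lemma \ref{lm:Npow} (with $n=2$) gives $0\le\tfrac1N e^{-B}\cN(\cN+1) e^{B}\le\tfrac{C}{N}(\cN+1)^2$. Collecting the pieces,
\[ \cG_{N}^{(0)} = \mathfrak{d}_1 N - \mathfrak{d}_2 - (\mathfrak{d}_1+\mathfrak{d}_2)(\mathcal{M}+\|\sigma\|^2) + \cE_{N}^{(0)}, \]
where $\cE_{N}^{(0)} = -(\mathfrak{d}_1+\mathfrak{d}_2)\cE_{\cN} + \tfrac{\mathfrak{d}_2}{N} e^{-B}\cN(\cN+1) e^{B}$ is self-adjoint and, since $|\mathfrak{d}_1|,|\mathfrak{d}_2|\le C$, satisfies $\pm\cE_{N}^{(0)}\le CN^{-1}(\cN+1)^2$ on $\cFpn$.

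Finally I would match with the statement. The term $\mathfrak{d}_1 N$ is, by the definition of $\mathfrak{d}_1$, exactly the first summand; $-\mathfrak{d}_2 = -\tfrac12\wh V(0)\|\pn\|_4^4 + \mathcal{O}(N^{-1})$ supplies the second (the $\mathcal{O}(N^{-1})$ being a scalar, it is trivially absorbed into $\cE_{N}^{(0)}$); and replacing $\mathfrak{d}_1+\mathfrak{d}_2$ in front of $\mathcal{M}+\|\sigma\|^2$ by $\langle\pn,(-\Delta+V_\text{ext})\pn\rangle+\wh V(0)\|\pn\|_4^4$ produces a discrepancy $r_N(\mathcal{M}+\|\sigma\|^2)$ with $|r_N|\le CN^{-1}$; using $\mathcal{M}+\|\sigma\|^2 = e^{-B}\cN e^{B} - \cE_{\cN}$ and Lemma \ref{lm:Npow} ($n=1$), this is bounded by $CN^{-1}(\cN+1)+CN^{-2}(\cN+1)^2\le CN^{-1}(\cN+1)^2$, so it too joins $\cE_{N}^{(0)}$. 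This produces the asserted identity. I do not expect a genuine obstacle here: the argument is essentially bookkeeping built on Lemmas \ref{lm:emBcNeB} and \ref{lm:Npow}. The two points requiring a little care are the uniform-in-$N$ boundedness of $\mathfrak{d}_1,\mathfrak{d}_2$ (which rests on the appendix estimates for $\pn$), and the observation that one must \emph{not} replace the interaction term inside $\mathfrak{d}_1 N$ by its limit $\tfrac12\wh V(0)\|\pn\|_4^4$, since the factor $N$ would then turn the $\mathcal{O}(N^{-1})$ remainder into an $\mathcal{O}(1)$ error — this is why the statement keeps that contribution in the unevaluated form $N\langle\pn,(\dots)\pn\rangle$.
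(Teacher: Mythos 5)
Your proof is correct and takes essentially the same route as the paper's (very terse) argument: split $\cL_N^{(0)}$ according to its scalar coefficients, conjugate using Lemma \ref{lm:emBcNeB} for the linear-in-$\cN$ parts and Lemma \ref{lm:Npow} for the $\cN(\cN+1)/N$ part, and use the scalar estimate $\int N^3 V(N(x-y))\pn^2(x)\pn^2(y)\,dxdy = \wh V(0)\|\pn\|_4^4 + \mathcal{O}(N^{-1})$ to identify the constants. One minor remark: invoking radial symmetry of $V$ to kill the first-order Taylor term is unnecessary for the target $\mathcal{O}(N^{-1})$ bound (the paper simply uses a Lipschitz-type estimate with $C$ depending on $\| \, |\cdot| V\|_1$ and $\|\pn\|_{H^1}$), though it does no harm.
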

\begin{proof} 
We notice that 
		\[ \Big| \int dxdy\, N^3V(N(x-y)) \pn^2(x)\pn^2(y) - \wh V(0) \int dx\, \pn^4(x)\Big| \leq CN^{-1} \]
for some $C=C(\|\pn\|_{H^1}, \||\cdot|V\|_1)>0$ and the rest follows from \eqref{lm:emBcNeB} and Lemma \ref{lm:Npow}.
\end{proof}
To analyze $\cG_{N}^{(1)}$, let us define $h_N\in L^2(\bR^3)\cap L^\infty(\bR^3)$ by
		\begin{equation}\label{eq:defhN} h_N = \big(N^3(Vw_\ell) (N.)\ast |\pn|^2\big)\pn. \end{equation}
The proof of the following proposition is a straightforward adaption of \cite[Prop. 6.2]{BSS} and will therefore be omitted. 
\begin{lemma}\label{lm:GN1} There exists a constant $C>0$ such that
		\[\begin{split}
		 \cG_{N}^{(1)} = &\, \big[ \sqrt{N} b(\gamma(h_N) ) + \sqrt{N} b^*(\sigma(h_N) )+\emph{h.c.}\big] + \cE_{N}^{(1)},
		 \end{split}\]
where the self-adjoint operator $\cE_{N}^{(1)}$ satisfies  $ \pm \cE_{N}^{(1)} \leq C  N^{-1/2}(\cN+1)^{3/2}  $. 
\end{lemma}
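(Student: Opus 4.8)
The plan is to conjugate $\cL_N^{(1)}$ with $e^B$ term by term, isolating $\sqrt N\, b(h_N)+\text{h.c.}$ as the only piece that survives in the statement and showing that everything else is bounded by $CN^{-1/2}(\cN+1)^{3/2}$. First I would split the interaction as $V=Vf_\ell+Vw_\ell$, so that the first term of $\cL_N^{(1)}$ becomes $\sqrt N\, b(h_N+r_N)+\text{h.c.}$ with $h_N$ as in \eqref{eq:defhN} and $r_N=\big(N^3(Vf_\ell)(N\cdot)\ast|\pn|^2-8\pi\frak{a}_0|\pn|^2\big)\pn$. Since $Vf_\ell$ is supported in the fixed ball on which $V$ is supported, a first-order Taylor expansion of $|\pn|^2$, combined with \eqref{eq:Vfa0} and the regularity and decay of $\pn$ (Lemma \ref{thm:gpmin1}, \eqref{eq:expdecaypn}), gives $\|r_N\|\le C/N$; hence $\pm(\sqrt N\, b(r_N)+\text{h.c.})\le CN^{-1/2}(\cN+1)$, and by Lemma \ref{lm:Npow} this bound is preserved after conjugation by $e^B$.

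Next I would treat the cubic term $-\tfrac{\cN+1}{\sqrt N}b(\beta_N)+\text{h.c.}$, with $\beta_N=(N^3V(N\cdot)\ast|\pn|^2)\pn$, for which $\|\beta_N\|\le C$ uniformly in $N$ by Young's inequality together with $\pn\in L^\infty\cap L^2$. This is the point that needs the most care, since a naive estimate only yields an error of order $N^{-1/2}(\cN+1)^3$. Writing $b(\beta_N)=N^{-1/2}\sqrt{N-\cN}\,a(\beta_N)$ and commuting the powers of $\cN$ through $a(\beta_N)$, one checks that $T:=\tfrac{\cN+1}{\sqrt N}b(\beta_N)$ satisfies $T(\cN+1)^{-3/2}T^\ast\le C N^{-1}(\cN+1)^{3/2}$ on $\cFpn$, where the constraint $\cN\le N$ on the truncated Fock space is used. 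Then, from $\pm(T+T^\ast)\le TM^{-1}T^\ast+M$ with $M=cN^{-1/2}(\cN+1)^{3/2}$, one gets $\pm(T+T^\ast)\le CN^{-1/2}(\cN+1)^{3/2}$, and conjugation with $e^B$ again preserves this by Lemma \ref{lm:Npow}.

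Finally, I would conjugate the surviving piece $\sqrt N\, b(h_N)+\text{h.c.}$. Applying \eqref{eq:defd} (and using that $h_N$ is real, and that $\gamma,\sigma$ of \eqref{eq:defgammasigma} are bounded operators so that $\gamma(h_N),\sigma(h_N)\in L^2$), one obtains $e^{-B}\sqrt N\, b(h_N)e^{B}=\sqrt N\, b(\gamma(h_N))+\sqrt N\, b^\ast(\sigma(h_N))+\sqrt N\, d_\eta(h_N)$. Since $\|h_N\|\le C$ uniformly (again Young's inequality and $\pn\in L^\infty$), Lemma \ref{lm:d-bds} bounds $\sqrt N\, d_\eta(h_N)$ and its adjoint, hence $\pm\sqrt N(d_\eta(h_N)+\text{h.c.})\le CN^{-1/2}(\cN+1)^{3/2}$. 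Collecting the three estimates gives the claimed identity, with $\cE_N^{(1)}$ the sum of all error terms, satisfying $\pm\cE_N^{(1)}\le CN^{-1/2}(\cN+1)^{3/2}$. The main obstacle is the operator bound for the cubic term described above: one must exploit both the $\sqrt{(N-\cN)/N}$ factor hidden in the modified field $b$ and the bound $\cN\le N$ on $\cFpn$ to get the power $(\cN+1)^{3/2}$ rather than $(\cN+1)^3$; the remaining steps are routine bookkeeping modeled on \cite[Prop. 6.2]{BSS}.
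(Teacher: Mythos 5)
Your proof is correct and corresponds to the standard strategy for this lemma (the paper omits the argument, pointing instead to \cite[Prop.~6.2]{BSS}): split the linear term via $V=Vf_\ell+Vw_\ell$ to isolate $h_N$, estimate the remainder $r_N$ in $L^2$ using \eqref{eq:Vfa0} and the regularity of $\pn$, dispose of the $(\cN+1)/\sqrt N$-cubic piece by a weighted Cauchy--Schwarz, apply \eqref{eq:defd} to $\sqrt N\,b(h_N)$, and control $\sqrt N\,d_\eta(h_N)$ with Lemma~\ref{lm:d-bds}, all in combination with Lemma~\ref{lm:Npow}. One small remark: the cubic term is less delicate than you make it out to be. The power $(\cN+1)^{3/2}$, rather than $(\cN+1)^3$, comes out of the elementary weighted bound
\[
|\langle\xi,T\xi\rangle|\le\|(\cN+1)^{3/4}\xi\|\,\|(\cN+1)^{-3/4}T\xi\|\le \frac{C}{\sqrt N}\,\|\beta_N\|\,\|(\cN+1)^{3/4}\xi\|^2,
\]
which only uses $\|b(\beta_N)\eta\|\le\|\beta_N\|\,\|\cN^{1/2}\eta\|$ and the commutation of powers of $\cN$ past a single $b^\sharp$; neither the precise form of the factor $\sqrt{(N-\cN)/N}$ nor the constraint $\cN\le N$ is exploited beyond the trivial $\sqrt{(N-\cN)/N}\le 1$. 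So the step you flag as the main obstacle is in fact routine, and the rest of your argument is as you describe.
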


\subsection{Analysis of $\cG_{N}^{(2)}$}\label{sec:GN2}
In this section, we study $\cG_N^{(2)} = e^{-B} \cL_N^{(2)} e^B$. We split the operator $\cL_N^{(2)}$ introduced in (\ref{eq:cLNj}) acoording to \[ \cL_N^{(2)} =: \cK + \cV_\text{ext} + \cL_{N}^{(2,V)}.  \]
We start with the conjugation of $\cV_\text{ext}$.
\begin{lemma}\label{lm:Vext}
We have that
		\[\begin{split} e^{-B} \cV_\text{ext}e^{B} &= \int dx\,V_\text{ext}(x) \big[ b^*(\gamma_x)b(\gamma_x) + b^*(\sigma_x)b(\sigma_x) + b^*(\gamma_x)b^*(\sigma_x) + b(\gamma_x)b(\sigma_x)\big] \\
		&\hspace{0.5cm} + \tr (\sigma V_\text{ext}  \sigma) + \cE_{\cV_\text{ext}}
		\end{split}\]
for an error $\cE_{\cV_\text{ext}}$ that satisfies $\pm \cE_{\cV_\text{ext}} \leq C N^{-1} (\cV_\text{ext}+\cN+1)(\cN+1).$
\end{lemma}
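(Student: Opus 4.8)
The plan is to proceed exactly as in the analysis of the analogous one-body terms in \cite[Sec.\ 6]{BS}, \cite[Sec.\ 7]{BBCS4} and \cite[Sec.\ 6]{BSS}, relying on the explicit action \eqref{eq:defdx} of $e^B$ on the operator-valued distributions $b_x$. First I would pass from the standard fields to the modified ones: since $b_x^* b_x = a_x^* \tfrac{N-\cN}{N} a_x$, one has $\cV_\text{ext} = \int dx\, V_\text{ext}(x)\, b_x^* b_x + \tfrac1N (\cN-1)\cV_\text{ext}$, and, using $\cV_\text{ext}\ge 0$ and that $\cN$ and $\cV_\text{ext}$ commute, the second summand satisfies $\pm \tfrac1N (\cN-1)\cV_\text{ext}\le \tfrac CN(\cV_\text{ext}+\cN+1)(\cN+1)$; after conjugation with $e^{\pm B}$ it is still of this size, by \eqref{eq:Vextgrow} and Lemma \ref{lm:Npow}, hence is absorbed into $\cE_{\cV_\text{ext}}$. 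Since conjugation with $e^{B}$ is (formally) an algebra homomorphism, it then remains to evaluate $\int dx\, V_\text{ext}(x)\,(e^{-B}b_x^* e^B)(e^{-B}b_x e^B)$, into which I substitute $e^{-B}b_x e^B = b(\gamma_x)+b^*(\sigma_x)+d_{\eta,x}$ from \eqref{eq:defdx} together with its adjoint $e^{-B}b_x^* e^B = b^*(\gamma_x)+b(\sigma_x)+d_{\eta,x}^*$. Expanding the product produces nine terms.

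Of these, the four not involving $d_{\eta,x}$ are $b^*(\gamma_x)b(\gamma_x)$, $b^*(\gamma_x)b^*(\sigma_x)$, $b(\sigma_x)b(\gamma_x)$ and $b(\sigma_x)b^*(\sigma_x)$. For the last one I use the commutation relation \eqref{eq:comm-b} to write $b(\sigma_x)b^*(\sigma_x) = b^*(\sigma_x)b(\sigma_x) + \|\sigma_x\|^2(1-\cN/N) - \tfrac1N a^*(\sigma_x)a(\sigma_x)$; since also $b(\gamma_x)b(\sigma_x)=b(\sigma_x)b(\gamma_x)$, integrating against $V_\text{ext}(x)$ reproduces precisely the claimed quadratic form plus the constant $\int dx\, V_\text{ext}(x)\|\sigma_x\|^2 = \tr(\sigma V_\text{ext}\sigma)$, which is finite because $\|\sigma_x\|\le C|\pn(x)|$ (Lemma \ref{lm:bndseta}) and $\langle\pn, V_\text{ext}\pn\rangle<\infty$ by the fast decay of $\pn$ in \eqref{eq:expdecaypn} (see also Lemma \ref{thm:gpmin1}, Lemma \ref{thm:gpmin2}) and the at most exponential growth of $V_\text{ext}$ assumed in \eqref{eq:asmptsVVext}. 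The leftover $-\tfrac{\cN}{N}\tr(\sigma V_\text{ext}\sigma)$ and the term $-\tfrac1N\int dx\, V_\text{ext}(x)a^*(\sigma_x)a(\sigma_x)$ are both bounded by $\tfrac CN\,\tr(\sigma V_\text{ext}\sigma)\,(\cN+1)$ and go into $\cE_{\cV_\text{ext}}$.

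Finally, I would estimate the five remainder terms containing $d_{\eta,x}$, namely $b^*(\gamma_x)d_{\eta,x}$, $b(\sigma_x)d_{\eta,x}$, their adjoints, and $d_{\eta,x}^* d_{\eta,x}$. Writing $\gamma_x=\delta_x+(\gamma-1)_x$ and using $\|(\gamma-1)_x\|,\|\sigma_x\|\le C|\pn(x)|$ (Lemma \ref{lm:bndseta} and the remark following \eqref{eq:defgammasigma}), together with the bounds \eqref{eq:bnddx} of Lemma \ref{lm:d-bds}, Lemma \ref{lm:Npow}, the finiteness of $\langle\pn, V_\text{ext}\pn\rangle$ and of $\|V_\text{ext}\pn\|$, and — crucially — the truncation $\cN\le N$ on $\cFpn$ to trade surplus powers of $(\cN+1)$ for powers of $N$, one checks that each of them is bounded by $\pm\,\tfrac CN(\cV_\text{ext}+\cN+1)(\cN+1)$. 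Throughout, the external potential must be kept inside $L^2$-type expressions of the form $\int dx\, V_\text{ext}(x)\,\|a_x(\cdots)\xi\|^2 = \langle(\cdots)\xi,\cV_\text{ext}(\cdots)\xi\rangle$, never as an $L^\infty$-bound.

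I expect the main obstacle to be precisely the mixed terms $b^{\#}\!\big((\gamma\text{ or }\sigma)_x\big)\,d_{\eta,x}$: a direct application of \eqref{eq:bnddx} with $n=0$ only yields an error of size $N^{-1}\cV_\text{ext}(\cN+1)^2$, which after invoking $\cN\le N$ is too weak to close the estimate; one has instead to peel off a factor $(\cN+1)^{1/2}$ onto the $b$-operator and apply \eqref{eq:bnddx} with $n=-1$ in order to reach the claimed $\tfrac CN(\cV_\text{ext}+\cN+1)(\cN+1)$. The commutator estimates for $\cE_{\cV_\text{ext}}$ matching \eqref{eq:adkN} are obtained from the same computation, as indicated at the beginning of Section \ref{sec:GN}.
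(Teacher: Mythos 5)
Your proposal is correct, and it follows exactly the route the paper takes (the paper's proof is a one-sentence sketch pointing to Cauchy--Schwarz, Lemma \ref{lm:d-bds}, the bounds $\|\eta^{(n)}_x\|\le C\pn(x)$, the decay \eqref{eq:expdecaypn} of $\pn$, and the exponential-growth hypothesis on $V_\text{ext}$; you have simply written out the resulting nine-term expansion of $e^{-B}b_x^*b_x e^B$ and the resulting error estimates). Your closing remark about the mixed terms $b^\#(\cdot)\,d_{\eta,x}$ is the right one: applying \eqref{eq:bnddx} with $n=0$ is indeed insufficient (it costs a full extra power of $(\cN+1)$ against $\cV_\text{ext}$), and one must shift a factor $(\cN+1)^{1/2}$ onto the $b$-leg and use \eqref{eq:bnddx} with $n=-1$ to land in $\tfrac{C}{N}(\cV_\text{ext}+\cN+1)(\cN+1)$; this is the single genuine technical point in the lemma and you identified and resolved it correctly.
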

\begin{proof} The proof is a simple consequence of Cauchy-Schwarz, Lemma \ref{lm:d-bds}, the bounds $ \|\eta^{(n)}_x\|\leq C\pn(x)$ for $n\geq 1$, the fact that $ \pn$ has exponential decay with arbitrary rate, by Eq. \eqref{eq:expdecaypn}, and that $V_\text{ext}$ grows at most exponentially, by Eq. \eqref{eq:asmptsVVext}.
\end{proof}
We continue with the conjugation of $\cL_{N}^{(2,V)}$ which reads by definition
		\begin{align*}
		\begin{split}
		\mathcal{L}_N^{(2,V)} &=  \int dxdy\;  N^3 V(N(x-y)) \pn^2(y)  \Big(b_x^* b_x - \frac{1}{N} a_x^* a_x \Big)   \\
		&\hspace{0.5cm}+ \int   dxdy\; N^3 V(N(x-y)) \pn(x) \pn(y) \Big( b_x^* b_y - \frac{1}{N} a_x^* a_y \Big)  \\
		&\hspace{0.5cm}+ \frac{1}{2}  \int  dxdy\;  N^3 V(N(x-y)) \pn(y) \pn(x) \Big(b_x^* b_y^*  +  b_xb_y \Big).
		\end{split}\end{align*}

\begin{lemma}\label{lm:GN2V}
Set $\cG_N^{(2,V)}=e^{-B} \mathcal{L}_N^{(2,V)} e^{B}$. Then we have that 
	\[\begin{split}
	\cG_N^{(2,V)}=&\,  \int dx \, \big( N^3 V(N.) \ast \pn^2\big)(x) \big[b^*(\gamma_x)b(\gamma_x) + b^*(\gamma_x) b^*(\sigma_x)  \\
	&\qquad+ b(\sigma_x) b(\gamma_x) + b^*(\sigma_x)b(\sigma_x) \big]  \\
	&+   \int dx dy \, N^3 V(N(x-y)) \pn(x) \pn(y) \big[ b^*(\gamma_y)b(\gamma_x) \\
	&\qquad+ b^*(\gamma_y) b^*(\sigma_x)+ b(\sigma_y) b(\gamma_x) + b^*(\sigma_x) b(\sigma_y)   \big]\\
	&+  \frac12 \int dx dy \, N^3 V(N(x-y)) \pn(x) \pn(y) \big[ b^*(\gamma_y)b^*(\gamma_x) \\
	&\qquad+ b^*(\gamma_y) b(\sigma_x)+ b^*(\gamma_x) b(\sigma_y) + b(\sigma_x) b(\sigma_y)  +\emph{h.c.} \big]\\
	&+ \frac{1}{2} \int dx dy \, N^3 V(N(x-y)) \pn(x) \pn(y) \\
	&\hspace{1cm}\times \big[ \big( b^*(\gamma_x) + b(\sigma_x) \big) d_{\eta,y}^*  + d_{\eta,x}^* \big( b^*(\gamma_y) + b(\sigma_y) \big) +\emph{h.c.}] \\
	&+\tr\big( \sigma \big[   N^3 V(N.) \ast \pn^2  + N^3 V(N(x-y)) \pn(x)\pn(y)\big] \sigma\big)\\
	& + \tr\big( \gamma   \big[N^3 V(N(x-y)) \pn(x)\pn(y)\big]  \sigma )+ \big(\widehat{V}(0)- 8\pi \mathfrak{a}_0\big)\|\pn\|_4^4\, \cN+\mathcal{E}_N^{(2,V)}
	\end{split}\]
for an error $\mathcal{E}_N^{(2,V)}$ that satisfies \[ \pm \mathcal{E}_N^{(2,V)} \leq CN^{-1} (\cV_N + \mathcal{N}+1)(\cN+1).\] 
Here, we view $  N^3 V(N.) \ast \pn^2 $ as multiplication operator in $L^2(\bR^3)$ and we identify, by slight abuse of notation, $ N^3 V(N(x-y)) \pn(x)\pn(y)$ with its associated Hilbert-Schmidt operator in $L^2(\bR^3)$.
\end{lemma}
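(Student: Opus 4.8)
The plan is to expand $\cG_N^{(2,V)} = e^{-B}\cL_N^{(2,V)}e^B$ term by term, treating the three pieces of $\cL_N^{(2,V)}$ (the $b_x^*b_x$-type, the $b_x^*b_y$-type and the $b_x^*b_y^* + b_xb_y$ off-diagonal piece) separately. For each piece we substitute the decomposition $e^{-B}b_x e^B = b(\gamma_x) + b^*(\sigma_x) + d_{\eta,x}$ from \eqref{eq:defdx} into the (modified) creation/annihilation operators. Since conjugation by $e^B$ also changes the $b$'s inside into $a$'s hidden in the definition \eqref{eq:bb-def}, one first replaces $b_x^* b_x$ by $a_x^* a_x - \cN a_x^* a_x /N$ etc., so that $e^{-B}(\cdots)e^B$ can be computed by conjugating the $a$'s; alternatively one works directly with $e^{-B} b_x^* b_x e^B = (e^{-B} b_x^* e^B)(e^{-B} b_x e^B)$ and uses Lemma \ref{lm:d-bds} to control the $d_{\eta,x}$ terms. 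The leading contributions come from the $b(\gamma_x), b^*(\sigma_x)$ parts, while every term containing at least one $d_{\eta,x}$ is an error, controlled by the bounds \eqref{eq:bnddx}, \eqref{eq:bndaydxbar}, \eqref{eq:bnddxdy} together with Lemma \ref{lm:Npow}, the pointwise bounds $\|\eta_x\| \le C\pn(x)$, $|\eta(x;y)| \le C\pn(x)\pn(y)/(|x-y|+N^{-1})$, and the operator inequality $N^2 V(N(x-y)) \le C\|V\|_1 (1-\Delta_x)^\beta(1-\Delta_y)^\beta$ to absorb potential factors into $\cV_N$.

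The second main point is the extraction of the constant (trace) terms and the term proportional to $\cN$. After normal-ordering the products $b^\#(\gamma_x)b^\#(\sigma_x)$ using the commutation relations \eqref{eq:comm-b}, one picks up scalar contractions of the form $\int dxdy\, N^3 V(N(x-y))\pn(x)\pn(y)\langle\sigma_x,\sigma_y\rangle$ and $\langle\gamma_x,\sigma_y\rangle$, plus similar ones involving $N^3 V(N\cdot)*\pn^2$. These give precisely $\tr(\sigma[N^3V(N\cdot)*\pn^2 + N^3 V(N(x-y))\pn(x)\pn(y)]\sigma) + \tr(\gamma[N^3V(N(x-y))\pn(x)\pn(y)]\sigma)$. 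The commutator $[b_x,b_y^*]$ also produces a factor $(1-\cN/N)\delta(x-y)$; combined with the $-\cN/N$-corrections coming from $b_x^*b_x - a_x^*a_x/N$ and from the $\cN/N$ in the $b$-definitions, and using Lemma \ref{3.0.sceqlemma}~ii) (i.e. $\int V f_\ell = 8\pi\frak a_0 + \cO(1/N)$, equivalently $\widehat V(0) - 8\pi\frak a_0 = \widehat{Vw_\ell}(0) + \cO(1/N)$) one obtains the coefficient $(\widehat V(0) - 8\pi\frak a_0)\|\pn\|_4^4$ multiplying $\cN$. The terms of the form $N^3 V(N(x-y))\pn(x)\pn(y)(b^\#(\gamma_x) + b^\#(\sigma_x)) d_{\eta,y}^*$ that appear explicitly in the statement are the surviving $d_\eta$-contributions which, after a Cauchy-Schwarz split using \eqref{eq:bnddx} and $N^2 V(N(x-y)) \le C\|V\|_1(1-\Delta)^\beta(1-\Delta)^\beta$, are themselves bounded by $CN^{-1}(\cV_N + \cN + 1)(\cN+1)$ — but since they carry an explicit $N^3$, one keeps them rather than discarding them (the precise bookkeeping is as in \cite[Prop. 6.4]{BSS}).

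To organize the computation I would write $e^{-B}b_x^* b_x e^B$, $e^{-B}b_x^*b_y e^B$, $e^{-B}(b_x^*b_y^* + b_xb_y)e^B$ each as a sum of a "Bogoliubov main part" (quadratic in $b(\gamma_\cdot), b^*(\sigma_\cdot)$), a finite number of scalar terms (traces), a term proportional to $\cN$, the explicit $d_\eta$-terms retained in the statement, and a remainder $\cE$. Integrating against $N^3 V(N(x-y))\pn^2(y)$, $N^3 V(N(x-y))\pn(x)\pn(y)$ and collecting matches the claimed formula. The bound $\pm\mathcal{E}_N^{(2,V)} \le CN^{-1}(\cV_N + \cN + 1)(\cN+1)$ follows by estimating each remainder piece: terms with one $d_\eta$ contribute a factor $N^{-1}$ from Lemma \ref{lm:d-bds}, terms with two $d_\eta$'s a factor $N^{-2}$, and the $N^3 V(N\cdot)$ is turned into $\cV_N$ (costing one power of $N$) or bounded directly via $\|N^3 V(N\cdot)*\pn^2\|_\infty \le CN^2$-type estimates combined with the Hilbert-Schmidt bounds $\|\nabla_i\eta\| \le C\sqrt N$, $\|\eta\| \le C\ell^{1/2}$ from Lemma \ref{lm:bndseta}.

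The main obstacle is the careful tracking of the $\delta$-function contractions and the $\cN/N$-corrections so that the coefficient of $\cN$ comes out exactly as $(\widehat V(0) - 8\pi\frak a_0)\|\pn\|_4^4$ rather than, say, $\widehat V(0)\|\pn\|_4^4$; this requires combining contributions from three different sources (the explicit $-a_x^*a_x/N$ in $\cL_N^{(2,V)}$, the $(1-\cN/N)$ in $[b_x,b_y^*]$, and the $\sqrt{1-\cN/N}$ factors inside the $b$'s), and invoking the scattering-length identity of Lemma \ref{3.0.sceqlemma}~ii) at exactly the right place — everything else is a (lengthy but routine) bookkeeping exercise of the type already carried out in \cite{BS, BBCS4, BSS}.
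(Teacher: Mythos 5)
Your plan matches the paper's proof: expand $e^{-B}\cL_N^{(2,V)}e^B$ by substituting $e^{-B}b_xe^B = b(\gamma_x)+b^*(\sigma_x)+d_{\eta,x}$ from \eqref{eq:defdx}, normal-order, extract the trace and $\cN$-contributions, and control the remaining $d_\eta$-terms with Lemma~\ref{lm:d-bds} together with the kernel bounds $\|\eta_x\|\leq C\pn(x)$, $\|\eta\|_\infty\leq CN$.

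Two points in your description are off, though neither is fatal. First, the coefficient $(\widehat V(0)-8\pi\frak{a}_0)\|\pn\|_4^4$ of $\cN$ does not come from ``combining three sources'': the explicit $-\frac1N a_x^*a_x$ and $-\frac1N a_x^*a_y$ corrections in $\cL_N^{(2,V)}$ are simply discarded as errors of size $CN^{-1}(\cN+1)$. The whole $\cN$-term arises from a single contraction, namely the $-\frac1N\langle\sigma_x,\gamma_y\rangle\cN$ piece produced by the $(1-\cN/N)\delta$ part of $[b(\sigma_x),b^*(\gamma_y)]$ inside the expansion of $e^{-B}b_x^*b_y^*e^B$, when paired with $\frac12\int N^3V(N(x-y))\pn(x)\pn(y)(b_x^*b_y^*+\mathrm{h.c.})$; one then replaces $\langle\sigma_x,\gamma_y\rangle$ by $k(x;y)=-Nw_\ell(N(x-y))\pn(x)\pn(y)$ (error $\leq C\pn(x)\pn(y)$, from \eqref{eq:bndsetaptw}, \eqref{eq:bndsetan}) and applies the scattering identity of Lemma~\ref{3.0.sceqlemma}~\emph{ii)}. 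Second, the operator inequality $N^2V(N\cdot)\leq C\|V\|_1(1-\Delta_x)^\beta(1-\Delta_y)^\beta$ is not used in this proof, and if you did use it you would get $\cK$ rather than $\cV_N$ in the error; the $\cV_N$ appearing in the stated bound $CN^{-1}(\cV_N+\cN+1)(\cN+1)$ instead comes from a direct Cauchy--Schwarz on the quartic $d_{\eta,x}d_{\eta,y}$ term, via \eqref{eq:bnddxdy} and the identity $\int N^2V(N(x-y))\|a_xa_y\xi\|^2\,dxdy = 2\|\cV_N^{1/2}\xi\|^2$.
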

\begin{proof} Let us outline the main steps. First of all, we have
		\[\begin{split}
		& \pm \bigg(  \int dxdy\;  N^2 V(N(x-y)) \pn^2(y)   a_x^* a_x +  \int   dxdy\; N^2 V(N(x-y)) \pn(x) \pn(y)   a_x^* a_y \bigg)\\
		 &\hspace{0.5cm}\leq CN^{-1}(\cN+1),  \\
		 \end{split}\]
so that these contributions can be neglected, by Lemma \ref{lm:Npow}. Similarly, the remaining diagonal terms are easily seen to be equal to 
		\[\begin{split}
		&  \int dxdy\;  N^3 V(N(x-y)) \big[  \pn^2(y)  e^{-B} b_x^* b_x e^{B}  +  \pn(x) \pn(y)  e^{-B} b_x^* b_y e^{B}\big]\\
		& =  \int dx \, \big( N^3 V(N.) \ast \pn^2\big)(x) \big[b^*(\gamma_x)b(\gamma_x) + b^*(\gamma_x) b^*(\sigma_x)  \\
	&\qquad+ b(\sigma_x) b(\gamma_x) + b^*(\sigma_x)b(\sigma_x) \big]  \\
	&\hspace{0.5cm}+   \int dx dy \, N^3 V(N(x-y)) \pn(x) \pn(y) \big[ b^*(\gamma_y)b(\gamma_x) \\
	&\hspace{1.2cm}+ b^*(\gamma_y) b^*(\sigma_x)+ b(\sigma_y) b(\gamma_x) + b^*(\sigma_x) b(\sigma_y)   \big]\\
	&\hspace{0.5cm}+\tr\big( \sigma \big[   N^3 V(N.) \ast \pn^2  + N^3 V(N(x-y)) \pn(x)\pn(y)\big] \sigma\big) + \cE_{N}^{(21,V)}
		\end{split}\]
for an error $\pm \mathcal{E}_N^{(21,V)} \leq CN^{-1} (\mathcal{N}+1)^2$. This follows from the decomposition \eqref{eq:defdx} and the bounds from Lemma \ref{lm:d-bds}. The pairing term can be treated similarly. We compute 
		\[\begin{split}
		e^{-B} b^*_x b^*_y e^{B} &=   b^*(\gamma_y)b^*(\gamma_x) + b^*(\gamma_y) b(\sigma_x)+ b^*(\gamma_x) b(\sigma_y) \\
		&\hspace{0.5cm} + b(\sigma_x) b(\sigma_y)  + \langle\sigma_x, \gamma_y\rangle -  N^{-1} \langle\sigma_x, \gamma_y\rangle \,\cN \\ 
		&\hspace{0.5cm} + \big( b^*(\gamma_x) + b(\sigma_x) \big) d_{\eta,y}^*  + d_{\eta,x}^* \big( b^*(\gamma_y) + b(\sigma_y) \big) \\ 
		&\hspace{0.5cm}  - N^{-1} a^*(\gamma_y) a(\sigma_x) + d^*_{\eta,x}d^*_{\eta,y}.
		\end{split}\]
Now, we observe that 
		\[ \pm \bigg( \int dxdy\, N^2V(N(x-y))\pn(x)\pn(y) \big(\langle\sigma_x, \gamma_y\rangle - k(x;y)\big) \cN\bigg)\leq CN^{-1} \cN, \]	
because $| \langle\sigma_x, \gamma_y\rangle - k(x;y)| \leq C\pn(x)\pn(y)$ (recall the definition (\ref{eq:defgammasigma}) and the bounds \eqref{eq:bndsetaptw}, (\ref{eq:bndsetan})). Since $ N^{3}(Vw_\ell)(N.)$ is an approximation of Dirac's measure of mass $\big(\wh V(0)- 8\pi \mathfrak{a}_0\big)+ \mathcal{O}(N^{-1})$, by Eq. \eqref{eq:Vfa0}, we then find
		\[\begin{split}  \int dxdy\, N^2V(N(x-y))\pn(x)\pn(y) (-k(x;y)) &= \int dxdy\, N^3(Vw_\ell)(N(x-y))\pn^2(x)\pn^2(y) \\
		&= \big(\wh V(0)- 8\pi \mathfrak{a}_0\big)\|\pn\|_4^4 + \mathcal{O}(N^{-1}),  \end{split}\]  
where we used that $\pn \in H^1(\bR^3)$. Similarly, we bound with Cauchy-Schwarz that
		\[ \pm \bigg( \int dxdy\, N^2V(N(x-y))\pn(x)\pn(y)a^*(\gamma_y) a(\sigma_x) +\text{h.c.}\bigg)\leq CN^{-1} (\cN+1)  \]
and from the bound \eqref{eq:bnddxdy}, it is simple to see that for all $\xi\in \cFpn$ we have
		\[\begin{split}
		&\int dxdy\, N^3V(N(x-y))\pn(x)\pn(y) |\langle\xi, d^*_{\eta,x}d^*_{\eta,y}\xi\rangle|  \\
		&\leq C \int dxdy\, NV(N(x-y))\pn(x)\pn(y) \|(\cN+1)\xi\|\Big[    \|\eta_x\| \|\eta_y\|  \Vert (\mathcal{N}+1)^{2} \xi \Vert  \\
	&\hspace{1cm} + \|\eta\|  |\eta(x;y)|   \Vert (\mathcal{N}+1) \xi \Vert +  \|\eta\| \|\eta_y\|  \vert   \Vert a_x (\mathcal{N}+1)^{3/2)} \xi \Vert    \\
	&\hspace{1cm}+ \|\eta\| \| \eta_x\| \Vert a_y(\mathcal{N}+1)^{ 3/2} \xi \Vert +  \|\eta\|^2\Vert a_x a_y (\mathcal{N}+1)  \xi \Vert \Big] \\
	&\leq CN^{-1}\langle\xi, (\cV_N+\cN+1)(\cN+1)\xi\rangle.
		\end{split}\]
Collecting all terms and using the previous bounds, we conclude the proof of the lemma.
\end{proof}

Finally, let us analyze the conjugation of the kinetic energy $\cK$ under $e^{B}$. We start with the following preparation. 
\begin{lemma}\label{lm:nablad-bds}
Assume \eqref{eq:asmptsVVext}. Then, there exists a constant $C>0$ such that for all $n,m\geq 1$, we have that
		\begin{align*} 
		 \int dx \ a^* (\nabla_x \eta_{x}^{(n)}) a(\nabla_x \eta_{x}^{(m)}) \leq C^{n+m}  (\mathcal{N}+1).
		\end{align*}
Moreover, recalling \eqref{eq:defd}, \eqref{eq:defdx} and setting $ \overline{d}_{\eta,x} = d_{\eta,x}+(\cN/N)b^*(\eta_{x})$, there exists for every $n\in\mathbb{N}$ a constant $C>0$ such that 
\begin{equation} \label{eq:nabladx1}
\begin{split}
& \int dx \ \Vert \sqrt{N} (\mathcal{N} + 1)^{(n-1)/2} \nabla_x d_{s\eta,x} \xi \Vert^2  
		\leq C    \Vert (\cK+\cN+1)^{1/2}(\cN+1)^{n/2} \xi \Vert ^2 , \\
		& \int dx \ \Vert \sqrt{N} (\mathcal{N} + 1)^{(n-1)/2} \nabla_x \overline{d}_{s\eta,x} \xi \Vert^2 
		\leq C    \Vert (\cK+\cN +1)^{1/2}(\cN+1)^{n/2} \xi \Vert ^2  
		\end{split}
		\end{equation}
and that
\begin{align*} 
		\begin{split}
		&  \int dx \ \Vert \sqrt{N} (\mathcal{N}+1)^{(n- 1)/2} d_{s\eta}(\nabla_x \eta_{x}) \xi \Vert^2   
		\leq C  \Vert (\cN+1)^{(n+1)/2} \xi \Vert^2, \\	
		&  \int dx \ \Vert \sqrt{N} (\mathcal{N}+1)^{(n- 1)/2} d_{s\eta}^*(\nabla_x \eta_{x}) \xi \Vert^2  
		\leq C  \Vert (\cN+1)^{(n+1)/2} \xi \Vert^2
		\end{split}
		\end{align*}
for all $\xi\in \cF^{\leq N}$ and all $s\in [0;1]$.
\end{lemma}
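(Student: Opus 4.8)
The plan is to prove Lemma \ref{lm:nablad-bds} by the same kind of expansion-and-estimate strategy already used for Lemma \ref{lm:d-bds}, but now keeping track of one spatial derivative $\nabla_x$ acting on the distribution-valued operators. Throughout, we will rely systematically on the pointwise and Hilbert–Schmidt bounds of Lemma \ref{lm:bndseta}, in particular the estimates \eqref{eq:bndsetan} on $\eta^{(n)}$ and its derivatives, on Lemma \ref{lm:Npow} to absorb conjugations by $e^{B}$, and on the fact (noted after \eqref{eq:defgammasigma}) that $\pn$ decays faster than any exponential so that quantities like $\int dx\, |\nabla^j \pn(x)|^2$ are finite.

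First I would dispose of the easy statement. For the bound on $\int dx\, a^*(\nabla_x\eta_x^{(n)})a(\nabla_x\eta_x^{(m)})$, one simply estimates, for $\xi$ in the appropriate domain,
\[
\Big|\int dx\, \langle \xi, a^*(\nabla_x\eta_x^{(n)}) a(\nabla_x\eta_x^{(m)})\xi\rangle\Big|
\le \int dx\, \|\nabla_x\eta_x^{(n)}\|\,\|\nabla_x\eta_x^{(m)}\|\,\|(\cN+1)^{1/2}\xi\|^2,
\]
and then one uses $\|\nabla_x\eta_x^{(n)}\|\le C(|\nabla\pn(x)|+\pn(x))$ from \eqref{eq:bndsetan}, whose square is integrable; the constant is $C^{n+m}$ because of the factor $\|\eta\|^{n-2}\|\eta\|^{m-2}$ hidden in the higher powers together with the bounds $\|\nabla_i\eta^{(n)}\|\le C^n$. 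The contributions $d_{s\eta}(\nabla_x\eta_x)$ and $d_{s\eta}^*(\nabla_x\eta_x)$ are handled the same way: by Lemma \ref{lm:d-bds} applied with $f=\nabla_x\eta_x$ (viewing $d_\eta$ and $d_\eta^*$ as operators acting on the fixed test function $\nabla_x\eta_x$), one gets $\|(\cN+1)^{(n-1)/2}d_{s\eta}^{(\ast)}(\nabla_x\eta_x)\xi\|\le \tfrac{C}{N}\|\nabla_x\eta_x\|\,\|(\cN+1)^{(n+2)/2}\xi\|$, so the factor $\sqrt N$ and the $x$-integration of $\|\nabla_x\eta_x\|^2\le C(|\nabla\pn(x)|+\pn(x))^2$ produce a convergent bound $C\|(\cN+1)^{(n+1)/2}\xi\|^2$ (here one loses one power of $\cN$ relative to what one might naively hope, but the statement only claims $(\cN+1)^{(n+1)/2}$).

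The genuinely delicate part is the estimate \eqref{eq:nabladx1} on $\int dx\, \|\sqrt N (\cN+1)^{(n-1)/2}\nabla_x d_{s\eta,x}\xi\|^2$ and its barred version, because here the kinetic energy $\cK$ reappears on the right-hand side. The approach is to go back to the defining series for $d_{\eta,x}$ (equivalently, the commutator expansion of $e^{-B}b_x e^{B}$ used to prove Lemma \ref{lm:d-bds}, as in \cite[Lemma 3.4]{BBCS4}) and differentiate it term by term in $x$. The derivative $\nabla_x$ either hits an $\eta^{(n)}$-kernel — producing $\nabla_x\eta_x^{(n)}$, controlled by $(|\nabla\pn(x)|+\pn(x))$ as above — or it hits the operator-valued factor $b_x$ or $a_x$ sitting at the point $x$, producing $\nabla_x b_x$ or $\nabla_x a_x$, whose $L^2_x$-norm squared is exactly $\cK$ (up to lower-order pieces from $\cN/N$ in the definition \eqref{eq:bb-def} of $b$). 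So after expanding, each term is a sum of contributions in which at most one kinetic derivative survives, and one estimates it by Cauchy–Schwarz, pulling out $\|a_x(\cdots)\xi\|$ or $\|\nabla_x a_x(\cdots)\xi\|$, integrating in $x$, and recognizing $\int dx\,\|\nabla_x a_x \zeta\|^2 = \langle\zeta,\cK\zeta\rangle$ and $\int dx\,\|a_x\zeta\|^2=\langle\zeta,\cN\zeta\rangle$. The factor $1/N$ from each $d_{\eta,x}$ (see \eqref{eq:bnddx}) cancels against the $\sqrt N$ and, after summing the geometric series in $\|\eta\|$ (which is small by Lemma \ref{lm:bndseta} for $\ell$ small), leaves the clean bound $C\|(\cK+\cN+1)^{1/2}(\cN+1)^{n/2}\xi\|^2$. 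For the barred distribution $\overline d_{\eta,x}=d_{\eta,x}+(\cN/N)b^*(\eta_x)$, the extra term $\nabla_x[(\cN/N)b^*(\eta_x)]=(\cN/N)b^*(\nabla_x\eta_x)$ contributes $\int dx\,\|\sqrt N(\cN+1)^{(n-1)/2}(\cN/N)b^*(\nabla_x\eta_x)\xi\|^2\le CN^{-1}\int dx\,\|\nabla_x\eta_x\|^2\|(\cN+1)^{(n+2)/2}\xi\|^2\le C\|(\cN+1)^{(n+2)/2}\xi\|^2$, which is absorbed; the reason one introduces $\overline d$ at all is that it removes precisely the most singular piece of $\nabla_x d_{\eta,x}$, making the commutator expansion converge without extra $\cK$-cost beyond what is claimed. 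The main obstacle, then, is bookkeeping: organizing the commutator expansion of $\nabla_x d_{\eta,x}$ so that in every term at most one $\nabla_x$ lands on a creation/annihilation operator and all other $\nabla_x$'s land on $\eta$-kernels with integrable-square profile $(|\nabla\pn|+\pn)$, and then checking that the number-operator powers bookkeep correctly — exactly the kind of (by now standard) manipulation carried out in \cite[Lemma 3.4]{BBCS4}, \cite[Lemma 4.3]{BS}, adapted here to the trapped setting where $\pn$ is non-constant but has rapid decay.
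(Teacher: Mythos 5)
Your overall strategy---extend the commutator-expansion proof of Lemma~\ref{lm:d-bds} to the differentiated operators, tracking where $\nabla_x$ lands---is the right one, but there is a genuine gap that recurs in several places and stems from a single incorrect kernel estimate: you invoke $\|\nabla_x\eta_x^{(n)}\|\le C(|\nabla\pn(x)|+\pn(x))$ from \eqref{eq:bndsetan} as if it held for $n=1$, whereas that bound is stated (and true) only for $n\ge 2$. For $n=1$ one has instead, from \eqref{eq:bndseta}--\eqref{eq:bndsetaptw}, that $\|\nabla_x\eta_x\|$ is of size $\sqrt N\,\pn(x)$ (the singular kernel piece $N^2(\nabla w_\ell)(N(x-y))\pn(x)\pn(y)$ dominates), so that $\int dx\,\|\nabla_x\eta_x\|^2=\|\nabla_1\eta\|_{\mathrm{HS}}^2\le CN$, not $\le C$.

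This breaks your argument in three places. First, for the opening display $\int dx\,a^*(\nabla_x\eta_x^{(n)})a(\nabla_x\eta_x^{(m)})\le C^{n+m}(\cN+1)$, the pointwise Cauchy--Schwarz estimate $\le\int dx\,\|\nabla_x\eta_x^{(n)}\|\,\|\nabla_x\eta_x^{(m)}\|\,\|(\cN+1)^{1/2}\xi\|^2$ produces, for $n=m=1$, a factor of order $N$, not a constant. The loss is structural: one should instead write the operator as $d\Gamma\big((\nabla_1\eta^{(n)})^*\nabla_1\eta^{(m)}\big)$ and bound it by the \emph{operator} norms $\|\nabla_1\eta^{(n)}\|_{\mathrm{op}}\|\nabla_1\eta^{(m)}\|_{\mathrm{op}}\cN$; for $n\ge 2$ the operator norm is controlled by the Hilbert--Schmidt norm $\le C^n$, but $\|\nabla_1\eta\|_{\mathrm{op}}\le C$ is a separate fact (e.g.\ via a Schur test on $|\eta(x;y)|\lesssim\pn(x)\pn(y)/|x-y|$ or the Fourier symbol of $Nw_\ell(N\cdot)$) which is not visible from the pointwise bound.

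Second, for the $d_{s\eta}^{(\ast)}(\nabla_x\eta_x)$ estimates you apply Lemma~\ref{lm:d-bds} pointwise in $x$ to obtain $N\|(\cN+1)^{(n-1)/2}d_{s\eta}(\nabla_x\eta_x)\xi\|^2\le \tfrac{C}{N}\|\nabla_x\eta_x\|^2\|(\cN+1)^{(n+2)/2}\xi\|^2$. With the correct size $\int dx\|\nabla_x\eta_x\|^2\le CN$, integrating in $x$ gives $C\|(\cN+1)^{(n+2)/2}\xi\|^2$, which is a full factor of $(\cN+1)$ larger than the claimed right-hand side $C\|(\cN+1)^{(n+1)/2}\xi\|^2$ and cannot be absorbed uniformly in $N$; your parenthetical about ``losing one power'' only closes if one also uses your incorrect $\int dx\|\nabla_x\eta_x\|^2\le C$ together with $\cN\le N$. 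The same issue appears in your treatment of the difference $\overline d_{\eta,x}-d_{\eta,x}=(\cN/N)b^*(\eta_x)$, where you again claim the bound $C\|(\cN+1)^{(n+2)/2}\xi\|^2$ ``is absorbed''---it is not: $(\cN+1)^{(n+2)/2}\not\le C(\cK+\cN+1)^{1/2}(\cN+1)^{n/2}$. To actually gain the missing power one must keep the $x$-integration \emph{inside} the quadratic form, writing e.g.\ $\int dx\, b(\nabla_x\eta_x)\,\tfrac{\cN^2}{N^2}(\cN+1)^{n-1}\,b^*(\nabla_x\eta_x) = \tfrac{(\cN+1)^2}{N^2}(\cN+2)^{n-1}\big[\|\nabla_1\eta\|_{\mathrm{HS}}^2(1-\cN/N)+d\Gamma(\cdots)\big]$ (up to commutators); then the $N$ from $\|\nabla_1\eta\|_{\mathrm{HS}}^2$ is killed by \emph{one} of the two $\cN/N$ factors and the surviving factor of $\cN+1$ accounts for the correct power $(n+1)/2$. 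Simply pulling out $\|\nabla_x\eta_x\|$ before integrating, as you do, discards the $d\Gamma$ structure and the second $\cN/N$ factor, and loses exactly one power of $\cN+1$. So while your outline identifies the right expansion and correctly isolates the appearance of $\cK$ when $\nabla_x$ hits a field operator, the bookkeeping for the $\nabla_x\eta_x$ terms is not sufficient to establish the inequalities as stated.
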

Lemma \ref{lm:nablad-bds} is a slight generalization of Lemma \cite[Lemma 4.7]{BSS} and can be proved in the same way, based on the kernel bounds from Lemma \ref{lm:bndseta}; we omit the details.
\begin{lemma}\label{lm:eBK}
	We have that
	\[\begin{split}
	&e^{-B} \mathcal{K} e^{B} \\
	& =  \int dx\, \big[ \nabla_x b^*(\gamma_x) \nabla_x b(\gamma_x) + \nabla_x b^* (\gamma_x) \nabla_x b^*(\sigma_x) + \nabla_x b(\sigma_x) \nabla_x b(\gamma_x) + \nabla_x b^*(\sigma_x) \nabla_x b(\sigma_x)\big] \\ 
	&\hspace{0.5cm}+\int dx\, \big[b(\nabla_x \eta_x) \nabla_x d_{\eta,x} + \emph{h.c.}\big]  +   \|\nabla_1\sigma\|^2(1-\cN/N)+  N^{-1}\Vert \nabla_1 \eta \Vert^2 \\
	&\hspace{0.5cm}+ \frac{\Vert \nabla_1 \eta \Vert^2}{N} \Big( \int dx \, \big[ b^*(\gamma_x) b(\gamma_x)+ b^*(\sigma_x) b(\sigma_x)  + b^*(\gamma_x) b^*(\sigma_x)+ b(\gamma_x) b(\sigma_x)\big] +  \Vert \sigma \Vert^2 \Big) \\
	&\hspace{0.5cm}+ \mathcal{E}_{\cK}
	\end{split}\]
where the error $\cE_{\cK}$ satisfies \[ \pm \mathcal{E}_{\cK} \leq CN^{-1/2} (\mathcal{K}+\cN^2+1)(\mathcal{N}+1).\]
\end{lemma}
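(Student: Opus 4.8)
The proof follows the strategy of \cite[Lemma 4.7]{BSS} and \cite[Sec.~7]{BBCS4}; I only outline the main steps. The plan is to write $e^{-B}\cK e^{B} = \cK + \int_0^1 ds\, e^{-sB}[\cK, B]\, e^{sB}$ and to use that $\cK$ commutes with $\cN$ (so that the factors $\sqrt{(N-\cN)/N}$ entering the $b^\sharp$-operators do not affect the commutator). A direct computation then gives
\[ [\cK, B] = \int dx\, \big( \nabla_x b^*_x\, b^*(\nabla_x \eta_x) + \text{h.c.} \big), \]
where we used the symmetry of $\eta$ and $\eta \in (Q\otimes Q)L^2(\bR^3\times\bR^3)$. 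To control the Duhamel integrand uniformly in $s\in[0;1]$ I would use the rough bound $e^{-B}(\cK+\cN^2)e^{B}\leq C(\cK+\cN^2+1)(\cN+1)$ from Lemma \ref{lm:roughVNK} and Lemma \ref{lm:Npow}.

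Next I would conjugate $[\cK,B]$ with $e^{\pm sB}$ using the decomposition \eqref{eq:defdx} in the first variable, i.e.
\[ e^{-sB} \nabla_x b_x\, e^{sB} = \nabla_x b(\cosh_{s\eta,x}) + \nabla_x b^*(\sinh_{s\eta,x}) + \nabla_x d_{s\eta, x}, \]
together with the analogous formula for $e^{-sB} b(\nabla_x\eta_x) e^{sB}$. Multiplying out and integrating over $s$, the contributions that contain no $d$-operator and no correction coming from the commutators \eqref{eq:comm-b} recombine — via the series \eqref{eq:defcoshsinheta} and $\gamma=\cosh_\eta$, $\sigma=\sinh_\eta$ — into
\[ \int dx\, \big[ \nabla_x b^*(\gamma_x)\nabla_x b(\gamma_x) + \nabla_x b^*(\gamma_x)\nabla_x b^*(\sigma_x) + \nabla_x b(\sigma_x)\nabla_x b(\gamma_x) + \nabla_x b^*(\sigma_x)\nabla_x b(\sigma_x)\big] - \cK, \]
plus the scalar $\|\nabla_1\sigma\|^2$, exactly as for a genuine Bogoliubov transformation; adding back $\cK$ gives the first displayed line of the lemma. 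The cross terms pairing $\nabla_x d_{s\eta,x}$ with the leading parts produce, after integration over $s$, the operator $\int dx\,[ b(\nabla_x\eta_x)\nabla_x d_{\eta,x} + \text{h.c.}]$ up to remainders absorbed into $\cE_\cK$.

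Finally I would collect the error and the remaining explicit terms. The corrections coming from the commutators \eqref{eq:comm-b} (the $-\cN/N$ piece and the $-N^{-1}a^*a$ piece) generate $\|\nabla_1\sigma\|^2(1-\cN/N)$, $N^{-1}\|\nabla_1\eta\|^2$ and $\tfrac{\|\nabla_1\eta\|^2}{N}\big(\int dx\,[b^*(\gamma_x)b(\gamma_x)+\dots]+\|\sigma\|^2\big)$; each of these is of order one, since $\|\nabla_1\eta\|^2\leq CN$ by \eqref{eq:bndseta}, while the operator-valued factor is of order $(\cN+1)$ by Lemma \ref{lm:emBcNeB}. All other terms go into $\cE_\cK$ and are bounded using Cauchy--Schwarz, Lemma \ref{lm:Npow}, the kernel bounds of Lemma \ref{lm:bndseta} (in particular $\|\eta^{(n)}_x\|\leq C|\pn(x)|$ for $n\geq1$, $\|\nabla_1\eta^{(n)}\|\leq C\sqrt N$, and $\|\nabla_i\mu\|,\|\Delta_i\mu\|\leq C$) and, crucially, Lemma \ref{lm:nablad-bds}, whose estimate $\int dx\,\|\sqrt N(\cN+1)^{(n-1)/2}\nabla_x d_{s\eta,x}\xi\|^2 \leq C\|(\cK+\cN+1)^{1/2}(\cN+1)^{n/2}\xi\|^2$ compensates the $\sqrt N$ growth of $\nabla_x d_{s\eta,x}$; this produces the gain $N^{-1/2}$ and yields $\pm\cE_\cK \leq CN^{-1/2}(\cK+\cN^2+1)(\cN+1)$.

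The main obstacle is exactly the control of $\nabla_x d_{s\eta,x}$: naively a gradient hitting a $d$-operator costs $\|\nabla_1\eta\|\sim\sqrt N$, and only the refined bounds of Lemma \ref{lm:nablad-bds} — which exploit internal cancellations in $d_{s\eta,x}$ and the boundedness of $\nabla^2\mu$ — show this is compensated by the explicit $1/N$ prefactor in $d$, leaving room for the $N^{-1/2}$ decay. A secondary, purely bookkeeping difficulty is isolating the precise scalar terms ($\|\nabla_1\sigma\|^2$, $N^{-1}\|\nabla_1\eta\|^2$) and the operator term with coefficient $\|\nabla_1\eta\|^2/N$ from the $\cosh$/$\sinh$ series expansions.
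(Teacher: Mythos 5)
Your route — Duhamel on $\cK$ followed by the $d$-decomposition at parameter $s\eta$ — is genuinely different from the paper's. The paper does \emph{not} use a Duhamel integral here. It begins with the exact $b$-operator rewrite
\[
\cK \;=\; \int dx\,\nabla_x b_x^*\nabla_x b_x \;+\; \frac1N\int dx\,dy\,\nabla_x b_x^*\,b_y^*b_y\,\nabla_x b_x \;+\; \text{(small)},
\]
and then applies the decomposition $e^{-B} b_x e^B = b(\gamma_x)+b^*(\sigma_x)+d_{\eta,x}$ directly (at parameter $1$, no $s$-integral) to each factor. This rewrite is not bookkeeping: the quartic $N^{-1}\int\nabla_x b_x^* b_y^* b_y \nabla_x b_x$ is exactly what produces, after conjugation, the scalar $N^{-1}\Vert\nabla_1\eta\Vert^2$ and the $O(\cN)$ operator $\tfrac{\Vert\nabla_1\eta\Vert^2}{N}\bigl(\int dx\,[b^*(\gamma_x)b(\gamma_x)+\cdots]+\Vert\sigma\Vert^2\bigr)$. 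Neither is $\cE_\cK$-absorbable.

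In your outline these two contributions are misattributed. You claim they arise from the $-N^{-1}a^*a$ and $\cN/N$ corrections in \eqref{eq:comm-b} when normal-ordering the $d$-free part of $e^{-sB}[\cK,B]e^{sB}$. But $[\cK,B]$, being quadratic, produces upon normal-ordering only a scalar $\propto(1-\cN/N)$ and a second-quantized operator of the form $-N^{-1}\int dx\, a^*\big((\cosh_{s\eta}\nabla\eta)_x\big)a\big((\nabla\sinh_{s\eta})_x\big)$: this is $-N^{-1}d\Gamma$ of a non-scalar kernel, not $\tfrac{\Vert\nabla_1\eta\Vert^2}{N}$ times (roughly) $e^{-B}\cN e^B$; even its sign is wrong. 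The $\tfrac{\Vert\nabla_1\eta\Vert^2}{N}$ terms cannot come from there, because $[\cK,B]$ is quadratic in the $b$-fields and contains no quartic information; they are hidden in the $d_{s\eta,x}$-parts you relegate to the error.

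A second, independent problem is the claim that the $d$-free part of $\int_0^1 ds\, e^{-sB}[\cK,B]e^{sB}$ "recombines into'' the quadratic form $\int dx[\nabla_x b^*(\gamma_x)\nabla_x b(\gamma_x)+\cdots]-\cK$ via the series \eqref{eq:defcoshsinheta}. It does not: the $s$-integral gives combinations like $\int_0^1\eta\,\cosh^2(s\eta)\,ds=\tfrac\eta2+\tfrac12\sinh\eta\cosh\eta$, which is \emph{not} $\gamma\sigma=\cosh\eta\sinh\eta$. The difference ($\propto\eta-\gamma\sigma$) is again supplied by the $d_{s\eta}$-terms, so there is no clean split into "$d$-free main part'' plus "small $d$-errors''; the $d$-parts carry $O(1)$ contributions that you would have to extract by hand. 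This is precisely the mess that the paper's direct decomposition at $s=1$ avoids.

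In short: the approach might be pushed through, but as written it misses the quartic rewrite of $\cK$ (and with it the true origin of the $\Vert\nabla_1\eta\Vert^2/N$-terms), and the asserted $s$-integral recombination fails, so that the proposed decomposition into "main terms plus $\cE_\cK$'' is incorrect.
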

\begin{proof} The proof is a straightforward adapation of \cite[Lemma 7.2]{BBCS4} to the setting of trapped particles, so let us focus on the main steps. We start with the identity
	\begin{equation} \label{rewrite K}
	\begin{split}
	\mathcal{K} 
	&= \int dx \nabla_x b_x^* \nabla_x b_x +  N^{-1} \int dx dy \nabla_x b_x^* b_y^* b_y \nabla_x b_x \\
	&\hspace{0.5cm}+  N^{-2}(\mathcal{N}-1)\int dx \nabla_x b_x^* \nabla_x b_x  + N^{-2}\mathcal{K}(\mathcal{N}+1)^2
	\end{split}	
	\end{equation}
and notice that, by the rough bound \eqref{bound conjugation K N}, the last two terms are errors bounded by 
		\[ \pm N^{-2}e^{-B}\bigg( (\mathcal{N}-1)\int dx \, \nabla_x b_x^* \nabla_x b_x  +  \mathcal{K}(\mathcal{N}+1)^2\bigg)e^{B}\leq CN^{-1} (\mathcal{K}+\cN^2+1)(\mathcal{N}+1). \] 
Hence, let us consider the remaining two terms on the right hand side in \eqref{rewrite K}, starting with the first term. We apply the decomposition \eqref{eq:defdx} and find that	
	\begin{align*}
	\int dx\, e^{-B} \nabla_x b_x^* \nabla_x b_x e^{B} 
	=E_1 +  E_2 + E_3 +  \int dx \big[b(\nabla_x \eta_x) \nabla_x d_{\eta,x} + \text{h.c.}\big],
	\end{align*}
where
	\begin{align*}
	E_1 &= \int dx \ (\nabla_x b^*(\gamma_x)+  b (\nabla_x \sigma_x))(\nabla_x b(\gamma_x)+ b^*(\nabla_x \sigma_x)),  \\
	E_2 &= \int dx \ (\nabla_x b^*(\gamma_x)+ b(\nabla_x  (\sigma_x - \eta_x))) \nabla_x d_{\eta,x} +\text{h.c.}, \\
	E_3 &= \int dx \ \nabla_x d_{\eta,x}^* \nabla_x d_{\eta,x}.
	\end{align*}
The contributions $E_2$ and $E_3$ are error terms. This follows from \eqref{eq:nabladx1}, which implies 
	\begin{align*}
	0\leq  E_2 \leq CN^{-1}  (\cN+\cK+1)(\cN+1)
	\end{align*}
and, using in addition the bounds from Lemma \ref{lm:bndseta}, that 
	\begin{align*}
	\vert \langle \xi, E_3 \xi \rangle \vert 
	&\leq CN^{-1/2} \Big[ \Vert \mathcal{K}^{1/2} \xi \Vert + \sum_{m\geq 2} \frac{1}{m!} \Vert \nabla_1 \eta^{(m)} \Vert \cdot \Vert (\mathcal{N}+1)^{1/2} \xi \Vert \Big] \\
	&\hspace{1.5cm}\times \Vert (\mathcal{K}+\cN+1)^{1/2} (\mathcal{N}+1)^{1/2}\xi \Vert   \\
	&\leq CN^{-1/2} \langle \xi, (\cN+\mathcal{K} +1)(\mathcal{N}+1) \xi\rangle.
	\end{align*}
We are left with $E_1$. Using the standard commutation relations, 
	\begin{align*}
	E_1 
	&=  \int dx \big[ \nabla_x b^*(\gamma_x) \nabla_x b(\gamma_x) + \nabla_x b^* (\gamma_x) b^*( \nabla_x \sigma_x) \\	&\hspace{1.5cm}+ b(\nabla_x \sigma_x) \nabla_x b(\gamma_x) +  b^*( \nabla_x \sigma_x)  b( \nabla_x \sigma_x)\big] \\
	&\hspace{0.5cm}+  \Vert \nabla_1 \sigma \Vert^2 (1-\cN/N) - \frac{1}{N} \int dx \, a^*(\nabla_x \sigma_x) a(\nabla_x \sigma_x).
	\end{align*}
By Lemma \ref{lm:nablad-bds}, the last term on the right hand side is an error term bounded by
	\begin{align*}
	0\leq  N^{-1} \int dx \, a^*(\nabla_x \sigma_x) a(\nabla_x \sigma_x) \leq CN^{-1} (\cN+1)
	\end{align*}
and therefore we conclude that
		\[\begin{split}
		 &\int dx\, e^{-B} \nabla_x b_x^* \nabla_x b_x e^{B} \\
		& =  \int dx\, \big[ \nabla_x b^*(\gamma_x) \nabla_x b(\gamma_x) + \nabla_x b^* (\gamma_x)  b^*( \nabla_x  \sigma_x) +  b( \nabla_x \sigma_x) \nabla_x b(\gamma_x) +  b^*( \nabla_x \sigma_x)  b( \nabla_x  \sigma_x)\big] \\ 
		&\hspace{0.5cm}+\int dx\, \big[b(\nabla_x \eta_x) \nabla_x d_{\eta,x} + \text{h.c.}\big]  +   \|\nabla_1\sigma\|^2(1-\cN/N) +\wt\cE_{\cK},
		\end{split}\]
for an error $\pm \wt\cE_{\cK}\leq CN^{-1/2}  (\cN+\cK+1)(\cN+1)$. It remains to analyze the second term on the right hand side of \eqref{rewrite K}. A straightforward adaption of the analysis in \cite[Eq. (7.18) to (7.20)]{BBCS4} shows that this term is given by 
		\[\begin{split}
		&N^{-1} \int dx dy \nabla_x b_x^* b_y^* b_y \nabla_x b_x \\
		& = N^{-1}\Vert \nabla_1 \eta \Vert^2 \Big( \int dx \, \big[ b^*(\gamma_x) b(\gamma_x)+ b^*(\sigma_x) b(\sigma_x)  + b^*(\gamma_x) b^*(\sigma_x)+ b(\gamma_x) b(\sigma_x)\big] +  \Vert \sigma \Vert^2 \Big) \\
		&\hspace{0.5cm}+ N^{-1}\Vert \nabla_1 \eta \Vert^2 + \wt{\mathcal{E}}'_{\cK}
		\end{split}\]
for an error $\pm \wt\cE_{\cK}'\leq CN^{-1/2}  ( \cK+\cN^2+1)(\cN+1)$. Combining the two last identities and collecting the error terms concludes the claim.
\end{proof}

\subsection{Analysis of $\cG_{N}^{(3)}$}\label{sec:GN3}
\begin{lemma}\label{lm:GN3}
	Let $h_N = \big(N^3(Vw_\ell) (N.)\ast |\pn|^2\big)\pn$, as in Eq. \eqref{eq:defhN}. Then, we have that
	\begin{align*}
	\begin{split}
	\mathcal{G}_N^{(3)} 	&= \int dx dy \, N^{5/2} V(N(x-y)) \pn(y) \Big( b_x^* b_y^* \big[ b(\gamma_x) + b^*(\sigma_x)\big] + \emph{h.c.} \Big) \\
	&\hspace{0.5cm}-\big[ \sqrt{N} b(\gamma(h_N) ) + \sqrt{N} b^*(\sigma(h_N) )+\emph{h.c.}\big] + \mathcal{E}_N^{(3)}
	\end{split}
	\end{align*}
	for an error $ \mathcal{E}_N^{(3)}$ that satisfies
	\begin{align*}
	\pm \mathcal{E}_N^{(3)} \leq  CN^{-1/2} ( \mathcal{V}_N +  \mathcal{N}+1)(\mathcal{N}+1).
	\end{align*}
\end{lemma}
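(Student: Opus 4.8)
The plan is to conjugate $\cL_N^{(3)}$ with $e^B$ using the decomposition \eqref{eq:defdx} for $e^{-B}b_x e^B$, following the scheme of \cite[Lemma 7.3]{BBCS4} and \cite[Sec. 6]{BSS}. Recall that
\[ \cL_N^{(3)} = \int dx dy\, N^{5/2} V(N(x-y)) \pn(y) \big( b_x^* a_y^* a_x + a_x^* a_y b_x \big). \]
The first step is to write $a_x^* a_y = b_x^* b_y + N^{-1} a_x^* a_y (\text{corrections})$, more precisely to pass between the $a^\sharp$ and $b^\sharp$ operators in the two ``outer'' slots, controlling the difference by an error bounded by $CN^{-1/2}(\cV_N + \cN+1)(\cN+1)$; here one uses the operator inequality \eqref{eq:ESY} together with $\|\sqrt{1-\cN/N}-1\|\leq C\cN/N$ on $\cF^{\leq N}_{\perp\pn}$, exactly as in the proof of Lemma \ref{lm:GN2V}. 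This reduces $\cL_N^{(3)}$, up to negligible terms, to an expression of the form $\int N^{5/2} V(N(x-y))\pn(y)\, b_x^* b_y^* b_x + \text{h.c.}$ (plus a term with the annihilation operator reordered).

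Next I would conjugate each of the three $b^\sharp$ factors separately by $e^B$, inserting \eqref{eq:defd}--\eqref{eq:defdx}: $e^{-B} b_x e^B = b(\gamma_x) + b^*(\sigma_x) + d_{\eta,x}$, and similarly for the $b_y^*$ factor with the roles of $\gamma,\sigma$ transposed. Multiplying out, one obtains a main term
\[ \int dx dy\, N^{5/2} V(N(x-y)) \pn(y)\, b_x^* b_y^* \big( b(\gamma_x) + b^*(\sigma_x)\big) + \text{h.c.}, \]
which is precisely what appears in the statement, plus a large collection of remainder terms. These remainders fall into two groups: (i) those containing at least one $d_{\eta,x}$ or $d_{\eta,y}$ factor, which are controlled using the pointwise bounds \eqref{eq:bnddx}, \eqref{eq:bndaydxbar}, \eqref{eq:bnddxdy} from Lemma \ref{lm:d-bds} combined with Cauchy--Schwarz in $x,y$ against the kernel $N^{5/2}V(N(x-y))\pn(y)$ and the potential energy $\cV_N$; and (ii) terms where $\gamma,\sigma$ appear but contracted in a way that produces a ``diagonal'' contribution. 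For the latter, the key observation is that a contraction of two of the operator-valued distributions yields a factor $\langle \gamma_x, (\text{something}) \rangle$ which, since $\gamma = 1 + \text{p}_\eta$ with $\|\text{p}_\eta(x;y)\| \leq C\pn(x)\pn(y)$, reduces the kernel $N^3(Vw_\ell)(N(x-y))\pn(x)\pn^2(y)$ (via $w_\ell = 1 - f_\ell$, using that $N^3(Vw_\ell)(N\cdot)\ast\pn^2 = (\wh V(0) - 8\pi\frak{a}_0)\pn^2 + O(N^{-1})$ as in Lemma \ref{3.0.sceqlemma}) to exactly the linear term $-[\sqrt N b(\gamma(h_N)) + \sqrt N b^*(\sigma(h_N)) + \text{h.c.}]$ with $h_N$ as in \eqref{eq:defhN}. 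This is the origin of the second line in the statement.

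The main obstacle is the careful bookkeeping of the terms in group (ii): one has to verify that all the apparently dangerous $N^{5/2}$-weighted contractions either combine into the stated linear term or are genuinely $O(N^{-1/2})$ errors, and this requires using the precise cancellations between $k$ and $\eta$ (i.e. $\mu = \eta - k$ has bounded kernel) together with the scattering equation through $w_\ell$. A secondary technical point is that the cubic term involves three unbounded operator-valued distributions, so every error estimate must be phrased with enough powers of $(\cN+1)$ and with $\cV_N$ (not just $\cK$) on the right-hand side; the bound \eqref{eq:ESY} with $\beta > 3/4$ and the smallness of $\|\eta\|$ (ensured by taking $\ell$ small, Lemma \ref{lm:bndseta}) are what make the $N^{-1/2}$ gain possible. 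Since this is entirely parallel to \cite[Lemma 7.3]{BBCS4} and \cite[Lemma 6.3]{BSS}, with only the replacement of constant condensate wave function by $\pn$ and the appearance of $V_\text{ext}$ (which does not interact with $\cL_N^{(3)}$), I would present the main term computation and then indicate that the error estimates are obtained exactly as in those references, omitting the routine details.
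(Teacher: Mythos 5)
Your proposal follows essentially the same route as the paper's proof: first replace the inner $a_y^*a_x$ by $b_y^*b_x$ up to an error of size $N^{-1}a_y^*\cN a_x$, then expand each $e^{-B}b^\sharp e^B$ via \eqref{eq:defdx}, sort by the number of $d_\eta$ factors, bound those with Lemma \ref{lm:d-bds} and the operator inequality \eqref{eq:ESY} against $\cV_N$, and extract the linear term from the normal-ordering contraction $\langle\sigma_x,\gamma_y\rangle\approx k(x;y)=-Nw_\ell(N(x-y))\pn(x)\pn(y)$, which integrated against $N^{5/2}V(N(x-y))\pn(y)$ yields $-\sqrt N[b(\gamma(h_N))+b^*(\sigma(h_N))+\text{h.c.}]$. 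This matches the paper's argument in all essential steps; only note that the aside about further replacing $N^3(Vw_\ell)(N\cdot)\ast\pn^2$ by $(\widehat V(0)-8\pi\mathfrak a_0)\pn^2$ is unnecessary here, since the statement retains $h_N$ as written.
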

\begin{proof}
	We recall \eqref{eq:cLNj} and use $a_y^* a_x = b_y^* b_x + N^{-1} a_y^* \mathcal{N} a_x$ to write $\cL_N^{(3)}$ as
	\begin{equation}\label{eq:GN3-1}
	\begin{split}  
	\mathcal{L}_N^{(3)}
	&=   \int dx dy \, N^{5/2} V(N(x-y)) \pn(y) \big[b_x^* b_y^* b_x + \text{h.c.}\big] \\
	&\quad+  \int dx dy \, N^{3/2} V(N(x-y)) \pn(y) \big[b_x^* a_y^*\, \mathcal{N} a_x + \text{h.c.}\big].
	\end{split}\end{equation} 
The contribution arising from the second term on the right hand side of the last equation is an error. Indeed, we infer from Cauchy-Schwarz and Lemma \ref{lm:roughVNK} that
	\begin{align*}
	&  \int dx dy \, N^{3/2} V(N(x-y)) \pn(y)\big| \langle e^{B}\xi, b_x^* a_y^* \mathcal{N} a_x e^{B}\xi 
	\rangle \big| \\
	&\hspace{0.5cm}\leq CN^{-1}\langle \xi, (\cV_N + N)(\cN+1)\xi\rangle^{1/2} \langle \xi, (\cN+1)^2\xi\rangle^{1/2}\\
	& \hspace{0.5cm}\leq CN^{-1/2} \langle\xi, (\cV_N+\cN+1)(\cN+1)\xi\rangle.
	\end{align*}
As for the first contribution from the r.h.s. of (\ref{eq:GN3-1}), we decompose it with \eqref{eq:defdx} into
the sum of the terms  	
	\begin{equation}\label{eq:M0toM3}
	\begin{split}
	M_0 &=  \int dx dy\, N^{5/2} V(N(x-y)) \pn(y) \big[b^*(\gamma_x)+ b(\sigma_x)\big]  \\
	&\hspace{0.7cm} \times\big [b^*(\gamma_y) + b(\sigma_y)\big] \big[b(\gamma_x)+ b^*(\sigma_x)\big] + \text{h.c.},\\
	M_1 &=  \int dx dy\, N^{5/2} V(N(x-y)) \pn(y) \\
	&\hspace{0.7cm}\times \Big(d_{\eta,x}^* \big[b^*(\gamma_y)+ b(\sigma_y)\big]\big[b(\gamma_x)+ b^*(\sigma_x)\big] \\
	&\hspace{1.5cm}+ \big[b^*(\gamma_x)+ b(\sigma_x)\big]d_{\eta,y}^*\big[b(\gamma_x)+ b^*(\sigma_x)\big] \\
	&\hspace{1.5cm}+ \big[b^*(\gamma_x)+ b(\sigma_x)\big] \big[b^*(\gamma_y)+ b(\sigma_y)\big]d_{\eta,x} + \text{h.c.}\Big),\\
	M_2 &=  \int dx dy\, N^{5/2} V(N(x-y)) \pn(y) \Big( d_{\eta,x}^* d_{\eta,y}^* \big[b(\gamma_x)+ b^*(\sigma_x)\big] \\
	&\hspace{0.7cm}+ d_{\eta,x}^*\big[ b^*(\gamma_y)+ b(\sigma_y)\big]d_{\eta,x} 
	+ \big[b^*(\gamma_x)+ b(\sigma_x)\big]d_{\eta,y}^* d_{\eta,x} + \text{h.c.}  \Big),\\
	M_3 &=   \int dx dy \, N^{5/2} V(N(x-y)) \pn(y) d_{\eta,x}^* d_{\eta,y}^* d_{\eta,x} + \text{h.c.}
	\end{split}
	\end{equation}
Notice that the index $i$ in $M_i$ counts the number of $d_{\eta}$-operators it contains. 

The operators $M_1, M_2$ and $M_3$ are error terms. Let us illustrate this for $M_1$, the contributions $M_2$ and $M_3$ can be handled analogously. If we apply Cauchy-Schwarz together with the bounds \eqref{eq:bnddx}, \eqref{eq:bndaydxbar} and Lemma \ref{lm:bndseta}, we find that 
		\[\begin{split}
		& \int dx dy\, N^{5/2} V(N(x-y)) \pn(y) |\langle\xi, \!d_{\eta,x}^* (b^*(\gamma_y)\!+\! b(\sigma_y) )(b(\gamma_x)\!+\! b^*\!(\sigma_x) ) \xi\rangle|\\
		&\,\leq C \int dx dy\, N^{5/2} V(N(x-y)) \pn(y)  \big[ \|\eta_{y}\|\|  d_{\eta,x}\xi\| + \| (\cN+1)^{-1/2}a_y d_{\eta,x}\xi\|\big]\\
		&\hspace{1cm}\times \big[ \|\eta_{x}\|\| (\cN+1)^{1/2}\xi\| + \| (\cN+1)^{1/2}a_x \xi\|\big]\\
		&\,\leq C \int dx dy\, N^{3/2} V(N(x-y)) \pn(y)  \big[  \|\eta_x\|\|\eta_y\| \| (\cN+1) \xi\| + CN\| (\cN+1)^{1/2}\xi\| \\
		&\hspace{4cm} +  \|a_x(\cN+1)^{1/2} \xi\| + \|\eta_x\|\|a_y(\cN+1)\xi\| + \|a_xa_y(\cN+1)^{1/2}\xi\| \big]\\
		&\hspace{1cm}\times \big[ \|\eta_{x}\|\| (\cN+1) \xi\| + \|a_x (\cN+1)^{1/2} \xi\|\big]\\
		&\leq C N^{-1/2}\langle\xi, (\cV_N + \cN+1)(\cN+1)\xi\rangle.
		\end{split}\] 	
Notice that we used that $\|\eta\|_\infty\leq CN$ and that $\|\eta_x\|\leq C\pn(x)$. Similarly, we find that
		\[\begin{split}
		& \int dx dy\, N^{5/2} V(N(x-y)) \pn(y)   \\
		&\hspace{0.5cm}\times \| (\cN+1)^{-1/2}d_{\eta,y}(b(\gamma_x)\!+ \!b^*(\sigma_x) ) \xi\|  \| (\cN+1)^{1/2} (b(\gamma_x)\!+\!b^*(\sigma_x) ) \xi \| \\
		&+\int dx dy\, N^{5/2} V(N(x-y)) \pn(y) \\
		&\hspace{1cm}\times \| (b(\gamma_y)\!+\!b^*(\sigma_y)) (b(\gamma_x)\!+\! b^*(\sigma_x) )\xi\| \|d_{\eta,x}  \xi \|\\
		&\leq C N^{-1/2}\langle\xi, (\cV_N + \cN+1)(\cN+1)\xi\rangle,
		\end{split}\] 	
so that altogether $\pm (M_1+\text{h.c.}) \leq C N^{-1/2}  (\cV_N + \cN+1)(\cN+1)$. The contributions $M_2$ and $M_3$ can be controlled similarly, using additionally the bound \eqref{eq:bnddxdy}. This shows 
		\[ \pm (M_2 + M_3+\text{h.c.})\leq C N^{-1/2} (\cV_N + \cN+1)(\cN+1). \]
Now, let us determine the main contributions to $M_0$, defined in \eqref{eq:M0toM3}. With
		\[\begin{split}
		 &\big[b^*(\gamma_x)+ b(\sigma_x)\big]  \big [b^*(\gamma_y) + b(\sigma_y)\big] \\
		 & = b^*_x b^*_y +  \langle\sigma_x, \gamma_y\rangle (1-\cN/N) - N^{-1}a^*(\gamma_y)a(\sigma_x)\\
		 &\hspace{0.5cm}+ b^*_x \big[b^*(\gamma_y)-b^*_y\big]  +  \big[b^*(\gamma_x)-b^*_x\big]b^*(\gamma_y)  \\
		 &\hspace{0.5cm}  + b^*(\gamma_x)b(\sigma_y) + b^*(\gamma_y)b(\sigma_x)+b(\sigma_x)b(\sigma_y) \\
		 \end{split}\] 
and with the bounds \eqref{eq:bndsetaptw}, \eqref{eq:bndsetan}, it is simple to verify that 
		\[\begin{split}\pm& \bigg( M_0- \int dx dy \, N^{5/2} V(N(x-y)) \pn(y) b_x^* b_y^*  \big[b(\gamma_x) + b^*(\sigma_x)  \big] \\
		 &\hspace{0.2cm}- \int dx dy \, N^{5/2} V(N(x-y)) \pn(y) \langle\sigma_x, \gamma_y\rangle \big[b(\gamma_x) + b^*(\sigma_x)  \big] +\text{h.c.}\bigg)\\
		 &\leq CN^{-1/2}(\cN+1)^{3/2}.
		\end{split} \]
Notice that Lemma \ref{lm:bndseta} also implies $| \langle\sigma_x, \gamma_y\rangle - k(x;y)| \leq C \pn(x)\pn(y)$, with $k$ defined in \eqref{eq:defk}, so that we can simplify this further to
		\[\begin{split}\pm& \bigg( M_0- \int dx dy \, N^{5/2} V(N(x-y)) \pn(y) b_x^* b_y^*  \big[b(\gamma_x) + b^*(\sigma_x)  \big] \\
		 &\hspace{0.2cm}+\sqrt N \int dx dy \, N^{3} (Vw_\ell )(N(x-y))\pn^2(y)  \pn(x)  \big[b(\gamma_x) + b^*(\sigma_x)  \big] +\text{h.c.}\bigg)\\
		 &\leq CN^{-1/2}(\cN+1)^{3/2}.
		 \end{split}\]
Finally, combining the estimates on $M_0$, $M_1$, $M_2$ and $M_3$ with the observation that
		\[\begin{split}& \sqrt N \int dx \big(N^3(Vw_\ell) (N.)\ast |\pn|^2\big)\pn(x)   \big[b(\gamma_x) + b^*(\sigma_x)  \big] \\
		&  = \sqrt{N} b(\gamma(h_N) ) + \sqrt{N} b^*(\sigma(h_N) ), \end{split} \]
where $h_N = \big(N^3(Vw_\ell) (N.)\ast |\pn|^2\big)\pn $, we conclude the proof of the lemma.
\end{proof}

\subsection{Analysis of $\cG_{N}^{(4)}$}\label{sec:GN4}

\begin{lemma}\label{lm:GN4}
	We have that
	\[\begin{split}
	\mathcal{G}_N^{(4)}&=\cV_N+ \frac{1}{2} \int dx dy \, N^2 V(N(x-y))   \vert \langle \sigma_x, \gamma_y \rangle \vert^2 \Big( 1 + \frac{1}{N} - \frac{2\mathcal{N}}{N} \Big) \\
	&\hspace{0.5cm} - \frac{1}{2} \int dx dy \, N^3(Vw_\ell)(N(x-y))  \pn(x)\pn(y)  \big[b^*(\gamma_x) b^*(\gamma_y)  \\
	&\hspace{1.5cm}+ b^*(\gamma_y) b(\sigma_x) + b^*(\gamma_x) b(\sigma_y) + b(\sigma_x) b(\sigma_y) + \emph{h.c.}\big]\\
	&\hspace{0.5cm}- \frac12 \int dx dy \, N^3 (Vw_\ell)(N(x-y)) \pn(y)\pn(x)\\
	&\hspace{1.5cm} \times \big[d_{\eta,x}^* (b^*(\gamma_y) + b(\sigma_y)) + (b^*(\gamma_x)+ b(\sigma_x))d_{\eta,y}^* +\emph{h.c.}\big] \\
	&\hspace{0.5cm}+  \int dx dy\, N^3 (Vw_\ell^2)(N(x-y))  \pn^2(x)\pn^2(y) \\
	&\hspace{1cm} \times\bigg(\int du\,  \big[b^*(\gamma_u) b^*(\sigma_u) + b^*(\gamma_u) b(\gamma_u)  \\
	&\hspace{2.2cm}+ b^*(\sigma_u) b(\sigma_u)+ b(\sigma_u) b(\gamma_u)\big] + \|\sigma_u\|^2 \bigg)  
	+ \mathcal{E}_N^{(4)}
	\end{split}\]
	for an error $\cE_N^{(4)}$ that satisfies $\pm \mathcal{E}_N^{(4)} \leq CN^{-1/2}(\cV_N+\cN+1)(\cN+1) $.
\end{lemma}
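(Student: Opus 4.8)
The plan is to follow the analysis of the quartic term in the translation invariant case, \cite[Lemma 7.4]{BBCS4} (see also \cite{BSS}), adapting it to trapped particles. Since $\mathcal{L}_N^{(4)}=\mathcal{V}_N$ is written in terms of the standard fields $a^\sharp$, on which $e^B$ does not act explicitly, the first step is to re-express it through the modified fields. Commuting the factors $\sqrt{(N-\cN)/N}$ hidden in $b_x=\sqrt{(N-\cN)/N}\,a_x$ past the annihilation operators and using $|\sqrt{1-x/N}\sqrt{1-x/N-1/N}-1|\le Cx/N$ for $0\le x\le N$, one gets
\[ \mathcal{V}_N=\frac12\int dxdy\,N^2V(N(x-y))\,b_x^*b_y^*b_yb_x+\mathcal{E},\]
with $\pm\mathcal{E}\le CN^{-1}(\mathcal{V}_N+\cN+1)(\cN+1)$; after conjugation by $e^B$ this remainder stays of the same order by Lemma \ref{lm:roughVNK} and Lemma \ref{lm:Npow}, and it is absorbed into $\mathcal{E}_N^{(4)}$.

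Next I would insert the decomposition \eqref{eq:defdx}, i.e. $e^{-B}b_xe^{B}=b(\gamma_x)+b^*(\sigma_x)+d_{\eta,x}$ and its adjoint, into each of the four factors of $b_x^*b_y^*b_yb_x$, producing $3^4$ terms. Every term carrying at least one $d_\eta$ is placed into the error: estimating it by Cauchy--Schwarz, splitting off a factor $\|\mathcal{V}_N^{1/2}\xi\|$, and then using the bounds of Lemma \ref{lm:d-bds} together with $\|\eta_x\|\le C\pn(x)$, $\|\eta\|_\infty\le CN$, $|\eta(x;y)|\le C\pn(x)\pn(y)/(|x-y|+N^{-1})$ from Lemma \ref{lm:bndseta} and the rough estimates of Lemma \ref{lm:roughVNK}, one checks, exactly as for $M_1,M_2,M_3$ in the proof of Lemma \ref{lm:GN3}, that the $1/N$ supplied by each $d_\eta$ brings the interaction strength down so that the total contribution is bounded by $CN^{-1/2}(\mathcal{V}_N+\cN+1)(\cN+1)$.

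It then remains to normal-order the $d$-free part $\frac12\int N^2V(N(x-y))(b^*(\gamma_x)+b(\sigma_x))(b^*(\gamma_y)+b(\sigma_y))(b(\gamma_y)+b^*(\sigma_y))(b(\gamma_x)+b^*(\sigma_x))$, which I would split into its $16$ summands and reorder using \eqref{eq:comm-b}. The fully normal-ordered summand with four $\gamma$-operators reconstructs $\mathcal{V}_N$ once we write $\gamma=1+\text{p}_\eta$ with $|\text{p}_\eta(x;y)|\le C\pn(x)\pn(y)$; every other quartic piece carries a $\text{p}_\eta$ or a $\sigma$ in at least one variable, hence an extra factor $\pn(x)$ or $\pn(y)$, which combined with $\|N^2V(N\cdot)\|_2=N^{1/2}\|V\|_2$ (here $V\in L^3(\bR^3)$ compactly supported, hence $V\in L^2(\bR^3)$, enters) produces a contribution of order $N^{-1/2}$. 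Each single contraction of a $b$-operator against a $b^*$-operator yields a factor $\langle\sigma_x,\gamma_y\rangle(1-\cN/N)$ plus an $N^{-1}a^*a$-correction; since $\langle\sigma_x,\gamma_y\rangle=k(x;y)+\mathcal{O}(N^{-1}\pn(x)\pn(y))$ with $k(x;y)=-Nw_\ell(N(x-y))\pn(x)\pn(y)$ by \eqref{eq:defk}, \eqref{eq:defgammasigma} and \eqref{eq:bndsetaptw}, the factor $N^2V(N(x-y))\langle\sigma_x,\gamma_y\rangle$ turns into $-N^3(Vw_\ell)(N(x-y))\pn(x)\pn(y)$; collecting the surviving singly-contracted terms produces the quadratic expression $-\frac12\int N^3(Vw_\ell)(N(x-y))\pn(x)\pn(y)[b^*(\gamma_x)b^*(\gamma_y)+\dots+\text{h.c.}]$ and, by re-inserting one $d_\eta$ into the untouched pair, the mixed $d_\eta$-quadratic term; the $N^{-1}a^*a$-corrections, combined with one $\langle\sigma,\gamma\rangle$-contraction and the mollification estimate for $\int Nw_\ell(N\cdot)\pn$, assemble into the $\int N^3(Vw_\ell^2)(N(x-y))\pn^2(x)\pn^2(y)\times(\int du[\dots]+\|\sigma_u\|^2)$ term. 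Finally, the doubly-contracted summands give the number $\frac12\int N^2V(N(x-y))|\langle\sigma_x,\gamma_y\rangle|^2$ times $(1-\cN/N)^2=1-2\cN/N+\cN^2/N^2$; keeping the leading two terms and the cross-term $1/N$ reproduces the stated factor $(1+\tfrac1N-\tfrac{2\cN}{N})$ up to an $\mathcal{O}(N^{-1}\cN^2)$ error.

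Collecting the main terms together with all discarded pieces, and using Lemma \ref{lm:Npow} to control the conjugation of the error operators and the uniform-in-$N$ boundedness of the Hilbert--Schmidt norms of the kernels $N^3(Vw_\ell)(N\cdot)\pn(x)\pn(y)$, $N^3(Vw_\ell^2)(N\cdot)\pn^2(x)\pn^2(y)$ and $\langle\sigma_x,\gamma_y\rangle$, gives the claimed identity with $\pm\mathcal{E}_N^{(4)}\le CN^{-1/2}(\mathcal{V}_N+\cN+1)(\cN+1)$. The main obstacle is the combinatorial bookkeeping of the normal-ordering step: correctly sorting the many (un)contracted summands, tracking the exact $\cN/N$-dependence of the constant term, and above all verifying that every piece that is thrown away genuinely obeys the $N^{-1/2}$ bound — which, in the terms where the quartic singularity of $\mathcal{V}_N$ is only partially absorbed, is precisely where $V\in L^3(\bR^3)$ (rather than merely $V\in L^1$) is used.
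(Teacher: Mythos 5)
Your opening step already contains the main gap. You write
\[
\cV_N=\frac12\int dx\,dy\,N^2V(N(x-y))\,b_x^*b_y^*b_yb_x+\cE,
\qquad
\pm\cE\le CN^{-1}(\cV_N+\cN+1)(\cN+1),
\]
and then assert that after conjugation the remainder $e^{-B}\cE\,e^{B}$ "stays of the same order" and may be absorbed into $\cE_N^{(4)}$. This does not follow: the rough bound \eqref{bound conjugation V_N} only gives $e^{-B}\cV_N(\cN+1)e^{B}\le C\cV_N(\cN+1)+CN(\cN+1)$, so $e^{-B}\cE\,e^{B}$ can be as large as $C(\cN+1)$, which is an $O(1)$ operator and is \emph{not} bounded by $CN^{-1/2}(\cV_N+\cN+1)(\cN+1)$. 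In fact the piece of $\cE$ that you are throwing away, namely $\frac{\cN}{N}\cdot\frac12\int N^2V(N(x-y))b_x^*b_y^*b_yb_x$ coming from $a_x^*a_y^*a_ya_x=b_x^*b_y^*b_yb_x\bigl(1-\tfrac3N+\tfrac{2\cN}N\bigr)+a_x^*a_y^*a_ya_x\,\Theta(\cN)$, is exactly what carries, after conjugation, the term
\[
\int dx\,dy\,N^3(Vw_\ell^2)(N(x-y))\,\pn^2(x)\pn^2(y)\,
\Bigl(\int du\big[b^*(\gamma_u)b^*(\sigma_u)+\cdots\big]+\|\sigma\|^2\Bigr)
\]
in the statement: writing $\cN=\int du\,b_u^*b_u+O(\cN^2/N)$ turns it into a six-operator expression $\frac1N\int du\,dx\,dy\,N^2V(N(x-y))\,b_x^*b_y^*b_u^*b_ub_yb_x$, and conjugating the $(x,y)$ pairs, contracting them twice so that $|\langle\sigma_x,\gamma_y\rangle|^2\sim N^2w_\ell^2(N(x-y))\pn^2(x)\pn^2(y)$, while leaving $b_u^*b_u$ as a residual quadratic in $b^\sharp(\gamma_u),b^\sharp(\sigma_u)$, produces exactly this contribution. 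The paper keeps the $\cN/N$-factor as a second main piece for precisely this reason.

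Your later heuristic — that the $(Vw_\ell^2)$-term emerges from an $N^{-1}a^*a$-correction in $[b(f),b^*(g)]$ combined with one $\langle\sigma,\gamma\rangle$-contraction — cannot deliver it: the quartic $\frac12\int N^2V(N(x-y))\,e^{-B}b_x^*b_y^*b_yb_xe^{B}$ involves only the variables $x,y$, and there is no mechanism for an independent integration variable $u$ weighted by $N^3(Vw_\ell^2)(N(x-y))\pn^2\pn^2$ with a quadratic $b$-factor; the $N^{-1}a^*a$-corrections you mention only produce $O(N^{-1})$ additions to the off-diagonal quadratic piece. Aside from this, the rest of your outline (decompose via \eqref{eq:defdx}, push every $d_\eta$-term into the error via Lemma \ref{lm:d-bds}, normal-order the $d$-free remainder, use $\langle\sigma_x,\gamma_y\rangle\approx k(x;y)$) is essentially the paper's treatment of the $b^*b^*bb$-part, where it appears as the terms $V_0,\ldots,V_4$. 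But without keeping the $\cN/N$-piece as a separate main contribution and conjugating the resulting six-operator expression, you cannot reach the stated identity.
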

\begin{proof}
	We closely follow \cite[Lemma 7.4]{BBCS4} and outline the main steps. Using that
	\begin{align*}
	a_x^* a_y^* a_y a_x
	= b_x^* b_y^* b_y b_x \left( 1 - \frac{3}{N} + \frac{2\mathcal{N}}{N} \right) + a_x^* a_y^* a_y a_x \Theta (\cN)
	\end{align*}
	for some function $\Theta: \mathbb{N}\to\mathbb{R}$ that satisfies $\pm \Theta(\cN) \leq N^{-2}(\mathcal{N}+1)^2$, we split
	\begin{align*}
	\mathcal{G}_N^{(4)}
	&= \frac{1}{2} \int dx dy \; N^2 V(N(x-y))  e^{-B} b_x^* b_y^* b_x b_y \left(1 - \frac{3}{N} + \frac{2\,\mathcal{N}}{N}\right) e^{B}\\
	&\hspace{0.5cm} + \frac{1}{2} \int dx dy \; N^2 V(N(x-y))   e^{-B} a_x^* a_y^* a_y a_x \Theta(\cN) e^{B}.
	\end{align*}
Using the rough estimate \eqref{bound conjugation V_N} for the conjugation of $\mathcal{V}_N$, we immediately find that
	\begin{align*}
	&\left\vert \frac{1}{2} \int dx dy \, N^2 V(N(x-y))   \langle\xi, e^{-B} a_x^* a_y^* a_y a_x \Theta(\cN) e^{B}\xi  \rangle \right\vert\\
	&\hspace{0.5cm}\leq  CN^{-1} \langle \xi, (\mathcal{V}_N +\cN+1) (\mathcal{N} + 1) \xi \rangle.
	\end{align*}
We can therefore split $ \mathcal{G}_N^{(4)}$ into 
	\begin{equation} \label{reduction G4}
	\begin{split}
	\mathcal{G}_N^{(4)}
	&= \frac{N+1}{2N} \int dx dy \, N^2 V(N(x-y))  e^{-B} b_x^* b_y^* b_y b_x e^{B} \\
	&\qquad+ \frac{1}{N} \int dudx dy \, N^2 V(N(x-y))  e^{-B} b_x^* b_y^* b^*_ub_u b_y b_x e^{B} + \tilde{\mathcal{E}}_1
	\end{split}
	\end{equation}
for some error $\wt{\mathcal{E}}_1$ that satisfies $\pm \wt \cE_1 \leq CN^{-1}( \mathcal{V}_N+\cN+1) (\mathcal{N} + 1)$. 

Now, we analyze the remaining two contributions on the right hand side in \eqref{reduction G4}, starting with the one in the first line. Applying Eq. \eqref{eq:defdx}, we split this term into
	\begin{align*}
	&\frac{(N+1)}{2N} \int dx dy \, N^2 V(N(x-y))  e^{-B} b_x^* b_y^* b_y b_x e^{B} 
	= V_0 + V_1 + V_2 + V_3 + V_4,
	\end{align*}
where 
\begin{align*}\begin{split} 
	V_0 &=  \frac{(N+1)}{2N} \int dx dy \, N^2 V(N(x-y)) \\
	&\hspace{0.5cm} \times \big[b^*(\gamma_x)b^*(\gamma_y) + b^*(\gamma_x) b(\sigma_x) + b^*(\gamma_y) b(\sigma_y) + b(\sigma_x)b(\sigma_y) + \langle \sigma_x, \gamma_y \rangle  \\
	 &\hspace{1cm}- N^{-1}\langle \sigma_x, \gamma_y \rangle\, \cN- N^{-1}a^*(\gamma_y) a(\sigma_x)\big] \\
	&\hspace{0.5cm} \times \big[b(\gamma_y) b(\gamma_x) + b^*(\sigma_y) b(\gamma_x) + b^*(\sigma_x) b(\gamma_y) + b^*(\sigma_y) b^*(\sigma_x) + \langle \sigma_x, \gamma_y\rangle\\
	&\hspace{1cm}- N^{-1}\langle \sigma_x, \gamma_y \rangle\, \cN- N^{-1}a^*(\sigma_x) a(\gamma_y)\big], \\
	\end{split}
	\end{align*}
	\begin{align*} 
	\begin{split}
	V_1 &=  \frac{(N+1)}{2N} \int dx dy \, N^2 V(N(x-y)) \big[ d_{\eta,x}^* (b^*(\gamma_y)+ b(\sigma_y))+ (b^*(\gamma_x)+b(\sigma_x))d_{\eta,y}^* ] \\
	&\hspace{0.5cm} \times \big[b(\gamma_y) b(\gamma_x) + b^*(\sigma_y) b(\gamma_x) + b^*(\sigma_x) b(\gamma_y) + b^*(\sigma_y) b^*(\sigma_x) + \langle \sigma_x, \gamma_y\rangle\\
	&\hspace{1cm}- N^{-1}\langle \sigma_x, \gamma_y \rangle\, \cN- N^{-1}a^*(\sigma_x) a(\gamma_y)\big] +\text{h.c.},
	\end{split}
	\end{align*}
as well as
	\begin{align*} 
	\begin{split}
	V_2 &= \frac{(N+1)}{2N} \int dx dy \, N^2 V(N(x-y)) \big[(b^*(\gamma_x)+ b(\sigma_x)) d_{\eta,y}^* (b(\gamma_y)+ b^*(\sigma_y)) d_{\eta,x} \\
	&\hspace{1cm}+ (b^*(\gamma_x)+b(\sigma_x))d_{\eta,y}^* d_{\eta,y} (b(\gamma_x)+b^*(\sigma_x)) \\
	&\hspace{1cm}+ d_{\eta,x}^* (b^*(\gamma_y)+ b(\sigma_y))(b(\gamma_y)+b^*(\sigma_y)) d_{\eta,x} \\
	&\hspace{1cm}+ d_{\eta,x}^* (b^*(\gamma_y)+b(\sigma_y)) d_{\eta,y}(b(\gamma_x)+ b^*(\sigma_x))\big] \\
	&+  \frac{(N+1)}{2N} \int dx dy \, N^2 V(N(x-y)) \big[ d_{\eta,x}^* d_{\eta,y}^* (b(\gamma_y) + b^*(\sigma_y))  (b(\gamma_x) + b^*(\sigma_x))  +\text{ h.c.} \big], \\
	V_3 &= \frac{(N+1)}{2N} \int dx dy \, N^2 V(N(x-y))  \\ 
	&\hspace{0.5cm} \times \big[ (b^*(\gamma_x)+b(\sigma_x))d_{\eta,y}^* + d_{\eta,x}^*(b^*(\gamma_y)+ b(\sigma_y) \big]d_{\eta,y} d_{\eta,x} + \text{h.c.}  ,\\
	V_4 &=  \frac{(N+1)}{2N} \int dx dy \, N^2 V(N(x-y))  d_{\eta,x}^* d_{\eta,y}^* d_{\eta,y} d_{\eta,x}.
	\end{split}\end{align*}
The only relevant contributions to the energy are contained in $V_0$ and $V_1$ while $V_2$, $V_3$ and $V_4$ are negligible. To see this, let us start with $V_4$. Applying  \eqref{eq:bnddxdy}, we get
	\begin{align*}
	\vert \langle \xi, V_4 \xi \rangle \vert
	&\leq C \int dx dy \, N^2 V(N(x-y))  \Vert d_{\eta,x} d_{\eta,y} \xi \Vert^2 \\
	&\leq C  \int dx dy \, N^2 V(N(x-y))  N^{-4}\big[  \pn^2(x)\Vert a_y (\mathcal{N}+1)^{5/2}\xi \Vert^2   \\
	&\hspace{1.5cm}+ N^2 \pn^2(x)\pn^2(y) \Vert (\mathcal{N}+1)^2 \xi \Vert^2 + \Vert a_xa_y (\mathcal{N}+1)^2 \xi \Vert^2  \big] \\
	&\leq N^{-1} \langle \xi, (\mathcal{V}_N+\cN+1) (\mathcal{N}+1) \xi \rangle. 
	\end{align*}
Similarly, we can use the bounds from Lemma \ref{lm:d-bds} to show that
\[ \pm V_2 , V_3 \leq CN^{-1} (\mathcal{V}_N+\cN+1) (\mathcal{N}+1). \] 
It remains to extract the order one contributions to $V_0$ and $V_1$. Recalling that, by Lemma~\ref{lm:bndseta}, $ | \langle\sigma_x, \gamma_y\rangle - k(x;y)| \leq C \pn(x)\pn(y)$ we find that
		\[\begin{split} V_0 + V_1 &=\cV_N+ \frac{1}{2} \int dx dy \, N^2 V(N(x-y)) \pn(y) \vert \langle \sigma_x, \gamma_y \rangle \vert^2 \Big( 1 + \frac{1}{N} - \frac{2\mathcal{N}}{N} \Big) \\
	&\hspace{0.4cm} - \frac{1}{2} \int dx dy \, N^3(Vw_\ell)(N(x-y))  \pn(x)\pn(y)  \big[b^*(\gamma_x) b^*(\gamma_y)  \\
	&\hspace{0.9cm}+ b^*(\gamma_y) b(\sigma_x) + b^*(\gamma_x) b(\sigma_y) + b(\sigma_x) b(\sigma_y) + \text{h.c.}\big]\\
	&\hspace{0.4cm}- \frac12 \int dx dy \, N^3 (Vw_\ell)(N(x-y)) \pn(y)\pn(x)\\
	&\hspace{0.6cm} \times \big[d_{\eta,x}^* (b^*(\gamma_y) + b(\sigma_y)) + (b^*(\gamma_x)+ b(\sigma_x))d_{\eta,y}^* +\text{h.c.}\big] +\wt\cE_2,
	\end{split}\]
where the error $\wt\cE_2$ is such that $\pm \wt\cE_2\leq CN^{-1} (\mathcal{V}_N+\cN+1) (\mathcal{N}+1)$. 

This concludes the analysis of the first term on the right hand side in \eqref{reduction G4}. The second term is treated similarly and an adaption of the arguments from above yields
		\[\begin{split}
		& \frac{1}{N} \int dudx dy \, N^2 V(N(x-y))  e^{-B} b_x^* b_y^* b^*_ub_u b_y b_x e^{B} \\
		& =  \int dx dy\, N^3 (Vw_\ell^2)(N(x-y))  \pn^2(x)\pn^2(y) \\
	&\hspace{0.5cm} \times\bigg(\int du\,  \big[b^*(\gamma_u) b^*(\sigma_u) + b^*(\gamma_u) b(\gamma_u)  \\
	&\hspace{1.7cm}+ b^*(\sigma_u) b(\sigma_u)+ b(\sigma_u) b(\gamma_u)\big] + \|\sigma\|^2 \bigg)  
	+ \wt \cE_3 
		\end{split}\]
for some error $\pm \wt\cE_3\leq CN^{-1} (\mathcal{V}_N+\cN+1) (\mathcal{N}+1)$. The details are analogous to those in \cite[Lemma 7.4]{BBCS4}, taking into account the different setting; we omit the details.
\end{proof}

\subsection{Proof of Proposition \ref{prop:GN}}\label{sec:proofpropGN}

In this section, we collect the results about $\cG_N$ and prove Proposition \ref{prop:GN}. As mentioned at the beginning of this section, we will focus on the decomposition \eqref{eq:propGN} and the bounds \eqref{eq:GDelta}, \eqref{eq:cEcGNbnd}; the commutator bounds \eqref{eq:adkN} can be proved in the same way. 

Combining the results of Lemmas \ref{cor:GN0}, \ref{lm:GN1}, \ref{lm:Vext}, \ref{lm:GN2V}, \ref{lm:eBK}, \ref{lm:GN3} and \ref{lm:GN4}, we find that 
		\begin{equation}\label{eq:cGNid1}
		\cG_N = \widetilde{\kappa}_{\cG_N} + \int dxdy\, \Big[\wt \Phi(x;y) b^*_x b_y + \frac12 \wt \Gamma(x;y) \big(b^*_x b^*_y+b_xb_y\big)\Big] +\cD_{\cG_N}+ \cC_{\cG_N} +\cV_N+ \wt{\cE}_{\cG_N}
		\end{equation}
where
		\[\begin{split}
		\widetilde{\kappa}_{\cG_N}& = N\big\langle \pn, (-\Delta +V_\text{ext} + \wh V(0)\pn^2/2)\pn\big\rangle-\frac12\wh V(0)\|\pn\|_4^4  + N^{-1}\|\nabla_1\eta\|^2\\
		&\hspace{0.5cm}  + \tr\big(\sigma (-\Delta+V_\text{ext})\sigma\big) + \tr\big( \gamma   \big[N^3 V(N(x-y)) \pn(x)\pn(y)\big]  \sigma ) \\
		&\hspace{0.5cm}  +\tr\big( \sigma \big[   N^3 V(N.) \ast \pn^2  + N^3 V(N(x-y)) \pn(x)\pn(y)\big] \sigma\big)\\
		&\hspace{0.5cm} + \frac12\big(1+N^{-1}\big) \int dxdy\, N^2V(N(x-y))| \langle\sigma_x, \gamma_y\rangle|^2 \\
		&\hspace{0.5cm} + \int dxdy\,N^3(Vw_\ell^2)(N(x-y))\pn^2(x)\pn^2(y) \|\sigma\|^2\\
		&\hspace{0.5cm}+ N^{-1}\|\nabla_1\eta\|^2\|\sigma\|^2- \big( \langle \pn, (-\Delta  +V_\text{ext})\pn\rangle  + \wh V(0)\|\pn\|_4^4\big)\|\sigma\|^2 , 
		\end{split}\] 
where the operators $ \wt \Phi$ and $\wt \Gamma$ are defined by
		\[\begin{split}
		\wt \Phi & = \gamma \big(-\Delta+V_\text{ext}+ N^3V(N.)\ast\pn^2 + N^3V(N(x-y))\pn(x)\pn(y)\big)\gamma \\
		&\hspace{0.4cm}+ \sigma \big(-\Delta+V_\text{ext}+ N^3V(N.)\ast\pn^2 + N^3V(N(x-y))\pn(x)\pn(y)\big)\sigma\\
		&\hspace{0.4cm}+\frac12\Big( \gamma \big(  N^3(Vf_\ell)(N(x-y))\pn(x)\pn(y)\big)\sigma+\text{h.c.}\Big)+ \big(\widehat V(0)-8\pi \mathfrak{a}_0\big)\\
		&\hspace{0.4cm} - \big( \langle \pn, (-\Delta  +V_\text{ext})\pn\rangle  + \wh V(0)\|\pn\|_4^4\big) (\gamma^2 + \sigma^2),\\
		&\hspace{0.4cm} +2\Big( \int dxdy\, N^3(Vw_\ell^2)(N(x-y))\pn^2(x)\pn^2(y) + N^{-1}\|\nabla_1\eta\|^2\Big)  \sigma^2,\\
		\wt \Gamma & =  \gamma \big(  N^3(Vf_\ell)(N(x-y))\pn(x)\pn(y)\big)\gamma\\
		&\hspace{0.5cm}+\sigma \!\big( N^3(Vf_\ell)(N(x-y))\pn(x)\pn(y)\big)\!\sigma\\
		&\hspace{0.5cm} +\Big( \sigma \big(-\Delta+V_\text{ext}+ N^3V(N.)\ast\pn^2 + N^3V(N(x-y))\pn(x)\pn(y)\big)\gamma + \text{h.c.}\Big)\\
		&\hspace{0.5cm} + \Big( \int dxdy\, N^3(Vw_\ell^2)(N(x-y))\pn^2(x)\pn^2(y) \Big)  \big( \gamma \sigma +\text{h.c.}\big)  \\
		&\hspace{0.5cm} - \big( \langle \pn, (-\Delta  +V_\text{ext})\pn\rangle  + \wh V(0)\|\pn\|_4^4 -N^{-1}\|\nabla_1\eta\|^2\big) \big( \gamma \sigma +\text{h.c.}\big)  ,\\
		\end{split}\]
where $\cC_{\cG_N}$ is defined as in \eqref{eq:cCcGN}, where
	\[\begin{split}
		\cD_{\cG_N} &= \int dx\, \big[ b(\nabla_x \eta_x)\nabla_x d_{\eta,x} + \text{h.c.}\big] \\
		&\hspace{0.5cm}+  \frac{1}{2} \int dx dy \, N^3 (Vf_\ell)(N(x-y)) \pn(x) \pn(y) \\
	&\hspace{1cm}\times \big[ \big( b^*(\gamma_x) + b(\sigma_x) \big) d_{\eta,y}^*  + d_{\eta,x}^* \big( b^*(\gamma_y) + b(\sigma_y) \big) +\text{h.c.}\big] \\
		\end{split}\]
and where the error $\wt{\cE}_{\cG_N}$ satisfies the bound
		\[ \pm \wt{\cE}_{\cG_N} \leq CN^{-1/2} ( \cH_N + \cN^2+1)(\cN+1). \]
Notice that here we used $\pm( \cN - \int dx\, b_x^* b_x)\leq N^{-1}(\cN+1)^2 $ as well as
		\[\begin{split}  N^{-1} &  \int dxdy\, N^2V(N(x-y))| \langle\sigma_x, \gamma_y\rangle|^2  \\
		&= \int dxdy\, N^3(Vw_\ell^2)(N(x-y))\pn^2(x)\pn^2(y)  +\mathcal{O}(N^{-1}). \end{split}\]		

Let us now focus on $\cD_{\cG_N}$. We proceed here very similarly as in \cite[Section 7.5]{BBCS4}, so let us focus on the main steps. First of all, by the scattering equation \eqref{eq:scatlN}, we see that
		\[\begin{split}
		&\pm \Big(\int dx\, \big[ b(\nabla_x \eta_x)\nabla_x d_{\eta,x} + \text{h.c.}\big] \\
		&\hspace{1.5cm}  - \int dxdy\, N^3(-\Delta f_\ell)(N(x-y)) \pn(x)\pn(y) \big[b_y d_{\eta,x} +\text{h.c.}\big]\Big)\\
		&\hspace{0.5cm}\leq CN^{-1/2}( \cH_N+\cN^2+1)(\cN+1).
		\end{split}\]
We can therefore write 
			\begin{align*}
			\cD_{\cG_N} 
			&= \int dx dy \, N^3 \big[-(\Delta f_\ell)(N(x-y)) + \frac{1}{2} (Vf_\ell)(N(x-y))  \big] \pn(x) \pn(y) [b_y d_{\eta,x} + \text{h.c.}] \\
			& \hspace{0.5cm}+ \frac{1}{2} \int dx dy \, N^3 (Vf_\ell)(N(x-y))  \pn(x) \pn(y) \\
			&\hspace{1.5cm}\times \big[  b( \gamma_y) d_{\eta,x}  -b_yd_{\eta,x}   + b^*(\sigma_y)d_{\eta,x}   +\text{h.c.}\big] \\
			&+ \frac{1}{2} \int dx dy \, N^3 (Vf_\ell)(N(x-y))   \pn(x) \pn(y)\\
			&\hspace{1.5cm}\times\big[ d_{\eta,x}  \big(b(\gamma_y) + b^*(\sigma_y) \big) +\text{h.c.} \big] +\wt \cE_1 \\
			&=: \cD_1 + \cD_2 + \cD_3 + \wt\cE_1
			\end{align*}
for an error $\pm \wt \cE_1 \leq CN^{-1/2}( \cH_N+\cN^2+1)(\cN+1)$. The contributions $ \cD_1$ and $\cD_2$ are easily seen to be small, using once more the scattering equation \eqref{eq:scatlN} together with the operator bounds from Lemma \ref{lm:d-bds}. This yields that
		\[ \pm \cD_1 , \pm \cD_2 \leq CN^{-1} (\cN+1)^3,  \]
Similarly, the term $\cD_3$ can be reduced to
		\[\cD_3 = \frac{1}{2} \int dx dy \, N^3 (Vf_\ell)(N(x-y))   \pn(x) \pn(y) \big[ d_{\eta,x}  b^*(\eta_y)  +\text{h.c.} \big]  +\wt\cE_2,\]
up to another error that satisfies $\pm \wt \cE_2 \leq CN^{-1/2}( \cH_N+\cN^2+1)(\cN+1)$. To extract the main contribution to this term, we adapt the arguments from \cite{BBCS4}, using the identity 
		\[\begin{split}
		d_{\eta,x}	&= \int_0^1 ds \ d_{s\eta}^* (\eta_x) - \frac{1}{N} \int_0^1 ds \ e^{-s B} \mathcal{N} b^*(\eta_x) e^{s B} 	\\
		&\hspace{0.5cm}- \frac{1}{N} \int_0^1 ds \int du dy \ \eta(u;y) e^{-sB} b_u^* a_y^* a_x e^{sB}.
		\end{split}\]
Plugging this into $ \cD_3$, extracting the main contributions with the same tools as in the previous sections and collecting the previous bounds, one eventually finds that
		\[\begin{split} 
		\cD_{\cG_N}  =&\,  \frac12 \int dx dy \, N^3 (Vf_\ell w_\ell)(N(x-y))  \pn^2(x) \pn^2(y)   \\
		&\hspace{0.5cm} \times \int dv\, \big[b^*(\sigma_v) b^*(\gamma_v) + b(\sigma_v) b(\gamma_v) + 2 b^*(\sigma_v) b(\sigma_v)\big] \\
		&  +\frac{1}{2} \int dx dy \, N^3 ( Vw_\ell f_\ell)(N(x-y)) \pn^2(x) \pn^2(y)  \Vert \sigma \Vert^2 + \cE_{\cD }
		\end{split}\]
for an error $\pm \wt \cE_{\cD } \leq CN^{-1/2}( \cH_N+\cN^2+1)(\cN+1)$. Combining this term with the quadratic part in \eqref{eq:cGNid1}, we arrive at the identity \eqref{eq:propGN} and the error bound \eqref{eq:cEcGNbnd}. Notice that we use here Eq. \eqref{eq:scatlN} and the results from Lemma \ref{3.0.sceqlemma} which yields that
		\[\begin{split}
		 & \int dxdy\,N^3(Vw_\ell^2)(N(x-y))\pn^2(x)\pn^2(y)+ N^{-1}\|\nabla_1\eta\|^2 -   \langle \pn, (-\Delta  +V_\text{ext})\pn\rangle \\
		&   -\wh V(0)\|\pn\|_4^4  + \frac12\int dxdy\,N^3(Vf_\ell w_\ell)(N(x-y))\pn^2(x)\pn^2(y) \\
		& = -  \langle \pn, (-\Delta  +V_\text{ext}+ 8\pi\mathfrak{a}_0\pn^2 )\pn\rangle   +\mathcal{O}(N^{-1}) = -\eps_{GP} + \mathcal{O}(N^{-1}).
		\end{split}\]		
In particular, this proves part $b)$ of Proposition \ref{prop:GN}. 

To prove the statements of part $a)$, we first note that, by \cite[Theorem 1.1]{BSS}, we have the lower bound $ E_N \geq N\cE_{GP}(\pn)-C$, $E_N$ denoting the ground state energy of $H_N$. On the other hand, it is straightforward that the scattering equation \eqref{eq:scatlN} and Lemma \ref{3.0.sceqlemma} imply that
\begin{align*} 
\begin{split}
\kappa_{\cG_N} =&\, N\langle\pn , (-\Delta + V_\text{ext}+(N^3V(N.)\ast \pn) \pn)\pn\rangle + \| \nabla_1\eta\|^2\\
&- \frac12\int dxdy\, \big[ 2N^3(Vw_\ell)(N(x-y)) +  N^3(Vw^2_\ell)(N(x-y))\big]\pn^2(x)\pn^2(y)+ \mathcal{O}(1) \\
=&\, N\cE_{GP}(\pn) + \mathcal{O}(1).
\end{split}
\end{align*}
Hence, for $\Omega = (1,0,\dots,0)\in \cFpn$, we can use the trial state $ U_N^* e^{B} \Omega \in  L^2_s(\bR^{3N})$ to obtain the upper bound
		\[E_N \leq \langle \Omega, e^{-B} U_N H_N U_N^* e^{B}\rangle = \langle \Omega, \cG_N\Omega\rangle  \leq N\cE_{GP}(\pn)  +C.  \]
This proves that $ |E_N - N\cE_{GP}(\pn)|\leq C$ for some constant $C>0$, independent of $N$.

Finally, it is simple to show that for any $\delta>0$, we have that
		\[ \pm (\cG_N- \kappa_{\cG_N} -\cH_N)  \leq \delta \cH_N + C\delta^{-1}(\cN+1)  \]
for some $C>0$. This follows directly from the decomposition \eqref{eq:propGN}, Cauchy-Schwarz and by observing that 
		\[\begin{split} 
		\pm& \bigg( \int dxdy\, \Big[ \Gamma(x;y)  + \Delta\eta (x;y) - N^3(Vf_\ell)(N(x-y))\pn(x)\pn(y)\Big]b_xb_y+\text{h.c.}\bigg)\\
			&\leq C(\cK+\cN+1)
		\end{split}\]
and that, by the scattering equation \eqref{eq:scatlN}, we also have that
		\[\pm \bigg( \int dxdy\,  \Big[ -\Delta\eta (x;y) + N^3(Vf_\ell)(N(x-y))\pn(x)\pn(y)\Big]b_xb_y+\text{h.c.}\bigg) \leq C(\cK+\cN+1).\]		
The remaining contributions to $\cG_N$ are simple to control and we omit the details. Together with $ \kappa_{\cG_N} = E_N + \mathcal{O}(1)$, explained above, we thus conclude the proposition.


\section{Analysis of $\cJ_{N}$}  \label{sec:JN}
The goal of this section is to prove Proposition \ref{prop:JN}. To this end, we need to control the action of the operator $A$ defined in (\ref{eq:A-def}) on the different parts of the Hamiltonian $\cG_N$, which can be decomposed as in (\ref{eq:propGN}). 

A first observation is that conjugation with $e^A$ preserves, approximately, the number of excitations. The proof of the following Lemma is very similar to the proof of \cite[Lemma 2.6]{BSS} and will therefore be omitted.
\begin{lemma} \label{lm:growA}
	Let $0<6\exps \leq \exph$. Then, for every $j\in \NN$, there is $C=C_{j,\varepsilon }>0$ such that, for all $s\in [0;1]$, we have on  $\mathcal{F}_{\perp \pn}^{\leq N}$ the estimate 
	\begin{align*} 
	e^{-sA} (\mathcal{N}+1)^j e^{sA} \leq C (\mathcal{N}+1)^j.
	\end{align*}
\end{lemma}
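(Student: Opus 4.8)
The plan is the standard Grönwall scheme. Fix $j \in \NN$ (the case $j=0$ being trivial) and a normalized $\xi \in \cFpn$, and set $f(s) = \langle e^{sA}\xi,\, (\cN+1)^j\, e^{sA}\xi\rangle$. Since $A$ is antisymmetric, $f$ is differentiable with $f'(s) = \langle e^{sA}\xi,\, [(\cN+1)^j, A]\, e^{sA}\xi\rangle$, and on the truncated Fock space $(\cN+1)^j$ is a bounded operator so $f$ is manifestly finite. Hence it suffices to establish the form bound $\pm\,[(\cN+1)^j, A] \le C_{j,\exps}\,(\cN+1)^j$ on $\cFpn$ with constant independent of $N$; Grönwall then yields $f(s) \le e^{C|s|} f(0)$ for $s\in[0;1]$, which is exactly the assertion.

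For the commutator I would split $A = A_1 + A_3 - A_1^* - A_3^*$, where, abbreviating $\wt{k}_{H,x}(\cdot) := \wt{k}_H(x;\cdot)$ (legitimate since $\wt{k}_H$ is real-valued),
\[ A_1 = \tfrac{1}{\sqrt N}\!\int\! dx\; \wt{b}_x^* \,\wt{b}^*(\wt{k}_{H,x})\, \wt{b}(\gamma_{L,x}), \qquad A_3 = \tfrac{1}{\sqrt N}\!\int\! dx\; \wt{b}_x^* \,\wt{b}^*(\wt{k}_{H,x})\, \wt{b}^*(\sigma_{L,x}), \]
so that $A_m$ shifts $\cN$ by $m\in\{1,3\}$. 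Using $[\cN, A_m] = m A_m$ and the Leibniz identity $[(\cN+1)^j, A_m] = \sum_{i=0}^{j-1}(\cN+1)^i [\cN, A_m] (\cN+1)^{j-1-i}$, together with the fact that the $\wt b^\#$ commute with functions of $\cN$ up to an integer shift, one reduces $\langle\xi, [(\cN+1)^j, A_m]\xi\rangle$ to a finite number of estimates of the form $|\langle\xi,\, A_m\, R(\cN)\,\xi\rangle| \le C \langle\xi, (\cN+1)^j\xi\rangle$, where $R$ is a polynomial of degree $j-1$ with $0 \le R(\cN) \le C(\cN+1)^{j-1}$ (and similarly for $A_m^*$, handled by conjugation).

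The main point is to prove the last estimate with the $(\cN+1)$-weight distributed evenly across the three fields; a Cauchy–Schwarz that dumps all of $R(\cN)$ onto one factor does not close. For $A_1$ one moves $R(\cN)$ inward past $\wt b(\gamma_{L,x})$, writes the shifted $R = R^{1/2}R^{1/2}$, and applies Cauchy–Schwarz in $x$, pairing $\wt b(\wt k_{H,x})\wt b_x$ on one side and $\wt b(\gamma_{L,x})$ on the other; this gives the product of $\big(\int dx\,\|\wt b(\wt k_{H,x})\wt b_x (\cN+1)^{(j-1)/2}\xi\|^2\big)^{1/2}$ and $\big(\int dx\,\|\wt b(\gamma_{L,x})(\cN+1)^{(j-1)/2}\xi\|^2\big)^{1/2}$. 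The first factor is $\le CN^{-\exph/2}\|(\cN+1)^{(j+1)/2}\xi\|$ by $\sup_x\|\wt k_H(x,\cdot)\| \le CN^{-\exph/2}$ (Lemma~\ref{lm:bds-cubic}) and $\int dx\,\wt b_x^*\cN\wt b_x \le \cN^2$; the second is $\le C\|(\cN+1)^{j/2}\xi\|$ after splitting $\gamma_{L,x} = \check g_L(\cdot - x) + [(\gamma-1)*_2\check g_L]_x$, using that $\int dx\,\wt b(\check g_L(\cdot-x))^*\wt b(\check g_L(\cdot-x))$ is $d\Gamma$ of the Fourier multiplier $|g_L|^2 \le 1$, hence $\le \cN$, together with $\|[(\gamma-1)*_2\check g_L]_x\| \le C\pn(x)$ and $\|\pn\|_2=1$. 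Multiplying, the prefactor $N^{-1/2}$ turns $\|(\cN+1)^{(j+1)/2}\xi\|$ into $\|(\cN+1)^{j/2}\xi\|$ on $\cFpn$ (where $\cN \le N$), so the whole thing is $\le CN^{-\exph/2}\langle\xi,(\cN+1)^j\xi\rangle$. The term $A_3$ is treated the same way, now pairing the two annihilations $\wt b(\wt k_{H,x})\wt b_x$ against the single creation $\wt b^*(\sigma_{L,x})$ in the Cauchy–Schwarz in $x$ and using $\|\sigma_{L,x}\| \le C\pn(x)$; the $A_m^*$ pieces are the adjoints. Summing the finitely many commutator terms gives $\pm\,[(\cN+1)^j, A] \le C_{j,\exps} N^{-\exph/2}(\cN+1)^j \le C_{j,\exps}(\cN+1)^j$.

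The only delicate step — and the reason this follows \cite[Lemma 2.6]{BSS} rather than being automatic — is precisely this weight bookkeeping in the cubic forms: the $(\cN+1)$-power must be split symmetrically through the Cauchy–Schwarz in the integration variable so that the single factor $N^{-1/2}$, combined with $\cN \le N$ on $\cFpn$, compensates exactly one half-power of $\cN$, after which the kernel bounds of Lemma~\ref{lm:bds-cubic} close the estimate.
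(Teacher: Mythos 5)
Your proposal is correct and follows essentially the same route as the paper, which explicitly refers to \cite[Lemma 2.6]{BSS} for its (omitted) proof: a Grönwall argument driven by the commutator bound $\pm[(\cN+1)^j,A]\le C(\cN+1)^j$, obtained by splitting $A$ into its $\cN$-shifting parts and distributing the $(\cN+1)$-weight symmetrically through a Cauchy--Schwarz in the integration variable, with $\sup_x\|\wt k_{H,x}\|\le CN^{-\exph/2}$, $\|\sigma_{L,x}\|,\|[(\gamma-1)*_2\check g_L]_x\|\le C\pn(x)$, and $\cN\le N$ on $\cF_{\perp\pn}^{\le N}$ closing the estimate.
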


Another important point for our analysis, is that we can often replace $A$ with the operator 
\begin{align*} 
\wt{A} = \frac{1}{\sqrt{N}} \int dx dy \, \wt{k}_H (x;y) b_x^* b_y^* \left[ b (\gamma_{L,x})  + b^* (\sigma_{L,x})  \right] - \text{h.c.} \end{align*}  
where the $\wt{b}$, $\wt{b}^*$ fields appearing in (\ref{eq:A-def}) have been replaced by $b, b^*$ operators, removing the projection $Q$ on the orthogonal complement of $\ph_0$. The difference between $A$ and 
$\wt{A}$ has the form 
\begin{equation}\label{eq:A-wtA}
A-\wt{A} = \frac{1}{\sqrt{N}} \int dx dy dz \ \left[ \rho_\sigma(x;y;z) b_x^* b_y^* b_z + \rho_\gamma(x;y;z) b_x^* b_y^* b_z^* - h.c. \right].
\end{equation} 
with $\rho_\sigma = \nu_\sigma - (Q \otimes Q \otimes Q) \nu_\sigma$, $\rho_\gamma =  \nu_\gamma - (Q \otimes Q \otimes Q) \nu_\gamma$ and $\nu_\sigma (x,y,z) = \wt{k}_H (x,y) \sigma_L (x,z)$, $\nu_\gamma (x,y,z) = \wt{k}_H (x,y) \gamma_L (x,z)$. With the bounds 
\begin{equation} \label{eq:rho}
	\begin{split}
&\| \rho_\gamma \|, \| \rho_\sigma \|, \sup_{x \in \bR^3} \| \rho_\gamma (x, . , .) \|, \sup_{x \in \bR^3} \| \rho_\gamma (., x , .) \|, \sup_{x \in \bR^3} \| \rho_\gamma (., . , x) \| \leq C, \\
&\sup_{x,z\in \mathbb{R}^3} \Vert \rho_\sigma(x,., z) \Vert, \sup_{y,z \in \mathbb{R}^3} \Vert \rho_\sigma(.,y, z) \Vert \leq C. \end{split} \end{equation}  
it is possible to control terms arising from the difference (\ref{eq:A-wtA}) (estimating these terms on $\cF^{\leq N}_{\perp \pn}$ is simple, because we consider states where all particles are orthogonal to $\pn$). We will omit most details for these contributions.


\subsection{Analysis of $e^{-A} [ \mathcal{Q}_{\cG_N} - \cK - \cV_\text{ext}] e^A$}

We start by conjugating the quadratic part of $\cG_N$, defined in (\ref{eq:cQcGN}), after subtracting the kinetic energy and the external potential (which will be considered in the next subsections). We will make use of the following lemma.  

\begin{lemma} \label{lemma:conjquadA}
Let $F : \bR^3 \times \bR^3 \to \mathbb{C}$, with $F(y,x) = \bar{F} (x,y)$. Then there exists $C>0$ such that
\begin{equation} \label{eq:F1-bd} \pm \int F(x,y) \left[ b_x^* b_y , A \right] dx dy \leq \frac{C}{\sqrt{N}}    \min \left\{ \| F \|_2 , \sup_x \int dy |F (x,y)| \right\}  (\cN + 1)^2. \end{equation} 
Moreover, 
\begin{equation} \label{eq:F2-bd} \pm \int F (x,y) \left[ b_x^* b_y^* + b_y b_x , A \right] dx dy \leq \frac{C \| F \|_2}{\sqrt{N}}  (\cN+1)^2. \end{equation}  
\end{lemma}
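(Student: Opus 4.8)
The plan is to compute the commutators $[b_x^* b_y, A]$ and $[b_x^* b_y^* + b_y b_x, A]$ explicitly using the definition of $A$ in \eqref{eq:A-def} and the commutation relations \eqref{eq:comm-b}, then estimate the resulting expressions (which are again cubic and quintic in creation/annihilation operators) by Cauchy--Schwarz, using the kernel bounds of Lemma \ref{lm:bds-cubic} (in particular $\|\wt k_H\| \leq CN^{-\exph/2}$, $\sup_x \|\wt k_H(x,\cdot)\| \leq CN^{-\exph/2}$, $\|\wt k_H(\cdot,y)\| \leq CN^{-\exph/2}\pn(y)$, and $\|\sigma_{L,x}\|, \|\gamma_{L,x}\| \leq C\pn(x)$) together with Lemma \ref{lm:growA} and the standard bounds on $b^\sharp$ in terms of powers of $\cN$. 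First I would reduce to $\wt A$ in place of $A$: the difference $A - \wt A$ has the form \eqref{eq:A-wtA} with the bounds \eqref{eq:rho}, and since we only evaluate on $\cF_{\perp\pn}^{\leq N}$, the extra terms produced are controlled exactly as the main ones (with the same, in fact slightly better, powers of $N$ from $\|\rho_\sigma\|, \|\rho_\gamma\| \leq C$), so I would state this and omit those details.

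For \eqref{eq:F1-bd}: writing $\wt A = \frac{1}{\sqrt N}\int du\,dv\,\wt k_H(u;v) b_u^* b_v^*(b(\gamma_{L,u}) + b^*(\sigma_{L,u})) - \mathrm{h.c.}$, the commutator $[b_x^* b_y, \wt A]$ produces, via $[b_x^* b_y, b_u^*] = \delta(y-u) b_x^*$ and similar relations, a sum of terms each of which is cubic in the $b$-fields (times kernels) and their adjoints, plus lower-order corrections coming from the non-canonical pieces of \eqref{eq:comm-b}, which carry an extra $N^{-1}$ and are harmless. A typical surviving term looks like $\frac{1}{\sqrt N}\int F(x,y)\wt k_H(y;v) b_x^* b_v^* b(\gamma_{L,y})\,dx\,dy\,dv + \mathrm{h.c.}$ Pairing against a vector $\xi$ and using Cauchy--Schwarz, one bounds $|\langle\xi, (\cdots)\xi\rangle|$ by $\frac{C}{\sqrt N}\int |F(x,y)|\,\|\wt k_H(y;\cdot)\|\,\|b_x^*(\cN+1)^{1/2}\xi\|\,\|\gamma_{L,y}\|\,\|(\cN+1)^{1/2}\xi\|$, and then using $\|\wt k_H(y;\cdot)\| \leq CN^{-\exph/2}\pn(y)$ and $\|\gamma_{L,y}\| \leq C\pn(y)$ together with either $\int|F(x,y)|\,\pn(y)^2\,dy \leq \|F(x,\cdot)\|_{L^1_y}\|\pn\|_\infty^2$ (giving the $\sup_x\int dy|F(x,y)|$ bound after integrating $\|b_x^*\cdots\xi\|^2$ over $x$ to get $(\cN+1)$) or $\int|F(x,y)|\pn(y)^2\,dxdy \leq \|F\|_2\|\pn^2\|_2$ (giving the $\|F\|_2$ bound). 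The two integration orders give the two alternatives inside the minimum; in all cases one collects $(\cN+1)^2$ and an overall $N^{-\exph/2} \leq N^{-1/2}$ is absorbed into the constant (since we only claim $1/\sqrt N$). The terms with $b^*(\sigma_{L,y})$ in place of $b(\gamma_{L,y})$ are identical using $\|\sigma_{L,y}\| \leq C\pn(y)$.

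For \eqref{eq:F2-bd}: the computation is parallel; $[b_x^* b_y^* + b_y b_x, \wt A]$ again yields cubic terms in the $b$-fields times kernels, and the estimate proceeds by Cauchy--Schwarz exactly as above. The only structural difference is that now the external kernel $F$ is never contracted against $\wt k_H(\cdot, y)$ in a way that would let one extract a pointwise $\pn$ factor on both legs, so only the $\|F\|_2$-type estimate is available — hence the statement records just $\|F\|_2/\sqrt N$ there. I would carry out one representative term in each case and note the rest are analogous. The main obstacle is purely bookkeeping: keeping track of the handful of distinct contractions produced by the three $b$-factors in $A$ and, in each, choosing the order of integration (in $x$, in $y$, or in the internal variable) that matches the claimed norm of $F$; the analytic input is entirely routine once Lemma \ref{lm:bds-cubic} and Lemma \ref{lm:growA} are in hand. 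There are no genuine difficulties, only the need to be careful that the $N^{-1}$-suppressed commutator corrections from \eqref{eq:comm-b} and the $A - \wt A$ discrepancy are each dominated by the stated bound.
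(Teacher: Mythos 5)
Your overall approach --- expand the commutators $[b_x^* b_y, \wt A]$ and $[b_x^* b_y^* + b_y b_x, \wt A]$ into explicit cubic contractions, estimate each by Cauchy--Schwarz with the kernel bounds of Lemma \ref{lm:bds-cubic}, and handle the $A-\wt A$ discrepancy via \eqref{eq:rho} --- is exactly the one the paper uses. But your estimate chain leans on a bound that is false. Lemma \ref{lm:bds-cubic} gives $\|\sigma_{L,x}\|\le C\pn(x)$ and $\|\big[(\gamma-1)*_2\check{g}_L\big]_x\|\le C\pn(x)$ together with the \emph{operator-norm} bound $\|\gamma_L\|_{\text{op}}\le C$; it does \emph{not} give $\|\gamma_{L,x}\|\le C\pn(x)$. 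Indeed $\gamma_{L,x}(z)=\check{g}_L(x-z)+\big[(\gamma-1)*_2\check{g}_L\big](x;z)$, and the identity part has $\|\check{g}_L\|_2\sim N^{3\tau/2}$, which grows with $N$ and is certainly not localised by $\pn(x)$. You use the false pointwise bound precisely where it matters: the step $\|\wt{k}_{H,y}\|\,\|\gamma_{L,y}\|\le CN^{-\exph/2}\pn(y)^2$ is what lets you convert $\int|F(x,y)|\pn(y)^2\,dx\,dy$ into $\|F\|_2\,\|\pn^2\|_2$, and without the second $\pn(y)$ factor this Cauchy--Schwarz does not close. The correct route, which the paper takes in the bounds for $\Upsilon_1,\dots,\Upsilon_7$, is to keep the factor $b(\gamma_{L,y})$ inside the $y$-integral and use $\int dy\,\|b(\gamma_{L,y})\xi\|^2=\langle\xi, d\Gamma(\gamma_L^2)\xi\rangle\le\|\gamma_L\|_{\text{op}}^2\,\|\cN^{1/2}\xi\|^2$; only the $\sigma_L$ (and $(\gamma-1)*_2\check{g}_L$) legs admit the pointwise $\pn$ bound you invoked.

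A secondary, smaller point: your stated reason that only $\|F\|_2$ is available for \eqref{eq:F2-bd} is imprecise. The actual structural difference is that normal-ordering $[b_x^*b_y^*,A]$ produces contractions of the form $(F\wt{k}_H)(x;x)$ multiplying a single $b$-field, i.e.\ a trace-type term. Controlling $\int dx\,\|b(\gamma_{L,x})\xi\|\,|(F\wt{k}_H)(x;x)|$ requires $\|F_x\|_{L^2}$ and hence $\|F\|_2$; the operator-type norm $\sup_x\int dy\,|F(x,y)|$ does not control the diagonal $(F\wt{k}_H)(x;x)$.
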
 
\begin{proof}
A tedious but straightforward computation shows that, on $\cF_{\perp \ph_0}^{\leq N}$,  
\begin{equation}\label{eq:upsi-def}  \begin{split} 
\int dx dy \, &F (x,y) [ b_x^*  b_y , A ] \\ = \; &\frac{1}{\sqrt{N}} \int dx \big( b^* (\gamma_{L,x}) + b (\sigma_{L,x}) \big) b (\wt{k}_{H,x}) (1 - \cN / N) b (\overline{F}_x)  \\ &+  \frac{1}{\sqrt{N}} \int dx \big( b^* (\gamma_{L,x}) + b (\sigma_{L,x}) \big) b_x  (1 - \cN / N) b ( \overline{(F\wt{k}_H)}_x) 
\\ &-\frac{1}{\sqrt{N}} \int dx \, b_x^* b^* (\wt{k}_{H,x}) (1- \cN / N) b ( \overline{(\gamma_L F)}_x) \\ &+ \frac{1}{\sqrt{N}} \int dx \, (1- \cN/N) b_x b (\wt{k}_{H,x}) b (\overline{(\sigma_L F)}_x) 
\\ &- \frac{1}{N^{3/2}} \int dx dv  \big( b^* (\gamma_{L,x}) + b (\sigma_{L,x}) \big) \big( a^* (F_v) a_x b (\overline{\wt{k}_{H,x}}) + b_x a^* (F_v) a (\overline{\wt{k}_{H,x}}) \big) b_v \\ &+ \frac{1}{N^{3/2}} \int dx dv b_x^* b^* (\wt{k}_{H,x}) a^* (F_v) a (\overline{\gamma_{L,x}}) b_v \\ &- \frac{1}{N^{3/2}} \int dx dv \, a^* (F_v) a( \sigma_{L,x}) b_x b (\overline{\wt{k}_{H,x}}) b_v \\
&+\frac{1}{\sqrt{N}} \int dx dy \ F(x,y) [b_x^* b_y, A-\wt{A}] \\
=: \; & \sum_{j=1}^7 \Upsilon_j +\frac{1}{\sqrt{N}} \int dx dy \ F(x,y) [b_x^* b_y, A-\wt{A}]. \end{split} \end{equation} 
Here we interpret $F (x,y)$ as the kernel of a symmetric operator, denoted again with $F$, and we use for example the notation $F \wt{k}_H$ to indicate the product of the two operators $F$ and $\wt{k}_H$. We can bound 
\begin{equation}\label{eq:upsi1}  \begin{split} 
|\langle \xi , \Upsilon_1 \xi \rangle | \leq \; &\frac{1}{\sqrt{N}} \int dx \, \| b (\gamma_{L,x}) \xi \| \| b (\wt{k}_{H,x})  b (\overline{F}_x) \xi \| \\ &+ \frac{1}{\sqrt{N}} \int dx \| (\cN+1)^{3/4} \xi \| \| b (\sigma_{L,x}) b (\wt{k}_{H,x}) b (\overline{F}_x) (\cN+1)^{-3/4} \xi \| \\ \leq \; &\frac{1}{\sqrt{N}} \left[ \sup_x \| \wt{k}_{H,x} \| \right] \left[ \| \gamma_L \|_\text{op} + \| \sigma_L \|_\text{HS} \right] \| F \|_\text{op} \| \| (\cN+1)^{3/4} \xi \|^2. \end{split} \end{equation} 
By Lemma \ref{lm:bds-cubic} and because $\| F \|_\text{op} \leq \sup_x \int dy |F(x,y)|$ we conclude that $\Upsilon_1 \leq C N^{-1/2} (\cN+1)^{3/2}$. In (\ref{eq:upsi1}), we used the fact that  
\[ \int dx \, \| b (F_x) \xi \|^2 \leq \langle \xi , d\Gamma (F^2) \xi \rangle \leq \| F \|_\text{op}^2 \| \cN^{1/2} \xi \|^2 \]
(and similarly with $F_x$ replaced by $\gamma_{L,x}$). As for the second term on the r.h.s. of (\ref{eq:upsi-def}), we have 
\begin{align*}
\begin{split} 
|\langle \xi , \Upsilon_2 \, \xi \rangle | \leq \; &\frac{1}{\sqrt{N}} \int dx \, \left[ \| b (\gamma_{L,x}) \xi \| + \| \sigma_{L,x} \| \| (\cN+1)^{3/4} \xi \| \right] \| (F \wt{k}_H)_x \| \| b_x (\cN+1)^{1/4} \xi \|. \end{split} 
\end{align*} 
We can bound 
\[ \begin{split} \| (F \wt{k}_H)_x \|_2^2 = &\; \int \left| \int dy F(z,y) \wt{k}_H (y,x) dy \right|^2 dz \\ = &\; \int F (z,y_1) \wt{k}_H (y_1, x) F (y_2, z)  \wt{k}_H (x,y_2) dy_1 dy_2 dz  \\ 
\leq \; &\int |F (z,y_1)| |F (y_2, z)| |\wt{k}_H (y_1,x)|^2 dy_1 dy_2 dz  
\leq  \left[ \sup_z \int |F(y, z)| dy \right]^2 \| \wt{k}_{H, x} \|_2^2.   \end{split} \] 
Alternatively, we can estimate $\| (F\wt{k}_H )_x \| \leq \| F \|_2 \| \wt{k}_{H,x} \|$. Thus, with Lemma \ref{lm:bds-cubic}, 
\[   |\langle \xi , \Upsilon_2 \xi \rangle | \leq \frac{C}{\sqrt{N}}  \min \left\{ \| F \|_2 , \sup_x \int dy |F (x,y)| \right\}  
\| (\cN+1)^{3/4} \xi \|^2.  \]

The contribution of $\Upsilon_3$ and $\Upsilon_4$ can be bounded similarly as the one of $\Upsilon_1$ (using the fact that $\| \gamma_L F \|_\text{op} \leq \| \gamma_L \|_\text{op} \| F \|_\text{op} \leq C \sup_x \int dy \, |F(x,y)|$ and, analogously, $\| \sigma_L F \|_\text{op} \leq C \sup_x \int dy \, |F(x,y)|$). Let us consider $\Upsilon_5$. We first bring the term into normal order, then we have, with Lemma \ref{lm:bds-cubic},  
\[ \begin{split} 
|\langle \xi , \Upsilon_5 \xi \rangle | &\leq \frac{2}{N^{3/2}} \int \| a (F_v) (b (\gamma_{L,x}) + b^*(\sigma_{L,x})  \xi \| \| a_x b(\overline{\wt{k}_{H,x}}) b_v \xi \| dx dv \\&\quad+ \frac{1}{N^{3/2}} \int dx \ \Vert (b (\gamma_{L,x}) + b^*(\sigma_{L,x}) \xi \Vert \Vert b(\overline{\wt{k}_{H,x}}) b(F_x) \xi \Vert \\ &\leq \frac{3}{N^{3/2}} \| F \|_\text{op} \| \gamma_L \|_\text{op} \left[ \sup_x \| \wt{k}_{H,x} \| \right]  \| \cN \xi \| \| \cN^{3/2} \xi \| \\ &\leq \frac{C}{\sqrt{N}}  \left[ \sup_x \int dy \, |F(x,y)| \right] \, \| (\cN +1)^{3/4} \xi \|^2. \end{split} \]
The terms $\Upsilon_6, \Upsilon_7$ can be bounded similarly. As for $\Upsilon_8$, we proceed as explained after (\ref{eq:A-wtA}), using the bounds from (\ref{eq:rho}). This completes the proof of (\ref{eq:F1-bd}). 

To show (\ref{eq:F2-bd}), we compute the commutator 
\begin{equation}\label{eq:F2-1}  \int dx dy \, F (x,y) \left[ b_x^* b_y^*  + b_x b_y , A \right]. \end{equation} 
We find contributions similar to those appearing on the r.h.s. of (\ref{eq:upsi-def}), with some creation operators replaced by annihilation operators or vice versa. After normal ordering, these contributions can be bounded as 
we did above. Through normal ordering, however, commutators produce new terms. Let us consider an example. Similarly to $\Upsilon_1$ in (\ref{eq:upsi-def}), the operator (\ref{eq:F2-1}) contains the term  
\[ \frac{1}{\sqrt{N}} \int dx \, \big( b^* (\gamma_{L,x}) + b (\sigma_{L,x}) \big) b (\overline{\wt{k}_{H,x}}) (1 - \cN / N) b^* (F_x). \]
Normal ordering produces a new term having the form (we focus here on the largest contribution to the commutator $[ b(\overline{\wt{k}_{H,x}}) , b^* (F_x) ]$) 
\[ \wt{\Upsilon} = \frac{1}{\sqrt{N}} \int dx \, \big( b^* (\gamma_{L,x}) + b (\sigma_{L,x}) \big) \langle \overline{\wt{k}_{H,x}} , F_x \rangle = \frac{1}{\sqrt{N}} \int dx \, \big( b^* (\gamma_{L,x}) + b (\sigma_{L,x}) \big) (F \wt{k}_H) (x;x). \]
This contribution can be bounded, using Lemma \ref{lm:bds-cubic}, by 
\[ \begin{split}  |\langle \xi , \wt{\Upsilon} \xi \rangle | &\leq \frac{1}{\sqrt{N}} \int dx \| b (\gamma_{L,x}) \xi \| \| \xi \| |(F \wt{k}_H) (x;x)| \leq \frac{C \| F \|_2}{\sqrt{N}} \| \cN^{1/2} \xi \| \| \xi \|. \end{split} \]
(This term and similar contributions are the reason why (\ref{eq:F2-bd}) requires $F$ to be an Hilbert-Schmidt operator). Other contributions emerging from (\ref{eq:F2-1}) can be bounded analogously (or as we did above for the terms $\Upsilon_j$, $j=1,\dots ,7$). Again, the contribution arising from $A-\wt{A}$ can be controlled with the bounds in (\ref{eq:rho}). 
\end{proof} 

\begin{lemma} \label{lm:QA-conj} 
Let $\cQ_{\cG_N}$ denote the quadratic part of the excitation Hamiltonian $\cG_N$, defined in (\ref{eq:cQcGN}). 
Then, we have 
\[ e^{-A} \big[ \cQ_{\cG_N} - \cK - \cV_\text{ext} \big] e^A = \big[ \cQ_{\cG_N} - \cK - \cV_\text{ext} \big] + \cE_1 \]
where
\[ \pm \cE_1 \leq \frac{C}{\sqrt{N}} (\cN+1)^{3/2}.  \]
\end{lemma}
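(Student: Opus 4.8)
The plan is to use the fundamental theorem of calculus along the conjugation, writing
\[ e^{-A} \big[ \cQ_{\cG_N} - \cK - \cV_\text{ext} \big] e^A - \big[ \cQ_{\cG_N} - \cK - \cV_\text{ext} \big] = \int_0^1 ds \, e^{-sA} \big[ \big( \cQ_{\cG_N} - \cK - \cV_\text{ext} \big), A \big] e^{sA}, \]
so that everything reduces to estimating the commutator of $A$ with the quadratic operator $\cQ_{\cG_N} - \cK - \cV_\text{ext}$ and then conjugating the resulting error back with $e^{sA}$. Recall from \eqref{eq:cQcGN} that $\cQ_{\cG_N} = \int dxdy \, \Phi(x;y) b_x^* b_y + \tfrac12 \int dxdy \, \Gamma(x;y) (b_x^* b_y^* + b_x b_y)$; subtracting $\cK + \cV_\text{ext}$ removes precisely the unbounded pieces of the diagonal kernel $\Phi$, leaving a diagonal operator with kernel $F_\Phi := \Phi - (-\Delta + V_\text{ext})$ and an off-diagonal operator with kernel $\Gamma$. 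The key point is that, by \eqref{eq:Phi-def} and \eqref{eq:Gamma-def} together with Lemma \ref{lm:bndseta} and the bounds on $N^3(Vf_\ell)(N(x-y))\pn(x)\pn(y)$ as a bounded/Hilbert-Schmidt operator (the same arguments as in \eqref{eq:kern-3}), both $F_\Phi$ and $\Gamma$ are bounded operators with finite Hilbert-Schmidt norm, uniformly in $N$; in particular $\| F_\Phi \|_2, \| \Gamma \|_2 \leq C$ and $\sup_x \int dy |F_\Phi(x,y)| \leq C$ (the latter from the pointwise bounds $|\Phi(x;y) - (-\Delta+V_\text{ext})(x;y)| \lesssim \pn(x)\pn(y)$ plus rapid decay of $\pn$).

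With these inputs, I would apply Lemma \ref{lemma:conjquadA} directly: the bound \eqref{eq:F1-bd} applied with $F = F_\Phi$ controls $\int F_\Phi(x,y) [b_x^* b_y, A] \, dxdy \leq C N^{-1/2} (\cN+1)^2$, and the bound \eqref{eq:F2-bd} applied with $F = \tfrac12 \Gamma$ controls $\tfrac12 \int \Gamma(x,y)[b_x^* b_y^* + b_y b_x, A] \, dxdy \leq C N^{-1/2}(\cN+1)^2$. Hence $\pm [\cQ_{\cG_N} - \cK - \cV_\text{ext}, A] \leq C N^{-1/2}(\cN+1)^2$ on $\cF_{\perp\pn}^{\leq N}$. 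It then remains to propagate this estimate through the conjugation: for any normalized $\xi$ and $s \in [0;1]$ we write $\langle \xi, e^{-sA}[\cQ_{\cG_N} - \cK - \cV_\text{ext}, A] e^{sA} \xi \rangle \leq C N^{-1/2} \langle \xi, e^{-sA}(\cN+1)^2 e^{sA} \xi \rangle \leq C N^{-1/2} \langle \xi, (\cN+1)^2 \xi \rangle$ using Lemma \ref{lm:growA}. Integrating over $s \in [0;1]$ gives $\pm \cE_1 \leq C N^{-1/2}(\cN+1)^2$.

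There is a small gap between this $(\cN+1)^2$ bound and the claimed $(\cN+1)^{3/2}$; to recover the sharper power I would look more carefully at the structure of the terms $\Upsilon_1,\dots,\Upsilon_7$ in the proof of Lemma \ref{lemma:conjquadA}. Inspecting those expressions, each term actually carries at most $\|(\cN+1)^{3/4}\xi\|^2 = \langle \xi, (\cN+1)^{3/2}\xi\rangle$ (this is visible in the displayed bounds for $\Upsilon_1,\Upsilon_2,\Upsilon_5$ in that proof), except that normal-ordering corrections in the pairing commutator \eqref{eq:F2-1}, such as the term $\wt\Upsilon$, are bounded by $N^{-1/2}\|\cN^{1/2}\xi\|\|\xi\| \leq N^{-1/2}\langle\xi,(\cN+1)\xi\rangle$, which is even better. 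So the genuine statement of Lemma \ref{lemma:conjquadA} already yields $\pm[\cQ_{\cG_N}-\cK-\cV_\text{ext},A]\leq CN^{-1/2}(\cN+1)^{3/2}$ (one should state Lemma \ref{lemma:conjquadA} with $(\cN+1)^{3/2}$ in place of $(\cN+1)^2$, or simply track the power through the proof), and then Lemma \ref{lm:growA} gives $\pm \cE_1 \leq C N^{-1/2}(\cN+1)^{3/2}$ after integration in $s$. The main obstacle is really the bookkeeping: making sure that all the off-diagonal and normal-ordering corrections generated by Lemma \ref{lemma:conjquadA} respect the $(\cN+1)^{3/2}$ (rather than $(\cN+1)^2$) power and that the kernels $F_\Phi$ and $\Gamma$ genuinely satisfy the two-sided norm bounds $\|\cdot\|_2 \leq C$ and $\sup_x\int|F_\Phi(x,y)|dy \leq C$ uniformly in $N$ — both of which follow from Lemma \ref{lm:bndseta}, the scattering equation \eqref{eq:scatl}, and the rapid decay of $\pn$ exactly as in the analysis of \eqref{eq:kern-3}.
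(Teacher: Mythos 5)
Your strategy is the same as the paper's: Duhamel expansion, apply Lemma \ref{lemma:conjquadA} to the diagonal kernel $\Phi'=\Phi-(-\Delta+V_\text{ext})$ and the off-diagonal kernel $\Gamma$, and propagate through $e^{sA}$ with Lemma \ref{lm:growA}. Your observation that the proof of Lemma \ref{lemma:conjquadA} actually yields $\|(\cN+1)^{3/4}\xi\|^2$ is correct, and the paper's own proof of Lemma \ref{lm:QA-conj} silently uses this sharper $(\cN+1)^{3/2}$ bound even though Lemma \ref{lemma:conjquadA} is only stated with $(\cN+1)^2$.

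There is one genuine imprecision that you should fix. You assert that $F_\Phi:=\Phi-(-\Delta+V_\text{ext})$ satisfies $\|F_\Phi\|_2\le C$ and has a pointwise kernel bound $|F_\Phi(x;y)|\lesssim\pn(x)\pn(y)$. Neither is literally true: $F_\Phi$ still contains pieces such as $\gamma\,(N^3V(N\cdot)*\pn^2)\,\gamma$, whose leading contribution is the multiplication operator by $N^3V(N\cdot)*\pn^2$, with a Dirac-delta integral kernel on the diagonal. That operator is bounded but not Hilbert--Schmidt, so $\|F_\Phi\|_2$ diverges. The paper handles this by splitting $\Phi'$ into two sets of terms: those not containing $-\Delta$ or $V_\text{ext}$ (which have bounded operator norm/Schur-type norm $\sup_x\int dy\,|F(x,y)|$, the relevant quantity for these multiplication-type pieces) are treated via the $\sup_x\int dy\,|F|$ branch of \eqref{eq:F1-bd}, while the genuine correction terms $(\gamma-1)(-\Delta+V_\text{ext})\gamma$, $(-\Delta+V_\text{ext})(\gamma-1)$, $\sigma(-\Delta+V_\text{ext})\sigma$ are Hilbert--Schmidt by Lemma \ref{lm:bds-cubic} and go through the $\|F\|_2$ branch. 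Your argument recovers after making this split explicit — the $\min$ in \eqref{eq:F1-bd} is there precisely so that each portion of $\Phi'$ can be estimated by whichever norm is finite — but as written the single global claim about $\|F_\Phi\|_2$ is false and obscures that the two parts are bounded by different norms. The analysis of $\Gamma$ is fine: $\|\Gamma\|_\text{HS}\le C$ is established around \eqref{eq:HStGamma}, so the $\|F\|_2$ branch of \eqref{eq:F2-bd} applies directly.
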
 

\begin{proof} 
We have 
\begin{equation}\label{eq:eQe} \begin{split} 
e^{-A} \cQ_{\cG_N} e^A = \; &\cQ_{\cG_N} + \int_0^1 ds \, e^{-sA} \big[ \cQ_{\cG_N} , A \big] e^{sA} \\ = 
\; &\cQ_{\cG_N} + \int_0^1 ds \, \int dx dy  \, \Phi' (x;y) \, e^{-sA} \big[ b_x^* b_y , A \big] e^{sA} \\ &+ \frac{1}{2} \int_0^1 ds \, \int dx dy \,  \Gamma (x;y) \, e^{sA} \big[ b_x^* b_y^* + b_x b_y , A \big] e^{sA} \end{split} \end{equation} 
with $\Phi' = \Phi - \left[ -\Delta + V_\text{ext} \right]$ and $\Phi$ and $\Gamma$ as defined in (\ref{eq:Phi-def}), (\ref{eq:Gamma-def}). Let us consider first the diagonal term, proportional to $\Phi'$. The contribution from terms in (\ref{eq:Phi-def}) that do not contain $-\Delta$ or $V_\text{ext}$ can be handled with (\ref{eq:F1-bd}) (using, on the r.h.s., the quantity $\sup_x \int dy \, |F(x,y)|$). To deal with the remaining terms (after subtracting $-\Delta$ and $V_\text{ext}$) we observe that, by Lemma \ref{lm:bds-cubic}, the operators
\[ (\gamma - 1) (-\Delta + V_\text{ext}) \gamma , \quad (-\Delta + V_\text{ext}) (\gamma - 1) , \quad \sigma (-\Delta + V_\text{ext}) \sigma \]
all have a uniformly bounded Hilbert-Schmidt norm; thus we can apply again (\ref{eq:F1-bd}), this time using $\| F \|_2$ on the right hand side. We conclude that 
\[\begin{split}  \left|  \int dx dy  \, \Phi' (x;y) \, \langle \xi, e^{-sA}  \big[ b_x^* b_y , A \big] e^{sA} \xi \rangle \right| &\leq \frac{C}{\sqrt{N}} \| (\cN + 1)^{3/4} e^{-sA} \xi \|^2 \\ & \leq \frac{C}{\sqrt{N}} \| (\cN + 1)^{3/4} \xi \|. \end{split} \]
 where in the last step we used Lemma \ref{lm:growA}. 
 
Let us now consider the off-diagonal term on the r.h.s. of (\ref{eq:eQe}), proportional to $\Gamma$. Here, we use (\ref{eq:F2-bd}), combined with the bound \eqref{eq:HStGamma} and the definitions \eqref{eq:Gamma-def}, \eqref{eq:deftildeGamma} showing that 
\begin{align*}
\Vert \Gamma \Vert_\text{HS}
\leq \Vert \wt{\Gamma}\Vert_\text{HS} +  \Vert \sigma [N^3 V(N.)*\pn^2- 8\pi\frak{a}_0 \pn^2] \gamma \Vert_\text{HS}
\leq \Vert \wt{\Gamma}\Vert_\text{HS} + C \Vert \sigma \Vert_\text{HS}\leq C \,.
\end{align*}
We obtain
\[\begin{split}  \left|  \int dx dy  \, \Gamma (x;y) \, \langle \xi, e^{-sA}  \big[ b_x^* b^*_y + b_x b_y , A \big] e^{sA} \xi \rangle \right|  \leq \frac{C}{\sqrt{N}} \| (\cN + 1)^{3/4} \xi \|^2. \end{split} \]
Inserting in (\ref{eq:eQe}) and integrating over $s \in [0;1]$, we obtain the claim. 
\end{proof} 

\subsection{Analysis $e^{-A} \mathcal{V}_{ext} e^A$}

\begin{lemma}\label{lm:comm-extA}
Let $0<6\exps \leq \exph$, then we have
\begin{equation}\label{eq:comm-extA} \pm \big[ \cV_\text{ext} , A \big] \leq \frac{C}{\sqrt{N}} \left[ \cV_\text{ext} (\cN+1)^{1/2} + (\cN+1)^{3/2}  \right]. \end{equation} 
\end{lemma}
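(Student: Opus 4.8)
The plan is to compute the commutator $[\cV_\text{ext}, A]$ explicitly and to bound each resulting term by the right-hand side of (\ref{eq:comm-extA}), using the kernel estimates from Lemma \ref{lm:bds-cubic}, the Gaussian-cutoff bound (\ref{eq:Vextconvg}), and the $\wt{b}$-field bounds. First I would recall that $\cV_\text{ext} = \int dz\, V_\text{ext}(z) a_z^* a_z$ and that $A$, defined in (\ref{eq:A-def}), is a sum of a term cubic in $\wt{b}^*$ (the piece containing $\wt{b}^*(\sigma_{L,x})$) and a term with two $\wt{b}^*$ and one $\wt{b}$ (the piece containing $\wt{b}(\gamma_{L,x})$), minus their adjoints. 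Commuting $\cV_\text{ext}$ through $A$ acts, via $[a_z^* a_z, \wt{b}_x^*] = \int du\, V_\text{ext}(z) Q(x;u) \delta(z-u) \wt{b}_z^*$-type relations, by replacing each of the three factors $\wt{b}_x^*, \wt{b}_y^*, \wt{b}(\gamma_{L,x})$ (or $\wt{b}^*(\sigma_{L,x})$) in turn with a $V_\text{ext}$-weighted version: schematically, $\wt{b}_x^* \mapsto b^*((V_\text{ext}Q)_x)$, and $\gamma_{L,x} \mapsto V_\text{ext} Q\gamma_{L,x}$, $\sigma_{L,x} \mapsto V_\text{ext}Q\sigma_{L,x}$. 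As usual, it is convenient to first replace $A$ by $\wt{A}$ (dropping the $Q$'s) at the cost of the error (\ref{eq:A-wtA}), whose commutator with $\cV_\text{ext}$ can be controlled using (\ref{eq:rho}) together with $V_\text{ext}$-weighted analogues; I would defer those routine contributions.

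\textbf{Key estimates.} For the main term $\wt{A}$, the commutator $[\cV_\text{ext}, \wt{A}]$ is a sum of six cubic expressions (three from each half of $\wt{A}$), each of the form $N^{-1/2}\int dx dy\, \wt{k}_H(x;y)$ times a product of three (modified) creation/annihilation operators, where exactly one of them carries a factor of $V_\text{ext}$. The crucial inputs are: (i) $\sup_x\|\wt{k}_{H,x}\| \leq CN^{-\eps/2}$ and $\|\wt{k}_H(\cdot,y)\|\leq CN^{-\eps/2}\pn(y)$ from (\ref{eq:bds-cubic1}); (ii) $\|\sigma_{L,x}\|, \|(\gamma-1)*_2\check g_L)_x\| \leq C\pn(x)$ and the operator bounds $\|V_\text{ext}\sigma_L\|, \|V_\text{ext}[(\gamma-1)*_2\check g_L]\| \leq C$ from Lemma \ref{lm:bds-cubic}; and (iii) the bound $\|V_\text{ext}\pn\| \leq C$ (from the exponential decay (\ref{eq:expdecaypn}) of $\pn$ and at-most-exponential growth of $V_\text{ext}$), needed to handle the term where $V_\text{ext}$ lands on the factor $\wt{b}_y^*$ so that $\wt{k}_H(x;y)$ gets paired with $V_\text{ext}(y)\pn(y)$. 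Concretely, in the term where $V_\text{ext}$ acts on $\wt{b}_x^*$, I would estimate $\int dx\,\|b^*((V_\text{ext}Q)_x)\eta\|\cdots$ by Cauchy--Schwarz, extracting $\int dx\,\|V_\text{ext}(\cdot)\wt{k}_H(x;\cdot)\pn(\cdot)\|^2 \leq C$ (using $\|V_\text{ext}\pn\|\leq C$ and $\sup_x\|\wt k_{H,x}\|\leq C$), leaving a factor $\cV_\text{ext}^{1/2}(\cN+1)^{1/2}$ from the other two fields; this gives the $\cV_\text{ext}(\cN+1)^{1/2}$ contribution. In the terms where $V_\text{ext}$ acts on the $\gamma_{L,x}$ or $\sigma_{L,x}$ slot, the operator bounds in (ii) absorb $V_\text{ext}$ entirely and yield only $(\cN+1)^{3/2}$.

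\textbf{Main obstacle.} The delicate point is the term in which $V_\text{ext}$ falls on the ``inner'' leg $\wt b_y^*$, because then one must estimate $N^{-1/2}\int dx dy\, \wt k_H(x;y) V_\text{ext}(y)(\cdots)$ and the factor $V_\text{ext}(y)$ cannot be absorbed into $\wt k_H$ alone; one must use that $\wt k_H(x;y)$ carries the factor $\pn(y)$ (so $V_\text{ext}(y)\pn(y)$ is square-integrable), and then the remaining three-field structure must be organized so that the $\cV_\text{ext}$ from the test state is not double-counted --- i.e. here one should genuinely get a pure $(\cN+1)^{3/2}$ bound, not a $\cV_\text{ext}$-weighted one, by pairing $V_\text{ext}(y)\pn(y)$ with $\sup_x\|\wt k_{H,x}\|$ and bounding the three operator factors by powers of $\cN$. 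A secondary technical nuisance is the normal-ordering step: commuting the operators in $[\cV_\text{ext},\wt A]$ into normal order produces lower-order ``contracted'' terms, which must be shown to be bounded by $CN^{-1/2}(\cN+1)^{1/2}$, again via (\ref{eq:bds-cubic1}) and $\|V_\text{ext}\pn\|\leq C$. Finally, the replacement $A \leftrightarrow \wt A$ contributes terms governed by $\rho_\gamma, \rho_\sigma$; their $\cV_\text{ext}$-weighted versions are bounded exactly as in (\ref{eq:rho}) using that $V_\text{ext}\pn \in L^2$ and the projections $Q$ only improve decay, so these are harmless. Collecting everything and using Lemma \ref{lm:growA} to undo the conjugations where needed yields (\ref{eq:comm-extA}).
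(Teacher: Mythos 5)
Your high-level strategy matches the paper's: compute $[\cV_\text{ext},A]$ explicitly, organize the resulting cubic terms by which leg of $A$ picks up the factor $V_\text{ext}$, and bound each piece using Lemma~\ref{lm:bds-cubic}, the decay $V_\text{ext}\pn \in L^\infty$, and the Gaussian convolution estimate (\ref{eq:Vextconvg}). But your concrete estimates misassign where the difficulty sits, and two of your steps would fail. First, for the term where $V_\text{ext}$ lands on $\wt b_x^*$, you propose to extract $\int dx\,\|V_\text{ext}(\cdot)\wt k_H(x;\cdot)\pn(\cdot)\|^2 \leq C$. This puts $V_\text{ext}$ in the \emph{second} argument of $\wt k_H$, whereas commuting past $\wt b_x^*$ produces $V_\text{ext}(x)$ in the \emph{first} argument; and by (\ref{eq:def-wtk}), $\wt k_H(x;y)=[-Nw_\ell(N\cdot)*\check\chi_H](x-y)\,\pn(y)$ carries $\pn(y)$ only, with no decay in $x$ whatsoever, so $V_\text{ext}(x)\wt k_H(x;y)$ is not bounded and cannot be absorbed. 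Second, you claim the terms where $V_\text{ext}$ lands on the $\gamma_{L,x}$ or $\sigma_{L,x}$ slot are handled entirely by the operator bounds of Lemma~\ref{lm:bds-cubic}, yielding only $(\cN+1)^{3/2}$. But Lemma~\ref{lm:bds-cubic} controls $\|V_\text{ext}\sigma_L\|$ and $\|V_\text{ext}[(\gamma-1)*_2\check g_L]\|$; the kernel $\gamma_L=\check g_L + (\gamma-1)*_2\check g_L$ contains the non-decaying Gaussian $\check g_L$, and $V_\text{ext}\check g_L$ is not a bounded operator.

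The step you cannot skip --- and the one that actually produces the $\cV_\text{ext}(\cN+1)^{1/2}$ term on the right of (\ref{eq:comm-extA}) --- is the decomposition $\gamma_{L,x} = \check g_L(x-\cdot) + \gamma'_{L,x}$ with $\gamma'_L=(\gamma-1)*_2\check g_L$. The pieces carrying $\gamma'_L$ or $\sigma_L$ decay like $\pn(x)$, so there $V_\text{ext}(x)\pn(x)\leq C$ absorbs the potential; the $\wt b_y^*$ leg is absorbed by $V_\text{ext}(y)\pn(y)\leq C$ via the $\pn(y)$ in $\wt k_H(x;y)$. For everything proportional to $\check g_L(x-z)$, neither variable is helped by any $\pn$ factor, and one instead writes $V_\text{ext}(x)=V_\text{ext}(x)^{1/2}V_\text{ext}(x)^{1/2}$ and applies Cauchy--Schwarz, pairing one copy with the $a_x$ field (giving $\cV_\text{ext}^{1/2}$) and the other copy with the $a_z$ field through the smearing by $\check g_L(x-z)$; the estimate $\int dz\,(V_\text{ext}*\check g_L)(z)\|a_z(\cN+1)^{1/4}\xi\|^2 \leq C\langle\xi,(\cV_\text{ext}+\cN+1)(\cN+1)^{1/2}\xi\rangle$, resting on (\ref{eq:Vextconvg}), then closes the bound. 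You list all of these tools, including (\ref{eq:Vextconvg}), but do not apply them to the correct legs and do not perform the split of $\gamma_L$; as written, the two terms identified above would not close.
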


\begin{proof}
The same tedious computation leading to (\ref{eq:upsi-def}) gives \[  \begin{split} 
\big[ \cV_\text{ext} , A \big]  = \; &\frac{1}{\sqrt{N}} \int dx dy \, \big( V_\text{ext} (x) + V_\text{ext} (y) \big) \wt{k}_H(x;y) \big( b^* (\gamma_{L,x}) + b (\sigma_{L,x}) \big) b_y (1 - \cN / N) b_x  \\ &- \frac{1}{\sqrt{N}} \int dx dz V_\text{ext} (z) \gamma_L (x,z) b_x^* b^*  (\wt{k}_{H,x}) (1-\cN / N) b_z  \\
&+ \frac{1}{\sqrt{N}} \int dx dz V_\text{ext} (z) \sigma_L (x,z) b_x^* b^*  (\wt{k}_{H,x}) (1-\cN / N) b^*_z   \\
&- \frac{1}{N^{3/2}}  \int dx du V_\text{ext} (u)   \big( b^* (\gamma_{L,x}) + b (\sigma_{L,x}) \big) a^*_u a_x b (\wt{k}_{H,x}) a_u  \\
&- \frac{1}{N^{3/2}} \int dx du V_\text{ext} (u)   \big( b^* (\gamma_{L,x}) + b (\sigma_{L,x}) \big) b_x a^*_u  a (\wt{k}_{H,x}) a_u  \\
&+\frac{1}{N^{3/2}} \int dx du V_\text{ext} (u) b_x^* b^* (\wt{k}_{H,x}) a^*_u a (\gamma_{L,x}) b_u  \\
&-\frac{1}{N^{3/2}} \int dx du V_\text{ext} (u) a_u^* a (\sigma_{L,x}) a_x b (\wt{k}_{H,x}) b_u  + \text{h.c.} \\
&+ [\cV_\text{ext}, A-\wt{A}] \\
=: &\sum_{j=1}^7 \Delta_j + \text{h.c.} + [\cV_\text{ext}, A-\wt{A}] . \end{split} \]
To bound the contribution proportional to $V_\text{ext} (y)$ in $\Delta_1$, we use $\sup_y V_\text{ext} (y) \| \wt{k}_H (.,y) \| \leq C \sup_y V_\text{ext} (y) \pn (y) \leq C$ and we proceed similarly as in the bound (\ref{eq:upsi1}). As for the part of $\Delta_1$ proportional to $V_\text{ext} (x)$, we decompose $\gamma_{L,x} (y) = \check{g}_L (x-y) + (\gamma-1) *_2 \check{g}_L (x;y)$. Terms arising from $(\gamma-1) *_2 \check{g}_L$ and also from $\sigma_{L,x}$ can be handled as before, using the factor $\pn (x)$ to bound $V_\text{ext} (x)$. Thus, we only need to estimate 
\begin{align*}
		&\left\vert \int dx dy dz \ V_\text{ext}(x) \wt{k}_H(x,y) \check{g}_L(x-z) \langle \xi, b_z^* b_y b_x \xi\rangle  \right\vert \\
		&\leq C\int dx dz \ V_\text{ext}(x) \check{g}_L(x-z) \Vert a_x (\cN +1)^{1/4} \xi \Vert \Vert a_z (\cN +1)^{1/4} \xi \Vert \\
		&\leq C \left[ \int dx \, V_\text{ext} (x) \| a_x (\cN+1)^{1/4} \xi \|^2 \right]^{1/2} \left[ \int dz \, (V_\text{ext} *\check{g}_L) (z) \| a_z (\cN+1)^{1/4} \xi \|^2 \right]^{1/2} \\
		&\leq C \langle \xi, \big[ \cV_\text{ext} (\cN+1)^{1/2} + (\cN+1)^{3/2} \big] \xi \rangle 
		\end{align*}
using $(V_\text{ext}*\check{g}_L) \leq C(V_\text{ext}+1)$. We conclude that 
\[ \pm  (\Delta_1 + \text{h.c.}) \leq \frac{C}{\sqrt{N}} \left[ \cV_\text{ext} (\cN+1)^{1/2} + (\cN+1)^{3/2} \right]. \]   

To control $\Delta_2$, we set $\gamma'_{L,x} = (\gamma - 1)_x * \check{g}_L$. Since $\| V_\text{ext} \gamma'_L \|_\text{HS} \leq C$ by Lemma \ref{lm:bds-cubic}, we only need to estimate 
\[ \begin{split} \Big|  \frac{1}{\sqrt{N}}  \int V_\text{ext} (z) \check{g}_L &(x-z) \langle \xi, b_x^* b^* (\wt{k}_{H,x}) b_z \xi \rangle \Big| \\ &\leq \frac{C}{\sqrt{N}} \int V_\text{ext} (z) \check{g}_L (x-z) \| a_x (\cN+1)^{1/4}  \xi \| \| a_z (\cN+1)^{1/4} \xi \| \\ &\leq \frac{C}{\sqrt{N}} \| (\cV_\text{ext}+1)^{1/2} (\cN+1)^{1/4} \xi \|^2 \end{split} \]  
where, in the last step, we used the bound (\ref{eq:Vextconvg}). We conclude that 
\[ \pm \Delta_2  \leq \frac{C}{\sqrt{N}} \left[  \cV_\text{ext} (\cN+1)^{1/2} + (\cN+1)^{3/2} \right].  \]
The term $\Delta_3$ can be handled similarly (here, there is no need to subtract the identity as we did with $\gamma_{L,x}'$). As for $\Delta_4$, we use the inequalities $\sup_x \| \wt{k}_{H,x} \|_2 \leq C$ and $\sup_x \| \sigma_{L,x} \|_2 \leq C$ from Lemma \ref{lm:bds-cubic} to get 
\[ \begin{split}  |\langle \xi, \Delta_4 \xi \rangle | \leq \; &\frac{1}{N^{3/2}} \int dx du \, V_\text{ext} (u) \| b (\gamma_{L,x}) a_u \xi \| \| a_x a_u (\cN + 1)^{1/2} \xi \|  \\ &+ \frac{1}{N^{3/2}} \int dx du \, V_\text{ext} (u) \| a_u (\cN+1)^{1/2} \xi \| \| a_x a_u (\cN+1)^{1/2} \xi \|  \\ \leq \; &\frac{C}{N^{3/2}} \| \cV_\text{ext}^{1/2} (\cN+1)^{1/2} \xi \| \| \cV_\text{ext}^{1/2} (\cN+1) \xi \| \leq \frac{C}{\sqrt{N}} \| \cV_\text{ext}^{1/2} (\cN+1)^{1/4} \xi \|^2.  \end{split} \]
The terms $\Delta_5, \Delta_6, \Delta_7$ can be bounded similarly. The contribution arising from the difference $A- \wt{A}$ can be controlled proceeding as explained after (\ref{eq:A-wtA}), with (\ref{eq:rho}). 
\end{proof}

In order to use the estimate in Lemma \ref{lm:comm-extA}, we need to show that the r.h.s. of (\ref{eq:comm-extA}) remains small, when conjugated with $e^A$. 
\begin{lemma}\label{lm:groext-A} 
Let $0<6\exps \leq \exph$, then we have 
\begin{equation}\label{eq:groext-cub1} e^{-A} \cV_\text{ext} (\cN+1)^2 e^A \leq C \left[ \cV_\text{ext} (\cN+1)^2 + (\cN+1)^3  \right].
\end{equation} 
\end{lemma}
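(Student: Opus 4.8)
\textbf{Proof strategy for Lemma \ref{lm:groext-A}.} The plan is to run a Gronwall argument in the parameter $s$ along the flow $s\mapsto \langle\xi,e^{-sA}\cV_\text{ext}(\cN+1)^2 e^{sA}\xi\rangle$, exactly as was done for the number operator in Lemma \ref{lm:growA}. The derivative produces the commutator $[\cV_\text{ext}(\cN+1)^2,A]$, which splits as $[\cV_\text{ext},A](\cN+1)^2+\cV_\text{ext}[(\cN+1)^2,A]$. The first piece is controlled directly by Lemma \ref{lm:comm-extA}, which gives $\pm[\cV_\text{ext},A]\le C N^{-1/2}[\cV_\text{ext}(\cN+1)^{1/2}+(\cN+1)^{3/2}]$; inserting this between the two factors $(\cN+1)$ and using that $A$ changes the particle number by at most $3$ (so $(\cN+1)^2$ can be moved past the commutator up to bounded relative errors, or equivalently estimating $\|(\cN+1)^{a}[\cV_\text{ext},A](\cN+1)^{b}\xi\|$ for the relevant $a,b$) bounds it by $C N^{-1/2}[\cV_\text{ext}(\cN+1)^2+(\cN+1)^3]$.

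For the second piece, $\cV_\text{ext}[(\cN+1)^2,A]$, I would write $[(\cN+1)^2,A]$ using that $A$ is a sum of a term $A_\gamma$ raising $\cN$ by $3$ and a term $A_\sigma$ raising $\cN$ by $1$ (and their adjoints), so that $[(\cN+1)^2,A]=A\,\Theta(\cN)$ for a bounded function $\Theta$ with $\Theta(\cN)\le C(\cN+1)$ — this is the same mean-value-theorem bookkeeping used repeatedly in the paper (e.g. around \eqref{eq:M0toM3} and in Lemma \ref{lm:U*KNU}). Then one needs a bound of the form $\pm(\cV_\text{ext} A (\cN+1)+\text{h.c.})\le C N^{-1/2}[\cV_\text{ext}(\cN+1)^2+(\cN+1)^3]$. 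This is obtained term by term from the explicit expression for $A$ in \eqref{eq:A-def}: each monomial contains a factor $\wt{k}_{H,x}$ or $\wt{k}_H(x;y)$ together with $\gamma_{L,x}$ or $\sigma_{L,x}$, and one estimates $\cV_\text{ext}$ by pairing the external potential with the operator-valued distribution carrying the $x$-variable, using $\sup_x\|\wt{k}_{H,x}\|\le C N^{-\exph/2}$, $\sup_x\|\sigma_{L,x}\|,\sup_x\|\gamma_{L,x}\|\le C$, and the key estimate $|V_\text{ext}*\check g_L|\le C(V_\text{ext}+1)$ from \eqref{eq:Vextconvg}, exactly as in the proof of Lemma \ref{lm:comm-extA}. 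The prefactor $N^{-1/2}$ comes from the normalization of $A$.

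Collecting the two pieces gives $\pm[\cV_\text{ext}(\cN+1)^2,A]\le C N^{-1/2}[\cV_\text{ext}(\cN+1)^2+(\cN+1)^3]$, and since $(\cN+1)^3$ is controlled along the flow by Lemma \ref{lm:growA}, setting $\Psi(s)=\langle\xi,e^{-sA}(\cV_\text{ext}(\cN+1)^2+(\cN+1)^3)e^{sA}\xi\rangle$ one obtains $|\partial_s\Psi(s)|\le C N^{-1/2}\Psi(s)+C\Psi(0)$ (the second term absorbing the conjugated $(\cN+1)^3$), and Gronwall's inequality yields $\Psi(1)\le C\Psi(0)$, which is \eqref{eq:groext-cub1}. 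The main obstacle is the careful commutator bookkeeping in the second piece: one must keep track of which operator-valued distribution the multiplication operator $V_\text{ext}$ is paired against (so that the Gaussian smoothing $\check g_L$ can be used via \eqref{eq:Vextconvg}) and ensure that the leftover powers of $\cN$ never exceed three, but this is entirely analogous to computations already carried out in Lemma \ref{lm:comm-extA} and Section \ref{sec:GN4}, so I would only sketch it and refer to those.
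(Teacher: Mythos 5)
Your proposal is essentially the paper's proof: the paper likewise differentiates $s\mapsto \langle\xi, e^{-sA}(\cN+1)\cV_\text{ext}(\cN+1)e^{sA}\xi\rangle$ (note $\cV_\text{ext}$ commutes with $\cN$, so this equals your quantity), applies the Leibniz rule to obtain a piece $(\cN+1)[\cV_\text{ext},A](\cN+1)$ controlled by Lemma \ref{lm:comm-extA} and a cross piece $[\cN,A]\cV_\text{ext}(\cN+1)+\text{h.c.}$ handled by the same kernel estimates (splitting $\gamma_L = \check g_L + (\gamma-1)*_2\check g_L$, using \eqref{eq:Vextconvg} for the $\check g_L$ leg and the decay of $\pn$ for the legs carrying $\wt k_H$ or $\sigma_L$), and closes with Gronwall after invoking Lemma \ref{lm:growA}; your split $[\cV_\text{ext},A](\cN+1)^2+\cV_\text{ext}[(\cN+1)^2,A]$ is an algebraically equivalent variant. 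Two small slips worth noting: the roles are reversed ($A_\gamma$ changes $\cN$ by $\pm1$ and $A_\sigma$ by $\pm3$), and $\sup_x\|\gamma_{L,x}\|$ is \emph{not} bounded uniformly in $N$ (the $\check g_L$ part has $\|\check g_L\|_2\sim N^{3\exps/2}$) — this is precisely why the decomposition of $\gamma_L$ and the bound $V_\text{ext}*\check g_L\le C(V_\text{ext}+1)$ are unavoidable, as you correctly emphasize at the end.
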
 
\begin{proof} 
To show (\ref{eq:groext-cub1}) we compute, for an arbitrary $\xi \in \cF_{\perp \pn}^{\leq N}$, 
\begin{equation}\label{eq:ddsVextA} \begin{split} 
\frac{d}{ds} \langle \xi &, e^{-sA}  (\cN+1) \cV_\text{ext} (\cN+1) e^{sA} \xi \rangle \\ = \; &\langle \xi , e^{-sA}  (\cN+1) \big[ \cV_\text{ext} , A \big]  (\cN+1) e^{sA} \xi \rangle + 2 \text{Re } \langle \xi, e^{-sA} \big[\cN , A \big] \cV_\text{ext} (\cN+1) e^{sA} \xi \rangle.  \end{split} \end{equation} 
The first term on the r.h.s. can be bounded with (\ref{eq:comm-extA}). We find
\[  \langle \xi , e^{-sA}  (\cN+1) \big[ \cV_\text{ext} , A \big]  (\cN+1) e^{sA} \xi \rangle \leq C  \langle \xi , e^{-sA}  (\cN+1) \big( \cV_\text{ext} + \cN+1 \big)  (\cN+1) e^{sA} \xi \rangle. \]
As for the other terms, we compute 
\begin{equation}\label{eq:gron-ext1} \begin{split} 
\langle e^{-sA} \xi , &\big[ \cN , A \big]  \cV_\text{ext} (\cN+1) e^{sA} \xi \rangle  \\= \;& \frac{1}{\sqrt{N}} \int dx du \, V_\text{ext} (u) \langle \xi, b_x^* b^* (\wt{k}_{H,x}) \big(b (\gamma_{L,x}) + b^* (\sigma_{L,x}) \big) a_u^* a_u (\cN+1) \xi \rangle \\ &+  \frac{1}{\sqrt{N}} \int dx du \, V_\text{ext} (u) \langle \xi, \big(b^* (\gamma_{L,x}) + b (\sigma_{L,x}) \big)  b_x  b (\wt{k}_{H,x}) a_u^* a_u (\cN+1) \xi \rangle. \end{split} 
\end{equation} 
With $\sup_x \| \wt{k}_{H,x} \| < C$ and $\| \sigma_L \|_\text{HS} \leq C$, we can control 
\[ \begin{split}  \Big| \frac{1}{\sqrt{N}} \int dx du \, &V_\text{ext} (u) \langle \xi, b_x^* b^* (\wt{k}_{H,x}) b^* (\sigma_{L,x}) \big) a_u^* a_u (\cN+1) \xi \rangle \Big| \leq  C \| \cV_\text{ext}^{1/2} (\cN+1) \xi \|^2. \end{split} \] 
The contribution from the first term on the r.h.s. of (\ref{eq:gron-ext1}) proportional to $b (\gamma_{L,x})$, when arranged in normal order, produces a term of the form (we focus here on the largest term): 
\begin{equation}\label{eq:comm0-gro}
\frac{1}{\sqrt{N}} \int dx du \, V_\text{ext} (u) \,  \gamma_{L} (x;u) \langle \xi, b_x^* b^* (\wt{k}_{H,x})  a_u (\cN+1) \xi \rangle.
\end{equation} 
Since $\gamma'_L =  (\gamma - 1)*\check{g}_L$ is such that
\[\sup_x \int du \, (V_\text{ext} (x) + V_\text{ext} (u)) |\gamma'_L (x;u)| \leq C,
\]
it is enough to bound 
\[ \begin{split} 
\Big| \frac{1}{\sqrt{N}} \int dx du \, &V_\text{ext} (u) \,  \check{g}_L (x-u) \langle \xi, b_x^* b^* (\wt{k}_{H,x})  a_u (\cN+1) \xi \rangle \Big| \\  \leq \; &\frac{C}{\sqrt{N}} \int dx du \, V_\text{ext} (u) \,\check{g}_L (x-u)  \| b_x  (\cN+1)^{1/2}  \xi \| \| a_u (\cN+1) \xi \| \\ \leq \; &\frac{C}{\sqrt{N}} \| (\cV_\text{ext}+1)^{1/2} (\cN+1) \xi \|^2 \end{split} \]
where we used (\ref{eq:Vextconvg}). As for the second term  on the r.h.s. of (\ref{eq:gron-ext1}), we have to move $a_u^*$ to the left. This produces commutator terms, which can be handled similarly as we did with (\ref{eq:comm0-gro}). The claim now follows inserting these bounds in (\ref{eq:ddsVextA}), using Lemma \ref{lm:growA} and Gronwall's lemma.  
\end{proof} 
 
 Combining the results of the last two lemmas, we can control the action of $A$ on the external potential. 
 \begin{lemma}\label{lm:Vext-conj} 
Let $0<6\exps \leq \exph$, then we have 
 \[ e^{-A} \cV_\text{ext} e^A = \cV_\text{ext} + \cE_2 \]
 where
 \[ \pm \cE_2 \leq  \frac{C}{\sqrt{N}} \big[ \cV_\text{ext}  (\cN+1)^2 + (\cN + 1)^3 \big]. \]
 \end{lemma}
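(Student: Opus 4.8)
The plan is to prove Lemma \ref{lm:Vext-conj} by the standard Duhamel-and-Gronwall scheme, using the commutator bound from Lemma \ref{lm:comm-extA} and the propagation estimate from Lemma \ref{lm:groext-A} as the two inputs. First I would write, for any $\xi \in \cF_{\perp \pn}^{\leq N}$,
\[
e^{-A} \cV_\text{ext} e^A - \cV_\text{ext} = \int_0^1 ds \, e^{-sA} [\cV_\text{ext}, A] e^{sA},
\]
so that $\cE_2 = \int_0^1 ds \, e^{-sA} [\cV_\text{ext}, A] e^{sA}$ and it suffices to bound $\pm e^{-sA} [\cV_\text{ext}, A] e^{sA}$ uniformly in $s \in [0;1]$. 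By Lemma \ref{lm:comm-extA} we have the operator inequality
\[
\pm [\cV_\text{ext}, A] \leq \frac{C}{\sqrt{N}} \big[ \cV_\text{ext} (\cN+1)^{1/2} + (\cN+1)^{3/2} \big],
\]
and since both sides commute in the relevant sense with conjugation up to controllable errors, the task reduces to showing that $e^{-sA} \big[ \cV_\text{ext} (\cN+1)^{1/2} + (\cN+1)^{3/2} \big] e^{sA}$ is bounded (uniformly in $s$) by $C \big[ \cV_\text{ext} (\cN+1)^2 + (\cN+1)^3 \big]$.

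The second step handles this propagation. For the term $(\cN+1)^{3/2}$ we simply invoke Lemma \ref{lm:growA} with $j$ such that $(\cN+1)^{3/2} \leq (\cN+1)^3$, giving $e^{-sA} (\cN+1)^{3/2} e^{sA} \leq C (\cN+1)^3$. For the term $\cV_\text{ext} (\cN+1)^{1/2}$, I would write $\cV_\text{ext} (\cN+1)^{1/2} \leq \tfrac12 \big( \cV_\text{ext} (\cN+1) \cV_\text{ext}^{\,0} + (\cN+1)^{1/2}\big)$ — more cleanly, bound it via $\cV_\text{ext}(\cN+1)^{1/2} \le \cV_\text{ext}(\cN+1)^2 + (\cN+1)$ using $\cV_\text{ext}, \cN \ge 0$ — and then apply Lemma \ref{lm:groext-A}, which gives exactly $e^{-sA} \cV_\text{ext} (\cN+1)^2 e^{sA} \leq C[\cV_\text{ext}(\cN+1)^2 + (\cN+1)^3]$, together with Lemma \ref{lm:growA} for the lower-order piece. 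Combining, for every $s \in [0;1]$,
\[
\pm\, e^{-sA} [\cV_\text{ext}, A] e^{sA} \leq \frac{C}{\sqrt{N}} \big[ \cV_\text{ext} (\cN+1)^2 + (\cN+1)^3 \big],
\]
and integrating over $s$ yields the claimed bound on $\cE_2$.

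One technical point I would be careful about: Lemma \ref{lm:comm-extA} is an inequality between the symmetric operator $[\cV_\text{ext}, A]$ and a positive operator, but to conjugate it we want a \emph{factorized} upper bound. The clean route is to observe from the proof of Lemma \ref{lm:comm-extA} that each summand $\Delta_j + \text{h.c.}$ is bounded in absolute value by $C N^{-1/2} \| (\cV_\text{ext}+\cN+1)^{1/2} (\cN+1)^{1/4} \psi \|^2$-type quadratic forms, so that for any $\psi$,
\[
|\langle \psi, [\cV_\text{ext}, A] \psi \rangle| \leq \frac{C}{\sqrt N} \langle \psi, \big( \cV_\text{ext}(\cN+1)^{1/2} + (\cN+1)^{3/2}\big)\psi\rangle;
\]
applying this to $\psi = e^{sA}\xi$ and then using the two propagation lemmas on the right-hand side gives the bound directly in quadratic-form sense, which is all that is needed. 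The main obstacle, therefore, is not in this lemma itself — it is essentially a bookkeeping combination of the two preceding lemmas — but rather that it relies on Lemma \ref{lm:groext-A}, whose proof (the Gronwall argument on $e^{-sA}\cV_\text{ext}(\cN+1)^2 e^{sA}$, controlling the commutator terms that mix $\cV_\text{ext}$ with the Gaussian cutoffs $\check g_L$ via the crucial inequality $V_\text{ext} * \check g_L \leq C(V_\text{ext}+1)$ from \eqref{eq:Vextconvg}) is where the real work sits. Given that lemma, the present statement follows in a few lines as above.
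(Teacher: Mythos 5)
Your proof is correct and follows essentially the same route as the paper's: Duhamel, then Lemma \ref{lm:comm-extA} for the commutator, then Lemmas \ref{lm:growA} and \ref{lm:groext-A} to propagate the resulting bound through $e^{sA}$, and finally integration over $s$. The minor detour $\cV_\text{ext}(\cN+1)^{1/2} \le \cV_\text{ext}(\cN+1)^2 + (\cN+1)$ can be simplified to $\cV_\text{ext}(\cN+1)^{1/2} \le \cV_\text{ext}(\cN+1)^2$ (since $\cV_\text{ext}\ge 0$ commutes with $\cN$ and $(\cN+1)^{1/2}\le(\cN+1)^2$), which is what the paper uses; your remark about quadratic-form inequalities is a fair point but is handled implicitly in the paper by the same device you describe.
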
 
 
 \begin{proof} 
 We can write 
 \[ \cE_2 = \int_0^1 ds e^{-sA} \left[ \cV_\text{ext} , A \right] e^{sA}. \]
 From Lemma \ref{lm:comm-extA}, we find
 \[ \pm \cE_2 \leq \frac{C}{\sqrt{N}} \int_0^1 ds e^{-sA} \left[ \cV_\text{ext} (\cN+1)^{1/2} + (\cN+1)^{3/2} \right] e^{sA}. \]
Estimating $\cV_\text{ext} (\cN+1)^{1/2} \leq \cV_\text{ext} (\cN+1)^2$, applying Lemma \ref{lm:growA}, Lemma \ref{lm:groext-A} and  integrating over $s \in [0;1]$, we obtain the claim. 
  \end{proof}

\subsection{Analysis of $e^{-A}(\mathcal{K}+ \mathcal{V}_N)e^A$}

\begin{lemma}\label{lm:KV-A}
Let $0<6\exps \leq \exph\leq 1/2$, then we have 
\begin{equation}\label{eq:KV-A1}  \pm\big[ \cK + \cV_N , A \big] \leq C \left( \cK + \cV_N + (\cN+1)^2 \right). \end{equation} 
Moreover, we can decompose 
\[ \big[ \cK + \cV_N , A \big] = \Theta_0 + \Theta_0^* + \cE \]
where 
\begin{equation}\label{eq:theta0-def} \Theta_0 = -\sqrt{N} \int dx dy N^2 V (N (x-y)) \ph_0 (y) b_x^* b_y^* \big[ b (\gamma_{L,x}) + b^* (\sigma_{L,x}) \big] \end{equation} 
and where
\begin{equation}\label{eq:defTheta0} \pm \cE \leq C N^{-1/2+\eps} ( \cV_N + \cK + \cN +1) (\cN+1).  \end{equation} 
\end{lemma}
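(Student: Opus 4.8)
The plan is to compute the commutator $[\cK + \cV_N, A]$ explicitly by the same (lengthy but routine) manipulation that produced the identity for $[\cV_\text{ext},A]$ in Lemma \ref{lm:comm-extA}, and then to organize the resulting terms according to how many creation/annihilation operators they carry and which kernel ($\wt k_H$, $\nabla \wt k_H$, $\gamma_L$, $\sigma_L$, or their Laplacians) they contain. The key structural point is that $A$ is cubic in the $\wt b$, $\wt b^*$ fields, so $[\cK, A]$ and $[\cV_N, A]$ are again cubic (plus lower-order normal-ordering remnants). For the kinetic part, the commutator $[\cK, A]$ produces terms where one derivative acts on $\wt k_{H,x}$ and one on $\gamma_{L,x}$ or $\sigma_{L,x}$; here I would use the bounds $\|\nabla_x \wt k_{H,x}\| \le C N^{\eps/2}$ (from (\ref{eq:diffk}) together with $\|\nabla \wt k(x,.)\| \le C\sqrt N$, noting $\wt k_H$ differs from $\wt k$ only by a low-momentum piece), $\|\nabla_x \sigma_{L,x}\| \le C N^{\eps/2}(\pn(x) + |\nabla\pn(x)|)$ and $\|\nabla_x[(\gamma-1)_x *_2 \check g_L]\| \le C(\pn(x)+|\nabla\pn(x)|)$ from (\ref{eq:bds-cubic4}), and the fact that each factor of $N^{-1/2}$ in $A$ combines with the factor $\sqrt N$ from the scattering-length scaling of $\wt k_H$ (implicit in $\wt k_H(x;y) = -[Nw_\ell(N.) * \check\chi_H](x-y)\pn(y)$). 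This is exactly why the constraint $\exph \le 1/2$ enters: we need $N^{\eps/2} \cdot N^{\tau/2}$-type products, with $3\tau \le \eps$, to stay bounded, and the worst term needs $\eps \le 1/2$. I would bound these cubic terms by Cauchy-Schwarz, using one $a_x$ or $\nabla_x a_x$ to absorb a factor of $\cK^{1/2}$, to arrive at $\pm [\cK, A] \le C(\cK + (\cN+1)^2)$.

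For the potential part $[\cV_N, A]$, the main term is obtained by commuting the $a^*_y a^*_z a_z a_y$ structure of $\cV_N$ with the $b^*_x b^*_y$ pair in $A$: the delta function $N^2V(N(x-y))$ forces $x$ and $y$ close together, and after using $a_y^* a_x = b_y^* b_x + N^{-1}a_y^*\cN a_x$ one extracts precisely the term $\Theta_0$ in (\ref{eq:theta0-def}), namely $-\sqrt N \int N^2 V(N(x-y))\pn(y) b_x^* b_y^* [b(\gamma_{L,x}) + b^*(\sigma_{L,x})]$ (the sign and the appearance of $\pn(y)$ come from $\wt k_H(x;y) \approx -(\text{const})\pn(y)$ after the cutoff, combined with how the quartic interaction hits the cubic kernel). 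All other contributions from $[\cV_N, A]$ — those where the interaction connects two different $b$-operators of $A$, those arising from normal-ordering commutators, and the $N^{-1}a^*\cN a$ correction — I would collect into $\cE$. These are controlled using the operator inequality $v(x-y) \le C\|v\|_1 (-\Delta_x+1)^\beta(-\Delta_y+1)^\beta$ for $\beta > 3/4$ (i.e. (\ref{eq:ESY})) to trade the singular potential for factors of $\cK$, together with $\sup_x\|\wt k_{H,x}\| \le CN^{-\exph/2}$, $\sup_x\|\sigma_{L,x}\|, \sup_x\|\gamma_{L,x}\| \le C\pn(x) + C$, and the pointwise bound $|\wt k_H(x;y)| \le C N^\eps \pn(y)$ implicit in (\ref{eq:diffk}); the $N^\eps$ here is the origin of the $N^{-1/2+\eps}$ rate in (\ref{eq:defTheta0}). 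The contributions from the difference $A - \wt A$ are handled as explained after (\ref{eq:A-wtA}) using the bounds (\ref{eq:rho}), and are negligible because all states involved live on $\cF^{\le N}_{\perp\pn}$.

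Finally, to get the operator bound (\ref{eq:KV-A1}) on the whole commutator (as opposed to just the decomposition), I would estimate $\pm(\Theta_0 + \Theta_0^*)$ directly: writing $\Theta_0 = -\sqrt N \int N^2 V(N(x-y))\pn(y) b_x^* b_y^* [b(\gamma_{L,x}) + b^*(\sigma_{L,x})]$, Cauchy-Schwarz with a weight splitting the $\sqrt N$ and the $N^2 V(N\cdot)$ between the $b_y^*$ factor (absorbed into $\cV_N^{1/2}$) and the $b_x^*$, $b(\gamma_{L,x})$ factors (absorbed into $(\cN+1)$ using $\|\gamma_{L,x}\| \le C$, $\|\sigma_{L,x}\| \le C\pn(x)$) yields $\pm(\Theta_0 + \Theta_0^*) \le C(\cV_N + (\cN+1)^2)$; combining with the kinetic estimate gives (\ref{eq:KV-A1}). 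The main obstacle is bookkeeping: the explicit commutator $[\cK + \cV_N, A]$ generates on the order of a dozen terms of various structures, and for each one must carefully track the powers of $N$ coming from (i) the $N^{-1/2}$ prefactor in $A$, (ii) the scattering scaling $\sqrt N$ hidden in $\wt k_H$ and $\nabla \wt k_H$, (iii) the cutoff losses $N^{\eps/2}$, $N^{\tau/2}$, and (iv) the $N^2 V(N\cdot)$ scaling in $\cV_N$, to confirm they all conspire to leave at worst $N^{-1/2+\eps}$ in $\cE$ and $O(1)$ in the leading bound. I expect the most delicate single estimate to be the term in $[\cK, A]$ where a derivative lands on $\wt k_H$, since there the scaling is tightest and it is precisely what pins down the hypothesis $\exph \le 1/2$.
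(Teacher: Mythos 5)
There is a genuine conceptual gap in your treatment of where $\Theta_0$ comes from. You claim that $\Theta_0$ is extracted ``precisely'' from $[\cV_N, A]$, with $[\cK,A]$ only contributing errors. This is not how the renormalization works. From $[\cV_N,A]$ you indeed extract the term $\frac{1}{\sqrt{N}}\int N^2 V(N(x-y))\,\wt{k}_H(x;y)\, b_x^* b_y^*\big[b(\gamma_{L,x})+b^*(\sigma_{L,x})\big]$, which after replacing $\wt{k}_H$ by $\wt{k}(x;y)=-Nw_\ell(N(x-y))\pn(y)$ has kernel $-\sqrt{N}\,N^2(Vw_\ell)(N(x-y))\pn(y)$. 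But $\Theta_0$ has kernel $-\sqrt{N}\,N^2 V(N(x-y))\pn(y)$. The discrepancy has kernel $\propto N^2(Vf_\ell)(N(x-y))\pn(y)$, whose integral is $\asymp 8\pi\mathfrak{a}_0/N$ --- of exactly the same size as $\Theta_0$ itself. So on its own, $[\cV_N,A]$ does not give $\Theta_0$ up to a small error, and the claimed form of $\cE$ fails. In the paper's proof, $[\cK,A]$ produces (after integration by parts) a main term with kernel $\propto N^2(2\Delta w_\ell)(N(x-y))\pn(y)$, and \emph{only upon adding} this to the $N^2(Vw_\ell)$ contribution from $[\cV_N,A]$, via the scattering relation $2\Delta w_\ell - Vw_\ell = 2\lambda_\ell f_\ell\chi_\ell - V$, does one land on the kernel $N^2 V(N(x-y))$ of $\Theta_0$, with the $\lambda_\ell$ remainder genuinely small. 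The scattering-equation cancellation between $[\cK,A]$ and $[\cV_N,A]$ is the heart of the lemma and cannot be omitted.

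A second error feeds this: you write $\|\nabla_x\wt{k}_{H,x}\|\le CN^{\exps/2}$, deducing it from \eqref{eq:diffk}. But \eqref{eq:diffk} bounds $\|\nabla_x(\wt{k}-\wt{k}_H)(x,\cdot)\|\le CN^{\exph/2}$, i.e.\ the \emph{low}-momentum piece that was removed. The high-momentum kernel $\wt{k}_H$ is where the derivative is large: $\|\nabla_x\wt{k}_{H,x}\|\asymp\sqrt{N}$. With this scaling the first and third terms in $[\cK,A]$ are $O(1)$ operators --- of the same size as $\Theta_0+\Theta_0^*$ --- not small errors, which is precisely why they must be retained and combined with $[\cV_N,A]$. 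Your reversed inequality would lead you to discard the essential kinetic contribution and obtain $\pm[\cK,A]\le C(\cK+(\cN+1)^2)$, a bound that neither holds for the genuine $\Delta w_\ell$ term (whose control requires $\cV_N$, since $|\Delta w_\ell|\lesssim Vf_\ell+\lambda_\ell\chi_\ell$ via the scattering equation) nor matches the paper's actual estimate. The rest of your sketch --- the catalog of commutator terms, the use of \eqref{eq:ESY}, the treatment of $A-\wt{A}$, the bound $\pm(\Theta_0+\Theta_0^*)\le C(\cV_N+(\cN+1)^2)$ --- is sound, but without the scattering-equation recombination the proposed decomposition into $\Theta_0+\Theta_0^*+\cE$ with $\pm\cE\le CN^{-1/2+\exph}(\,\cdot\,)$ simply does not follow.
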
 
 
\begin{proof} 
With a long but straightforward computation, we find 
\begin{equation}\label{eq:KA-comm} \begin{split} 
\big[ \cK , A \big] = \; &\frac{1}{\sqrt{N}} \int dx dy \,  \nabla_x \wt{k}_H (x,y) \nabla_x b_x^* b_y^* \big[ b (\gamma_{L,x}) + b^* (\sigma_{L,x}) \big]   \\
&+ \frac{1}{\sqrt{N}} \int dx dy \; \wt{k}_H (x;y) \nabla_x b_x^* b_y^* \big[ b (\nabla_x \gamma_{L,x}) + b^* (\nabla_x \sigma_{L,x}) \big]  \\
&+ \frac{1}{\sqrt{N}} \int dx dy \, \nabla_y \wt{k}_H (x,y)  \, b_x^* \nabla_y b_y^* \big[ b (\gamma_{L,x}) + b^* (\sigma_{L,x}) \big]  \\
&+ \frac{1}{\sqrt{N}} \int dx dy du \;  \wt{k}_H (x;y) \nabla \ph_0 (u) \ph_0 (x) \nabla_u a_u^* b_y^* \big[ b(\gamma_{L,x}) + b^* (\sigma_{L,x}) \big]   \\
&+ \frac{1}{\sqrt{N}} \int dx dy du \;  \wt{k}_H (x;y) \nabla \ph_0 (u) \ph_0 (y) a_x^* \nabla_u b_u^* \big[ b(\gamma_{L,x}) + b^* (\sigma_{L,x}) \big]   \\
&+ \frac{1}{\sqrt{N}} \int dx dy dz \;  \wt{k}_H (x;y) b_x^* b_y^* \big[ b(\Delta \gamma_{L,x}) + b^* (\Delta \sigma_{L,x}) \big] + \text{h.c.}  \\
&+ \big[ \cK, A-\wt{A} \big].
\end{split} \end{equation} 
Let us consider first the contribution proportional to $b (\nabla_x \gamma_{L,x})$ in the second term on the r.h.s.
We write $\gamma_{L,x} (z) = \check{g}_L (x-z) + \gamma'_{L,x} (z)$. Using that $\sup_x \| \wt{k}_{H,x} \|  \leq C$, we find  
\begin{equation}\label{eq:sec-KA} \Big| \frac{1}{\sqrt{N}} \int dx dy \, \wt{k}_H (x;y) \check{g}_L (x-z) \langle \xi, \nabla_x b_x^* b_y^* \nabla_z b_z \xi \rangle \Big| \leq \frac{C}{\sqrt{N}} \| \cK^{1/2} \cN^{1/2} \xi \| \| \cK^{1/2} \xi \|. \end{equation} 
The contribution proportional to $\gamma'_L$ can be estimated by 
\[ \Big| \frac{1}{\sqrt{N}} \int dx dy \, \wt{k}_H (x;y) \langle \xi, \nabla_x b_x^* b_y^* b (\nabla_x \gamma'_{L,x}) \xi \rangle \Big| \leq \frac{C}{\sqrt{N}} \| \cK^{1/2} \cN^{1/2} \xi \| \| \cN^{1/2} \xi \|, \]
because $ \| \nabla_1 \gamma'_{L} \| \leq C$ by \eqref{eq:bds-cubic4}. As for the part of the second term on the r.h.s. of (\ref{eq:KA-comm}) proportional to $\nabla_x \sigma_{L,x}$ we use that, by Lemma \ref{lm:bds-cubic}, $\sup_x \Vert \nabla_x \sigma_{L,x} \Vert \leq CN^{\exps/2}$. Hence  
\[ \begin{split}  \Big| \frac{1}{\sqrt{N}} \int dx dy \; \wt{k}_H (x;y)  \langle \xi , &\nabla_x b_x^* b_y^*  b^* (\nabla_x \sigma_{L,x}) \xi \rangle \Big| \\ &\leq  CN^{-1/2} \| \cK^{1/2} (\cN+1)^{1/2} \xi \| \| (\cN+1)^{1/2} \xi \|. \end{split} \]
To control the fourth and fifth term on the r.h.s. of (\ref{eq:KA-comm}), we proceed similarly. 

As for the sixth term on the  r.h.s. of (\ref{eq:KA-comm}), we write again $\gamma_{L,x} (z) = \check{g}_L (x-z) + \gamma'_{L,x} (z)$. The contribution proportional to $\check{g}_L$ can be decomposed after integration by parts into a term of the form (\ref{eq:sec-KA}) and the term 
\begin{equation} \label{eq:last-KA} \begin{split} \Big| \frac{1}{\sqrt{N}}& \int  dx dy dz  \, \nabla_x \wt{k}_H (x;y)  \check{g}_L (x-z) \langle \xi, b_x^* b_y^* \nabla_z b_z \xi \rangle \Big| \\ 
\leq \; &\frac{1}{\sqrt{N}} \int dx dy dz \, |\wt{k}_H (x;y)| \check{g}_L (x-z) \| b_x \nabla_y b_y \xi \| \Vert \nabla_z b_z \xi \Vert \\
&+ \frac{1}{\sqrt{N}} \int dx dy dz \, |(Nw_\ell(N.)*\widecheck{\chi}_H)(x-y) \nabla \pn(y)| \check{g}_L (x-z) \\&\qquad \qquad \times \| b_x  b_y (\cN+1)^{-1/2} \xi \| \Vert \nabla_z b_z (\cN+1)^{1/2} \xi \Vert \\
\leq \; &\frac{C}{\sqrt{N}} \| (\cK+\cN)^{1/2} \xi \| \| \cK^{1/2} (\cN+1)^{1/2} \xi \| \end{split} \end{equation} 
where we used $\nabla_x \wt{k}_H(x,y)= - \nabla_y \wt{k}_H(x,y) - (Nw_\ell(N.)*\widecheck{\chi}_H)(x-y) \nabla \pn(y)$ and integration by parts. The contribution arising from the difference $A - \wt{A}$ can be controlled with the strategy outlined after (\ref{eq:A-wtA}), with the bounds (\ref{eq:rho}). We obtain 
\begin{align*}
\pm [\cK, A -\wt{A}] \leq \frac{C}{\sqrt{N}} \cN^{3/2}.
\end{align*}
We are left with the first and the third terms on the r.h.s. of (\ref{eq:KA-comm}). Here, we replace $\wt{k}_H$ with 
$\wt{k} (x;y) = -N w_\ell (N (x-y)) \ph_0 (y)$, using that, by , by (\ref{eq:diffk}), $\sup_x \| \nabla_x (\wt{k} - \wt{k}_{H})_x \| \leq C N^{\exph}$. Integrating by parts, we conclude that
\begin{equation} \label{eq:KA-fin} \begin{split} &\big[ \cK ,  A \big]  \\&= \sqrt{N} \int dx dy \big[ (\Delta_x + \Delta_y)  w_\ell (N(x-y)) \big]  \ph_0 (y) \,  b_x^* b_y^* \big[ b (\gamma_{H,x}) + b^* (\sigma_{H,}) \big]+ \text{h.c.} + \cE_1  \\ 
&=  \frac{2}{\sqrt{N}}  \int dx dy \, N^3 (\Delta w_\ell) (N(x-y)) \, \ph_0 (y) \,  b_x^* b_y^* \big[ b (\gamma_{H,x}) + b^* (\sigma_{H,}) \big]+ \text{h.c.} + \cE_1
\end{split} \end{equation} 
where
\begin{align*}
| \langle \xi, \cE_1 \xi \rangle | &\leq CN^{\exph-1/2} \| \cK^{1/2}  \xi \| \| (\cN+1) \xi \| + CN^{-1/2} \Vert \cK^{1/2} (\cN +1)^{1/2} \xi \| \Vert (\cK+ \cN)^{1/2} \xi \Vert \\
&\quad + CN^{-1/2} \Vert (\cN+1)^{3/4} \xi \Vert^2. 
\end{align*}
Here, we used the fact that contributions where one (or two) derivative hits $\ph_0$ produce error terms that can be bounded similarly as in (\ref{eq:last-KA}).

We consider next the commutator with the potential energy operator $\cV_N$. We find 
\begin{equation} \label{eq:VA-deco} \begin{split} 
\big[ \cV_N , A \big] = \; &\frac{1}{\sqrt{N}} \int dx dy \, N^2 V(N(x-y)) \wt{k}_H (x;y)  b_x^* b_y^* \big[ b(\gamma_{L,x}) + b^* (\sigma_{L,x}) \big]  \\  
&+ \frac{1}{\sqrt{N}} \int dx dy dz du \, N^2 \big[ V (N(x-u)) + V (N(y-u)) - V (N (z-u)) \big] \\ 
&\hspace{7cm} \times \wt{k}_H (x;y)  \gamma_L (x;z)  b_x^* b_y^* a_u^* b_z a_u  \\ 
&+  \frac{1}{\sqrt{N}} \int dx dy dz du \, N^2 \big[ V (N(x-u)) + V (N(y-u)) - V (N (z-u)) \big] \\ 
&\hspace{7cm} \times \wt{k}_H (x;y)  \sigma_L (x;z)  b_x^* b_y^* a_u^* b^*_z a_u  \\
&+\frac{1}{\sqrt{N}} \int dx dy dz \, N^2 \big[ V(N(x-z)) + V (N (y-z) \big] \wt{k}_H (x;y) \sigma_L (x;z) b_x^* b_y^* b_z^*  \\ 
&+ \text{h.c.} + [\cV_N, A-\tilde{A}].
\end{split} \end{equation} 
To control the  second term on the r.h.s. of (\ref{eq:VA-deco}), we write $\gamma_L (x;z) = \check{g} (x-z) + \gamma'_L (x;z)$. Focusing for example on the contribution proportional to $V(N(x-u))$ and to $\check{g}_L$, we can proceed as follows:
\[ \begin{split} \Big| \frac{1}{\sqrt{N}} &\int dx dy dz du \, N^2 V (N(x-u)) \wt{k}_H (x;y) \check{g}_L (x-z) \langle \xi, b_x^* b_y^* a_u^* b_z a_u \xi \rangle \Big|  \\ &\leq \frac{1}{\sqrt{N}} \int dx dy du dz \, N^2 V (N(x-u)) |\wt{k}_H (x;y)| \, \check{g}_L (x-z) \,  \| b_x b_y a_u  \xi \|  \| b_z a_u \xi \| \\ 
&\leq \frac{1}{\sqrt{N}} \Big[ \int dx dy du dz \, N^2 V (N(x-u)) \, \check{g}^2_L (x-z) \,  
\| b_x b_y a_u  \xi \|^2 \Big]^{1/2} \\ &\hspace{3cm} \times \Big[ \int dx dy du dz \, N^2 V (N(x-u)) |\wt{k}_H (x;y)|^2 \,  \| b_z a_u  \xi \|^2 \Big]^{1/2} \\
&\leq  CN^{(3\exps-\exph)/2-1} \| \cV_N^{1/2} \cN^{1/2} \xi \|  \| \cN \xi \| 
\leq CN^{-1} \| \cV_N^{1/2} \cN^{1/2} \xi \|  \| \cN \xi \| 
\end{split} \]
where we used the fact that $\sup_x \| \wt{k}_{H,x} \| \leq CN^{-\exph/2}$, $\| \check{g}_L \| \leq C N^{3\exps/2}$ shown in Lemma \ref{lm:bds-cubic}. The other contributions to the second term on the r.h.s. of  (\ref{eq:VA-deco}) can be treated similarly. Also the third term on the r.h.s. of (\ref{eq:VA-deco}) can be treated analogously. As for the fourth term on the r.h.s. of (\ref{eq:VA-deco}), we can proceed as follows (focussing for example on the contribution proportional to $V(N(x-z))$): 
\[ \begin{split} \Big| \frac{1}{\sqrt{N}} \int dx dy dz \, &N^2 V(N(x-z)) \wt{k}_H (x;y) \sigma_L (x;z) \langle \xi, b_x^* b_y^* b_z^* \xi \rangle \Big| \\ \leq \; &\frac{1}{\sqrt{N}} \int dx dy dz \, N^2 V(N(x-z)) |\wt{k}_H (x;y)| |\sigma_L (x;z)| \\ &\hspace{5cm} \times \| b_x b_y b_z (\cN+1)^{-1/2} \xi \| \| (\cN + 1)^{1/2} \xi \|  \\ \leq \; & N^{(3\exps-\exph)/2-1}  \| \cV_N^{1/2} \xi \| \| (\cN + 1)^{1/2} \xi \|
\leq CN^{-1} \| \cV_N^{1/2} \xi \| \| (\cN + 1)^{1/2} \xi \|. \end{split} \]
The contribution arising from the difference $A- \wt{A}$ can again be controlled using (\ref{eq:rho}). We find 
\begin{align*}
\vert \langle \xi, [\cV_N, A -\wt{A}] \xi \rangle \vert \leq CN^{-1} \Vert \cV_N (\cN+1)^{1/2} \xi \Vert \Vert (\cN +1) \xi \Vert. 
\end{align*}
We are left with the first term on the r.h.s. of (\ref{eq:VA-deco}). Here we replace $\wt{k}_H$ with 
$\wt{k} (x;y) = -N w_\ell (N (x-y)) \ph_0 (y)$, using that, by (\ref{eq:diffk}), $| (\wt{k} - \wt{k}_H) (x;y)| \leq C N^{\exph} \ph_0 (y)$. We conclude that 
\[ \begin{split} &\big[ \cV_N , A \big] \\ &= -\sqrt{N} \int dx dy \, N^2 V(N(x-y))  w_\ell (N (x-y))  \ph_0 (y) \, b_x^* b_y^* \big[ b(\gamma_{L,x}) + b^* (\sigma_{L,x}) \big]+ \text{h.c.} + \cE_2  \end{split} \]
where
\begin{equation}\label{eq:VA-fin} |\langle \xi , \cE_2 \xi \rangle | \leq  C N^{-1}  \| \cV_N^{1/2} (\cN+1)^{1/2} \xi \| \| (\cN + 1) \xi \|  + CN^{\exph- 1} \| \cV_N^{1/2} \xi \| \| (\cN+1)^{1/2} \xi \|. \end{equation} 
Combining (\ref{eq:KA-fin}) with (\ref{eq:VA-fin}) and using the scattering equation (\ref{eq:scatl}), we obtain 
\[ \big[ \cK + \cV_N , A \big] =  -\sqrt{N} \int dx dy \, N^2 V(N(x-y))  \ph_0 (y) \, b_x^* b_y^* \big[ b(\gamma_{L,x}) + b^* (\sigma_{L,x}) \big]+ \text{h.c.}  + \cE_3  \]
where the error term $\cE_3$ takes into account the error terms $\cE_1, \cE_2$ and also the contributions emerging from the r.h.s. of (\ref{eq:scatl}). From (\ref{eq:KA-fin}), (\ref{eq:VA-fin}), we obtain that
\begin{align*}
\vert \langle \xi, \cE_3 \xi \rangle \vert &\leq C N^{\exph -1/2}\Vert \cK^{1/2}\xi \Vert \Vert (\cN +1)\xi \Vert  + CN^{-1/2} \Vert \cK^{1/2} (\cN+1)^{1/2} \xi \Vert \Vert (\cK+\cN)^{1/2}  \xi \Vert \\
&\quad + CN^{-1} \Vert \cV_N^{1/2} (\cN +1)^{1/2} \xi \Vert \Vert (\cN+1) \xi \Vert+ CN^{\exph-1} \Vert\cV_N^{1/2} \xi \Vert \Vert (\cN+1)^{1/2} \xi \Vert \\
&\quad + CN^{-1/2} \Vert (\cN+1)^{3/4} \xi \Vert^2. 
\end{align*} 
Hence, using $\exph\leq 1/2$, we obtain as well
\[ \pm \cE_3 \leq C \left[ \cV_N + \cK + (\cN+1)^2 \right]. \]
To conclude the proof of the lemma, we just have to observe that 
\[ \pm \sqrt{N} \left( \int dx dy \, N^2 V(N(x-y)) \ph_0 (y) \, b_x^* b_y^* \big[ b(\gamma_{L,x}) + b^* (\sigma_{L,x}) \big] +h.c.\right) \leq C (\cV_N + \cN+1). \] 
\end{proof}

As an application of (\ref{eq:KV-A1}), we establish a bound on the growth of $\cK+\cV_N$ with respect to the action of $A$.
\begin{lemma} \label{lm:gronA-KV} 
Let $0<6\exps\leq \exph\leq 1/2$. For $k \in \{ 0 ,2 \}$, there is a constant $C > 0$ such that, for all $s\in [0;1]$ 
\begin{equation}\label{eq:gronA-KV}  e^{-sA} (\cK + \cV_N) (\cN+1)^k e^{sA} \leq C  (\cK + \cV_N) (\cN+1)^k + (\cN+1)^{k+2}. \end{equation} 
\end{lemma}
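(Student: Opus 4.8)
The plan is to prove \eqref{eq:gronA-KV} by a Gronwall argument applied to the function $s\mapsto \langle\xi, e^{-sA}(\cK+\cV_N)(\cN+1)^k e^{sA}\xi\rangle$, exactly as in the proof of Lemma \ref{lm:groext-A}. Differentiating in $s$ produces the commutator term $\langle\xi, e^{-sA}[\cK+\cV_N,A](\cN+1)^k e^{sA}\xi\rangle$ together with, when $k=2$, the contribution $2\,\mathrm{Re}\,\langle\xi, e^{-sA}[\cN,A](\cK+\cV_N)(\cN+1)^2 e^{sA}\xi\rangle$ coming from the failure of $\cN$ to commute with $A$. The first term is controlled directly by \eqref{eq:KV-A1}: for $k=0$ we get $\pm[\cK+\cV_N,A]\le C(\cK+\cV_N+(\cN+1)^2)$, and for $k=2$ one sandwiches this bound between factors of $(\cN+1)$, using that $[\cK+\cV_N,A]$ lowers or raises the number of excitations by a bounded amount (so that $(\cN+1)[\cK+\cV_N,A](\cN+1)\le C(\cN+1)(\cK+\cV_N+(\cN+1)^2)(\cN+1)$ up to commutators of lower order which are absorbed).

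For the cross term with $[\cN,A]$ that appears when $k=2$, I would follow the computation in Lemma \ref{lm:groext-A}: $[\cN,A]$ equals the sum of the two cubic pieces
\[
\frac{1}{\sqrt N}\int dx\,dy\;\wt k_H(x;y)\,b_x^* b_y^*\big[b(\gamma_{L,x})+b^*(\sigma_{L,x})\big]
\]
and its adjoint (this is the part of $A$ on which $\cN$ acts nontrivially, shifting the excitation count by three). Pairing this against $(\cK+\cV_N)(\cN+1)^2$ and using Cauchy--Schwarz together with $\sup_x\|\wt k_{H,x}\|\le C$, $\|\sigma_L\|_{\mathrm{HS}}\le C$, $\|\gamma_L\|_{\mathrm{op}}\le C$ from Lemma \ref{lm:bds-cubic}, one bounds these terms by $C\langle\xi,e^{-sA}(\cK+\cV_N)(\cN+1)^2 e^{sA}\xi\rangle + C\langle\xi,e^{-sA}(\cN+1)^4 e^{sA}\xi\rangle$; the kinetic factor $\cK$ is harmless because the only derivatives present act on the smooth cutoff kernels $\gamma_{L,x},\sigma_{L,x}$ (which cost at most $N^{\tau/2}$ and are compensated by the decay of $\wt k_H$). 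The $(\cN+1)^4$ contribution is then reabsorbed using Lemma \ref{lm:growA} to pass from $e^{-sA}(\cN+1)^4 e^{sA}$ back to $C(\cN+1)^4\le C(\cN+1)^{k+2}$, and one uses \eqref{bound conjugation K N}, \eqref{bound conjugation V_N} of Lemma \ref{lm:roughVNK} for crude a priori control on the conjugated $\cK$ and $\cV_N$ when first closing the estimate.

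Putting these together yields the differential inequality
\[
\Big|\frac{d}{ds}\langle\xi, e^{-sA}(\cK+\cV_N)(\cN+1)^k e^{sA}\xi\rangle\Big|
\le C\,\langle\xi, e^{-sA}(\cK+\cV_N)(\cN+1)^k e^{sA}\xi\rangle + C\,\langle\xi,(\cN+1)^{k+2}\xi\rangle,
\]
where on the last term I have already used Lemma \ref{lm:growA} to replace the conjugated number operator by $(\cN+1)^{k+2}$. Gronwall's lemma then gives \eqref{eq:gronA-KV}. The main obstacle I anticipate is the bookkeeping in the $k=2$ case: one must be careful that commuting the various $(\cN+1)$ factors through the cubic expressions in $[\cK+\cV_N,A]$ and $[\cN,A]$ only generates lower-order terms that are genuinely absorbable, and that the derivative terms in $[\cK,A]$ (the pieces of \eqref{eq:KA-comm} where $\nabla$ hits $\gamma_{L,x}$ or $\sigma_{L,x}$, scaling like $N^{\tau/2}$) remain subleading once paired against $(\cK+\cV_N)^{1/2}$ — this uses $3\tau\le\varepsilon\le 1/2$ as in Lemma \ref{lm:KV-A}. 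Since $\cK$ itself appears on the left of \eqref{eq:gronA-KV}, no net kinetic gain is needed, only that no uncontrolled power of $N$ or of $\cK$ is produced, which is exactly what the bounds in Lemmas \ref{lm:bds-cubic} and \ref{lm:KV-A} guarantee.
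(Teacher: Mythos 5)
Your overall Gronwall strategy is exactly the one the paper uses: differentiate $\varphi_\xi(s) = \langle\xi, e^{-sA}(\cK+\cV_N)(\cN+1)^k e^{sA}\xi\rangle$, bound the commutator $[\cK+\cV_N,A]$ by \eqref{eq:KV-A1}, use Lemma \ref{lm:growA} to absorb the conjugated number operator, and for $k=2$ handle the extra cross term coming from $[\cN,A]$. For $k=0$ your argument is complete and correct.

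For $k=2$, however, your treatment of the cross term is imprecise in a way that matters. First, a small factual slip: writing $\wt A = A_\gamma + A_\sigma - A_\gamma^* - A_\sigma^*$ with $A_\gamma,A_\sigma$ the two cubic pieces, one has $[\cN,A] = A_\gamma + 3A_\sigma + A_\gamma^* + 3A_\sigma^*$ because $A_\gamma$ shifts the excitation number by $1$ while $A_\sigma$ shifts it by $3$; your parenthetical ``shifting the excitation count by three'' conflates the two and drops the coefficients. More substantively, you say you will ``pair against $(\cK+\cV_N)(\cN+1)^2$ and use Cauchy--Schwarz,'' but for the terms $A_\gamma^*(\cK+\cV_N)$ and $A_\sigma^*(\cK+\cV_N)$ the annihilation part sits to the left of the (unbounded) kinetic operator, so Cauchy--Schwarz does not directly produce a factor $(\cK+\cV_N)^{1/2}$ on both sides. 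The paper's key step — which you omit — is the rewriting
\[
[\cN,A](\cK+\cV_N)=\big\{A_\gamma(\cK+\cV_N)+3A_\sigma(\cK+\cV_N)+\text{h.c.}\big\}+3[A_\sigma^*,\cK+\cV_N]+[A_\gamma^*,\cK+\cV_N],
\]
after which the Hermitian-symmetric bracket is handled by Cauchy--Schwarz and the two genuine commutators are controlled exactly as in the proof of Lemma \ref{lm:KV-A}. Your stated rationale that ``$\cK$ is harmless because the only derivatives present act on the smooth cutoff kernels $\gamma_{L,x},\sigma_{L,x}$'' is not accurate: when $\cK$ is commuted through $A_\sigma^*$ or $A_\gamma^*$, derivatives also land on $\wt k_H$ (this is precisely what produces the non-small $\Theta_0$ contribution in Lemma \ref{lm:KV-A}), and the resulting terms are of order one, not order $N^{-\exps/2}$ — they must be matched against $\cK+\cV_N$, not discarded as small. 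Finally, your invocation of Lemma \ref{lm:roughVNK} is a wrong reference: that lemma concerns conjugation by $e^B$ (the quadratic Bogoliubov transformation), not $e^A$; the paper's Gronwall argument requires no such a priori estimate for $e^{sA}$ and closes directly from the differential inequality.
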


\begin{proof} 
For a fixed $\xi \in \cF_{\perp \ph_0}^{\leq N}$ and for $s \in [0;1]$, let 
\[ \ph_\xi (s) = \langle \xi, e^{-sA} (\cK + \cV_N) (\cN+1)^k e^{sA} \xi \rangle = \langle \xi, e^{-sA} (\cN+1)^{k/2}  (\cK + \cV_N) (\cN+1)^{k/2} e^{sA} \xi \rangle. \]
We compute 
\begin{equation}\label{eq:der-phKV} \begin{split} \frac{d}{ds} \ph_\xi (s) = \; &\langle \xi,  e^{-sA} (\cN+1)^{k/2} \big[ \cK + \cV_N , A \big] (\cN+1)^{k/2} e^{sA} \xi \rangle \\ &+ \Big\{ \langle \xi , e^{-sA}\big[ (\cN+1)^{k/2} , A \big]  (\cK + \cV_N) (\cN+1)^{k/2} e^{sA} \xi \rangle  \xi \rangle  + \text{hc.} \Big\}. \end{split} \end{equation} 
With Lemma \ref{lm:KV-A}, the term on the first line is bounded by 
\[  \begin{split}  \langle \xi,  e^{-sA} (\cN+1)^{k/2} \big[ \cK + \cV_N , A \big] (\cN+1)^{k/2} e^{sA} \xi \rangle &\leq  \ph_\xi (s) + \langle \xi, e^{-sA} (\cN+1)^{k+2} e^{sA} \xi \rangle \\ &\leq  \ph_\xi (s) + C \langle \xi, (\cN+1)^{k+2} \xi \rangle \end{split} \]
where in the last step we applied Lemma \ref{lm:growA}. If $k=0$, (\ref{eq:gronA-KV}) follows by Gronwall's Lemma. If $k=2$, we still have to handle the terms on the second line of (\ref{eq:der-phKV}). To this end, we observe that, writing $\wt{A} = A_\gamma + A_\sigma - A^*_\gamma - A^*_\sigma$, with 
\[ A_\gamma = \frac{1}{\sqrt{N}} \int dx dy \, \wt{k}_H (x;y) b_x^* b_y^* b (\gamma_{L,x}), \qquad A_\sigma =   \frac{1}{\sqrt{N}} \int dx dy \, \wt{k}_H (x;y) b_x^* b_y^* b^* (\sigma_{L,x}), \]
we find, as $\cN$ preserves $\cF_{\perp \pn}^{\leq N}$,
\[ \big[ \cN , A \big] = \big[ \cN , \wt{A} \big] = A_\gamma + 3A_\sigma + 3 A_\sigma^* + A_\gamma^* \]
and thus 
\[ \big[ \cN , A \big] (\cK+\cV_N) = \left\{ A_\gamma (\cK+\cV_N) + 3A_\sigma (\cK + \cV_N) + \text{hc.} \right\} + 3 \big[ A_\sigma^* , (\cK+ \cV_N) \big] + \big[ A_\gamma^* , (\cK+ \cV_N)]. \]
To control these terms, we can proceed similarly as in the proof of Lemma \ref{lm:KV-A}. We skip the details.  
\end{proof} 

Since the term $\Theta_0$ defined in (\ref{eq:theta0-def}), emerging from the commutator of $\cK+\cV_N$ with $A$, is not small, we need to conjugate it again with $e^{A}$. To this end, we will use the following lemma. 
\begin{lemma} \label{lm:theta0}
We assume $0<3\exps\leq \exph$. Let $\Theta_0$ be defined as in (\ref{eq:theta0-def}). Let
\begin{equation}\label{eq:Pi0-def} \begin{split} \Pi_0 = \; &- \int dx dy N^2 V (N(x-y)) \ph_0 (y) \wt{k}_H (x;y) \big( b^* (\gamma_{L,x}) + b (\sigma_{L,x}) \big) \big( b (\gamma_{L,x}) + b^* (\sigma_{L,x}) \big) \\  &- \int dx dy N^2 V (N(x-y)) \ph_0 (x) \wt{k}_H (x;y) \big( b^* (\gamma_{L,x}) + b (\sigma_{L,x}) \big) \big( b (\gamma_{L,y}) + b^* (\sigma_{L,y}) \big). \end{split} \end{equation} 
Then we have
\begin{equation}\label{eq:ThA} 
\big[ \Theta_0 + \Theta_0^* , A \big] = \Pi_0 + \Pi_0^* + \cE \end{equation} 
where 
\[ \pm \cE \leq \frac{C}{\sqrt{N}} \big[ \cV_N + (\cN+1)^2 \big]. \]
Moreover, 
\begin{equation}\label{eq:PiA}
 \pm \big[ \Pi_0 + \Pi_0^* , A \big] \leq C N^{-1/2} (\cN+1)^2. \end{equation}
\end{lemma}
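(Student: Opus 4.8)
The plan is to derive both estimates by expanding the relevant commutators in normal order with the help of the canonical relations (\ref{eq:comm-b}) and then bounding the resulting terms by Cauchy--Schwarz, using the kernel estimates of Lemmas~\ref{lm:bndseta} and~\ref{lm:bds-cubic} and the scattering information of Lemma~\ref{3.0.sceqlemma}, exactly in the spirit of the preceding lemmas of this section. As a preliminary reduction I would replace $A$ by $\wt A$ throughout: by (\ref{eq:A-wtA}) and (\ref{eq:rho}), the contributions of $[\Theta_0 + \Theta_0^*, A - \wt A]$ and $[\Pi_0 + \Pi_0^*, A-\wt A]$ are controlled on $\cF_{\perp \pn}^{\leq N}$ by $CN^{-1/2}(\cV_N + (\cN+1)^2)$ and $CN^{-1/2}(\cN+1)^2$ respectively, since on the excitation space all fields act on vectors orthogonal to $\pn$, so one may estimate those terms as explained after (\ref{eq:A-wtA}).

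For (\ref{eq:ThA}) I would compute $[\Theta_0 + \Theta_0^*, \wt A]$ explicitly. Writing $\wt A = C - C^*$ with $C = N^{-1/2}\int dx dy\, \wt k_H(x;y)\, b_x^* b_y^* [\,b(\gamma_{L,x}) + b^*(\sigma_{L,x})\,]$, each commutator $[\Theta_0^{\sharp}, C^{\sharp}]$ is a commutator of two cubic expressions; since $\Theta_0$ carries the prefactor $\sqrt N\, N^2 V(N(x-y))$ and $C$ the prefactor $N^{-1/2}\wt k_H$, the cancellation of the powers of $N$ leaves terms that are a priori of order one, the gain coming only from the concentration of the interaction whenever it is not saturated by a contraction. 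Bringing everything into normal order, the terms with a \emph{single} contraction are quartic in the $b^{\sharp}$; for each of them I would apply Cauchy--Schwarz so that the concentrated pair $b^{\sharp}_x b^{\sharp}_y$ inherited from $\Theta_0$ is estimated against $\cV_N^{1/2}$, i.e.\ by $[\int N^2 V(N(x-y))\,\|a_x a_y(\,\cdot\,)\|^2]^{1/2}$, while the complementary factor contains $\int N^2 V(N(x-y))\,|F(x,\dots)|^2 \le CN^{-1}\|F\|_2^2$ with $F$ assembled from $\pn,\gamma_L,\sigma_L,\wt k_H$ and controlled by Lemma~\ref{lm:bds-cubic}; this yields a gain $N^{-1/2}$ and puts all quartic terms into $\cE$. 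The terms with a \emph{double} contraction are quadratic; here the two ways of pairing the creation pair $b_x^* b_y^*$ of $\Theta_0$ (respectively the annihilation pair of $\Theta_0^*$) with the corresponding pair of $\wt A$ — the ``straight'' pairing $x'=x, y'=y$ and the ``crossed'' pairing $x'=y, y'=x$ — reproduce, after the radial--symmetry identity $\pn(y)\wt k_H(y;x) = \pn(x)\wt k_H(x;y)$ (coming from the evenness of $w_\ell$ and of the cutoff $\check\chi_H$), precisely the two contributions to $\Pi_0$ in (\ref{eq:Pi0-def}) — including the c-number and $\cN/N$ pieces, since $\Pi_0$ is written in non-normal-ordered form and the constants hidden in its $b(\sigma_{L,\cdot})b^*(\sigma_{L,\cdot})$ parts exactly absorb those generated by the commutator. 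All remaining double-- and triple--contraction terms involve either the $-\cN/N$ or $-N^{-1}a^*a$ correction in (\ref{eq:comm-b}), or a scalar product $\langle\gamma_{L,x},\sigma_{L,x'}\rangle$, or the extra factor $N^{-1}$ from $\int N^2 V(N(x-y))\,dy$, hence are bounded by $CN^{-1/2}(\cV_N + (\cN+1)^2)$ and go into $\cE$.

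For (\ref{eq:PiA}) I would again pass to $\wt A$ and compute $[\Pi_0 + \Pi_0^*, \wt A]$. Since $\Pi_0$ is only quadratic in the $b^{\sharp}$ and carries no factor $\sqrt N$ — its coefficient is built from the \emph{bounded} combination $N^2 V(N(x-y))\pn(\cdot)\wt k_H(x;y)$, whose partial integral $\int N^2 V(N(x-y))\pn(y)\wt k_H(x;y)\,dy$ is of order one and equals $-\pn(x)^2(\widehat{V}(0) - 8\pi\mathfrak{a}_0) + O(N^{-1})$ by Lemma~\ref{3.0.sceqlemma}\,ii) and~i) — the commutator with the cubic $\wt A$ inherits the full $N^{-1/2}$ from $\wt A$, and all resulting terms (quartic, quadratic, constant) are bounded in Cauchy--Schwarz by $CN^{-1/2}(\cN+1)^2$ using $\|\gamma_L\|_{\mathrm{op}}\le C$, $\sup_x\|\sigma_{L,x}\|\le C\pn(x)$, $\sup_x\|\wt k_{H,x}\|\le CN^{-\exph/2}$ from Lemma~\ref{lm:bds-cubic}, together with the concentration of $N^2 V(N\cdot)$; the $\cN/N$-- and c-number pieces of $\Pi_0$ commute with $A$ to produce terms of size at most $N^{-1}(\cN+1)^2$ and $0$ respectively.

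The routine part is the algebra of the normal-ordering; the delicate point is the bookkeeping in the second step — identifying exactly which contractions survive as $\Pi_0 + \Pi_0^*$ and verifying the matching of not only the quadratic operator but also the c-number and $\cN/N$ pieces implicit in the non-normal-ordered definition (\ref{eq:Pi0-def}), and then checking that every other term genuinely carries a factor $N^{-1/2}$. The last point is what forces one to route the concentrated interaction $N^2 V(N(x-y))$ into a $\cV_N^{1/2}$ (exploiting its $N^{-1}$-mass) instead of bounding it by an $L^2$ norm, which would lose powers of $N$, and to insert the smallness $\|\wt k_H\|_2,\ \sup_x\|\wt k_{H,x}\|_2 \le CN^{-\exph/2}$ of the high-momentum kernel at the right places; this is the main obstacle.
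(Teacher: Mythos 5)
Your plan follows essentially the same route as the paper's proof: split off $A-\wt{A}$ (controlled via (\ref{eq:rho})), expand $[\Theta_0+\Theta_0^*,\wt{A}]$ through the CCR, identify the two double-contraction patterns (the paper's terms I and II from commuting through $b_u^*$ and $b_v^*$; your ``straight'' and ``crossed'' pairings, reconciled by the evenness of $-Nw_\ell(N.)*\check{\chi}_H$) as producing the two lines of the non-normal-ordered $\Pi_0$, and absorb all remaining quartic contributions by routing $N^2V(N(x-y))$ into a $\cV_N^{1/2}$ factor. Your treatment of (\ref{eq:PiA}) is also morally the paper's argument via Lemma~\ref{lemma:conjquadA}, but one of the tools you list is not the right one: the Hilbert--Schmidt smallness $\sup_x\|\wt{k}_{H,x}\|\le CN^{-\exph/2}$ does \emph{not} suffice here, since pairing it by Cauchy--Schwarz against $\|N^2V(N(x-\cdot))\|_2\sim N^{1/2}\|V\|_2$ leaves an unbounded factor $N^{(1-\exph)/2}$. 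What one actually needs is the pointwise bound $|\wt{k}_H(x;y)|\le CN\pn(y)$ (which follows from $\|Nw_\ell(N.)\|_\infty\le CN$ together with (\ref{eq:diffk})), from which $\sup_x\int dy\, N^2V(N(x-y))\big(\pn(x)+\pn(y)\big)|\wt{k}_H(x;y)|\le C\|V\|_1\|\pn\|_\infty^2$ uniformly in $N$; it is this $\sup_x\int dy$-norm — not the $L^2$-norm, which grows like $N^{3/2}$ — that is fed into (\ref{eq:F1-bd}) to conclude.
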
 

\begin{proof} 
We compute 
\begin{equation} \label{eq:CA-00} 
\begin{split} 
\big[ \Theta_0 &+ \Theta_0^* , A \big]  \\ = \; &\frac{1}{\sqrt{N}} \int du dv N^3 V (N(u-v)) \ph_0 (v) \big[ b_u^* , A \big] b_v^* \big(b(\gamma_{L ,u}) + b^* (\sigma_{L ,u}) \big) + \text{h.c.} \\  &+\frac{1}{\sqrt{N}} \int du dv N^3 V (N(u-v)) \ph_0 (v) b_u^* \big[ b_v^* , A \big] \big(b(\gamma_{L ,u}) + b^* (\sigma_{L ,u}) \big) + \text{h.c.} \\  &+\frac{1}{\sqrt{N}} \int du dv N^3 V (N(u-v)) \ph_0 (v)  b_u^*  b_v^*  \big[ \big(b(\gamma_{L ,u}) + b^* (\sigma_{L ,u}) \big) , A \big] + \text{h.c.} \\ 
&+ \big[ \Theta_0 + \Theta_0^* , A-\wt{A} \big] \\
= \; & \text{I} + \text{II} + \text{III}  + \big[ \Theta_0 + \Theta_0^* , A-\wt{A} \big]. 
\end{split} 
\end{equation} 
Let us first consider the operator $\text{I}$. With the expression (\ref{eq:A-def}) for $A$, the commutator $[b_u^*, A]$ produces several terms.
From (\ref{eq:comm-b}), we can also predict that corrections to the canonical commutation relations carry an additional $N^{-1}$ factor and will therefore lead to small errors. An example of such a term contributing to $\text{I}$ is
\[ \begin{split} \Big| \frac{1}{N^2} \int &du dv dx dy N^3 V(N(u-v)) \ph_0 (v) \wt{k}_H (x;y) \langle \xi , b^* (\gamma_{L,x}) a_u^* b_v^* a_x b_y b(\gamma_{L,u}) \xi \rangle \Big| \\ &\leq 
 \frac{1}{N^2} \int du dv dx dy N^3 V(N(u-v)) \ph_0 (v) |\wt{k}_H (x;y)| \| a_u a_v a (\gamma_{L,x}) \xi \| \| a_x a_y a (\gamma_{L,u}) \xi \|  \\ &\leq 
 \frac{C}{N^{3/2}} \| \cV_N^{1/2} (\cN+1)^{1/2} \xi \|  \| (\cN+1)^{3/2} \xi \|.  \end{split} \]
 Other contributions of order 6 in creation and annihilation operators can be bounded similarly. Let us now focus on terms of order 4, which appear from the term $\text{I}$ if we use canonical commutation relations for the $b$-fields. The main contribution arises from the part of $A$ containing 2 or 3 annihilation operators. It has the form (from $[b_u^*, b_x] \simeq \delta (u-x)$) 
 \begin{equation}\label{eq:CA-0} \begin{split} \int &dv dx dy N^2 V (N(x-v)) \ph_0 (v) \wt{k}_H (x;y) \big( b^* (\gamma_{L,x}) + b (\sigma_{L,x}) \big) b_y b_v^* \big(b(\gamma_{L,x}) + b^* (\sigma_{L,x}) \big) \\ \simeq \; & \int dv dx dy N^2 V (N(x-v)) \ph_0 (v) \wt{k}_H (x;y) \\ &\hspace{3cm} \times \big( b^* (\gamma_{L,x}) + b (\sigma_{L,x}) \big) b_v^* b_y  \big(b(\gamma_{L,x}) + b^* (\sigma_{L,x})\big) \\ &+  \int dx dy N^2 V (N(x-y)) \ph_0 (y) \wt{k}_H (x;y) \big( b^* (\gamma_{L,x}) + b (\sigma_{L,x}) \big)  \big(b(\gamma_{L,x}) + b^* (\sigma_{L,x})\big) \end{split}  \end{equation} 
 where again we used the fact that corrections to the canonical commutation relations produce only small errors. 
 The first term on the r.h.s. of the last equation is small, on states with few excitations. In fact, we can bound (considering for example the contribution proportional to $b^* (\gamma_{L,x})$ and to $b(\gamma_{L,x})$)  \begin{equation}\label{eq:CA-1} 
 \begin{split} \Big|  \int dv & \,dx dy \, N^2 V (N(x-v)) \ph_0 (v) \wt{k}_H (x;y)  \langle \xi, b^* (\gamma_{L,x})  b_v^* b_y  b(\gamma_{L,x}) \xi \rangle \Big| \\  \leq \; &\int dv dx dy dz \, N^2 V (N(x-v)) \ph_0 (v) |\wt{k}_H (x;y)| \check{g}_L (x-z) \| a_z a_v \xi \|  \|  b_y  b (\gamma_{L,x}) \xi \| \\ &+   \int dv dx dy dz \, N^2 V (N(x-v)) \ph_0 (v) |\wt{k}_H (x;y)|  \| a (\gamma'_{L,x}) a_v \xi \|  \|  b_y  b (\gamma_{L,x}) \xi \| \end{split} \end{equation} 
 where we used the notation $\gamma'_{L} = (\gamma_{L} -1) *_2 \hat{g}_L$. The first term on the r.h.s. of (\ref{eq:CA-1}) can be estimated (using that $\sup_x \| \wt{k}_{H,x} \| \leq C <\infty$) by 
 \[ \begin{split} 
 \int dv dx dy & dz \, N^2 V (N(x-v)) \ph_0 (v) |\wt{k}_H (x;y)| \check{g}_L (x-z) \| a_z a_v \xi \|  \|  b_y  b (\gamma_{L,x}) \xi \| \\ &\leq \left[ \int dv dx dy dz \, N^2 V (N(x-v)) |\wt{k}_H (x;y)|^2 \| a_z a_v \xi \|^2 \right]^{1/2} \\ &\hspace{1.5cm} \times \left[  \int dv dx dy dz \, N^2 V (N(x-v))  \check{g}^2_L (x-z)  \|  a_y  a (\gamma_{L,x}) \xi \|^2 \right]^{1/2}\\ & \leq 
 C N^{-1+(3\tau -\exph)/2} \| (\cN+1) \xi \|^2. \end{split} \] 
 The second term on the r.h.s. of (\ref{eq:CA-1}) can be bounded using that $\sup_x \| \gamma'_{L,x} \| < \infty$. There are many more contributions to $\text{I}$ that are quartic in creation and annihilation operators (some of them appear in the first term on the r.h.s. of (\ref{eq:CA-0})); they can all be controlled analogously (in fact, the term (\ref{eq:CA-0}) is the only one producing a non-negligible quadratic contribution; the reason is that the two commutators leading to the second term on the r.h.s. of (\ref{eq:CA-0}) contract the variables in the kernel $\wt{k}_H$ with the variables in the interaction potential). We conclude that 
 \[ \text{I} = \int dx dy N^2 V (N(x-v)) \ph_0 (y) \wt{k}_H (x;y) \big( b^* (\gamma_{L,x}) + b (\sigma_{L,x}) \big)  (b(\gamma_{L,x}) + b^* (\sigma_{L,x})) + \cE_1 \]
 where
 \[ \pm \cE_1 \leq \frac{C}{\sqrt{N}} \big[ \cV_N + (\cN+1)^2 \big]. \]
 The term $\text{II}$ on the r.h.s. of (\ref{eq:CA-00}) can be handled similarly. We find 
 \[ \text{II} = \int dx dy N^2 V (N(x-v)) \ph_0 (x) \wt{k}_H (x;y) \big( b^* (\gamma_{L,x}) + b (\sigma_{L,x}) \big)  (b(\gamma_{L,y}) + b^* (\sigma_{L,y})) + \cE_2 \]
with an error $\cE_2$ satisfying the same estimate as $\cE_1$. To bound $\text{III}$ we can proceed similarly. We obtain 
\[ \pm \text{III} \leq \frac{C}{\sqrt{N}}  \big[ \cV_N + (\cN+1)^2 \big]. \]
The contribution arising from $A-\wt{A}$ can be handled as indicated in (\ref{eq:A-wtA}), with the bounds in (\ref{eq:rho}). We obtain 
\begin{align*}
\vert \langle \xi, [\Theta_0+ \Theta_0^*, A-\wt{A}] \xi \rangle \vert \leq \frac{C}{N^{3/2}} \Vert \cV_N^{1/2} \xi \Vert \Vert \cN^2 \xi \Vert.
\end{align*}
This proves (\ref{eq:ThA}). 

To show (\ref{eq:PiA}), we can use Lemma \ref{lemma:conjquadA}. To this end, we just need to observe that the kernel $s (x;y) := N^2 V (N(x-y)) (\ph_0 (y) + \ph_0 (x)) |\wt{k}_H (x;y)|$ is such that $\sup_x \int dy |s (x;y)| < \infty$, uniformly in $N$. This follows readily from $\vert \wt{k}_H(x,y) \vert \leq CN\pn(y).$ 
\end{proof}

Combining Lemma \ref{lm:KV-A} with Lemma \ref{lm:gronA-KV} and Lemma \ref{lm:theta0}, we can now compute precisely the action of $A$ on the operator $\cK+ \cV_N$. 
\begin{cor} \label{cor:HA-conj}
Let $0<6\exps\leq \exph\leq 1/2$, then we have 
\[ e^{-A} (\cK + \cV_N) e^A = \cK + \cV_N + \Theta_0 + \Theta_0^* + \frac{1}{2} \left[ \Pi_0 + \Pi_0^* \right] + \cE \]
where
\[ \pm \cE \leq C N^{\exph -1/2} \left[ (\cV_N + \cK) (\cN+1)^2 + (\cN+1)^4 \right]. \]
\end{cor}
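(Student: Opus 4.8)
The plan is to expand the conjugation $e^{-A}(\cK+\cV_N)e^A$ using Duhamel's formula twice, keeping track of the non-negligible terms $\Theta_0$ and $\Pi_0$ that appear at first and second order, and showing that all further iterated commutators are error terms. First I would write
\[ e^{-A} (\cK+\cV_N) e^A = (\cK+\cV_N) + \int_0^1 ds \, e^{-sA} [\cK+\cV_N, A] e^{sA}, \]
and substitute the decomposition $[\cK+\cV_N, A] = \Theta_0 + \Theta_0^* + \cE$ from Lemma \ref{lm:KV-A}. For the error part $\cE$, I would apply the growth estimate of Lemma \ref{lm:gronA-KV} (with $k=0$ and $k=2$) together with Lemma \ref{lm:growA} to conclude that $\int_0^1 ds \, e^{-sA} \cE \, e^{sA}$ is bounded by $C N^{\exph-1/2}[(\cV_N+\cK)(\cN+1)^2 + (\cN+1)^4]$, which is of the claimed size.

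Next I would treat the main term $\int_0^1 ds \, e^{-sA}(\Theta_0 + \Theta_0^*) e^{sA}$ by a second Duhamel expansion:
\[ e^{-sA}(\Theta_0+\Theta_0^*)e^{sA} = (\Theta_0+\Theta_0^*) + \int_0^s dt \, e^{-tA}[\Theta_0+\Theta_0^*, A]e^{tA}. \]
By Lemma \ref{lm:theta0}, $[\Theta_0+\Theta_0^*,A] = \Pi_0 + \Pi_0^* + \cE'$ with $\pm \cE' \leq C N^{-1/2}[\cV_N+(\cN+1)^2]$; again I would propagate $\cE'$ through $e^{\pm tA}$ using Lemmas \ref{lm:growA} and \ref{lm:gronA-KV}. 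Integrating $\int_0^1 ds \int_0^s dt$ gives a factor $1/2$ in front of the leading second-order term. Finally, for the $\Pi_0 + \Pi_0^*$ piece one more Duhamel step plus the bound $\pm[\Pi_0+\Pi_0^*, A] \leq C N^{-1/2}(\cN+1)^2$ from \eqref{eq:PiA} shows that the third-order contribution is negligible. Collecting the surviving terms $(\cK+\cV_N) + (\Theta_0+\Theta_0^*) + \frac12(\Pi_0+\Pi_0^*)$ yields the stated identity.

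The main obstacle will be organizing the bookkeeping of the error terms so that all the powers of $\cN$ and the factors $N^{\exph-1/2}$ line up. In particular, when propagating an error like $\cV_N (\cN+1)^{1/2}$ through $e^{sA}$ one must first dominate it by $\cV_N(\cN+1)^2$ (or $(\cN+1)^4$) before Lemma \ref{lm:gronA-KV} applies, and one has to check that conjugating $(\cK+\cV_N)(\cN+1)^2$ rather than $\cK+\cV_N$ does not generate worse powers — this is exactly why Lemma \ref{lm:gronA-KV} was stated for $k\in\{0,2\}$. A secondary point requiring care is that the error estimates in Lemmas \ref{lm:KV-A} and \ref{lm:theta0} are not purely in terms of $(\cN+1)$ but involve $\cK^{1/2}$ and $\cV_N^{1/2}$ with fractional powers of $\cN$; after the Duhamel integrations these must all be reabsorbed into $(\cV_N+\cK)(\cN+1)^2 + (\cN+1)^4$ using Cauchy--Schwarz and the fact that $\exph \leq 1/2$. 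The quadratic-in-fields terms $\Pi_0$ are controlled via Lemma \ref{lemma:conjquadA}, so no new ideas are needed there; the whole argument is a careful but routine assembly of the preceding lemmas.
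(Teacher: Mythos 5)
Your proposal is correct and is essentially the argument the paper intends: the paper does not write out an explicit proof of Corollary~\ref{cor:HA-conj}, stating only that it follows by ``combining Lemma~\ref{lm:KV-A} with Lemma~\ref{lm:gronA-KV} and Lemma~\ref{lm:theta0},'' and your nested Duhamel expansion with propagation of errors via Lemma~\ref{lm:growA} and Lemma~\ref{lm:gronA-KV} (dominating $(\cV_N+\cK)(\cN+1)$ by $(\cV_N+\cK)(\cN+1)^2$ before invoking Lemma~\ref{lm:gronA-KV} with $k=2$) is exactly the fill-in one expects. One minor remark: the ``secondary point'' you flag about fractional powers $\cK^{1/2}$, $\cV_N^{1/2}$ is already absorbed inside the proofs of Lemmas~\ref{lm:KV-A} and \ref{lm:theta0}; their stated conclusions are clean operator bounds in terms of $\cV_N$, $\cK$ and $(\cN+1)$, so no further Cauchy--Schwarz is needed at the level of the corollary.
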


\subsection{Analysis of $e^{-A} \cC_{\cG_N} e^A$}

\begin{lemma}\label{lm:Xi} 
Let 
\begin{equation} \label{eq:Xi0-def} \begin{split} \Xi_0 = \; & \int dx dy \, N^2 V(N(x-y)) \pn(y) \, \wt{k}_H (x;y) \left( b^* (\gamma_{L,x}) + b (\sigma_{L,x}) \right) \left( b(\gamma_x) + b^*(\sigma_x) \right) \\
\; &+ \int dx dy \, N^2 V(N(x-y)) \pn(x) \, \wt{k}_H (x;y) \left( b^* (\gamma_{L,x}) + b (\sigma_{L,x}) \right) \left( b(\gamma_y) + b^*(\sigma_y) \right). \end{split} \end{equation} 
Then we have 
\begin{align*}
\big[ \cC_{\cG_N} , A \big] = \Xi_0 + \Xi_0^* + \cE \end{align*} 
where 
\[ \pm \cE \leq \frac{C}{\sqrt{N}} \big[ \cV_N + ( \cN+1)^2 \big]. \]
Moreover, 
\begin{align*}
\pm \big[ \Xi_0 + \Xi_0^*, A \big]  \leq C N^{- 1/2} (\cN+1)^2. \end{align*} 
\end{lemma}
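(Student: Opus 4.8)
The plan is to compute the commutator $[\cC_{\cG_N},A]$ explicitly and to show that, up to negligible errors, it equals $\Xi_0 + \Xi_0^*$. Recall from \eqref{eq:cCcGN} that $\cC_{\cG_N}$ consists of the cubic terms $N^{5/2}V(N(x-y))\ph_0(y) b_x^* b_y^* (b(\gamma_x)+b^*(\sigma_x))$ plus hermitian conjugate, while $A$ is itself cubic (after replacing $\wt{b},\wt{b}^*$ by $b,b^*$, incurring errors controlled through \eqref{eq:A-wtA} and \eqref{eq:rho}). First I would expand $[\cC_{\cG_N},\wt{A}]$ by distributing the commutator over the three factors in each monomial of $\cC_{\cG_N}$: the commutators $[b_x^*,\wt{A}]$, $[b_y^*,\wt{A}]$ and $[(b(\gamma_x)+b^*(\sigma_x)),\wt{A}]$. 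Each of these, by the canonical commutation relations \eqref{eq:comm-b}, produces a leading term from the $\delta$-function contraction plus a correction term carrying an extra factor $N^{-1}$. The correction terms give operators of order six in creation and annihilation operators; these are estimated directly, using $\sup_x\|\wt{k}_{H,x}\|\leq C$, $\sup_x\|\sigma_{L,x}\|\leq C$, $\|\gamma_L\|_{\mathrm{op}}\leq C$ from Lemma \ref{lm:bds-cubic} together with the factor $N^2 V(N(x-y))$ that can be traded for $\cV_N^{1/2}$, giving a bound of the form $CN^{-3/2}\|\cV_N^{1/2}(\cN+1)^{1/2}\xi\|\,\|(\cN+1)^{3/2}\xi\|$ and hence an error $\leq CN^{-1/2}(\cV_N+(\cN+1)^2)$.

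Next I would collect the quartic contributions coming from the $\delta$-contractions. Among these, most are again small on states with few excitations: they can be bounded exactly as in the model computation \eqref{eq:CA-1}, splitting $\gamma_{L,x}(z)=\check{g}_L(x-z)+\gamma'_{L,x}(z)$ with $\gamma'_L=(\gamma-1)*_2\check{g}_L$, using Cauchy–Schwarz with the potential $N^2V(N\cdot)$ and the smallness $\sup_x\|\wt{k}_{H,x}\|\leq CN^{-\exph/2}$ to produce a gain $N^{(3\tau-\exph)/2-1}\leq N^{-1}$ (here $3\tau\leq\exph$ is used). The only non-negligible quartic terms arise precisely when the two commutators contract one variable of $\wt{k}_H$ in $\cC_{\cG_N}$ with the $b$-leg coming from $A$'s $b(\gamma_{L,x})$-factor, at the same time matching the structure that produces $N^{5/2}\cdot N^{-1/2}=N^2$ times the potential times $\wt{k}_H$ and $\ph_0$ evaluated so that the convolution $N^2 V(N(x-y))\ph_0(y)$ (or with $\ph_0(x)$) survives. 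Tracking the two ways the surviving contraction can occur (depending on whether $b_x^*$ or $b_y^*$ of $\cC_{\cG_N}$ is the one contracted) gives exactly the two terms in the definition \eqref{eq:Xi0-def} of $\Xi_0$, with the factor $(b(\gamma_x)+b^*(\sigma_x))$ or $(b(\gamma_y)+b^*(\sigma_y))$ untouched (note the absence of the $L$-cutoff on this last factor, inherited directly from $\cC_{\cG_N}$).

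For the second claim, $\pm[\Xi_0+\Xi_0^*,A]\leq CN^{-1/2}(\cN+1)^2$, I would apply Lemma \ref{lemma:conjquadA}: the operator $\Xi_0$ is, schematically, of the form $\int F(x,y)\,(\text{quadratic in }b,b^*)$ with kernel $F(x,y)=N^2V(N(x-y))(\ph_0(x)+\ph_0(y))|\wt{k}_H(x;y)|$, and the key point is that $\sup_x\int dy\,|F(x,y)|\leq C$ uniformly in $N$, which follows from the pointwise bound $|\wt{k}_H(x;y)|\leq CN\ph_0(y)$ (from \eqref{eq:bds-cubic1}) and $\int N^3 V(N(x-y))dy\leq C$. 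More precisely one needs to commute $A$ through both a $b^*b$-type and a $b^*b^*+bb$-type piece of $\Xi_0$; the diagonal piece is handled by \eqref{eq:F1-bd} with the $\sup_x\int dy|F(x,y)|$ estimate, and the off-diagonal piece by \eqref{eq:F2-bd}, for which one checks $\|F\|_2\leq C$ (again from $|\wt{k}_H(x;y)|\leq CN\ph_0(y)$ and the $L^1$-scaling of $N^2V(N\cdot)$). The main obstacle is bookkeeping: the expansion of $[\cC_{\cG_N},\wt{A}]$ generates a large number of monomials (three commutators, each with several terms, times the $(b(\gamma_x)+b^*(\sigma_x))$ splitting), and one has to be careful to identify the unique family of quartic terms that contributes at order one while systematically bounding all remaining terms by $N^{-1/2}$ powers. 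The estimates themselves are all of the now-standard type used in Lemmas \ref{lm:theta0} and \ref{lm:KV-A}, so no genuinely new difficulty arises beyond the combinatorial care and the correct tracking of the cutoffs $\gamma_L,\sigma_L$ versus $\gamma,\sigma$.
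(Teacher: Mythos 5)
Your proposal follows the same route the paper indicates: mimic the proof of Lemma \ref{lm:theta0}, expanding $[\cC_{\cG_N},\wt{A}]$ into the three monomial commutators, isolating the two double-contraction quartic terms that become $\Xi_0$, estimating the remainder via Cauchy--Schwarz against $\cV_N$ with the kernel bounds of Lemma \ref{lm:bds-cubic}, and invoking Lemma \ref{lemma:conjquadA} with the kernel $N^2V(N(x-y))(\pn(x)+\pn(y))|\wt k_H(x;y)|$ for the second estimate. The only slip is the reference for $|\wt k_H(x;y)|\leq CN\pn(y)$: this is not \eqref{eq:bds-cubic1} but follows from $\|Nw_\ell(N\cdot)*\check\chi_H\|_\infty\leq CN$ (via \eqref{eq:whwl}) and the $\pn(y)$-factor in the definition \eqref{eq:def-wtk}, exactly as used at the end of the paper's proof of Lemma \ref{lm:theta0}.
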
 

Since the operator $\cC_{\cG_N}$ is very similar to $\Theta_0 + \Theta_0^*$, with $\Theta_0$ as defined in (\ref{eq:defTheta0}), Lemma \ref{lm:Xi} can be shown very similarly to Lemma \ref{lm:theta0}. We skip the details.  
With Lemma \ref{lm:Xi} (and using Lemma \ref{lm:gronA-KV}) to bound the growth of $\cV_N$), we can control 
the action of $A$ on $\cC_{\cG_N}$.   
\begin{lemma} \label{lm:conj-CN} 
We have
\[ e^{-A} \cC_{\cG_N} e^A = \cC_{\cG_N} + \Xi_0 + \Xi_0^* + \cE \]
where 
\[ \pm \cE \leq \frac{C}{\sqrt{N}}  \big[ \cV_N + (\cN+1)^2 \big]. \]
\end{lemma}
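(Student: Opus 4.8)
The plan is to proceed by a twofold Duhamel expansion, exactly as in the analysis of $e^{-A}(\cK+\cV_N)e^A$ leading to Corollary \ref{cor:HA-conj}, the situation here being simpler since no kinetic energy is produced. First I would write
\[ e^{-A}\cC_{\cG_N}e^A = \cC_{\cG_N} + \int_0^1 ds\, e^{-sA}\big[\cC_{\cG_N},A\big]e^{sA} \]
and insert Lemma \ref{lm:Xi}, which gives $[\cC_{\cG_N},A] = \Xi_0 + \Xi_0^* + \cE_1$ with $\pm\cE_1 \leq \frac{C}{\sqrt N}(\cV_N + (\cN+1)^2)$, where $\Xi_0$ is defined in \eqref{eq:Xi0-def}. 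This splits the task into controlling $\int_0^1 ds\, e^{-sA}\cE_1 e^{sA}$ and $\int_0^1 ds\, e^{-sA}(\Xi_0+\Xi_0^*)e^{sA}$.

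For the first integral I would conjugate the two pieces of the bound on $\cE_1$ separately. The contribution of $(\cN+1)^2$ is immediately controlled by Lemma \ref{lm:growA} with $j=2$. For the contribution of $\cV_N$ I would use the growth estimate $e^{-sA}\cV_N e^{sA} \leq C\cV_N + C(\cN+1)^2$, uniformly in $s\in[0;1]$: this is the $k=0$ case of the analogue of Lemma \ref{lm:gronA-KV} for $\cV_N$ alone, and is proved by the same Gronwall argument, since, as one reads off from the proof of Lemma \ref{lm:KV-A}, the commutator $[\cV_N,A]$ (and $[\cV_N,(\cN+1)^k]$) produces only terms bounded by $\cV_N + (\cN+1)^2$ and no kinetic energy — the potential cutoff supplies a factor $w_\ell\le 1$ in the main term, which is then estimated as in the last display of the proof of Lemma \ref{lm:KV-A}. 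Hence $\pm\int_0^1 ds\, e^{-sA}\cE_1 e^{sA} \leq \frac{C}{\sqrt N}(\cV_N + (\cN+1)^2)$.

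For the second integral I would Duhamel-expand once more,
\[ \int_0^1 ds\, e^{-sA}(\Xi_0+\Xi_0^*)e^{sA} = \Xi_0 + \Xi_0^* + \int_0^1 ds\int_0^s dt\, e^{-tA}\big[\Xi_0+\Xi_0^*,A\big]e^{tA}, \]
and use the second bound of Lemma \ref{lm:Xi}, namely $\pm[\Xi_0+\Xi_0^*,A] \leq CN^{-1/2}(\cN+1)^2$, together with Lemma \ref{lm:growA} to see that the double integral is $\leq \frac{C}{\sqrt N}(\cN+1)^2$. Collecting the three pieces yields $e^{-A}\cC_{\cG_N}e^A = \cC_{\cG_N} + \Xi_0 + \Xi_0^* + \cE$ with $\pm\cE \leq \frac{C}{\sqrt N}(\cV_N + (\cN+1)^2)$, as claimed.

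The only point requiring genuine care, and the step I would be most careful about, is the kinetic-energy-free growth estimate for $\cV_N$ under conjugation by $e^{sA}$: Lemma \ref{lm:gronA-KV} is stated for $\cK+\cV_N$ jointly because $[\cK,A]$ reproduces $\cV_N$-type terms through the scattering equation \eqref{eq:scatl}, but the converse does not occur, so the Gronwall functional $s\mapsto\langle\xi, e^{-sA}\cV_N e^{sA}\xi\rangle$ closes on $\cV_N + (\cN+1)^2$ alone. Everything else is a routine repetition of the commutator bookkeeping already carried out for $\Theta_0$, $\Pi_0$ and $\Xi_0$ in Lemmas \ref{lm:theta0} and \ref{lm:Xi}.
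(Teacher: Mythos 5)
Your proof is correct and follows the same Duhamel-plus-Gronwall strategy the paper intends, but you have in fact cleaned up a small imprecision in the paper's own account. The text before Lemma~\ref{lm:conj-CN} says to invoke Lemma~\ref{lm:gronA-KV} ``to bound the growth of $\cV_N$'', but Lemma~\ref{lm:gronA-KV} only controls $e^{-sA}(\cK+\cV_N)e^{sA}$, so applying it blindly would introduce a spurious $\cK$ in the error. You correctly point out that the Gronwall argument closes on $\cV_N$ alone: every term produced by $[\cV_N,A]$ in~\eqref{eq:VA-deco} is already controlled by $\cV_N+(\cN+1)^2$ (the main term after replacing $\wt k_H$ by $\wt k$ carries the pointwise bound $w_\ell\le 1$, and the error~\eqref{eq:VA-fin} is Cauchy--Schwarz-bounded by $\cV_N(\cN+1)/N + (\cN+1)^2/N$, which is $\le C(\cV_N+(\cN+1)^2)$ on $\cF^{\le N}_{\perp\pn}$ since $\cN\le N$), and no kinetic energy is ever generated, unlike for $[\cK,A]$ where the scattering equation trades $-\Delta w_\ell$ against $V f_\ell$. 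With that observation in hand, the remaining steps — conjugating the $\cE_1$-error via Lemma~\ref{lm:growA} and the $\cV_N$-only growth estimate, and doing a second Duhamel step on $\Xi_0+\Xi_0^*$ controlled by the second bound in Lemma~\ref{lm:Xi} — are exactly what one needs, and precisely what makes the $\cK$-free error bound $\pm\cE\le CN^{-1/2}[\cV_N+(\cN+1)^2]$ literally valid rather than merely harmless. The proof is sound.
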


\subsection{Proof of Prop. \ref{prop:JN}} 

Combining (\ref{eq:propGN}) with Lemma \ref{lm:QA-conj} , Lemma \ref{lm:Vext-conj}, Lemma \ref{lm:KV-A}  and Lemma \ref{lm:conj-CN}, we conclude that 
\begin{equation}\label{eq:JN-dec}
\cJ_N = e^{-A} \cG_N e^A = \kappa_{\cG_N} + \cQ_{\cG_N} + \cC_{\cG_{N}} + \cV_N + \Theta_0 + \Theta_0^* + \Xi_0 + \Xi_0^* + \frac{1}{2} \big[ \Pi_0 + \Pi_0^* \big] +  \cE 
\end{equation} 
where
\[ \pm \cE \leq C N^{\exph-1/2} \big[ (\cK + \cV_\text{ext} + 1) (\cN+1)^2 + \cV_N \big]. \]

From (\ref{eq:cCcGN}) and (\ref{eq:theta0-def}), we find (with $\delta$ denoting Dirac's delta-distribution) 
\begin{equation} \label{eq:CTheta} \begin{split}  \cC_{\cG_N} &+ \Theta_0 + \Theta_0^* \\ = \; &\int dx dy \, N^{5/2} V (N(x-y)) \ph_0 (y) b_x^* b_y^*  \big[ b (\gamma_x * (\delta- \check{g}_L)) + b^* (\sigma_x * (\delta- \check{g}_L)) \big]  \\ &+ \text{h.c.} \end{split} \end{equation} 
 Thus, we find 
 \begin{equation}\label{eq:CXi} \begin{split} \Big| \langle \xi, &\big( \cC_{\cG_N} + \Theta_0 + \Theta_0^* \big) \xi \rangle \Big| \\ &\leq C \| \cV_N^{1/2} \xi \|  \, \Big\{ \int  dx dy N^3 V (N (x-y))  \\ &\hspace{1.5cm} \times \big[  \| b ( \gamma_x * (\delta- \check{g}_L) ) \xi \|^2 + \| b^* (\sigma_x * (\delta- \check{g}_L) )\xi \|^2 \big] \Big\}^{1/2}. \end{split} \end{equation} 
 Decomposing $\gamma = 1 + p_\eta$, we find 
 \[  \| b ( \gamma_x * (\delta- \check{g}_L) ) \xi \|^2 \leq \| b ((\delta - \check{g}_L)_x) \xi \|^2 + \| b (p_{\eta,x} * (\delta- \check{g}_L)) \xi \|^2. \]
 Switching to Fourier space, we estimate 
 \[  \int dx \| b ((\delta - \check{g}_L)_x) \xi \|^2 =\int dp \  (1-g_L (p))^2 \| \hat{b}_p \xi \|^2 \leq CN^{-2\exps} \| \cK^{1/2} \xi \|^2. \]
 On the other hand,
 \begin{equation}\label{eq:peta} \| b (p_{\eta,x} * (\delta- \check{g}_L)) \xi \|^2 \leq \| p_{\eta,x} *  (\delta- \check{g}_L) \|^2 \|  \cN^{1/2} \xi \|^2 \end{equation} 
 and 
 \[ \begin{split} \int dx \, \| p_{\eta,x} *  (\delta- \check{g}_L) \|^2 &= \int dx dy \, \Big| \int dz \, \big[ p_{\eta} (x;y) - p (x;z) \big] \check{g}_L (y-z) \Big|^2 \\ &\leq \int dx dy dz \big[ p_{\eta} (x;y) - p (x;z) \big]^2 \check{g}_L (y-z) \\ &\leq   \int_0^1 ds \int dx dy dz \big|(\nabla_2 p)_{\eta} (x; sy + (1-s)z) \big|^2  (y-z)^2 \check{g}_L (y-z) \\ &\leq C N^{-2\gamma} \| \nabla_2 p_\eta \|
 \end{split} \]
where, in the last step, we shifted the integration variable $y\to y+z$. 
To bound the second term in the parenthesis on the r.h.s. of (\ref{eq:CXi}), we decompose $\sigma_x = k_x + (\sigma -k)_x$. The contribution proportional to $(\sigma-k)_x$ can be controlled similarly as in (\ref{eq:peta}). To control the contribution proportional to $k_x$, on the other hand, we switch to Fourier space. Using Lemma \ref{3.0.sceqlemma} and \eqref{eq:expdecaypn} we get  $\vert \hat{k}_x (p)\vert \leq C\vert p \vert^{-2} \pn(x)$
and therefore
\begin{align*}
\int dx \ \Vert k_x *(\delta- \check{g}_L) \Vert^2 &= \int dx dp \ \vert \hat{k}_x (p) \vert^2 (1-g_L(p))^2 \\
&\leq C\int_{\vert p \vert < N^{-\exps}} dp \  \frac{1}{\vert p \vert^4}  N^{-2\exps} \vert p \vert^2 + C\int_{\vert p \vert \geq N^{-\exps}} dp \ \frac{1}{\vert p \vert^4} \\
&\leq C N^{-\exps}.
\end{align*}
From (\ref{eq:CXi}), Lemma \ref{lm:growA} and Lemma \ref{lm:gronA-KV}, we conclude that
\[ \pm \big[ \cC_{\cG_N} + \Theta_0 + \Theta_0^* \big]  \leq C N^{-\tau/2} (\cV_N + \cK+ \cN + 1). \]
Finally, we compare the terms $\Xi_0, \Pi_0$ on the r.h.s. of (\ref{eq:JN-dec}) with the operator 
\[ \begin{split} Z_0 = \; & -\int dx dy N^3 V(N(x-y)) \ph^2_0 (y) w_\ell (N(x-y)) \big( b^* (\gamma_x) + b (\sigma_x) \big) \big( b (\gamma_x) + b^* (\sigma_x) \big) \\ &-  \int dx dy N^3 V(N(x-y)) \ph_0 (x) \ph_0 (y)  w_\ell (N(x-y)) \big( b^* (\gamma_x) + b (\sigma_x) \big) \big( b (\gamma_y) + b^* (\sigma_y) \big) \end{split} \]
where we removed all cutoffs (recall the definition (\ref{eq:Pi0-def}) of $\Pi_0$ and (\ref{eq:Xi0-def}) of $\Xi_0$, with the kernels $\wt{k}_H$ and $\gamma_L, \sigma_L$ introduced in (\ref{eq:def-wtk}), (\ref{eq:def-gsL})). The restriction to small momenta inserted in the kernels $\gamma_L, \sigma_L$ can be removed 
similarly as in (\ref{eq:CTheta}) (here, the comparison is a bit easier, because we are dealing with quadratic, rather than cubic, expressions in creation and annihilation operators). To remove the restriction to high momenta in the kernel $\wt{k}_H$, we notice that, 
we have $\vert (\wt{k}-\wt{k}_H)(x,y)\vert \leq CN^{\exph}\pn(y)$ by \eqref{eq:diffk}.
We conclude that 
\[ \pm \big[ (\Xi_0 + \Xi_0^*) - (Z_0 + Z_0^*) \big] \leq  C N^{-\exps/2} (\cN+1) + C N^{-\exps}  \cK  \]
and that 
\[ \pm \big[ (\Pi_0 + \Pi_0^*) + (Z_0 + Z_0^*) \big] \leq  C N^{-\exps/2} (\cN+1) + C N^{-\exps} \cK. \]
Thus
\[ \cJ_N = \kappa_{\cG_N} + \cQ_{\cG_N} + \frac{1}{2} \big[ Z_0 + Z_0^* \big]  + \cV_N + \cE  \]
where 
\[ \pm \cE \leq C N^{\exph-1/2} \big[ (\cK+\cV_\text{ext} +1) ( \cN+1)^2 \big] + C N^{-\exps/2} \big[ \cK + \cV_N + \cN+ 1 \big]. \]
The proposition follows by noticing that 
\[ \kappa_{\cG_N} + \cQ_{\cG_N} +  \frac{1}{2} \big[ Z_0 + Z_0^* \big]  = \kappa_{\cJ_N} + \cQ_{\cJ_N} + \wt{\mathcal{E}} \]
with $\kappa_{\cJ_N}$ and $\cQ_{\cJ_N}$ defined as in (\ref{eq:kJN}) and (\ref{eq:cQcJN}) and $\pm \wt{\mathcal{E}}\leq CN^{-1}\cN$ being an error term arising from normal ordering.


\appendix
\section{Properties of the Gross-Pitaevskii Functional}\label{apx:gpfunctional}

In this appendix we collect several well-known results about the Gross-Pitaevskii functional $\cE_{GP}$, defined in equation \eqref{eq:defGPfunctional}. Let us recall that $\cE_{GP}:\cD_{GP}\to \mathbb{R}$ is given by
		\begin{align*} 
		 \mathcal{E}_{GP}(\varphi) = \int_{\bR^3} \left( \vert \nabla \pn(x) \vert^2 + V_\text{ext}(x) \vert \pn(x)\vert^2 + 4\pi\frak{a}_0 \vert \pn(x)\vert^4 \right) dx 
		\end{align*}
with domain
		 \[ \mathcal{D}_{GP} =\big\{ \varphi \in H^1(\bR^3)\cap L^4(\bR^3): \ V_\text{ext}\vert \varphi\vert^2\in L^1(\bR^3)\big\}. \]
Recall, moreover, assumption $(2)$ in Eq. \eqref{eq:asmptsVVext} on the external potential $V_\text{ext}$. The following was proved in \cite[Theorems 2.1, 2.5 \& Lemma A.6]{LSY}.

\begin{lemma} \label{thm:gpmin1}
	There exists a minimizer $\pn\in \cD_{GP}$ with $ \|\pn\|_2=1$ such that
			\[ \inf_{\psi\in \mathcal{D}_{GP} \ : \|\psi\|_2 =1} \mathcal{E}_{GP}(\psi) = \mathcal{E}_{GP}(\pn).  \]
	The minimizer $\pn$ is unique up to a complex phase, which can be chosen so that $\pn$ is strictly positive. Furthermore, the minimizer $\pn$ solves the Gross-Pitaevskii equation
	\begin{align*} 
	-\Delta \pn + V_\text{ext} \pn + 8\pi\frak{a}_0 \vert \pn \vert^2 \pn = \eps_{GP}\pn,
	\end{align*}
with $\mu$ given by
	$$ \eps_{GP}= \mathcal{E}_{GP}(\pn) + 4\pi\frak{a}_0 \Vert \pn \Vert_4^4.$$
Moreover, $\pn \in L^\infty(\bR^3)\cap C^1(\bR^3)$ and for every $\nu>0$ there exists $C_\nu$ (which only depends on $\nu$ and $\mathfrak{a}_0$) such that for all $x\in \bR^3$ it holds true that
	\begin{equation} \label{eq:expdecaypn}
	\vert \pn(x)\vert \leq C_\nu e^{-\nu \vert x \vert}.
	\end{equation} 
\end{lemma}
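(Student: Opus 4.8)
The final statement is Lemma~\ref{thm:gpmin1} on the Gross-Pitaevskii minimizer, attributed to \cite{LSY}. The plan is to sketch how the existence, uniqueness, positivity, Euler-Lagrange equation, and especially the super-exponential decay \eqref{eq:expdecaypn} follow from standard variational and elliptic-regularity arguments, under the standing assumptions \eqref{eq:asmptsVVext} on $V_\text{ext}$.

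\begin{proof}[Proof sketch of Lemma~\ref{thm:gpmin1}]
\textbf{Existence.} The functional $\cE_{GP}$ is bounded below on the constraint set $\{ \psi \in \cD_{GP} : \|\psi\|_2 = 1 \}$, since all three terms are non-negative (using $V_\text{ext}\geq 0$ and $\frak{a}_0 > 0$). Take a minimizing sequence $(\psi_n)$; the bounds on $\|\nabla \psi_n\|_2$, $\| V_\text{ext}^{1/2}\psi_n\|_2$ and $\|\psi_n\|_4$ are uniform. Because $V_\text{ext}(x)\to\infty$ as $|x|\to\infty$, the embedding of $\{\psi : \|\nabla\psi\|_2^2 + \|V_\text{ext}^{1/2}\psi\|_2^2 \le C\}$ into $L^2(\bR^3)$ is compact (a standard consequence: split into $|x|\le R$ where Rellich applies and $|x|>R$ where $V_\text{ext}\ge \inf_{|x|>R}V_\text{ext}$ controls the tail). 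Hence along a subsequence $\psi_n \to \pn$ strongly in $L^2$, weakly in $H^1$, and weakly in $L^4$, with $\|\pn\|_2 = 1$. Weak lower semicontinuity of each term (for the $L^4$ term, Fatou or weak lsc of the norm) gives $\cE_{GP}(\pn) \le \liminf \cE_{GP}(\psi_n)$, so $\pn$ is a minimizer. Replacing $\pn$ by $|\pn|$ does not increase the energy (the gradient term is non-increasing under $\psi\mapsto|\psi|$), so we may take $\pn \ge 0$.

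\textbf{Euler-Lagrange equation, positivity, regularity.} Since $\pn$ minimizes under the single constraint $\|\psi\|_2 = 1$, the Lagrange multiplier rule yields, in the weak sense, $-\Delta\pn + V_\text{ext}\pn + 8\pi\frak{a}_0 \pn^3 = \eps_{GP}\pn$; testing against $\pn$ and using the definition of $\cE_{GP}$ identifies $\eps_{GP} = \cE_{GP}(\pn) + 4\pi\frak{a}_0\|\pn\|_4^4$. Local elliptic regularity (bootstrapping, using $V_\text{ext}\in C^1$ and $\pn\in L^4\cap H^1 \hookrightarrow L^6$) gives $\pn \in L^\infty_{loc}\cap C^1$; together with the decay established below one gets $\pn \in L^\infty(\bR^3)$. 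Strict positivity follows from the Harnack inequality or the strong maximum principle applied to $-\Delta\pn + (V_\text{ext} + 8\pi\frak{a}_0\pn^2 - \eps_{GP})\pn = 0$: a non-negative solution that vanishes somewhere vanishes identically, contradicting $\|\pn\|_2 = 1$. Uniqueness up to a phase is the standard convexity argument: the map $\rho \mapsto \int(|\nabla\sqrt\rho|^2 + V_\text{ext}\rho + 4\pi\frak{a}_0\rho^2)$ is strictly convex in $\rho = |\psi|^2$, so the minimizing density is unique, whence $\pn$ is unique up to a complex phase.

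\textbf{Super-exponential decay.} This is the only point requiring the growth hypothesis on $V_\text{ext}$ and is the main technical step. Fix $\nu > 0$. Since $V_\text{ext}(x)\to\infty$, there is $R_\nu$ with $V_\text{ext}(x) - \eps_{GP} \ge \nu^2$ for $|x| > R_\nu$; on that region $8\pi\frak{a}_0\pn^2 \ge 0$ only helps, so $\pn$ is a non-negative subsolution of $-\Delta u + \nu^2 u \le 0$ outside the ball. A standard Agmon-type comparison argument---multiply the equation by $e^{2\nu|x|}\pn$ (or a regularized, truncated version $e^{2f}\pn$ with $|\nabla f|\le \nu$, $f$ bounded) and integrate by parts---gives $\int e^{2\nu|x|}(|\nabla\pn|^2 + (V_\text{ext}-\eps_{GP}-\nu^2)\pn^2)\,dx \le C$, i.e. $e^{\nu|x|}\pn \in L^2$. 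Upgrading to the pointwise bound \eqref{eq:expdecaypn} then follows by a local elliptic estimate: on unit balls $B_1(x)$ with $|x|$ large, $\|\pn\|_{L^\infty(B_1(x))} \le C\|\pn\|_{L^2(B_2(x))} \le C e^{-\nu(|x|-2)}$, using that the zeroth-order coefficient $V_\text{ext}+8\pi\frak{a}_0\pn^2-\eps_{GP}$ is bounded on $B_2(x)$ by the at-most-exponential growth of $V_\text{ext}$ (which the elliptic constant absorbs, at the cost of enlarging $C_\nu$) together with the already-known boundedness of $\pn$. Since $\nu$ was arbitrary, $\pn$ decays faster than any exponential. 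The hard part is exactly this interplay: one first gets an $L^2$-weighted bound via Agmon, then converts it to a sup bound while keeping track that the unbounded potential does not spoil the local regularity constant---this is where $V_\text{ext}\in C^1$ with at most exponential growth is used.
\end{proof}
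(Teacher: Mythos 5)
The paper itself does not prove this lemma; it cites \cite[Theorems 2.1, 2.5 \& Lemma A.6]{LSY} and takes the conclusions as given. Your reconstruction is essentially correct in outline: the existence/compactness argument using the confining potential, weak lower semicontinuity, the replacement $\pn\mapsto|\pn|$, the identification of $\eps_{GP}$ by testing the Euler--Lagrange equation against $\pn$, positivity via Harnack/strong maximum principle, and the strict-convexity-in-$\rho=|\psi|^2$ uniqueness argument are all standard and match the structure in \cite{LSY}.

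For the super-exponential decay you take a slightly different route than \cite[Lemma A.6]{LSY}. Your two-step argument (Agmon $L^2$-weighted estimate followed by a local elliptic sup-bound on unit balls) works in principle, but the second step is the delicate one: the interior elliptic estimate $\|\pn\|_{L^\infty(B_1(x))}\leq C(x)\|\pn\|_{L^2(B_2(x))}$ has a constant $C(x)$ growing like a power of $\sup_{B_2(x)}V_\text{ext}$, which under assumption (2) of \eqref{eq:asmptsVVext} can itself grow exponentially. You flag this, and the fix (run the Agmon step with $\nu'$ strictly larger than $\nu$ by a margin that beats the growth rate of $V_\text{ext}$, then absorb) is correct since $\nu$ is arbitrary, but it should be made explicit that it is $\nu$ in the weighted estimate that is enlarged, not merely the prefactor $C_\nu$. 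A cleaner path, and the one \cite{LSY} actually follows, avoids local elliptic estimates entirely: after first establishing $\pn\in L^\infty$, observe that outside $B_{R_\nu}$ the nonnegative $\pn$ is a classical subsolution of $-\Delta u + \nu^2 u\leq 0$, while $v(x)=A e^{-\nu|x|}$ with $A$ large is a supersolution on the same exterior domain; since $v\geq\pn$ on $\partial B_{R_\nu}$ and both vanish at infinity, the comparison principle gives $\pn\leq v$ directly, with no reference to local coefficient bounds. Both routes are valid, but the comparison argument is more robust precisely because it sidesteps the point you had to be careful about.
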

As in the main text, $ \pn$ denotes the unique, strictly positive minimizer of $\cE_{GP}$, subject to the contraint $ \|\pn\|_2=1$. In addition to Lemma \ref{thm:gpmin1}, we collect some additional facts about the regularity of $\pn$. The following was shown in \cite[Theorem A.2]{BSS}. 
\begin{lemma}\label{thm:gpmin2}
Let $ V_\text{ext}$ satisfy the assumptions in \eqref{eq:asmptsVVext}. Then $\pn\in H^2(\mathbb{R}^3)\cap C^2(\mathbb{R}^3)$ and for every $\nu>0$ there exists $C_\nu>0$ such that for every $x\in \mathbb{R}^3$ we have
	\begin{equation} \label{exponential decay}
	\vert \nabla \pn (x) \vert, \ \vert \Delta \pn(x) \vert, | \nabla \Delta \pn(x)| \leq C_\nu e^{-\nu \vert x \vert}.
	\end{equation}
Furthermore, if $\whpn$ denotes the Fourier transform of $\pn$, we have for all $ p\in\bR^3$ that
		\begin{equation} \label{eq:decphhat}
		| \whpn (p)|,  | \widehat{\pn^2} (p)|    \leq  \frac{C}{(1+ | p|)^3}.
		\end{equation}
\end{lemma}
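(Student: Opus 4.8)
The plan is to exploit the Euler--Lagrange equation from Lemma~\ref{thm:gpmin1}, written as a Poisson equation
\[ -\Delta \varphi_0 = g, \qquad g := (\eps_{GP} - V_\text{ext})\varphi_0 - 8\pi\mathfrak{a}_0 \varphi_0^3, \]
whose right-hand side $g$ inherits the super-exponential decay of $\varphi_0$, and then to bootstrap regularity and propagate that decay to the derivatives of $\varphi_0$ by means of \emph{localized} elliptic estimates (on balls of fixed size, with constants independent of the centre). Throughout I would use the two structural facts available: $\varphi_0$ decays faster than any exponential (Eq.~\eqref{eq:expdecaypn}), and, by assumption~\eqref{eq:asmptsVVext} together with \cite{BSS}, both $V_\text{ext}$ and $\nabla V_\text{ext}$ grow at most exponentially, say $0 \le V_\text{ext}(x) + |\nabla V_\text{ext}(x)| \le C e^{a|x|}$ for some $a>0$.

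First I would establish $\varphi_0 \in H^2(\bR^3)\cap C^2(\bR^3)$. Since $V_\text{ext}\varphi_0$ decays faster than any exponential (a product of the at-most-exponential $V_\text{ext}$ with the super-exponentially small $\varphi_0$) and $\varphi_0, \varphi_0^3 \in L^2$, we get $g \in L^2$, hence $-\Delta\varphi_0 \in L^2$ and $\varphi_0 \in H^2(\bR^3)$, with $\|\nabla^2\varphi_0\|_2 \le \|\Delta\varphi_0\|_2$ by Plancherel. For the $C^2$ statement, $V_\text{ext}\in C^1$ and $\varphi_0\in C^1$ (Lemma~\ref{thm:gpmin1}) give $g \in C^1 \subset C^{0,\alpha}_{\mathrm{loc}}$, so interior Schauder estimates for $-\Delta\varphi_0 = g$ yield $\varphi_0 \in C^{2,\alpha}_{\mathrm{loc}} \subset C^2(\bR^3)$; in localized form the same estimate gives, for every $x_0 \in \bR^3$,
\[ \|\nabla^2\varphi_0\|_{L^\infty(B_{1/2}(x_0))} \le C\big( \|\varphi_0\|_{L^\infty(B_1(x_0))} + \|g\|_{C^{0,\alpha}(B_1(x_0))} \big) . \]

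Next I would prove the decay estimates \eqref{exponential decay}. The bound on $\Delta\varphi_0$ is immediate from $\Delta\varphi_0 = -g$: for any $\nu$ choose rate $\nu + a$ in \eqref{eq:expdecaypn}, so $|V_\text{ext}(x)\varphi_0(x)| \le C e^{a|x|} C_{\nu+a} e^{-(\nu+a)|x|} \le C_\nu e^{-\nu|x|}$, while $\eps_{GP}\varphi_0$ and $\varphi_0^3$ decay even faster, hence $|\Delta\varphi_0(x)| \le C_\nu e^{-\nu|x|}$. For $\nabla\varphi_0$ I would invoke the local gradient estimate for the Poisson equation, $\|\nabla\varphi_0\|_{L^\infty(B_{1/2}(x_0))} \le C(\|\varphi_0\|_{L^\infty(B_1(x_0))} + \|g\|_{L^\infty(B_1(x_0))})$ (from $W^{2,p}$ Calder\'on--Zygmund estimates and Sobolev embedding, or directly from the Newtonian potential representation); both norms on the right are bounded by $C_\nu e^{-\nu|x_0|}$ for every $\nu$ (absorbing the $\pm 1$ shift of the ball into the rate), giving $|\nabla\varphi_0(x)| \le C_\nu e^{-\nu|x|}$. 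Feeding this into the localized Schauder estimate above, whose right-hand side then also decays faster than any exponential, upgrades the conclusion to $|\nabla^2\varphi_0(x)| \le C_\nu e^{-\nu|x|}$. Finally, since $g\in C^1$, so is $\Delta\varphi_0$, and
\[ \nabla\Delta\varphi_0 = -\nabla g = -\eps_{GP}\nabla\varphi_0 + (\nabla V_\text{ext})\varphi_0 + V_\text{ext}\nabla\varphi_0 + 24\pi\mathfrak{a}_0\,\varphi_0^2\,\nabla\varphi_0 , \]
each term being a product of an at-most-exponentially growing factor with a super-exponentially small one, whence $|\nabla\Delta\varphi_0(x)| \le C_\nu e^{-\nu|x|}$; likewise, expanding $\nabla\Delta(\varphi_0^2) = 4(\nabla^2\varphi_0)\nabla\varphi_0 + 2(\Delta\varphi_0)\nabla\varphi_0 + 2\varphi_0\nabla\Delta\varphi_0$ and using the Hessian decay, one gets $|\nabla\Delta(\varphi_0^2)(x)| \le C_\nu e^{-\nu|x|}$.

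For the Fourier bounds \eqref{eq:decphhat}, the point is that $\varphi_0$ and $\varphi_0^2$, together with their derivatives up to third order, decay faster than any exponential and hence lie in $L^1(\bR^3)$; with the paper's convention $\widehat{\partial^\beta f}(p) = (2\pi i p)^\beta \widehat f(p)$ one has $(2\pi)^3 |p|^3\,|\whpn(p)| = |\widehat{\nabla\Delta\varphi_0}(p)| \le \|\nabla\Delta\varphi_0\|_{L^1}$, so $|\whpn(p)| \le C|p|^{-3}$, and combined with $|\whpn(p)| \le \|\varphi_0\|_{L^1}$ this gives $|\whpn(p)| \le C(1+|p|)^{-3}$; the identical computation with $\varphi_0$ replaced by $\varphi_0^2$ yields the bound on $\widehat{\varphi_0^2}$. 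I expect the main obstacle to be precisely the propagation of the super-exponential decay from $\varphi_0$ to $\nabla\varphi_0$ and $\nabla^2\varphi_0$: the equation cannot be used globally because $V_\text{ext}$ grows, so one must localize and ensure that the gradient and Schauder estimates are applied on balls of a fixed radius with constants uniform in the centre, so that the known decay of $\varphi_0$ and $g$ on those balls can be inserted cleanly.
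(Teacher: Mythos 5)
Your proof is correct. The paper itself does not prove this lemma---it simply cites \cite[Theorem A.2]{BSS}---so there is no in-paper argument to compare against, but your bootstrap is the standard one for this kind of statement: write the Euler--Lagrange equation as $-\Delta\varphi_0 = g$, observe that $g$ inherits the super-exponential decay of $\varphi_0$ because $V_\text{ext}$ and $\nabla V_\text{ext}$ grow at most exponentially, then propagate that decay to $\nabla\varphi_0$, $\nabla^2\varphi_0$ and $\nabla\Delta\varphi_0$ via \emph{interior} gradient/Calder\'on--Zygmund/Schauder estimates on unit balls with centre-independent constants, and finally convert the resulting $L^1$ bounds on $\nabla\Delta\varphi_0$ and $\nabla\Delta(\varphi_0^2)$ into the Fourier decay $|p|^{-3}$. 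The order of the steps avoids any circularity (the gradient decay from the local $W^{2,p}$ estimate is established before it is used in the H\"older norm of $g$ for the Hessian bound and in the pointwise formula for $\nabla g$), and you were right to single out the localization issue as the only delicate point: because $V_\text{ext}$ is unbounded, the elliptic estimates must be applied on balls of fixed radius, absorbing the radius shift into the arbitrary decay rate $\nu$.
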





\begin{thebibliography}{55}

\bibitem{ABS}
A. Adhikari, C. Brennecke, B. Schlein. Bose-Einstein Condensation Beyond the Gross-Pitaevskii Regime. \emph{Ann. Henri Poincar\'e} \textbf{22}, 1163-1233 (2021).

\bibitem{BDS}
N.~{Benedikter}, G.~{de Oliveira} and B.~{Schlein}. {Quantitative derivation of the Gross-Pitaevskii equation}. \emph{Comm. Pure Appl. Math.} \textbf{64}, no. 8, 1399-1482 (2014).

\bibitem{BBCS1}
C. Boccato, C. Brennecke, S. Cenatiempo, B. Schlein. Complete Bose-Einstein condensation in the Gross-Pitaevskii regime. {\it Comm. Math. Phys.} {\bf 359}, no. 3, 975-1026 (2018).

\bibitem{BBCS2}
C. Boccato, C. Brennecke, S. Cenatiempo, B. Schlein. The excitation spectrum of Bose gases interacting through singular potentials.  \emph{J. Eur. Math. Soc.} \textbf{22} (2020), no. 7, DOI: 10.4171/JEMS/966. Preprint: arXiv:1704.04819.

\bibitem{BBCS3}
C. Boccato, C. Brennecke, S. Cenatiempo, B. Schlein. Bogoliubov Theory in the Gross-Pitaevskii limit. \emph{Acta Mathematica} \textbf{222}, no. 2, 219-335 (2019).

\bibitem{BBCS4}
C. Boccato, C. Brennecke, S. Cenatiempo, B. Schlein. Optimal Rate for Bose-Einstein Condensation in the Gross-Pitaevskii Regime. \emph{Comm. Math. Phys.} \textbf{376}, 1311-1395 (2020).

\bibitem{bog}
N. N. Bogoliubov. On the theory of superfluidity.
{\it Izv. Akad. Nauk. USSR} {\bf 11} (1947), 77. Engl. Transl. {\it J. Phys. (USSR)} {\bf 11}, 23 (1947). 

\bibitem{BPS}
L. Boßmann, S. Petrat, R. Seiringer. Asymptotic expansion of low-energy excita-
tions for weakly interacting bosons. Preprint: arXiv:2006.09825.

\bibitem{BFS}
B. Brietzke, S. Fournais, J.P. Solovej. A Simple 2nd Order Lower Bound to the Energy of Dilute Bose Gases. \emph{Comm. Math. Phys.} \textbf{376}, 323-351 (2020).

\bibitem{BS}
C. Brennecke, B. Schlein. Gross-Pitaevskii dynamics for Bose-Einstein condensates. \emph{Analysis \& PDE} \textbf{12}, no. 6, 1513-1596 (2019).

\bibitem{BSS}
C. Brennecke, B. Schlein, S. Schraven. Bose-Einstein Condensation with Optimal Rate for Trapped Bosons in the Gross-Pitaevskii Regime. Preprint: arXiv:2102.11052.


\bibitem{DN}
J.~ Derezi\'nski, M.~Napi\'orkowski. Excitation Spectrum of Interacting Bosons in the Mean-Field Infinite-Volume Limit. {\it Annales Henri Poincar\'e} {\bf 
15}, 2409-2439 (2014). 

%

\bibitem{ESY0} 
L.~{Erd\H{o}s}, B.~{Schlein} and H.-T.~{Yau}.
Derivation of the Gross-Pitaevskii hierarchy for the dynamics of Bose-Einstein condensate. {\it Comm. Pure  Appl. Math.} {\bf 59}, no. 12, 1659-1741 (2006).

%
%
%
%


\bibitem{FS}
S. Fournais, J.P. Solovej. The energy of dilute Bose gases. \emph{Ann. of Math.} \textbf{192}, No. 3, 893-976 (2020).

%
\bibitem{GS}
P. Grech, R. Seiringer. The excitation spectrum for weakly interacting bosons in a trap. {\it Comm. Math. Phys.} {\bf 322}, no. 2, 559-591 (2013).






\bibitem{H} C. Hainzl. Another proof of BEC in the GP-limit. \emph{J. Math. Phys.} \textbf{62}, 051901 (2021).


%

\bibitem{LNSS} M.~Lewin, P.~T.~{Nam}, S.~{Serfaty}, J.P. {Solovej}. Bogoliubov spectrum of interacting Bose gases.  \newblock \emph{Comm. Pure Appl. Math.} \textbf{68}, 3, 413 - 471 (2014). 

\bibitem{LS1}
E.~H.~Lieb and R.~Seiringer.
\newblock Proof of {B}ose-{E}instein condensation for dilute trapped gases.
\newblock \emph{Phys. Rev. Lett.} \textbf{88}, 170409 (2002).

\bibitem{LS2}
E.~H.~Lieb and R.~Seiringer.
\newblock Derivation of the Gross-Pitaevskii equation for rotating Bose gases.
\newblock \emph{Comm. Math. Phys. } \textbf{264}:2, 505-537 (2006).



\bibitem{LSY}
E.~H.~Lieb, R.~Seiringer, and J.~Yngvason.
\newblock Bosons in a trap: A rigorous derivation of the {G}ross-{P}itaevskii
  energy functional. \newblock \emph{Phys. Rev. A} \textbf{61}, 043602 (2000).



%

\bibitem{NNRT} 
P.~T.~{Nam}, M. Napi{\'o}rkowski, J. Ricaud, A. Triay. Optimal rate of condensation for trapped bosons in the Gross-Pitaevskii regime. Preprint arXiv:2001.04364. 

\bibitem{NRS}
P.~T.~{Nam}, N. ~{Rougerie}, R.~Seiringer. Ground states of large bosonic systems: The Gross-Pitaevskii limit revisited. 
\emph{Analysis and PDE.} {\bf 9}, no. 2, 459-485 (2016).

\bibitem{NT}
P.~T.~{Nam}, A.~{Triay}. Bogoliubov excitation spectrum of trapped Bose gases in the Gross-Pitaevskii regime. Preprint arXiv:2106.11949.

%


%
%
\bibitem{P3}
A. Pizzo. Bose particles in a box III. A convergent expansion of the ground state of the Hamiltonian in 
the mean field limiting regime. Preprint arxiv:1511.07026.


\bibitem{Sei}
R. Seiringer. The Excitation Spectrum for Weakly Interacting Bosons. {\it Comm. Math. Phys.} {\bf 
306} , 565-578 (2011). 


%
\bibitem{YY}
H.-T. Yau, J. Yin. The second order upper bound for the ground state energy of a Bose gas. {\it J. Stat. Phys.} 
{\bf 136}, no. 3, 453-503 (2009).

\end{thebibliography}
\end{document}